\newcommand{\Rom}[1]{\uppercase\expandafter{\romannumeral #1\relax}}
\DeclareMathOperator{\Ima}{Im}
\theoremstyle{definition}
\newtheorem{definition}{Definition}[section]
\newtheorem*{remark}{Remark}
\theoremstyle{theorem}
\newtheorem{theorem}{Theorem}[section] 
\theoremstyle{corollary}
\newtheorem{corollary}{Corollary}[theorem]
\theoremstyle{lemma} 
\newtheorem{lemma}[theorem]{Lemma}
\theoremstyle{Proposition} 
\newtheorem{Proposition}[theorem]{Proposition}
\definecolor{IKblue}{RGB}{0,0,100}
\definecolor{KKgreen}{RGB}{0,100,0}
\newcommand{\calC}{{\mathcal C}}
\newcommand{\calD}{{\mathcal D}}
\newcommand{\calE}{{\mathcal E}}
\newcommand{\calH}{{\mathcal H}}
\newcommand{\calS}{{\mathcal S}}
\newcommand{\Tr}{{\rm Tr}}
\newcommand{\conv}{{\rm conv}}
\newcommand{\ext}{{\rm ext}}
\def\namedlabel#1#2{\begingroup
	Axiom #2.%
	\def\@currentlabel{$\mathbf{#2}$}%
	\phantomsection\label{#1}\endgroup
}
\definecolor{BSorange}{RGB}{255,127,0}
\journal{Annals of Physics}
\begin{document}
	
\begin{frontmatter}	
	\title{Fusion rules from entanglement}

\author{Bowen Shi}
\address{Department of Physics, The Ohio State University, Columbus, OH 43210, USA}
\author{Kohtaro Kato}
\address{Institute for Quantum Information and Matter, California Institute of Technology, Pasadena, CA, USA}
\author{Isaac H. Kim}
\address{Perimeter Institute for Theoretical Physics, Waterloo ON N2L 2Y5, Canada}
\address{Institute for Quantum Computing, University of Waterloo, Waterloo ON N2L 3G1, Canada}
\address{IBM T. J. Watson Research Center, Yorktown Heights NY, USA}
\address{Stanford University, Stanford CA 94305, USA}

	\date{\today}
	
	\begin{abstract}
We derive some of the axioms of the algebraic theory of anyon [A. Kitaev, Ann. Phys., 321, 2 (2006)] from a conjectured form of entanglement area law for two-dimensional gapped systems. We derive the fusion rules of topological charges and show that the multiplicities of the fusion rules satisfy these axioms. Moreover, even though we make no assumption about the exact value of the constant sub-leading term of the entanglement entropy of a disk-like region, this term is shown to be equal to $\ln \mathcal{D}$, where $\mathcal{D}$ is the total quantum dimension of the underlying anyon theory. These derivations are rigorous and follow from the entanglement area law alone. More precisely, our framework starts from two local entropic constraints which are implied by the area law. From these constraints, we prove what we refer to as the ``isomorphism theorem." The existence of superselection sectors and fusion multiplicities follow from this theorem, even without assuming anything about the parent Hamiltonian. These objects and the axioms of the anyon theory are shown to emerge from the structure and the internal self-consistency relations of the information convex sets.
    \end{abstract}

\end{frontmatter}

\section{Introduction}

One of the outstanding questions in modern physics concerns the classification of quantum phases. Many attempts have been already made to classify quantum phases over the past decade. For instance, gapped free-electron systems are completely classified~\cite{Kitaev2009,Schnyder2008}. For more general short-range entangled states, an approach based on cobordism was proposed~\cite{Kapustin2014}. One-dimensional (1D) gapped systems are completely classified at this point~\cite{Chen2011,Schuch2011,Szehr2016,Fidkowski2011}. A general gapped two-dimensional (2D) systems are expected to be described within the framework topological quantum field theory; see \cite{Barkeshli2014}, for example. 

This whole slew of different approaches raises a natural question. Why are there so many different approaches, and how can we ever be sure that the classification is complete? The main difficulty lies in identifying the correct framework. In the presence of interaction, one often needs to make a nontrivial assumption. The only exception so far is the one-dimensional (1D) gapped system. Hastings' theorem~\cite{Hastings2007} implies that any gapped 1D system obeys an area law. This subsequently implies that a matrix product state can approximate the ground state with a moderate bond dimension. It is this result from which a classification of quantum phases of 1D gapped system~\cite{Chen2011,Schuch2011,Szehr2016} follows.

However, in higher dimensions, an analog of Hastings' theorem is unknown. This is mainly because proving area law in 2D gapped systems remains challenging. Furthermore, even if area law turns out to be correct, the states that satisfy area law may not be well-approximated by an efficient tensor network~\cite{Ge2016}. These facts suggest that a classification program in 2D cannot merely mimic the classification program for 1D gapped systems. In fact, in any classification proposal based on tensor networks, there will always be a lingering question on whether we are not missing any unknown phases.

While it is widely believed at this point that topological quantum field theory (TQFT) describes all possible gapped phases in 2D, there is currently no rigorous argument that supports this belief. The existence of a three-dimensional (3D) gapped phase  outside of the TQFT framework~\cite{Haah2011} shows that there may be gapped phases of matter that lie outside of the TQFT framework. Even if TQFT turns out to be the correct framework in 2D, understanding of where this framework comes from remains as an important fundamental problem.

Motivated by this state of affairs, we initiate a program in which a familiar set of axioms of TQFT can be derived from a seemingly innocuous assumption about entanglement. We show that some of the basic concepts of the algebraic theory of anyon~\cite{2006AnPhy.321....2K}, i.e., superselection sectors and fusion multiplicities, emerge from a familiar form of entanglement area law~\cite{Kitaev2006,Levin2006}:
\begin{equation}
S(A) = \alpha \ell - \gamma, \label{eq:area_law}
\end{equation}
where $S(A)$ is the von Neumann entropy of a simply connected region $A$, $\ell$ is the perimeter of $A$, and $\gamma$ is a constant correction term\footnote{Nontrivial sectors, e.g., non-Abelian anyons and topological defects, can modify the constant term. It is a widely-adopted notation that $\gamma$ is the constant term for a disk in the absence of nontrivial sectors.} that only depends on the topology of $A$. The sub-leading correction, which vanishes in the $\ell \to \infty$ limit, is suppressed here.  

We then show that our definition of the fusion multiplicities satisfies all the properties one would have expected from the algebraic theory of anyon. Again, these properties are derived from Eq.~\eqref{eq:area_law}. Moreover, we further derive the following well-known formula:
\begin{equation}
\gamma = \ln \mathcal{D}, \nonumber
\end{equation}
where $\mathcal{D}$ is the total quantum dimension of the anyon theory we defined. Our derivation is rigorous under the assumption (Eq.~\eqref{eq:area_law}) and is completely independent from the previous approaches, i.e., an approach based on an effective field theory description ~\cite{Kitaev2006} and explicit calculations in exactly solvable models~\cite{Levin2006}.

While our assumption is not as rigorous as Hastings' proof of the 1D area law, it is something that is widely accepted at this point. Therefore, we believe this would be a reasonable starting point to obtain a general understanding of gapped phases. A similar, but a markedly different starting point of our work would be the two axioms we have identified. These two axioms are entropic conditions on bounded-radius disks (Axiom \ref{as:A0'} and \ref{as:A1} in Sec.~\ref{sec:axiomsdef}). We can show that these two axioms follow from Eq.~(\ref{eq:area_law}), but after that, we never use Eq.~(\ref{eq:area_law}) explicitly. All of our results follow directly from the axioms.

In other words, the axioms of the anyon fusion theory can be derived from Eq.~\eqref{eq:area_law}. The same conclusion follows from our axioms as well; see Axiom \ref{as:A0'} and \ref{as:A1}. While the conclusion would be the same either way, we would like to advocate for the use of the axioms over Eq.~\eqref{eq:area_law} for the following reasons. The first reason is that Axiom \ref{as:A0'} and \ref{as:A1} are assumed to hold on patches whose size is independent of the system size. Therefore, in principle, one can verify these axioms in time that scales linearly with the system size. Under a promise that the state is translation-invariant, the time can be reduced to a constant. In contrast, Eq.~\eqref{eq:area_law} is defined over length scales that are comparable to the system size. Verifying this assumption will incur an exponential computational cost. Secondly, in the continuum limit, the leading term of Eq.~\eqref{eq:area_law} depends on the ultraviolet-cutoff. On the other hand, the axioms manifestly cancel out this divergent piece.

Our framework is completely Hamiltonian-independent, in the sense that we only require the existence of a global state on the system satisfying the two local entropic constraints. This work is motivated from a number of recent observations: that local reduced density matrices of topological quantum phases often have a quantum Markov chain structure~\cite{2004CMaPh.246..359H,Kim2013,2014arXiv1405.0137K,Kim2015sydney,Kim2015,2015arXiv150807006J,2016PhRvA..93b2317K,2019PhRvB..99c5112S,2019PhRvR...1c3048S}. The key overarching concept is a convex set of density matrices introduced in \cite{Kim2015sydney}, which is later rediscovered and studied under the name ``information convex (set)"~\cite{2019PhRvB..99c5112S,2019PhRvR...1c3048S}. Roughly speaking, this is a set of density matrices which are locally indistinguishable from some reference state. In our context, this reference state would be the ground state of some local Hamiltonian. However, we do not use the fact that the state is a ground state.

Our framework opens up a concrete route to classify gapped quantum phases without resorting to ad-hoc assumptions. 
In addition, we believe our framework is capable of answering a long-standing open question about topological phases. The question is if a single ground state contains all the data necessary to define a topological phase. Given that we can define a notion of topological charges and fusion multiplicities from a single ground state, progress may be made by using our framework.
Our approach can be generalized to a broader context, e.g., to higher dimensions and to setups in which a topological defect~\cite{Bombin2010} or a boundary is present~\cite{Bravyi1998}. We will discuss these applications in our upcoming work.

The rest of this paper is organized as follows. In Sec.~\ref{Sec:summary},  we specify our formal setup and summarize our main results. 
In Sec.~\ref{Sec:Isomorphism}, we prove fundamental properties of the information convex sets, which are the key to obtaining some of the axioms of the algebraic theory of anyon. We shall refer to this part of the full algebraic theory of anyon as the \emph{anyon fusion theory} from now on. In Sec.~\ref{Sec.Fusion_data}, we define the notion of superselection sectors and fusion multiplicities in our framework and prove that the definition satisfies all the axioms of the anyon fusion theory. 
In Sec.~\ref{Sec. TEE}, we show that the constant term $\gamma$ in the area law equals the logarithm of the total quantum dimension. In Sec.~\ref{Sec: discussion}, we conclude with a discussion.

\section{Setup and Summary}\label{Sec:summary}

Let us begin with a general setup and state our physical assumptions. Before we delve into the details, it will be instructive to discuss the physical system we have in mind. We are envisioning a 
gapped system in 2D, which consists of microscopic degrees of freedom, e.g., spins. We would like to coarse-grain these microscopic degrees of freedom so that we can view non-overlapping blocks of spins as gigantic ``supersites," see Fig.~\ref{fig:cells}. We can consider the limit in which the length scale of each block is large compared to the correlation length. We would like to define a sensible notion of distance between the subsystems as well as their topologies. 
\begin{figure}[h]
	\centering
    \includegraphics[scale=1.2]{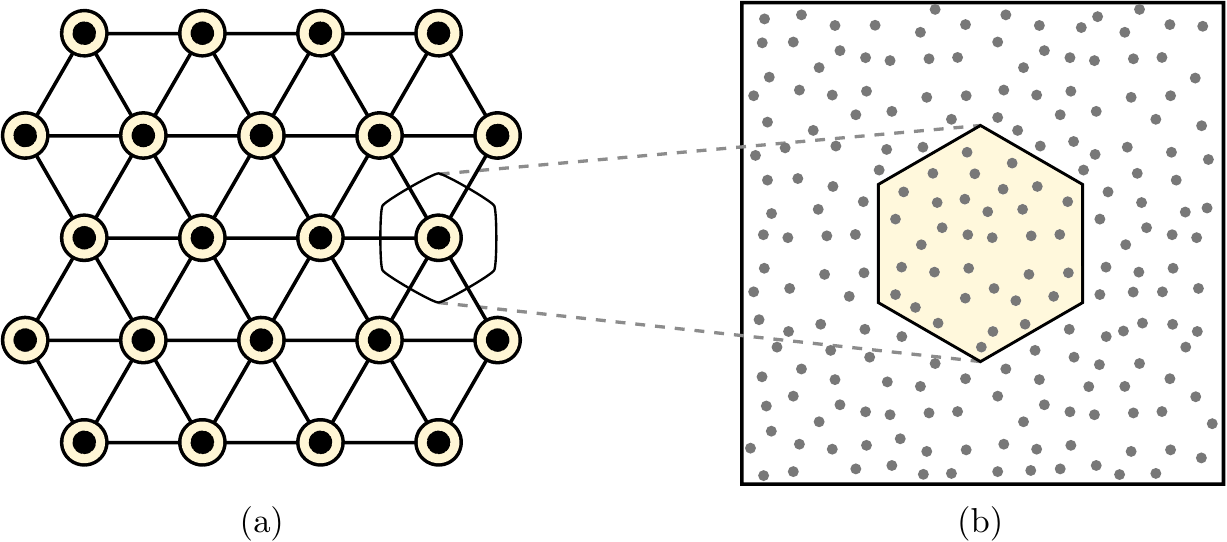}
	\caption{An illustration of the setup. (a) A graph $G=(V,E)$ with each vertex represents a ``supersite" which contains a cluster of microscopic degrees of freedom, e.g. spins in real space. The edges encode the locality of the underlying physical system and it allows us to define a notion of topology for a set of vertices.  (b) A zoomed-in depiction of a supersite.  It contains a block of physical spins. The length scale of each block is a constant that is large compared to the correlation length.}
	\label{fig:cells}
\end{figure}

More concretely, we can consider a quantum many-body spin system  with a tensor product structure $\mathcal{H} = \otimes_{v\in V} \mathcal{H}_v$, where  $\mathcal{H}_v$ is a finite-dimensional Hilbert space and $V$ is a set of vertices of a finite graph $G=(V,E)$ defined on a 2D closed manifold\footnote{Our results work for both orientable and non-orientable manifolds provided that a reference state $\sigma$ exists.}. 
By specifying the set of edges $E$, we can define a natural notion of distance on this graph (the graph distance). We denote the state space of $\calH$ by $\calS(\calH)$, which is the set of all density matrices on $\calH$. We say operator $O$ has support on $X\subseteq V$ if $O=O_X\otimes I_{V\backslash X}$ where $I_{V\backslash X}$ is the identity operator. 

We assume that there is a partition of the manifold into simply connected subsystems so that each $v\in V$ is associated with one of these subsystems.  Furthermore, there is an edge between $v_1, v_2 \in V$ if and only if the subsystems associated with the vertices are adjacent to each other. 
This assumption lets us define a notion of topology for a set of vertices. Without loss of generality, let $U$ be a set of vertices and $\mathcal{U}$ be a union of the subsystems associated with the vertices in $U$. The topology of $U$ is defined as the topology of $\mathcal{U}$. By construction, a single vertex is topologically a disk. However, a more general topology, e.g., an annulus, can be built out of a union of the vertices.

Throughout this paper, we consider a state $\sigma\in\calS(\calH)$ satisfying two axioms shown below. We will call this state the {\it (global) reference state}. We always use $\sigma$ to refer to the same reference state unless specified otherwise.
We use $\sigma_A$ to represent the reduced density matrix of $\sigma$ on a region $A$. Physically interesting examples of the reference state are the ground states of gapped local Hamiltonians. However, our derivations are only based on the properties of the quantum state. Technically, we are allowed to assume the global state to be pure $\sigma=\vert \psi\rangle\langle \psi\vert$ without loss of generality. This is because one can always show the existence of a pure state which has the same local reduced density matrices as the given reference state (see Theorem~\ref{Prop: Sigma(M)}). 

For the readers' convenience, we have summarized the key concepts in Table~\ref{tab:concepts}.\footnote{We thank the referee for this suggestion.} 
\begin{table}[h]
    \centering
    \begin{tabular}{|c|c|c|}
    \hline	
    Notation & Brief description & Reference \\
    \hline
    $\sigma$ & The global reference state that satisfies Axiom \ref{as:A0'} and \ref{as:A1} & N/A\\
    $\sigma_b$ & Reduced density matrix of $\sigma$ over region $b$ & \\
    $\mu(r)$ & A set of $\sigma_b$ over $b$ whose radius is smaller than $r$ &  Eq.~(\ref{eq:mu(r)})\\
    $\Sigma(\Omega)$ & The information convex set of a region $\Omega$ & Definition~\ref{def: info_convex} \\
    \hline
    \end{tabular}
    \caption{A list of notations and their descriptions.}
    \label{tab:concepts}
\end{table}

\subsection{Axioms}\label{sec:axiomsdef}
We start by defining a set of density matrices
\begin{equation}
\mu(r) = \{\sigma_b| b\in \mathcal{B}(r) \}, \label{eq:mu(r)}
\end{equation}
where $\mathcal{B}(r)$ is a set of balls of radius less or equal to $r$ and $\sigma_b$ is the reduced density matrix of the reference state ($\sigma$) on $b$. Because $r$ will be chosen to be a constant independent of the system size, we will simply denote $\mu(r)$ by $\mu$. We will refer to the set of $b\in \mathcal{B}(r)$ as the set of $\mu$-disks.

The axioms of our framework concern two entropic constraints on the set of $\mu$-disks.\footnote{The two local entropic constraints \ref{as:A0'} and \ref{as:A1} are originally proposed by one of us in Ref.~\cite{2014arXiv1405.0137K} The first attempt at deriving the axioms of anyon theory from these conditions was presented in Ref.~\cite{Kim2015sydney}.} Let 
$S(\rho)=-\Tr(\rho\ln\rho) $ be the von Neumann entropy of a state $\rho$. We assume that Axiom \ref{as:A0'} and \ref{as:A1} hold for all $\mu$-disks.

\begin{description}
\item[\namedlabel{as:A0'}{A0}]	For any $\sigma_b\in \mu$, for any configuration of subsystems $BC\subseteq b$ topologically equivalent to the one described in Fig.~\ref{A0},
	\begin{equation}
	S(\sigma_{BC})+S(\sigma_{C})-S(\sigma_{B})=0.
	\label{eq:tau_A0}
	\end{equation}
\end{description}

\begin{figure}[h]
	\centering
    \includegraphics[scale=1.1]{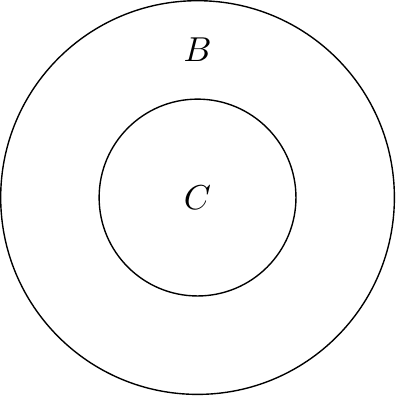}
	\caption{ A disk is divided into its ``core" ($C$) and outer boundary ($B$). The boundary ($B$) is chosen to be thick enough so that correlation between $C$ and the complement of $BC$ is negligible.}
	\label{A0}
\end{figure}

\begin{description}
\item[\namedlabel{as:A1}{A1}]
	For any $\sigma_b \in \mu$, for any configuration of subsystems $BCD\subseteq b$ topologically equivalent to the one described in Fig.~\ref{A1},
	\begin{equation}
	S(\sigma_{BC})+S(\sigma_{CD})-S(\sigma_{B})-S(\sigma_{D})=0.
	\label{eq:tau_A1}
	\end{equation}
\end{description}
\begin{figure}[h]
	\centering
    \includegraphics[scale=1.1]{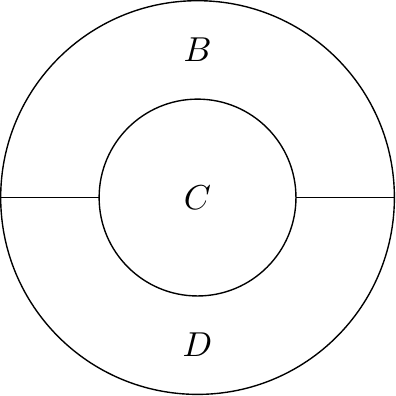}
	\caption{A disk is divided into its ``core" ($C$) and its outer boundary, which is further divided into two pieces ($B$ and $D$).
		\label{A1}}
\end{figure}

To see the physical meaning of these axioms, we observe that Eq.~(\ref{eq:tau_A0}) implies  
\begin{equation}\label{eq:vanishMI}
I(A:C)_\sigma \equiv (S_A+S_C-S_{AC})_{\sigma}=0,
\end{equation}
where $A$ is contained in the complement of $BC$ and $I(A:C)$ is the mutual information. This result follows from the strong subadditivity  (SSA) of von Neumann entropy~\cite{1973JMP....14.1938L}. The mutual information is a measure of the total amount of bipartite correlation. Therefore, Axiom \ref{as:A0'} can be viewed as a formalization of the intuition that long-range two-point correlation vanishes in gapped ground states.
However, note that the assumption is strictly stronger than the vanishing of mutual information itself. For instance, an infinite temperature Gibbs state satisfies Eq.~\eqref{eq:vanishMI} but does not satisfy \ref{as:A0'}.
Eq.~(\ref{eq:tau_A1}) implies that the \emph{quantum conditional mutual information} vanishes:
\begin{equation}\label{eq:vanishCMI}
I(A:C|B)_\sigma\equiv (S_{AB}+S_{BC}-S_{B}-S_{ABC})_{\sigma}=0,
\end{equation}
where $A$ is contained in the complement of $BCD$. Again, \ref{as:A1} is a stronger condition than Eq.~\eqref{eq:vanishCMI}. 

While these axioms can be derived from Eq.~\eqref{eq:area_law},  they are weaker assumptions and may hold in more general settings. Moreover, we expect these axioms to be a more well-defined way of formulating our assumptions compared to Eq.~\eqref{eq:area_law}  because they manifestly get rid of the leading divergent term in the area law. In reality, we expect our assumptions to be satisfied only approximately, up to an error that decays exponentially with $\frac{r}{\xi}$, where $r$ is the length scale of the subsystems and $\xi$ is the correlation length.  
We believe our framework has a natural extension to these cases since every theoretical tool we use can be generalized to such situations (see also Discussion).

How we use these axioms will be explained later in this paper, starting from Section \ref{Sec:Isomorphism}. However, for the hasty readers, we can briefly summarize our intuition as follows. The intuition behind \ref{as:A0'} is that the correlation between two sufficiently separated subsystems is negligible. This fact allows us to decouple two such subsystems without affecting either of them too much. The intuition behind \ref{as:A1} is more profound and subtle. Let us consider the following example. Suppose we have a density matrix over $AB$ and another density matrix over $BC.$ Our goal is to ``merge" these two density matrices; we want to construct a density matrix over $ABC$ whose marginal density matrices on $AB$ and $BC$ are consistent with the given data. It is well-known that one cannot generally do this in quantum systems. For example, if $AB$ is a maximally entangled state and $BC$ is also (the same) maximally entangled state, there cannot be an extension of these density matrices into a single state acting on $ABC$. By assuming Axiom \ref{as:A1}, we can bypass this difficulty and merge the density matrices that are given to us. The majority of our key results follow from this merging process. In particular, we shall extensively use nontrivial identities involving the entropies of the merged subsystems.

\subsection{Main results\label{sec:main_results}}
Our framework employs the notion of \emph{information convex set} \cite{Kim2015sydney,2019PhRvB..99c5112S}, the structure of which has been recently conjectured~\cite{2019PhRvR...1c3048S}. 
Loosely speaking, this is a set of reduced density matrices on a given subsystem; these density matrices are locally indistinguishable from the global reference state, but with an extra structure on this set in order to facilitate our analysis; see Sec.~\ref{Sec.Info_convex_definition} for the details. 
Crucially, information convex sets are defined from a single state. As such, we do not need to invoke any assumption about the parent Hamiltonian.

The following results follow from Axiom \ref{as:A0'} and \ref{as:A1}. Again, we do not assume anything about the parent Hamiltonian.

\begin{enumerate}
	\item \textbf{Isomorphism theorem}\\
	Let $A$ and $B$ be two subsystems which can be smoothly deformed from one to another. We show that the information convex sets associated with $A$ and $B$ are \emph{isomorphic}  (Theorem~\ref{thm: the isomorphism theorem}). These sets can be mapped onto each other by a linear bijective map. Moreover, these maps preserve the distance and the entropy difference between the elements of the information convex set. Concretely, let $\Phi$ be one such map. Then $D(\rho, \rho') = D(\Phi(\rho), \Phi(\rho'))$ for any distance measure $D(\cdot, \cdot)$. Moreover, $S(\rho) - S(\rho') = S(\Phi(\rho)) - S(\Phi(\rho'))$. Therefore, the structure of the information convex set only depends on the topology of the region associated with it.

	\item \textbf{A well-defined notion of topological charges}\\
	We show that the information convex set of an annulus is a simplex whose extreme points are orthogonal to each other. That is, any state $\rho$ in the information convex set of an annulus must have the following form (Theorem~\ref{Prop: structure_2}):
	\begin{equation*}
	\rho = \bigoplus_a p_a \,\sigma^a\,,
	\end{equation*}
	where $\{p_a\}$ is a probability distribution over a finite set, and $\sigma^a$ is a state only depending on the choice of the region. We define the label $a$ as a topological charge/superselection sector of the system. In exactly solvable models (e.g., toric code), $\sigma^a$ corresponds to a reduced density matrix of an annulus that surrounds a topological charge $a$. Different superselection sectors are perfectly distinguishable from each other. Moreover, the isomorphism theorem (see the first main result) implies that the charge is globally well-defined. We furthermore prove that for each sector, there exists a unique anti-sector (Sec.~\ref{Sec. Fusion}).

	\item\textbf{Extracting fusion multiplicities}\\
We completely characterize the information convex set of a $2$-hole disk. We show that any element in this set can be expressed as
\begin{equation*}
    \rho = \bigoplus_{a,b,c \in \mathcal{C}}  p_{ab}^c \, \rho^{abc},
\end{equation*}
where $\mathcal{C}$ is the set of topological charges, $\{p_{ab}^c \}$ is a probability distribution, and $\{ \rho^{abc} \}$ are mutually orthogonal quantum states
labeled by $a,b,$ and $c$. We show that, for each choice of $a$, $b$ and $c$, the set $\{ \rho^{abc} \}$ forms a 
state space of a finite-dimensional Hilbert space; see Theorem~\ref{Thm:}.
The dimensions of these Hilbert spaces are identified as the fusion multiplicities of the underlying anyon theory. In the context of the anyon theory, $a$ and $b$ label the two charges associated with the holes and $c$ represents the total charge of the disk.

	\item\textbf{Axioms of the anyon fusion theory}\\
	Our definition of the fusion multiplicities (see the third main result) satisfies all the axioms of the anyon fusion theory. 
	The proof is based on the merging technique~\cite{2016PhRvA..93b2317K}. We derived several consistency relations by ``merging" two density matrices and comparing the entropy of the reduced density matrices before and after the merge. An equation that relates these entropies leads to the axioms of the anyon fusion theory.
	
	\item\textbf{Topological entanglement entropy}\\
	From our definition of the fusion multiplicities, we further show that the sub-leading constant term $\gamma$ in Eq.~\eqref{eq:area_law} is the logarithm of the total quantum dimension. Unlike in \cite{Kitaev2006}, our derivation makes no assumption about the underlying effective field theory. 
	
\end{enumerate}

As a corollary, many of the anyon data can, in principle, be extracted from the local information of a single ground state. Importantly, we do not need to assume that the low-energy excitations can be described by a unitary modular tensor category. (Instead, we need \ref{as:A0'} and \ref{as:A1}.) In contrast, an oft-used numerical method~\cite{Zhang2012} requires global information of multiple ground states and the assumption that the system is described by a modular tensor category. Also, Haah~\cite{Haah2016} has attempted to extract the topological $S$-matrix from a single ground state. Haah's final argument that his invariant matches the $S$-matrix, however, still relies on the modular tensor category assumptions. 

Finally, we can also define a unitary string operator that creates an anyon pair. The support of the string can deform freely as long as the endpoints are fixed (Appendix~\ref{appendix:String}).

\section{Axiom extension, Information convex set and Isomorphism Theorem}\label{Sec:Isomorphism}
We have proclaimed in Section~\ref{sec:main_results} that we can, among many things, establish a \emph{globally well-defined} notion of topological charge. Because our axioms (Axiom \ref{as:A0'} and \ref{as:A1}) are assumed to hold only on bounded-radius disks, the fact that such a notion can even exist in the first place is not obvious at all. 

In order to explain how this works, we choose the main theme of this section to be ``local to global." Starting from our local axioms (Axiom \ref{as:A0'} and \ref{as:A1}), we will see how we can infer some of the global statements we made in Section~\ref{sec:main_results}. The first step in this direction lies in extending our local axioms to larger regions. Note that our axioms are assumed to hold only on (bounded-sized) $\mu$-disks. In Section \ref{sec:extaxiom}, we will show that the axioms in fact hold on arbitrarily large disks.

Then we move onto a concept that plays the central role in this paper: information convex set. In order to define an information convex set, we need to fix a subsystem and consider a set of states on this subsystem which are locally indistinguishable from the given reference state. Let us refer to this subsystem as $\Omega'$ and the set of states as $\tilde{\Sigma}(\Omega')$.

The set $\tilde{\Sigma}(\Omega')$ is \emph{not} the information convex set, but we are getting close. For every state in $\tilde{\Sigma}(\Omega')$, trace out part of the subsystem that lies at the ``outer edge" of $\Omega'$; see Fig.~\ref{Info_convex_definition}. This way, we get a set of reduced density matrix over a subset of $\Omega'$. Let this subset be $\Omega$. What we have obtained is the information convex set of $\Omega$, which we refer to as $\Sigma(\Omega)$. As one can see, this is quite an involved definition! However, in our opinion, we have a sensible justification. Why we choose $\Sigma(\Omega)$ over $\tilde{\Sigma}(\Omega')$ as a main actor of our story will be the main topic of discussion in Section~\ref{Sec.Info_convex_definition}.

In Section ~\ref{sec:elementary_steps_iso_theorem}, we derive the key technical result of this section: the \emph{isomorphism theorem} (Theorem~\ref{thm: the isomorphism theorem}). This theorem establishes an equivalence between information convex sets for topologically equivalent subsystems connected by a path. 

This section may seem a bit abstract at first, so it will be useful to have a concrete physical picture in mind. Consider a topologically ordered medium \cite{doi:10.1142/S0217979290000139} that can host anyons \cite{Leinaas1977,PhysRevLett.49.957}. It is well-known that, within such a medium, there is a \emph{globally} well-defined notion of superselection sectors and fusion space. This is because one can adiabatically transport anyons from different regions and compare them. For example, suppose we have two anyons that are well-separated from each other. How would we able to decide if they are the same anyon type or not? One can adiabatically bring either of the anyons to some fixed location and perform an Aharonov-Bohm type interference experiment. If the underlying anyon theory is unitary, there must be some experiment that can distinguish two different types of anyons.

In the above illustrative example, we observed that there is a physical process by which we can compare anyons or even a collection of anyons with each other. This comparison is possible if one can transport these objects from one place to another. The isomorphism theorem formalizes this transportation process in a Hamiltonian-independent manner.  

\subsection{Extension of the axioms}\label{sec:extaxiom}
In this Section, we show that our axioms (Axiom \ref{as:A0'} and \ref{as:A1}) hold on arbitrarily large disks. Results in this section have been discussed in Ref.~\cite{2014arXiv1405.0137K}. In order to understand this section, one should become familiar with the notion of \emph{quantum Markov state}~\cite{Petz1987,2004CMaPh.246..359H}. A quantum Markov state is a tripartite state, say over subsystems $A,B,$ and $C$, such that
\begin{equation*}
    I(A:C|B)=S(\rho_{AB}) + S(\rho_{BC}) - S(\rho_B) - S(\rho_{ABC})
\end{equation*}
is equal to $0$. These states enjoy several nontrivial properties, which we make extensive use of.

The following two lemmas will be used frequently.

\begin{lemma} \label{lemma_growth}
	Let $\rho_{ABC}$ and $\sigma_{ABC}$ be density matrices such that (1) $\rho_{AB}=\sigma_{AB}$ and $\rho_{BC}=\sigma_{BC}$; (2) $I(A:C\vert B)_{\rho}= I(A:C\vert B)_{\sigma}=0$. Then $\rho_{ABC} = \sigma_{ABC}$.
\end{lemma}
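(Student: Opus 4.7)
The plan is to invoke the Petz recovery map characterization of quantum Markov chains, which says that a tripartite state with vanishing conditional mutual information is completely determined by two of its marginals. Concretely, I will use the structure theorem of Hayden–Jozsa–Petz–Winter: $I(A:C|B)_\rho = 0$ holds if and only if $\rho_{ABC}$ can be written as $\rho_{ABC} = \mathcal{R}^{\rho_{BC}}_{B\to BC}(\rho_{AB})$, where $\mathcal{R}^{\rho_{BC}}_{B\to BC}$ is the Petz recovery map
\begin{equation*}
\mathcal{R}^{\rho_{BC}}_{B\to BC}(X_{AB}) = \rho_{BC}^{1/2}\,\bigl(\rho_B^{-1/2} X_{AB}\, \rho_B^{-1/2} \otimes I_C\bigr)\,\rho_{BC}^{1/2},
\end{equation*}
interpreted on the support of $\rho_B$. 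This recovery channel depends only on the marginal $\rho_{BC}$, not on the full state $\rho_{ABC}$.

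First I would state the invertibility/reconstruction fact and note that it applies equally to $\rho_{ABC}$ and $\sigma_{ABC}$ since, by hypothesis (2), both have vanishing conditional mutual information. Next, I would observe that hypothesis (1) forces the two Petz maps associated with $\rho_{BC}$ and $\sigma_{BC}$ to coincide, because $\rho_{BC} = \sigma_{BC}$ (and thus $\rho_B = \sigma_B$). Finally, applying this common recovery map to the common marginal $\rho_{AB} = \sigma_{AB}$ yields
\begin{equation*}
\rho_{ABC} = \mathcal{R}^{\rho_{BC}}_{B\to BC}(\rho_{AB}) = \mathcal{R}^{\sigma_{BC}}_{B\to BC}(\sigma_{AB}) = \sigma_{ABC},
\end{equation*}
which is the desired conclusion.

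There is no serious obstacle here beyond invoking the Markov-chain structure theorem correctly; the only mild subtlety is handling degeneracies in $\rho_B$. This can be sidestepped by restricting to the support of $\rho_B$ (equivalently, $\sigma_B$) and using the Moore–Penrose pseudoinverse in the definition of the Petz map, so that the formula is well-defined; alternatively, one can appeal directly to the block-decomposition form of the theorem, in which the Hilbert space of $B$ splits as $\mathcal{H}_B = \bigoplus_j \mathcal{H}_{B_j^L} \otimes \mathcal{H}_{B_j^R}$ and $\rho_{ABC} = \bigoplus_j q_j\, \rho_{AB_j^L} \otimes \rho_{B_j^R C}$, with the decomposition, weights, and factor states all determined by $\rho_{AB}$ and $\rho_{BC}$ alone. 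Either route gives the lemma immediately.
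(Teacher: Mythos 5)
Your proof is correct and follows essentially the same route as the paper, which also derives this lemma from the Petz recovery map characterization of vanishing conditional mutual information (cited there as Lemma~\ref{lemma_Petz}): both states are reconstructed by the same recovery channel $\calE^{\sigma}_{B\to BC}$ applied to the same marginal, hence they coincide. Your handling of the support of $\rho_B$ via the pseudoinverse or the block decomposition is a reasonable way to make the standard statement precise, but it is not a departure from the paper's argument.
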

The proof follows from  Ref.~\cite{2003RvMaP..15...79P}. Suppose we know $\sigma_{AB}$ and $\sigma_{BC}$ for the partition in Fig.~\ref{Grow_a_disk_12_PDF}. Then \ref{as:A1} implies $I(A:C\vert B)_{\sigma}\le (S_{BC} + S_{CD} - S_B - S_D)_{\sigma} =0$. This lemma implies that the state $\sigma_{ABC}$ is uniquely determined by its reduced density matrices. Moreover, there is a CPTP map which can recover $\sigma_{ABC}$ from its reduced density matrices. This map is known as the Petz recovery map.
\begin{lemma}\label{lemma_Petz}
	(Petz recovery map~\cite{2003RvMaP..15...79P}) For any tripartite state $\rho_{ABC}$, $I(A:C|B)_\rho=0$ if and only if 
	\begin{align}
	\rho_{ABC}=\calE_{B\to BC}^\rho(\rho_{AB})\,,
	\end{align}
	where $\calE_{B\to BC}^\rho$ (the Petz recovery map) is defined as
	\begin{align}
	\calE_{B\to BC}^\rho(X_B)= \rho^{\frac{1}{2}}_{BC}\rho_B^{-\frac{1}{2}}X_B\rho_B^{-\frac{1}{2}}\rho_{BC}^{\frac{1}{2}}\,.\nonumber
	\end{align}
\end{lemma}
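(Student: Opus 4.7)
My plan is to prove the two directions of the biconditional separately, leveraging the structure theorem for quantum Markov chains due to Hayden, Jozsa, Petz, and Winter \cite{2004CMaPh.246..359H}. That theorem states that $I(A:C|B)_\rho = 0$ holds if and only if there is a decomposition
\begin{equation*}
\calH_B = \bigoplus_i \calH_{B_L^i} \otimes \calH_{B_R^i}
\end{equation*}
relative to which
\begin{equation*}
\rho_{ABC} = \bigoplus_i p_i \, \rho^{i}_{AB_L^i} \otimes \rho^{i}_{B_R^i C},
\end{equation*}
with $\{p_i\}$ a probability distribution and each factor a normalized state. My strategy is to reduce both directions of the lemma to a direct verification in this block-diagonal form.

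For the converse direction ($I(A:C|B)_\rho = 0 \Rightarrow \rho_{ABC} = \calE_{B\to BC}^\rho(\rho_{AB})$), I would start from the structure theorem and read off the marginals: $\rho_{AB}$, $\rho_{BC}$, and $\rho_B$ inherit the same block decomposition with factors $\rho^i_{AB_L^i} \otimes \rho^i_{B_R^i}$, $\rho^i_{B_L^i} \otimes \rho^i_{B_R^i C}$, and $\rho^i_{B_L^i} \otimes \rho^i_{B_R^i}$ (each weighted by $p_i$). I would substitute these into $\rho_{BC}^{1/2}\rho_B^{-1/2}\rho_{AB}\rho_B^{-1/2}\rho_{BC}^{1/2}$, work block by block, and simplify using that $\rho_{B_L^i}^{1/2}\rho_{B_L^i}^{-1/2}$ is the projector onto the support of $\rho_{B_L^i}$, together with the fact that $\rho^i_{AB_L^i}$ already lives on that support. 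The $B_R^i$-factor and the $C$-factor combine into $\rho^i_{B_R^i C}$ since the identity there is absorbed by $\rho_{B_R^i C}^{1/2} \cdot I \cdot \rho_{B_R^i C}^{1/2} = \rho^i_{B_R^i C}$, and the $p_i$ factors cancel as $p_i^{-1} \cdot p_i \cdot p_i = p_i$. Block-summing gives $\rho_{ABC}$ exactly.

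For the forward direction ($\rho_{ABC} = \calE_{B\to BC}^\rho(\rho_{AB}) \Rightarrow I(A:C|B)_\rho = 0$), I would argue that the image of the Petz map inherits the block structure of $\rho_B$ and $\rho_{BC}$. A short calculation shows that any state of the form $\calE^\rho_{B\to BC}(\rho_{AB})$ is a direct sum of states $\propto \tau^i_{AB_L^i} \otimes \rho^{i}_{B_R^i C}$ for some $\tau^i_{AB_L^i}$, which manifestly satisfies the vanishing of the conditional mutual information. Alternatively, one may apply the data-processing inequality to the CPTP map $\calE^\rho_{B\to BC}$: since $I(A:C|B)$ vanishes on the pre-map state (the trivial extension of $\rho_{AB}$ to $ABC$ by $\mathrm{tr}_C$) and cannot increase, it remains zero.

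The main obstacle—both conceptually and technically—is the Hayden-Jozsa-Petz-Winter structure theorem itself, since the equality condition for strong subadditivity is a deep result. Granted that ingredient (and its proof in \cite{2004CMaPh.246..359H}), the lemma reduces to the explicit block computation sketched above; this is precisely the content of Petz's theorem \cite{2003RvMaP..15...79P}, so my write-up would consist of a citation plus the one-paragraph verification.
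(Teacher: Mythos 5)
The paper itself offers no proof of this lemma --- it is imported verbatim from Petz~\cite{2003RvMaP..15...79P} (and Lemma~\ref{lemma_growth} likewise) --- so the only question is whether your sketch is sound. Your converse direction is: granting the Hayden--Jozsa--Petz--Winter structure theorem, the block-by-block verification that $\rho_{BC}^{1/2}\rho_B^{-1/2}\rho_{AB}\rho_B^{-1/2}\rho_{BC}^{1/2}$ reassembles $\rho_{ABC}$ is correct, including the bookkeeping of the $p_i$ factors and the support projectors. The only caveat is that HJPW is itself normally derived from the equality condition for monotonicity of relative entropy, i.e.\ from a result at least as deep as the one being proved, so this is a verification modulo a heavier cited theorem rather than an independent argument --- acceptable here, but worth stating plainly.

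The forward direction, however, has a genuine gap: neither of your two arguments works as written. The block-structure argument is circular, because a generic bipartite marginal $\rho_{BC}$ admits no decomposition of the form $\bigoplus_i p_i\,\rho^i_{B_L^i}\otimes\rho^i_{B_R^iC}$; that decomposition is precisely what HJPW delivers \emph{as a consequence of} $I(A:C|B)_\rho=0$, which is the statement you are trying to establish. The data-processing alternative is also invalid: $I(A:C|B)$ is \emph{not} monotone under channels acting on the conditioning system $B$, even isometries $V:\calH_B\to\calH_B\otimes\calH_C$. For instance, with $B=B_1B_2$, the isometry that relocates $B_2$ into $C$ (refilling $B_2$ with a fixed pure state) produces a state with $I(A:C|B)=I(A:B_2|B_1)_\rho$, generically positive, while the input trivially had vanishing conditional mutual information. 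The standard repair is to apply monotonicity of relative entropy to the Petz channel itself: write $I(A:C|B)_\rho=S(\rho_{ABC}\,\|\,I_A\otimes\rho_{BC})-S(\rho_{AB}\,\|\,I_A\otimes\rho_B)$, note that $\calE^\rho_{B\to BC}(I_A\otimes\rho_B)=I_A\otimes\rho_{BC}$, and conclude from the hypothesis $\rho_{ABC}=\calE^\rho_{B\to BC}(\rho_{AB})$ and $S(\rho\|\sigma)\ge S(\calE(\rho)\|\calE(\sigma))$ that $I(A:C|B)_\rho\le 0$; strong subadditivity then forces equality with zero.
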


\begin{figure}[h]
	\centering
	\includegraphics[scale=1]{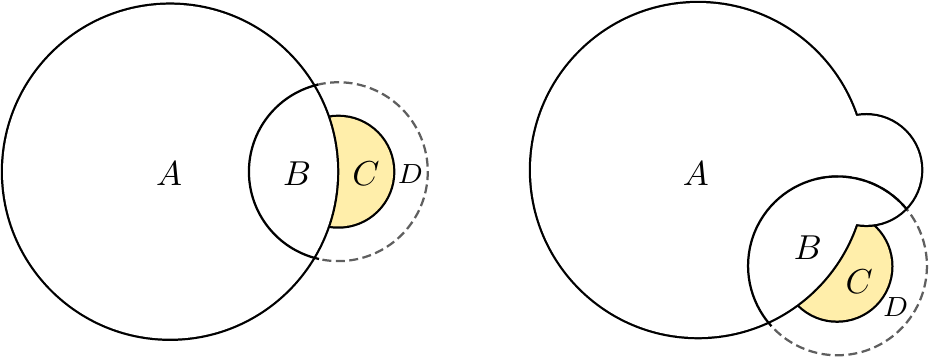}
	\caption{An illustration of the growth procedure of a disk from $AB$ to $ABC$. Here $A$ can be large and $BCD$ is contained in a $\mu$-disk in a manner similar to Fig.~\ref{A1}.}
	\label{Grow_a_disk_12_PDF}
\end{figure}

Now we are in a position to state the main result of this section. In Proposition~\ref{Prop:axiom_large}, we show that Axioms~\ref{as:A0'} and \ref{as:A1} hold for all disks. By assumption, these axioms hold for $\mu$-disks. The nontrivial part of the statement is that the axioms hold at a larger length scale.

\begin{figure}
	\centering
	\includegraphics[scale=1]{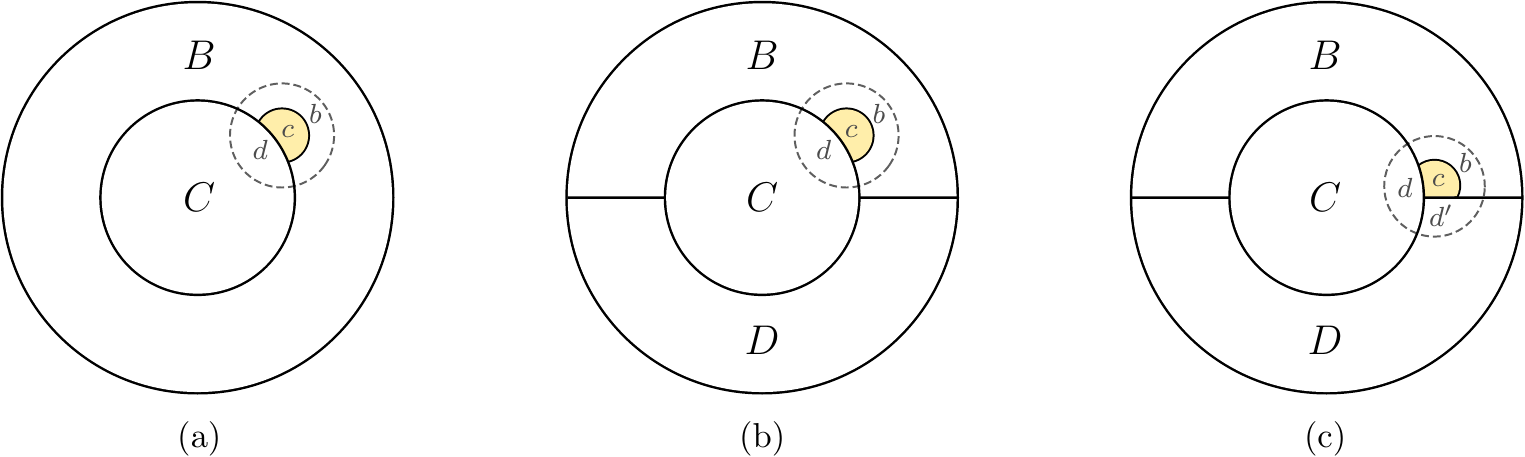}
	\caption{The extension of the axioms. A disk is divided into either $BC$ or $BCD$. A $\mu$-disk is on a smaller length scale, i.e., the small dashed circle surrounding the colored region. These figures represent three ways of enlarging $C$ by a small step. (a)  $bc\subseteq B$ and $d\subseteq C$; (b)  $bc\subseteq B$ and $d\subseteq C$; (c) $bc\subseteq B$, $d\subseteq C$ and $d'\subseteq D$.}\label{enlarge_axioms}
\end{figure}

\begin{Proposition}\label{Prop:axiom_large}
	For a reference state satisfying axioms \ref{as:A0'} and \ref{as:A1}, the entropic conditions Eq.~(\ref{eq:tau_A0}) and  Eq.~(\ref{eq:tau_A1}) are satisfied on all larger disk-like subsystems.
\end{Proposition}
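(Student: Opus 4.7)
The plan is to prove the extension by a growth argument: starting from a $\mu$-disk where the axioms hold by hypothesis, I iteratively enlarge the disk one small step at a time, checking that the entropy identity is preserved at each step. Because both axioms express the saturation of strong subadditivity on a disk-with-core configuration, the main technical tool is the Markov-chain structure of such states, controlled by Lemma~\ref{lemma_growth} and Lemma~\ref{lemma_Petz}.

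The induction is set up as follows. For a target disk with partition $BCD$ (the $BC$ partition of A0 being the degenerate case $D = \emptyset$, so the same argument will cover both axioms), I construct a nested family of disks $\Delta_0 \subset \Delta_1 \subset \cdots \subset \Delta_N$ with $\Delta_0$ a $\mu$-disk and $\Delta_N$ the target, where each growth step $\Delta_k \to \Delta_{k+1}$ enlarges the core $C$ by a small region whose local neighborhood fits inside a single $\mu$-disk. The three prototypical local configurations in Fig.~\ref{enlarge_axioms} are designed to cover every local geometry that can arise. The claim to prove is that the entropy identity propagates from $\Delta_k$ to $\Delta_{k+1}$.

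To carry out one step, I rewrite the entropy identity as a vanishing conditional mutual information: using a purification of the global state $\sigma$ and letting $A$ denote the complement of the current disk, A1 on the disk is equivalent to $I(A:C|B)_\sigma = 0$, and A0 is equivalent to $I(A:C)_\sigma = 0$ (both equivalences following directly by rewriting complementary entropies). By Lemma~\ref{lemma_Petz}, the vanishing CMI amounts to recoverability of the global state under the Petz map. The local axiom applied to the $\mu$-disk used for the growth gives an analogous recoverability involving the small new piece being absorbed. I then combine the two recoverability statements, invoking Lemma~\ref{lemma_growth} to guarantee that the merged state coincides with the true reduced density matrix of $\sigma$ on the enlarged region rather than with some other Markov extension. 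The combination yields the vanishing of the enlarged CMI, which translates back into the entropy identity on $\Delta_{k+1}$.

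The hardest part, conceptually, is not the entropic computation but the verification that every growth step can indeed be realized as one of the three cases of Fig.~\ref{enlarge_axioms}, and that the Markov extensions coming from the inductive hypothesis and from the $\mu$-disk axiom are compatible on their overlap in the precise sense required by Lemma~\ref{lemma_growth} (matching marginals and joint Markov property on the overlap). Once this geometric compatibility is verified, the identities propagate cleanly through each step. Iterating through the nested family yields Axioms \ref{as:A0'} and \ref{as:A1} at the target scale, and since the target disk was arbitrary, both axioms hold on all disk-like subsystems.
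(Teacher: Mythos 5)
Your overall skeleton --- grow the disk out of a $\mu$-disk in small steps, each step controlled by a single $\mu$-disk, using the local configurations of Fig.~\ref{enlarge_axioms} --- is the same as the paper's. But the mechanism you propose for propagating the identity through a step does not work as stated. You want to combine two recoverability statements and then invoke Lemma~\ref{lemma_growth} ``to guarantee that the merged state coincides with the true reduced density matrix of $\sigma$.'' Lemma~\ref{lemma_growth} identifies two states only when \emph{both} of them have vanishing conditional mutual information on the partition in question; to apply it to $\sigma$ on the enlarged region you would need $I(A:C|B)_{\sigma}=0$ there already, which is precisely what you are trying to prove. The recoverability/Petz machinery is the wrong tool here because the proposition is an entropic identity about $\sigma$ itself, not a claim that some constructed Markov extension agrees with $\sigma$. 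The paper's actual argument is a squeeze that never identifies any state: by Araki--Lieb and weak monotonicity the combinations $S_{BC}+S_{C}-S_{B}$ and $S_{BC}+S_{CD}-S_{B}-S_{D}$ are always $\ge 0$; by SSA (Eqs.~(\ref{eq:enlarge_1})--(\ref{eq:enlarge_3})) each growth move can only decrease them; they vanish on $\mu$-disks, hence they vanish on every larger disk. The workable core of your plan --- the chain rule $I(A:C_k c\,|\,B_k\backslash c)=I(A:c\,|\,B_k\backslash c)+I(A:C_k|B_k)$ with the increment bounded by a local $\mu$-disk quantity via SSA --- is exactly that computation in different notation, and it needs neither Lemma~\ref{lemma_growth} nor Lemma~\ref{lemma_Petz}.

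A second, smaller gap: your nested family only ever enlarges the core $C$, and the three configurations of Fig.~\ref{enlarge_axioms} indeed describe only $C$-enlargements, so your claim that they ``cover every local geometry'' fails for configurations in which the boundary pieces $B$ and $D$ are thick (many $\mu$-disks wide), which the proposition must cover. The paper needs a separate first move --- enlarge $B$ and $D$ while keeping $C$ fixed, justified by Eq.~(\ref{eq:enlarge_2}) with $B'$ and $D'$ chosen to touch --- before the $C$-growing moves take over. Without it your induction establishes the identities only for thin boundary annuli.
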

\begin{proof}

We shall extensively use the following two inequalities:
\begin{eqnarray}
S_{BC}+S_{C}-S_{B}&\ge& S_{BB'C}+S_{C}-S_{BB'},\label{eq:enlarge_1}\\
	S_{BC}+S_{CD}-S_{B}-S_{D} &\ge& S_{BB'C}+S_{CDD'}-S_{BB'}-S_{DD'}.\label{eq:enlarge_2}
\end{eqnarray}
Both of them follow from SSA.

Let us first extend Axiom~\ref{as:A0'} to a larger scale. 
Without loss of generality, consider a disk and its subsystem $B$ and $C$, shown in Fig.~\ref{enlarge_axioms}(a). Assume that $BC$ is not contained in any $\mu$-disk. One can consider a sequence of (enlarged) subsystems to obtain this disk ($BC$) from a $\mu$-disk. 

For this purpose, it suffices to consider the following two moves. The first move is to enlarge $B$ while keeping $C$ fixed. The second move is to enlarge $C$ while keeping $BC$ fixed.  Our goal is to show that, for both of these steps, the linear combination of entropy $S_{BC} + S_C - S_B$ is non-increasing. The first move preserves Eq.~(\ref{eq:tau_A0}) because of Eq.~(\ref{eq:enlarge_1}) where we set $B\sqcup B'$ to be the enlarged $B$. To understand why the second move preserves Eq.~(\ref{eq:tau_A0}), consider a deformation depicted in  Fig.~\ref{enlarge_axioms}(a). We need to show $S_{BC} + S_C - S_B$ is non-increasing when we deform $C$ to include the colored region of Fig.~\ref{enlarge_axioms}(a). A variation of Eq.~(\ref{eq:enlarge_2}), which involves the $\mu$-disk $bcd$, is
\begin{equation}
S_{B} + S_{Cc}  - S_{B\backslash c} - S_{C} \le S_{bc} + S_{cd} - S_b - S_d =0. \label{eq:enlarge_3}
\end{equation}
Therefore, both moves preserve Eq.~(\ref{eq:tau_A0}). Because any disk can be enlarged from a $\mu$-disk by applying a sequence of these moves, Axiom~\ref{as:A0'} holds for any disk.

Now, let us move on to Axiom~\ref{as:A1}. Similarly, we can obtain a larger disk, say $BCD$, from a $\mu$-disk by making use of the following moves. As usual, the choice of the subsystems are similar to the one used in Axiom~\ref{as:A1}. The first move is to enlarge $B$ and $D$ while keeping $C$ fixed. The second move is to enlarge $C$ while keeping $BCD$ fixed.

Let us show that the entropic condition (\ref{eq:tau_A1}) holds at every step. 
The first move preserves Eq.~(\ref{eq:tau_A1}).  This is because inequality~(\ref{eq:enlarge_2}) can be applied to enlarged region $BB'$ and $DD'$. Notably, $B'$ and $D'$ can be chosen to touch each other. This allows us to show Eq.~(\ref{eq:tau_A1}) for the arbitrary  $BB'$ and $DD'$ satisfying the topology condition, by suitably choosing $B$ and $D$ in the initial $\mu$-disk. 
For the second move, we can consider the following sequence of small steps shown in Fig.~\ref{enlarge_axioms}(b)(c) as well as the same steps in which the choice of $B$ and $D$ are switched. 
Here is the justification of each small step. It is sufficient to justify the steps shown in Fig.~\ref{enlarge_axioms}(b) and (c). Regarding the enlargement process described in Fig.~\ref{enlarge_axioms}(b), we have
\begin{equation}
	  S_{B} + S_{CDc} - S_{B\backslash c} - S_{CD}\le S_{bc} + S_{cd} - S_{b} - S_{d}=0.  
\end{equation}
For the enlargement process in Fig.~\ref{enlarge_axioms}(c), we have
\begin{equation}
	  S_{B} + S_{CDc} - S_{B\backslash c} - S_{CD}\le S_{bc} + S_{cdd'} - S_{b} - S_{dd'}=0.  
\end{equation}

Therefore, both Eq.~(\ref{eq:tau_A0}) and Eq.~(\ref{eq:tau_A1}) hold at larger length scales. This completes the proof.

\end{proof}

\subsection{Information convex set}\label{Sec.Info_convex_definition}
Care must be taken in reading this section, for we are about to explain the most important concept in this paper: information convex set. Let us begin with some definitions. We say that two density matrices $\rho$ and $\rho'$
are \emph{consistent} with each other if they have identical density matrices on the overlapping support, i.e., $\rho_{A}=\rho'_A$ where $A$ is the intersection of the support of $\rho$ and that of $\rho'$, and denote it by    
\begin{equation*}
\rho \overset{c}{=} \rho'. 
\end{equation*}

For each subsystem $\Omega \subseteq V$, we can define its information convex set by first considering the ``thickening" of $\Omega$. Let $\partial \Omega \subseteq V$ be a set of vertices that are (graph) distance $1$ away from $\Omega$. If the subsystem $\Omega \sqcup \partial \Omega$ is topologically equivalent to $\Omega$, we refer to that subsystem as a thickening of $\Omega$. Let there exists a thickening of $\Omega$; we denote the thickening as $\Omega'$. Intuitively, $\Omega'$ is a subsystem that can be smoothly deformed into $\Omega$ such that $\Omega' \backslash \Omega$ is a boundary of $\Omega$ with a non-vanishing thickness. This thickness must be chosen to be sufficiently large compared to the correlation length of the underlying state.

Now, we are in a position to define the information convex set.
\begin{definition}  \label{def: info_convex}
Let $\Omega \subseteq V$ be a subsystem and let $\Omega'$ be the thickening of $\Omega$. The information convex set of $\Omega$ is defined as
	\begin{equation}
	\Sigma(\Omega)\equiv \{ \rho_{\Omega} \vert \rho_{\Omega}= \Tr_{\Omega'\backslash \Omega} \,\,\rho_{\Omega'}, \rho_{\Omega'}\in\tilde{\Sigma}(\Omega')  \}\,, \label{eq:info_convex}
	\end{equation}
	where $\tilde{\Sigma}(\Omega')$ is defined as
	\begin{equation*}
	\tilde{\Sigma}(\Omega') = \{\rho_{\Omega'}| \rho_{\Omega'} \overset{c}{=} \sigma_b, \quad \forall \sigma_b \in \mu \}.
	\end{equation*}
\end{definition}
This definition was first advocated in \cite{Kim2015sydney}. A related Hamiltonian-based definition was introduced in Ref.~\cite{2019PhRvB..99c5112S}. 

One may have thought that $\tilde{\Sigma}(\Omega)$ would be a more natural definition of information convex set. However, it turns out that the local indistinguishability constraints are insufficient to fix some of the degrees of freedom at the boundaries of $\Omega$. For this reason, it is possible for some of the elements in $\tilde{\Sigma}(\Omega)$ to possess entanglement between two distant $\mu$-disks in the vicinity of the boundary of $\Omega$. Moreover, even a multipartite correlation may arise along the boundary.
These exceptional elements cannot be regarded as the reduced state of a state on a larger support that is also locally indistinguishable from the reference state. In particular, they contain unnecessary extra information that has nothing to do with the anyon theory. The additional partial trace operation in the definition of $\Sigma(\Omega)$ removes such irrelevant correlations around the boundaries.

As it stands, the definition of information convex set is independent of our axioms (Axiom \ref{as:A0'} and \ref{as:A1}). However, once we impose these axioms, one can come up with a more restrictive definition of information convex set that does not involve $\Omega'$. This definition is formulated in Definition~\ref{def: sigma_hat}. Proposition~\ref{Prop: equivalent_info_convex} establishes the equivalence of the two.

\begin{figure}[h]
	\centering
\includegraphics[scale=1.15]{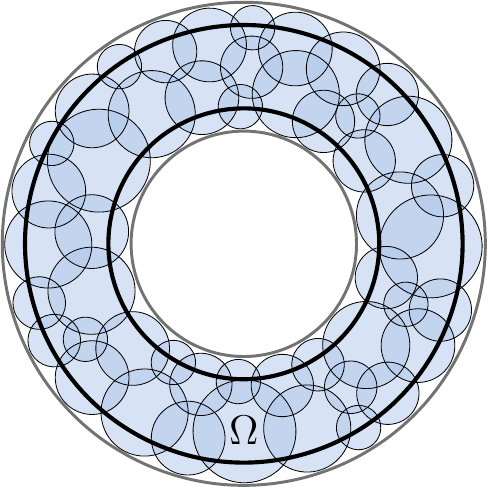}
	\caption{This figure is a schematic depiction of regions involved in the definition of information convex set $\Sigma(\Omega)$. Here $\Omega$ is the annulus between the black circles, annulus $\Omega'\supseteq \Omega$ is the thickening of $\Omega$, i.e., the region between gray circles. Any element in ${\tilde \Sigma}(\Omega')$ is consistent with the reference state $\sigma$ on every $\mu$-disk contained in $\Omega'$ (the blue disks). We chose $\Omega$ to be an annulus for illustration purposes. Other topologies are allowed as well.}
	\label{Info_convex_definition}
\end{figure}

Below, we derive properties of the information convex set. First of all, information convex set is a convex subset of the state space. This follows straightforwardly from the definition.
\begin{Proposition}\label{Prop.Info_convex_basic}
	For any nonempty $\Omega \subseteq V$, $\Sigma(\Omega)$ is a nonempty finite-dimensional compact convex set. Furthermore, if $\Omega\subseteq \Omega'$ and $\rho_{\Omega'}\in \Sigma(\Omega')$, then $\Tr_{\Omega'\backslash\Omega}\,\rho_{\Omega'}\in \Sigma(\Omega)$.
\end{Proposition}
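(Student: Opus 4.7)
The plan is to verify each assertion by unwinding Definition~\ref{def: info_convex} and using linearity and continuity of the partial trace; no use of Axiom~\ref{as:A0'} or~\ref{as:A1} is needed. For the whole argument I would write $\Omega^+$ for the thickening of $\Omega$ and $\Omega^{\prime+}$ for the thickening of $\Omega'$. For nonemptiness I would take the reference state itself: $\sigma_{\Omega^+}$ lies in $\tilde\Sigma(\Omega^+)$ because both $\sigma_{\Omega^+}$ and every $\sigma_b$ are marginals of the same global state $\sigma$, so $\sigma_{\Omega^+}\overset{c}{=}\sigma_b$ trivially, and hence $\sigma_\Omega=\Tr_{\Omega^+\setminus\Omega}\,\sigma_{\Omega^+}\in\Sigma(\Omega)$. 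Finite-dimensionality is immediate since $\Sigma(\Omega)\subseteq \calS(\calH_\Omega)$ with each local Hilbert space finite-dimensional. For convexity, the consistency constraint $\rho_{\Omega^+}\overset{c}{=}\sigma_b$ is a linear equation in $\rho_{\Omega^+}$, so $\tilde\Sigma(\Omega^+)$ is convex, and applying the linear partial-trace map preserves convexity. Compactness then follows from closedness (each consistency constraint cuts out a closed affine subset, their intersection is closed, and the partial trace is continuous) together with boundedness in trace norm.

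The second assertion is the only part with any real content. Given $\Omega\subseteq\Omega'$, I would first note that $\Omega^+\subseteq \Omega^{\prime+}$: any vertex $v$ at graph distance $1$ from $\Omega$ is either already in $\Omega'$ or at graph distance $1$ from $\Omega'$, and in either case $v\in\Omega^{\prime+}$. Next, unfolding $\rho_{\Omega'}\in\Sigma(\Omega')$ produces a witness $\rho_{\Omega^{\prime+}}\in\tilde\Sigma(\Omega^{\prime+})$ with $\rho_{\Omega'}=\Tr_{\Omega^{\prime+}\setminus\Omega'}\,\rho_{\Omega^{\prime+}}$. I would propose
\begin{equation*}
\rho_{\Omega^+}\;:=\;\Tr_{\Omega^{\prime+}\setminus\Omega^+}\,\rho_{\Omega^{\prime+}}
\end{equation*}
as the required witness for $\Omega$ and verify two things. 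First, $\rho_{\Omega^+}\in\tilde\Sigma(\Omega^+)$: for every $\mu$-disk $b\subseteq\Omega^+\subseteq\Omega^{\prime+}$ the consistency $\rho_{\Omega^{\prime+}}\overset{c}{=}\sigma_b$ descends to the marginal on $\Omega^+$. Second,
\begin{equation*}
\Tr_{\Omega^+\setminus\Omega}\,\rho_{\Omega^+}\;=\;\Tr_{\Omega'\setminus\Omega}\,\rho_{\Omega'},
\end{equation*}
which holds because both sides equal $\Tr_{\Omega^{\prime+}\setminus\Omega}\,\rho_{\Omega^{\prime+}}$ by commutativity of partial traces over disjoint regions. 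Combining these gives $\Tr_{\Omega'\setminus\Omega}\,\rho_{\Omega'}\in\Sigma(\Omega)$.

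The proposition is essentially bookkeeping and I foresee no substantive analytical obstacle. The only point that deserves care is verifying the inclusion $\Omega^+\subseteq \Omega^{\prime+}$, and implicitly ensuring that $\Omega^+$ is a valid thickening of $\Omega$ (topologically equivalent to $\Omega$), which is built into the very hypothesis that $\Sigma(\Omega)$ is defined. Notably the local entropy axioms play no role here; they only enter later, when one wishes to eliminate the auxiliary region $\Omega^+$ from the definition and pass to the Hamiltonian-free reformulation of Proposition~\ref{Prop: equivalent_info_convex}.
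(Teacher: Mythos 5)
Your proof is correct and follows essentially the same route as the paper's: the paper likewise treats $\Sigma(\Omega)$ as the image of a compact convex subset of a finite-dimensional state space under the linear, continuous partial trace, and it dismisses the last assertion as ``a direct consequence of Definition~\ref{def: info_convex}.'' The only difference is one of detail --- you explicitly exhibit the nonemptiness witness $\sigma_\Omega$ and verify the inclusion of thickenings together with the commutation of partial traces needed for the final claim, all of which the paper leaves implicit.
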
 

\begin{proof}
	The state space of a finite-dimensional Hilbert space is a finite-dimensional compact convex set. The partial trace operation $\Tr_{\Omega'\backslash\Omega}$ is linear and bounded. Therefore, the image of the partial trace is compact and convex. The last statement is a direct consequence of Definition~\ref{def: info_convex}.
\end{proof}

Secondly, we show that the information convex set of a disk contains a single element. We use this result throughout this paper, primarily for identifying the uniqueness of the global state on a sphere and for defining the vacuum sector.\footnote{As a side note, let us point out that this result is based only on \ref{as:A1}. Therefore, the uniqueness of the element for the information convex set of a disk can hold more generally, even when \ref{as:A0'} breaks down.}

\begin{Proposition} \label{Prop: structure_1}
	For any disk-like subsystem $\omega$, we have
	\begin{equation}
	\Sigma(\omega) = \{ \sigma_{\omega} \}, \label{eq: sigma_omega}
	\end{equation}
	where $\sigma_{\omega}\equiv \Tr_{\bar{\omega}} \vert \psi\rangle \langle \psi\vert$ and ${\bar{\omega}}$ is the complement of $\omega$.
\end{Proposition}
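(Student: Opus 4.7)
The plan is to show that any $\rho_\omega \in \Sigma(\omega)$ must equal the reduced state $\sigma_\omega$. By Definition~\ref{def: info_convex}, such a $\rho_\omega$ is a partial trace of some $\rho_{\omega'}\in\tilde\Sigma(\omega')$ on a thickening $\omega'$ of the disk $\omega$. The idea is to prove, by finite induction, that $\rho_{\omega'}$ and $\sigma$ agree on a growing sequence of topological disks $b_0\subsetneq b_1\subsetneq\cdots\subsetneq b_N\subseteq\omega'$, where $b_0$ is a single $\mu$-disk contained in $\omega$ and $b_N$ already contains $\omega$. Taking the final partial trace then yields $\rho_\omega=\sigma_\omega$. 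The inductive step is an application of the Markov uniqueness lemma (Lemma~\ref{lemma_growth}), so only \ref{as:A1} will be used, consistent with the footnote in the statement.

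Concretely, I would realize each enlargement $b_i\to b_{i+1}$ by adjoining a small cap $C_i$ to $b_i$ along a boundary strip $B_i\subset b_i$ and picking an auxiliary region $D_i\subset\omega'\setminus b_{i+1}$ adjacent to $C_i$ on the opposite side from $B_i$, arranged so that $B_iC_iD_i$ sits inside a $\mu$-disk with the topology of Fig.~\ref{A1} ($C_i$ as the core, $B_i$ and $D_i$ as the two arcs of the outer boundary). Writing $A_i:=b_i\setminus B_i$, I then apply Lemma~\ref{lemma_growth} to the tripartition $A_i|B_i|C_i$ of $b_{i+1}$, comparing $\rho_{\omega'}$ restricted to $b_{i+1}$ with $\sigma_{b_{i+1}}$. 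The consistency hypotheses $\rho_{A_iB_i}=\sigma_{A_iB_i}$ and $\rho_{B_iC_i}=\sigma_{B_iC_i}$ are, respectively, the inductive hypothesis $\rho_{b_i}=\sigma_{b_i}$ and the defining consistency of $\tilde\Sigma(\omega')$ applied to a $\mu$-disk containing $B_iC_i$. The vanishing conditional mutual information follows from the same SSA rearrangement used in the proof of Lemma~\ref{lemma_growth}, namely
\begin{equation*}
I(A_i:C_i|B_i) \ \le\ S_{B_iC_i}+S_{C_iD_i}-S_{B_i}-S_{D_i},
\end{equation*}
combined with \ref{as:A1} on the $\mu$-disk containing $B_iC_iD_i$, which makes the right-hand side zero. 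For $\rho_{\omega'}$ this works verbatim, because $\rho_{\omega'}$ and $\sigma$ coincide on that $\mu$-disk by the defining consistency of $\tilde\Sigma(\omega')$, so the right-hand side is identical for the two states. Lemma~\ref{lemma_growth} then delivers $\rho_{b_{i+1}}=\sigma_{b_{i+1}}$, closing the induction; after $N$ steps, $\rho_\omega=\Tr_{b_N\setminus\omega}\,\sigma_{b_N}=\sigma_\omega$.

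The base case $\rho_{b_0}=\sigma_{b_0}$ is immediate from $b_0\in\mu$ together with the consistency condition in the definition of $\tilde\Sigma(\omega')$. The main obstacle I anticipate is not entropic but geometric: one has to guarantee that a valid growth sequence exists at all, i.e., that at every step one can place $B_i,C_i,D_i$ inside $\omega'$ with $B_iC_iD_i$ having the topology of Fig.~\ref{A1} and fitting inside a single $\mu$-disk. This is precisely the reason the thickening was built into Definition~\ref{def: info_convex}; at the coarse-grained supersite scale adopted in the paper the construction is routine provided the width of $\omega'\setminus\omega$ is comparable to the $\mu$-disk radius, and I would carry out this bookkeeping explicitly but schematically, rather than as a detailed combinatorial argument.
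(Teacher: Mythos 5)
Your proposal is correct and follows essentially the same route as the paper's proof: starting from a $\mu$-disk where consistency forces agreement with $\sigma$, then iterating the growth process of Fig.~\ref{Grow_a_disk_12_PDF} via Lemma~\ref{lemma_growth}, with the vanishing conditional mutual information supplied by the SSA bound and Axiom \ref{as:A1} on a $\mu$-disk containing $B_iC_iD_i$, until the grown disk covers $\omega$. The only cosmetic difference is that you run the induction on $\rho_{\omega'}\in\tilde\Sigma(\omega')$ and are more explicit about placing the auxiliary region $D_i$ inside the thickening, whereas the paper argues directly with elements of $\Sigma(\omega)$; the underlying mechanism is identical.
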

\begin{proof}
	The idea is illustrated in Fig.~\ref{Cover_a_disk}, which involves a repetition of the ``growth process" described in Fig.~\ref{Grow_a_disk_12_PDF}. Starting from a $\mu$-disk, one can grow the disk until the enlarged disk covers $\omega$. 
	
	Let us get into the details. First, recall that $\Sigma(\omega)$ is nonempty since it contains $\sigma_{\omega}$. Let us pick another element $\sigma'_{\omega}\in\Sigma(\omega)$. According to Definition~\ref{def: info_convex}, $\sigma_{\omega}$ and $\sigma'_{\omega}$ must be identical on a $\mu$-disk. Therefore, we can repeatedly use the conditional independence condition, i.e., $I(A:C\vert B)_{\sigma}=I(A:C\vert B)_{\sigma'}=0$, to show that the reduced density matrices of the two states are identical on increasingly larger disks; see  Lemma~\ref{lemma_growth}. The disk can grow to the point it covers $\omega$. Thus, we can conclude that $\sigma'_{\omega}= \sigma_{\omega}$.
\end{proof}

\begin{figure}[h]
	\centering
\includegraphics[scale=1.1]{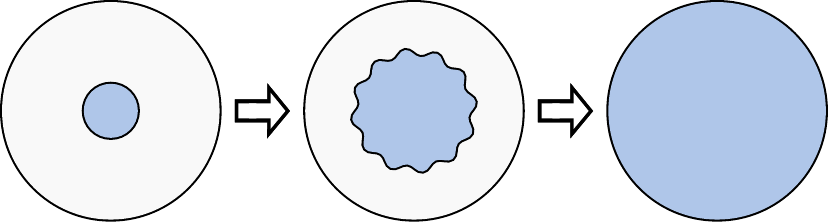}
	\caption{A $\mu$-disk can grow until it covers a larger disk $\omega$.}
	\label{Cover_a_disk}
\end{figure}

More generally, we can consider the information convex set of a subsystem which is \emph{not} topologically a disk. We shall discuss the structure of such set in Section \ref{Sec.Fusion_data}. The following fact will be important for that discussion. Let $\Omega$ be an arbitrary subsystem. We show that, for any element in the information convex set of $\Omega$, its reduced density matrix on a disk-like region is equal to the reduced density matrix of the reference state on the same region.

\begin{Proposition} \label{Prop:disk_in_Omega}
	Any state $\rho_{\Omega}\in\Sigma(\Omega)$ satisfies
	\begin{equation}
	\Tr_{\Omega\backslash\omega}\, \rho_\Omega=\sigma_\omega
	\end{equation}
	on any disk-like subsystem $\omega\subseteq \Omega$.
\end{Proposition}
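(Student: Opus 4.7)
The plan is to recognize this as an immediate two-step composition of results already proved in this section. First, by the monotonicity part of Proposition~\ref{Prop.Info_convex_basic} (applied with the roles of that proposition's $\Omega$ and $\Omega'$ played by our $\omega$ and $\Omega$, respectively), the partial trace $\Tr_{\Omega\setminus\omega}\rho_\Omega$ is itself an element of $\Sigma(\omega)$. Second, since $\omega$ is disk-like by hypothesis, Proposition~\ref{Prop: structure_1} identifies $\Sigma(\omega)$ as the singleton $\{\sigma_\omega\}$. Combining the two statements forces $\Tr_{\Omega\setminus\omega}\rho_\Omega = \sigma_\omega$, which is exactly what we want to prove.

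The only point requiring care -- and it amounts to bookkeeping about thickenings -- is ensuring that the first step really applies when $\omega$ sits near the boundary of $\Omega$. Unpacking Definition~\ref{def: info_convex}, we have $\rho_\Omega = \Tr_{\Omega'\setminus\Omega}\,\rho_{\Omega'}$ for some thickening $\Omega' \supseteq \Omega$ and some $\rho_{\Omega'} \in \tilde{\Sigma}(\Omega')$. Since the thickness needed to thicken $\omega$ is a fixed constant (set by the correlation length), while $\Omega'$ may be chosen with an arbitrarily generous boundary layer, we can fix a thickening $\omega'$ of $\omega$ with $\omega' \subseteq \Omega'$. The marginal $\Tr_{\Omega'\setminus\omega'}\,\rho_{\Omega'}$ then lies in $\tilde{\Sigma}(\omega')$, because consistency of $\rho_{\Omega'}$ with $\sigma$ on every $\mu$-disk in $\Omega'$ automatically implies consistency on every $\mu$-disk contained in the smaller region $\omega'$. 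Tracing out $\omega'\setminus\omega$ then exhibits $\Tr_{\Omega\setminus\omega}\,\rho_\Omega$ as a legitimate element of $\Sigma(\omega)$, and Proposition~\ref{Prop: structure_1} closes the argument.

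There is no substantive obstacle here: all the real technical work -- the growth/Markov-chain argument used to pin down $\Sigma(\omega) = \{\sigma_\omega\}$ for a disk -- has already been carried out in Proposition~\ref{Prop: structure_1}, and Proposition~\ref{Prop.Info_convex_basic} packages the monotonicity needed to transport a state from $\Omega$ down to $\omega$. The present proposition is therefore best viewed as a direct corollary of these two earlier results.
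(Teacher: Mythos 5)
Your proof is correct and follows exactly the paper's own argument: apply the monotonicity statement of Proposition~\ref{Prop.Info_convex_basic} to place $\Tr_{\Omega\setminus\omega}\rho_\Omega$ in $\Sigma(\omega)$, then invoke Proposition~\ref{Prop: structure_1} to conclude $\Sigma(\omega)=\{\sigma_\omega\}$. The extra paragraph on thickenings merely unpacks what the paper dismisses as ``a direct consequence of Definition~\ref{def: info_convex},'' so there is nothing substantively different here.
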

\begin{proof}
	By Proposition~\ref{Prop.Info_convex_basic}, for any state $\rho_\Omega\in\Sigma(\Omega)$, $\Tr_{\Omega\backslash\omega} \rho_{\Omega}$ is an element of $\Sigma(\omega)$. By Proposition~\ref{Prop: structure_1}, the set $\Sigma(\omega)$ contains only one element.  
\end{proof}

As a side note, we can always choose the reference state to be pure even if the initially given reference state is not. This is because the information convex set of the entire system is the state space of a finite-dimensional Hilbert space (see Theorem~\ref{Prop: Sigma(M)}); one can simply choose one of the pure states in this space to be the reference state. This pure reference state, restricted to the $\mu$-disks, would be consistent with the initially given reference state.

An interesting special case is when the underlying manifold is a sphere ($S^2$). In this case, the global state is unique and thus pure. 
\begin{Proposition}\label{Prop: global pure}
	$\Sigma(S^2)=\{ \vert \psi\rangle \langle \psi\vert \}$.
\end{Proposition}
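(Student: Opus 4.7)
The plan is to reduce the claim to Proposition~\ref{Prop: structure_1} and Lemma~\ref{lemma_growth}, leveraging the fact that on $S^2$ the complement of a disk is again a disk. First I would invoke the WLOG assumption stated earlier in Section~\ref{Sec:summary} (and in the remark following Proposition~\ref{Prop:disk_in_Omega}, appealing to Theorem~\ref{Prop: Sigma(M)}) that $\sigma$ may be chosen pure, so that $S(\sigma) = 0$. This immediately handles the purity half of the claim; the remaining task is to show that $\Sigma(S^2)$ is a singleton.

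I would partition $S^2 = A \sqcup B \sqcup C$, where $A$ and $C$ are disks and $B$ is an annular buffer between them, chosen thick enough for the information convex set machinery and the extended axioms to apply. The crucial topological input is that both $AB$ (the complement of $C$) and $BC$ (the complement of $A$) are disks on $S^2$. For any $\rho \in \Sigma(S^2)$, Proposition~\ref{Prop: structure_1} forces $\rho_{AB} = \sigma_{AB}$ and $\rho_{BC} = \sigma_{BC}$, supplying hypothesis~(1) of Lemma~\ref{lemma_growth}.

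For hypothesis~(2), I would apply the extended Axiom~\ref{as:A0'} (Proposition~\ref{Prop:axiom_large}) to the disks $AB$ and $BC$, with $A$ and $C$ as their cores and $B$ as the shared annular boundary, yielding $S_{AB} = S_B - S_A$ and $S_{BC} = S_B - S_C$. Purity of $\sigma$ together with the Araki--Lieb equality $S_A = S_{BC}$ then gives $S_B = S_A + S_C$. Since $\rho$ and $\sigma$ agree on $AB$, $BC$, and therefore also on $B$, all the relevant entropies except the global one coincide between the two states, so
\begin{equation*}
I(A:C|B)_\rho \;=\; S_{AB} + S_{BC} - S_B - S(\rho) \;=\; -S(\rho).
\end{equation*}
Strong subadditivity forces $I(A:C|B)_\rho \geq 0$, while $S(\rho) \geq 0$; together they pin down $S(\rho) = 0$ and $I(A:C|B)_\rho = 0$. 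The same calculation applied to $\sigma$ gives $I(A:C|B)_\sigma = 0$. Lemma~\ref{lemma_growth} then delivers $\rho_{ABC} = \sigma_{ABC}$, i.e.\ $\rho = \sigma$ on $S^2$.

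The main subtlety is ensuring that $B$ can be chosen thick enough so that the thickenings required in Definition~\ref{def: info_convex} fit inside both $AB$ and $BC$ and so that the extended A0 applies on both disks simultaneously; this is essentially a choice of coarse-graining and is not conceptually hard. A more interesting concern is that the argument leans on being able to assume $\sigma$ pure via Theorem~\ref{Prop: Sigma(M)}; without that WLOG input one would have to derive purity of $\rho$ directly from Axioms~\ref{as:A0'} and~\ref{as:A1} alone, and it is not obvious how to obtain the crucial identity $S_B = S_A + S_C$ without some form of Araki--Lieb complementarity already in hand.
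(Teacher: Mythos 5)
Your proof is correct, and its skeleton is the same as the paper's: partition $S^2 = A\sqcup B\sqcup C$ with $A,C$ disks and $B$ an annulus, fix the marginals on the disks $AB$ and $BC$ via Proposition~\ref{Prop: structure_1}, establish $I(A:C|B)=0$ for both states, and conclude with Lemma~\ref{lemma_growth}. The genuine difference is in how the vanishing conditional mutual information is obtained. The paper uses the one-sided SSA consequence $I(A:C|B)_\rho \le (S_{BC}+S_C-S_B)_\rho$, whose right-hand side depends only on the marginal $\rho_{BC}=\sigma_{BC}$ and vanishes by the extended Axiom~\ref{as:A0'} (Proposition~\ref{Prop:axiom_large}) applied to the single disk $BC$; no purity of $\sigma$ is needed, and purity of the unique element is then read off at the end from Theorem~\ref{Prop: Sigma(M)}. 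You instead apply the extended \ref{as:A0'} to \emph{both} complementary disks and invoke the WLOG purity of $\sigma$ (itself justified by Theorem~\ref{Prop: Sigma(M)}) to compute $I(A:C|B)_\rho = -S(\rho)$ exactly, so that SSA squeezes out both $S(\rho)=0$ and the Markov condition in one step. Both routes are valid and lean on Theorem~\ref{Prop: Sigma(M)} for the purity statement, just at opposite ends of the argument; the paper's bound is slightly more economical (it never needs $S_A = S_{BC}$, hence no purity input), while your computation has the pleasant byproduct of exhibiting every element of $\Sigma(S^2)$ as pure directly from an entropy identity. Your closing worry about deriving $S_B = S_A + S_C$ without the WLOG is precisely the issue the paper's inequality-based bound sidesteps.
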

\begin{proof}
Let us begin by setting up an appropriate set of subsystems. Partition the sphere into three subsystems, $A,B,$ and $C$. We choose $C$ to be a disk and $B$ to be an annulus that surrounds $C$; see Fig.~\ref{A0}. The complement of $BC$, i.e., $A$, is a disk.

Without loss of generality, suppose we have two states $\rho_{ABC},\sigma_{ABC} \in \Sigma(S^2)$. We show that they must be equal. By Proposition~\ref{Prop: structure_1}, their reduced density matrices are identical on $AB$ and $BC$. SSA implies that for both states,
\begin{equation*}
    I(A:C|B) \leq S(BC) + S(C)- S(B),
\end{equation*}
where we suppressed the dependence on $\rho$ and $\sigma$. Either way, the right hand side is $0$ because our axioms hold at any scale; see Proposition~\ref{Prop:axiom_large}. Because $\rho_{ABC}$ and $\sigma_{ABC}$ both have vanishing conditional mutual information (conditioned on $B$) and have identical density matrices (on $AB$ and $BC$), we can use Lemma~\ref{lemma_growth}. Therefore, $\rho_{ABC} = \sigma_{ABC}$. 

By Theorem~\ref{Prop: Sigma(M)}, $\Sigma(S^2)$ is isomorphic to a state space of a finite-dimensional Hilbert space. Because this set has a unique element, the global reference state must be pure.
\end{proof}

\subsection{Elementary steps and isomorphism theorem \label{sec:elementary_steps_iso_theorem}}
In this section, we establish an isomorphism between two information convex sets. This isomorphism exists if the subsystems associated with the two sets are topologically equivalent and can be smoothly deformed into each other. More precisely, in order to establish an isomorphism, we require that the two subsystems to be connected by a \emph{path} (Defintion~\ref{def:path}).

Crucial to this analysis is the concept of state merging. Suppose we have two quantum states $\rho$ and $\sigma$ which share an overlapping support and consistent. The question is whether one can consistently ``sew" them together. Namely, can we find a state which is consistent with both $\rho$ and $\sigma$? This is known as the quantum marginal problem. In general, even deciding whether there is such a state or not is known to be extremely difficult~\cite{Liu2006}. There are several nontrivial necessary conditions~\cite{Kim2013,2014arXiv1405.0137K}, but sufficient conditions are rare.

However, one of us has found a nontrivial sufficient condition~\cite{2016PhRvA..93b2317K}. We restate the result for the reader's convenience. 
\begin{lemma}\label{Prop: merging technique}
	(Merging Lemma~\cite{2016PhRvA..93b2317K})
	Given a set of density matrices $\calS\equiv \{ \rho_{ABC} \}$ and a density matrix $\sigma_{BCD}$ such that $\rho_{BC}=\sigma_{BC}$ and 
	\begin{equation}
	I(A:C\vert B)_{\rho}= I(B:D\vert C)_{\sigma} =0, \quad \forall\rho\in\calS\,,\label{eq: rho sigma}
	\end{equation}
	there exists a unique set of ``merged" states $\{ \tau^\rho_{ABCD}=\calE^\sigma_{C\to CD}(\rho_{ABC}) \}$ which satisfy the following properties. 
	\begin{enumerate}
		\item [(1)]	 $\tau^\rho$ is consistent with $\rho$ and $\sigma$, i.e.
		\begin{equation}
		\tau^\rho_{ABC}=\rho_{ABC} \quad\textrm{and} \quad \tau^\rho_{BCD}= \sigma_{BCD}. \label{eq:marginal}
		\end{equation}
		
		\item [(2)] Vanishing conditional mutual information,
		\begin{equation}\label{eq:marginal2}
		I(A:CD\vert B)_{\tau^\rho}= I(AB:D \vert C)_{\tau^\rho} = 0, \,\,\,\,\forall\,\rho.
		\end{equation}
		\item [(3)] The conservation of von Neumann entropy difference, for arbitrary $\rho,\rho'\in\calS$, 
		\begin{equation}\label{eq:entropypreserve}
		S(\tau^{\rho}_{ABCD})- S(\tau^{\rho'}_{ABCD}) = S(\rho_{ABC})- S(\rho'_{ABC}).
		\end{equation}
	\end{enumerate}
\end{lemma}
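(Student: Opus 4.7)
The plan is to verify the listed properties for the explicit candidate $\tau^\rho_{ABCD} := \calE^\sigma_{C\to CD}(\rho_{ABC})$, in which the Petz recovery map built from $\sigma_{BCD}$ acts only on the $C$-factor and trivially on $AB$. The consistency $\tau^\rho_{BCD}=\sigma_{BCD}$ is the easy half: tracing out $A$ commutes with $\calE^\sigma_{C\to CD}$ since the map touches only $C$, and the hypothesis $\rho_{BC}=\sigma_{BC}$ together with Lemma~\ref{lemma_Petz} applied to $\sigma_{BCD}$ (where $I(B:D\vert C)_\sigma=0$) gives $\Tr_A\tau^\rho = \calE^\sigma_{C\to CD}(\rho_{BC}) = \calE^\sigma_{C\to CD}(\sigma_{BC}) = \sigma_{BCD}$.

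The technical heart is the reverse consistency $\tau^\rho_{ABC}=\rho_{ABC}$, because $\Tr_D\circ\calE^\sigma_{C\to CD}$ is \emph{not} the identity on general operators. The plan is to invoke the structure theorem for quantum Markov chains: since $I(A:C\vert B)_\rho=0$, there is a decomposition $\calH_B=\bigoplus_j \calH_{B^L_j}\otimes\calH_{B^R_j}$ with $\rho_{ABC}=\bigoplus_j r_j\,\rho_{AB^L_j}\otimes\rho_{B^R_j C}$. Since $\calE^\sigma_{C\to CD}$ acts on $C$ alone, this decomposition pushes through to $\tau^\rho=\bigoplus_j r_j\,\rho_{AB^L_j}\otimes\calE^\sigma_{C\to CD}(\rho_{B^R_j C})$. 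A parallel expansion of $\sigma_{BCD}=\calE^\sigma_{C\to CD}(\sigma_{BC})$ (using $\sigma_{BC}=\rho_{BC}$) reads $\sigma_{BCD}=\bigoplus_j r_j\,\rho_{B^L_j}\otimes\calE^\sigma_{C\to CD}(\rho_{B^R_j C})$, and matching its partial trace over $D$ against $\sigma_{BC}=\bigoplus_j r_j\,\rho_{B^L_j}\otimes\rho_{B^R_j C}$ forces $\Tr_D\calE^\sigma_{C\to CD}(\rho_{B^R_j C})=\rho_{B^R_j C}$ for every $j$ with $r_j>0$. Reassembling the direct sum gives $\tau^\rho_{ABC}=\rho_{ABC}$.

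Everything else falls out quickly. The displayed Markov form of $\tau^\rho$ is itself a quantum Markov chain in which $B$ separates $A$ from $CD$, which via the converse direction of the structure theorem yields $I(A:CD\vert B)_{\tau^\rho}=0$. For the companion identity $I(AB:D\vert C)_{\tau^\rho}=0$, note that $\tau^\rho_{CD}=\sigma_{CD}$ and $\tau^\rho_C=\sigma_C$, so $\calE^{\tau^\rho}_{C\to CD}=\calE^\sigma_{C\to CD}$ and hence $\tau^\rho=\calE^{\tau^\rho}_{C\to CD}(\tau^\rho_{ABC})$; Lemma~\ref{lemma_Petz} supplies the vanishing. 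Property (3) is immediate from the latter equation: $S(\tau^\rho_{ABCD})=S(\rho_{ABC})+S(\sigma_{CD})-S(\sigma_C)$, so the $\rho$-independent terms cancel in $S(\tau^\rho)-S(\tau^{\rho'})$. Uniqueness is one more application of Lemma~\ref{lemma_Petz}: any $\tau$ obeying (1) and $I(AB:D\vert C)_\tau=0$ satisfies $\tau=\calE^\tau_{C\to CD}(\tau_{ABC})=\calE^\sigma_{C\to CD}(\rho_{ABC})=\tau^\rho$. The sole genuine obstacle is the reverse consistency $\tau^\rho_{ABC}=\rho_{ABC}$; once one marries the Markov decomposition of $\rho$ with the compatibility $\rho_{BC}=\sigma_{BC}$, the rest is bookkeeping.
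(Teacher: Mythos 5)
Your proof is correct. Note that the paper does not prove this lemma itself --- it is imported by citation from Ref.~\cite{2016PhRvA..93b2317K} --- so there is no in-paper argument to compare against; your route (define $\tau^\rho$ via the Petz map $\calE^\sigma_{C\to CD}$, use the Hayden--Jozsa--Petz--Winter block decomposition of the Markov state $\rho_{ABC}$ to establish the nontrivial consistency $\tau^\rho_{ABC}=\rho_{ABC}$, then read off properties (2), (3) and uniqueness from Lemma~\ref{lemma_Petz}) is essentially the standard proof given in that reference. The one step worth flagging as genuinely load-bearing, which you handle correctly, is that $\calE^\sigma_{C\to CD}$ preserves the block structure on $B$ because it acts only on $C$, so the blockwise comparison of $\Tr_D\sigma_{BCD}$ with $\sigma_{BC}$ legitimately forces $\Tr_D\calE^\sigma_{C\to CD}(\rho_{B^R_jC})=\rho_{B^R_jC}$ on every block with $r_j>0$.
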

The significance of this lemma lies on the fact that one can guarantee the existence of a global state from a (relatively) local information. What is given to us are the density matrices over $ABC$ and $BCD$, together with the conditions that can be verified on $ABC$ and $BCD.$ In particular, these conditions can be directly verified from the given states. Once the conditions are verified, one can guarantee the existence of a state over $ABCD$ that is consistent with the given density matrices.

The merging lemma (Lemma~\ref{Prop: merging technique}), together with our axioms (Axiom \ref{as:A0'} and \ref{as:A1}) underpin the majority of our technical work. The interplay between the two is what allows us to start from strictly local information and conclude something nontrivial at a larger scale. Roughly speaking, such analysis is carried out as follows. Our axioms allow us to upper bound certain conditional mutual information by $0$. We can then apply this fact to Lemma~\ref{Prop: merging technique} repeatedly to merge (many) density matrices. In particular, we can merge elements of multiple information convex sets into an element of yet another information convex set. This not only allows us to smoothly deform the boundary of a subsystem (Fig.~\ref{fig:elementary step}) but also allows us to consider merging processes with nontrivial topology changes; see Sec.~\ref{Sec. Fusion}.

Now, we are in a position to prove the \emph{isomorphism theorem}. This theorem establishes an equivalence between two information convex sets whose underlying subsystems can be smoothly deformed into each other. We first explain a method to establish the equivalence when one subsystem is merely an ``infinitesimal" deformation of the other. Of course, the word infinitesimal should not be taken literally, because we are considering a quantum many-body system on a lattice. What we mean is that one subsystem can be obtained from the other by either attaching or removing a region whose size is comparable to that of the $\mu$-disks.

Imagine zooming into the region in which this deformation occurs. Without loss of generality, we can consider two subsystems $\Omega=ABC$ and $\Omega'=ABCD$ depicted in Fig.~\ref{fig:elementary step}, where $CD$ is contained in a $\mu$-disk. We can show that there exists a bijection between $\Sigma(\Omega)$ and $\Sigma(\Omega')$. This is the content of Proposition~\ref{Prop: iso_ABCD}. The proof is in Appendix~\ref{sec:facts_isomorphism_theorem}.

\begin{figure}[h]
	\centering
\includegraphics[scale=0.9]{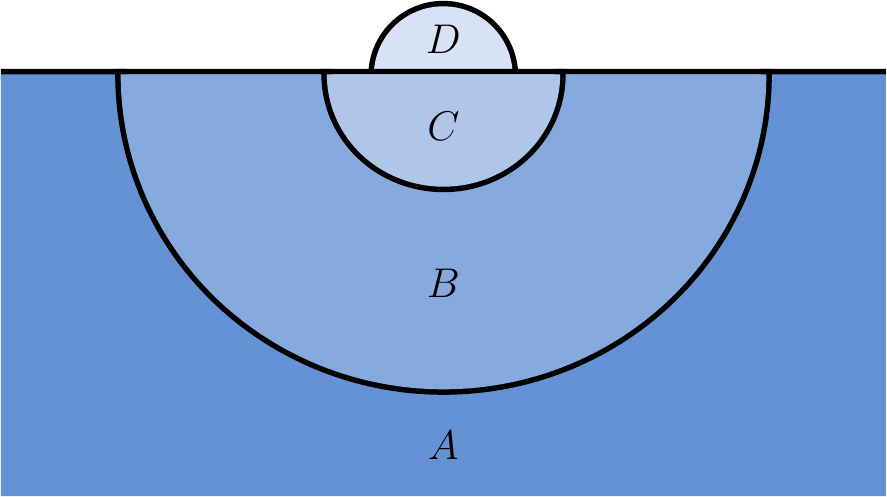}
	\caption{Two slightly different subsystems $\Omega=ABC$ and $\Omega'=ABCD$ (only a part of $A$ is shown).  $BCD$ is a disk and $CD$ is contained in a $\mu$-disk.  The topology of $A$ can be arbitrary. We require $B$ to be thick enough, so that $A$ and $D$ are separated by at least $2r+1$. (Recall that $r$ is the scale used in defining the $\mu$-disks; see Eq.~(\ref{eq:mu(r)}).) }\label{fig:elementary step}
\end{figure}

\begin{restatable}[]{Proposition}{isoabcd}\label{Prop: iso_ABCD}
Consider the partition in Fig.~\ref{fig:elementary step}. (Note:  $A$ and $D$ are assumed to be separated by at least $2r+1$ in Fig.~\ref{fig:elementary step}.) Let the domain of $\Tr_D$ and $\mathcal{E}_{C\to CD}^{\sigma}$ to be $\Sigma(\Omega')$ and $\Sigma(\Omega)$ respectively.
\begin{equation}
    \begin{aligned}
    \Ima \Tr_D &\subseteq \Sigma(\Omega) \\
    \Ima \mathcal{E}_{C\to CD}^{\sigma} &\subseteq \Sigma(\Omega').
    \end{aligned}\label{eq:Tr_and_e}
\end{equation}
Moreover, for all $\rho_{\Omega}\in \Sigma(\Omega)$ and $\rho_{\Omega'} \in \Sigma(\Omega')$,
\begin{eqnarray}
	\Tr_D \circ \calE^\sigma_{C\to CD} (\rho_{\Omega})&=&\rho_{\Omega},  \label{eq:iso_1}\\
	\calE^\sigma_{C\to CD}  \circ\Tr_D(\rho_{\Omega'})&=&\rho_{\Omega'}. \label{eq:iso_2}
\end{eqnarray}
In particular, $\Tr_D$ and $\calE^\sigma_{C\to CD}$ are bijections.
\end{restatable}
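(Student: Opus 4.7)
The plan is to establish the four claims of the proposition—the two image inclusions and the two inverse identities—from which bijectivity follows immediately, since \eqref{eq:iso_1} and \eqref{eq:iso_2} combined with the inclusions exhibit $\Tr_D$ and $\calE^\sigma_{C\to CD}$ as mutual inverses between $\Sigma(\Omega)$ and $\Sigma(\Omega')$.

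Two parts are essentially immediate. The inclusion $\Ima \Tr_D \subseteq \Sigma(\Omega)$ is Proposition~\ref{Prop.Info_convex_basic}. Equation \eqref{eq:iso_1} is the standard defining property of the Petz recovery map: $\Tr_D \circ \calE^\sigma_{C\to CD}$ acts as the identity on any operator whose $C$-marginal is supported in $\mathrm{supp}(\sigma_C)$. Proposition~\ref{Prop:disk_in_Omega} guarantees $\rho_\Omega|_C = \sigma_C$ for any $\rho_\Omega \in \Sigma(\Omega)$, since $C$ is a disk-like subsystem of $\Omega$, so this support condition holds.

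The crux is \eqref{eq:iso_2}. I would apply Lemma~\ref{lemma_Petz} to the tripartition $(AB,C,D)$ of $\rho_{\Omega'}$, reducing \eqref{eq:iso_2} to the two conditions $\rho_{\Omega'}|_{CD} = \sigma_{CD}$ (which, combined with the implied $\rho_{\Omega'}|_C = \sigma_C$, identifies the Petz map built from $\rho_{\Omega'}$ with the one built from $\sigma$) and $I(AB:D|C)_{\rho_{\Omega'}} = 0$. The first is Proposition~\ref{Prop:disk_in_Omega} applied to $CD$. For the second, I use the chain rule
\begin{equation*}
I(AB:D|C)_{\rho_{\Omega'}} = I(B:D|C)_{\rho_{\Omega'}} + I(A:D|BC)_{\rho_{\Omega'}}
\end{equation*}
and argue each term vanishes. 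Since $BCD$ is a disk inside $\Omega'$, Proposition~\ref{Prop:disk_in_Omega} yields $\rho_{\Omega'}|_{BCD} = \sigma_{BCD}$, so $I(B:D|C)_{\rho_{\Omega'}} = I(B:D|C)_\sigma$; this vanishes by combining Proposition~\ref{Prop:axiom_large} (the extended axioms on the disk $BCD$) with the decoupling consequence of Axiom~\ref{as:A0'} applied to the $\mu$-disk containing $CD$, which forces $I(B:D)_\sigma = 0$. The term $I(A:D|BC)_{\rho_{\Omega'}}$ is the more delicate one because $A$ has arbitrary topology and Proposition~\ref{Prop:disk_in_Omega} does not apply. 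Here I would exploit that $B$ has thickness at least $2r+1$, so a $\mu$-disk fits inside $B$ cutting transversely between $A$ and $D$; Axiom~\ref{as:A1} on this $\mu$-disk gives a Markov relation for $\sigma$, and iterated use of Lemma~\ref{lemma_growth} together with Proposition~\ref{Prop:disk_in_Omega} propagates this Markov relation to $\rho_{\Omega'}$ across the entire buffer $BC$.

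For the remaining inclusion $\Ima \calE^\sigma_{C\to CD} \subseteq \Sigma(\Omega')$, given $\rho_\Omega \in \Sigma(\Omega)$ I pick a thickening $\Omega^+$ of $\Omega$ large enough to contain both a thickening of the disk $BCD$ and a thickening of $\Omega'$, and let $\rho^+ \in \tilde\Sigma(\Omega^+)$ be the associated extension of $\rho_\Omega$. Proposition~\ref{Prop: structure_1} then gives $\rho^+|_{BCD} = \sigma_{BCD}$. Rerunning the chain-rule argument from the previous paragraph with $\rho^+$ in place of $\rho_{\Omega'}$ yields $\rho^+|_{\Omega'} = \calE^\sigma_{C\to CD}(\rho^+|_\Omega) = \calE^\sigma_{C\to CD}(\rho_\Omega)$, and the restriction of $\rho^+$ to a thickening of $\Omega'$ furnishes the required element of $\tilde\Sigma((\Omega')^+)$. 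The main technical obstacle throughout is the vanishing of $I(A:D|BC)$ when $A$ has unrestricted topology—this is where the thick buffer $B$, the extended axioms, and Lemma~\ref{lemma_growth} must be combined with care.
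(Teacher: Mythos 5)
Your outline reproduces the paper's high-level strategy for \eqref{eq:iso_2} (reduce to marginal consistency on $ABC$ and $CD$ plus the Markov condition $I(AB:D|C)=0$ and invoke uniqueness of the Markov extension), but two of the four claims rest on steps that do not hold as stated. First, \eqref{eq:iso_1} is \emph{not} a generic property of the Petz map: $\Tr_D\circ\calE^\sigma_{C\to CD}$ does not act as the identity on operators merely because their $C$-marginal is supported in $\mathrm{supp}(\sigma_C)$. For a counterexample take $\sigma_{CD}$ maximally entangled on two qubits, so $\sigma_C=I/2$; then $\Tr_D\,\calE^\sigma_{C\to CD}(\vert 0\rangle\langle 0\vert_C)=I_C/2\neq\vert 0\rangle\langle 0\vert_C$, and the failure persists even when $\rho_C=\sigma_C$ exactly (e.g.\ $\rho_{AC}$ maximally entangled). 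What actually makes \eqref{eq:iso_1} true is property (1) of the merging lemma (Lemma~\ref{Prop: merging technique}), whose hypotheses are $\rho_{BC}=\sigma_{BC}$ (Proposition~\ref{Prop:disk_in_Omega}), $I(B:D|C)_\sigma=0$ (Axiom~\ref{as:A1}), \emph{and} $I(A:C|B)_\rho=0$ — and this last condition is itself nontrivial for a generic element of $\Sigma(\Omega)$; the paper supplies it through the auxiliary set $\hat\Sigma(\Omega)$ (condition 3 of Definition~\ref{def: sigma_hat}) and the equivalence $\Sigma=\hat\Sigma$ (Proposition~\ref{Prop: equivalent_info_convex}).

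Second, your argument for $\Ima\,\calE^\sigma_{C\to CD}\subseteq\Sigma(\Omega')$ is circular: you posit an extension $\rho^+\in\tilde\Sigma(\Omega^+)$ of $\rho_\Omega$ to a thickening large enough to contain a thickening of $\Omega'$, but Definition~\ref{def: info_convex} only guarantees that $\rho_\Omega$ extends by a \emph{single} layer $\Omega\sqcup\partial\Omega$, which in general does not contain $D$ (whose extent is set by the $\mu$-disk scale $r$, not by one lattice spacing). The existence of extensions to arbitrarily large thickenings is precisely the content of the elementary step you are trying to establish. The paper breaks this circle by proving the merging and extension statements directly for $\hat\Sigma$ (Propositions~\ref{Prop:mergin_hat} and \ref{Prop: extension_0}), then deducing $\Sigma=\hat\Sigma$ and concluding via Corollary~\ref{Prop:Merging_in_Sigma}. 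A smaller point: in your chain-rule step the vanishing of $I(A:D|BC)_{\rho_{\Omega'}}$ should come from the SSA monotonicity $I(AA':CC'|B)\ge I(A:C|B)$ combined with the extended axioms on a disk straddling the boundary (as in Definition~\ref{def: sigma_hat} and Lemma~\ref{Prop: QMS_around_an_entangle_cut}), not from Lemma~\ref{lemma_growth}, which asserts uniqueness of Markov extensions rather than propagation of vanishing conditional mutual information.
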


As a side note, Proposition~\ref{Prop: iso_ABCD} further implies that the isomorphism $\calE^\sigma_{C\to CD}:\Sigma(\Omega)\to \Sigma(\Omega')$ is independent of the choice of $B$ and $C$. To see why, consider two choices, say $BC\subseteq \Omega$ and $B'C' \subseteq \Omega$. Consider two maps $\calE^\sigma_{C\to CD}$ and $\calE^\sigma_{C'\to C'D}$. Suppose there exists an element of $\Sigma(\Omega)$ which, under these two maps, is mapped into two different elements of $\Sigma(\Omega')$. Upon applying $\Tr_D$ to these two states, by Proposition~\ref{Prop: iso_ABCD}, both states must be mapped back to the same state. This implies that $\Tr_D$ is not injective, which contradicts Proposition~\ref{Prop: iso_ABCD}.

There are two simple corollaries of Proposition~\ref{Prop: iso_ABCD}, which will become handy in the rest of the paper. First, the bijective map $\Tr_D$ and $\calE_{C\to CD}^{\sigma}$ preserves the distance between two states.

\begin{corollary}
\label{coro:distance_preservation}
(Distance preservation)
Let $\rho_{\Omega}, \rho_{\Omega}' \in \Sigma(\Omega)$ and $\rho_{\Omega'}, \rho_{\Omega'}' \in \Sigma(\Omega')$. For any distance measure $D(\cdot, \cdot)$ between quantum states, 
    \begin{equation}
        D(\rho_{\Omega}, \rho_{\Omega}') = D(\calE_{C\to CD}^{\sigma}(\rho_{\Omega}), \calE_{C\to CD}^{\sigma}(\rho_{\Omega}')) 
    \end{equation}
    and
    \begin{equation}
        D(\rho_{\Omega'}, \rho_{\Omega'}') = D(\Tr_D(\rho_{\Omega'}), \Tr_D(\rho_{\Omega'}')).
    \end{equation}
\end{corollary}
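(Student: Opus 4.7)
My plan is to deduce distance preservation from two inputs: (i) both $\Tr_D$ and $\calE^\sigma_{C\to CD}$ are CPTP maps, and (ii) by Proposition~\ref{Prop: iso_ABCD} they are mutual inverses when their domains are restricted to $\Sigma(\Omega')$ and $\Sigma(\Omega)$, respectively. The only property of ``distance measure'' I will actually use is the data-processing (contractivity) inequality under CPTP maps, which is the standard defining property for quantities such as trace distance, Bures distance, and (generalized) relative entropies.

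The argument would go as follows. First, observe that $\Tr_D$ is manifestly CPTP, and the Petz recovery map $\calE^\sigma_{C\to CD}$ is CPTP on its whole input space (not just on $\Sigma(\Omega)$), since $\sigma_{BCD}$ is a fixed state and the Petz map associated with a state is completely positive and trace preserving by construction. Hence, for any $\rho_\Omega, \rho'_\Omega \in \Sigma(\Omega)$,
\begin{equation}
D\bigl(\calE^\sigma_{C\to CD}(\rho_\Omega),\, \calE^\sigma_{C\to CD}(\rho'_\Omega)\bigr) \;\le\; D(\rho_\Omega,\rho'_\Omega)
\end{equation}
by contractivity. Second, apply $\Tr_D$ to both arguments on the left-hand side and invoke contractivity again together with identity~(\ref{eq:iso_1}) from Proposition~\ref{Prop: iso_ABCD}:
\begin{equation}
D(\rho_\Omega,\rho'_\Omega) \;=\; D\bigl(\Tr_D\!\circ\calE^\sigma_{C\to CD}(\rho_\Omega),\, \Tr_D\!\circ\calE^\sigma_{C\to CD}(\rho'_\Omega)\bigr) \;\le\; D\bigl(\calE^\sigma_{C\to CD}(\rho_\Omega),\, \calE^\sigma_{C\to CD}(\rho'_\Omega)\bigr).
\end{equation}
Chaining these two inequalities forces equality, which is the first stated identity. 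The second identity is obtained symmetrically by swapping the roles of the two maps and using~(\ref{eq:iso_2}) instead of~(\ref{eq:iso_1}).

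There is no real obstacle here; the entire content is that Proposition~\ref{Prop: iso_ABCD} upgrades two a priori contractive maps into a contractive pair whose composition is the identity, and the ``sandwich'' inequality then pins the distance exactly. The only mild subtlety to flag in the write-up is the scope of ``distance measure'': one should state explicitly that $D$ is assumed to satisfy the data-processing inequality, so that the corollary is meaningful for the standard measures (trace distance, fidelity-based distances, relative entropy, etc.) used later in the paper.
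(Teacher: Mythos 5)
Your proposal is correct and follows essentially the same route as the paper: both arguments use contractivity of $D$ under the two CPTP maps together with the fact that $\Tr_D\circ\calE^\sigma_{C\to CD}$ acts as the identity on $\Sigma(\Omega)$ (Proposition~\ref{Prop: iso_ABCD}), so the sandwich of inequalities collapses to an equality. Your remark about explicitly requiring the data-processing inequality for $D$ is a reasonable clarification, and the paper makes the same point when it later extends the argument to fidelity.
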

\begin{proof}
    For both identities, the proof is practically identical. So we only discuss the proof of the first identity. Because both $\calE_{C\to CD}^{\sigma}$ and $\Tr_D$ are CPTP maps, distance is nonincreasing under these maps. Therefore, 
    \begin{equation}
    \begin{aligned}
        D(\rho_{\Omega}, \rho_{\Omega}') &\geq D(\calE_{C\to CD}^{\sigma}(\rho_{\Omega}), \calE_{C\to CD}^{\sigma}(\rho_{\Omega}')) \\
        &\geq D(\Tr_D\circ\calE_{C\to CD}^{\sigma}(\rho_{\Omega}), \Tr_D\circ\calE_{C\to CD}^{\sigma}(\rho_{\Omega}')) \\
        &= D(\rho_{\Omega}, \rho_{\Omega}'),
    \end{aligned}
    \end{equation}
    where in the last line we used Proposition~\ref{Prop: iso_ABCD}. Therefore, $D(\rho_{\Omega}, \rho_{\Omega}') = D(\calE_{C\to CD}^{\sigma}(\rho_{\Omega}), \calE_{C\to CD}^{\sigma}(\rho_{\Omega}'))$.
\end{proof}

We note that, while we only considered distances between two quantum states, the same proof applies to the preservation of the fidelity $F(\rho,\tau)=(\Tr\sqrt{\sqrt{\rho}\,\tau\sqrt{\rho}})^2$. While fidelity is not a distance measure, its behavior is monotonic under application of CPTP maps. Therefore, the proof of Corollary~\ref{coro:distance_preservation} still applies.

In fact, we can show more. Even the entropy difference is preserved under $\Tr_D$ and $\calE_{C\to CD}^{\sigma}$. The proof of this statement is a simple byproduct of the proof of Proposition~\ref{Prop: iso_ABCD}. The proof follows immediately from property (3) of Lemma~\ref{Prop: merging technique}.
\begin{corollary}
(Entropy difference preservation)
\label{coro:entropy_difference}
    Let $\rho_{\Omega}, \rho_{\Omega}' \in \Sigma(\Omega)$ and $\rho_{\Omega'}, \rho_{\Omega'}' \in \Sigma(\Omega')$. The von Neumann entropies satisfy
    \begin{equation}
        S(\rho_{\Omega}) -S(\rho_{\Omega}') = S(\calE_{C\to CD}^{\sigma}(\rho_{\Omega})) - S( \calE_{C\to CD}^{\sigma}(\rho_{\Omega}'))
    \end{equation}
    and
    \begin{equation}
        S(\rho_{\Omega'}) - S(\rho_{\Omega'}') = S(\Tr_D(\rho_{\Omega'})) -  S(\Tr_D(\rho_{\Omega'}')).
    \end{equation}
\end{corollary}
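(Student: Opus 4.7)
\medskip

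\noindent\textbf{Proof proposal for Corollary~\ref{coro:entropy_difference}.} The plan is to read off both identities directly from property (3) of the Merging Lemma (Lemma~\ref{Prop: merging technique}), using the machinery already assembled in Proposition~\ref{Prop: iso_ABCD}. The main task is only to verify that the hypotheses of the Merging Lemma have been checked, and that the output state of the merge coincides with $\calE^\sigma_{C\to CD}(\rho_\Omega)$.

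First I would handle the statement on $\Sigma(\Omega)$. Fix two elements $\rho_\Omega,\rho_\Omega'\in\Sigma(\Omega)$ and let $\calS=\{\rho_\Omega,\rho_\Omega'\}$. In the geometry of Fig.~\ref{fig:elementary step}, Proposition~\ref{Prop: iso_ABCD} already uses that every element of $\Sigma(\Omega)$ satisfies $I(A:C\vert B)=0$ (because the corresponding region sits inside an enlarged disk on which Proposition~\ref{Prop:axiom_large} applies) and that the reference marginal $\sigma_{BCD}$ satisfies $I(B:D\vert C)_\sigma=0$. These are exactly the hypotheses~\eqref{eq: rho sigma} of the Merging Lemma. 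Hence the merged states $\tau^{\rho_\Omega}$ and $\tau^{\rho_\Omega'}$ exist, are unique, and by construction equal $\calE^\sigma_{C\to CD}(\rho_\Omega)$ and $\calE^\sigma_{C\to CD}(\rho_\Omega')$ respectively. Property (3) of Lemma~\ref{Prop: merging technique} then gives
\begin{equation*}
S\bigl(\calE^\sigma_{C\to CD}(\rho_\Omega)\bigr)-S\bigl(\calE^\sigma_{C\to CD}(\rho_\Omega')\bigr)=S(\rho_\Omega)-S(\rho_\Omega'),
\end{equation*}
which is the first identity.

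For the second identity, the cleanest route is to piggy-back on the first one via the bijection established in Proposition~\ref{Prop: iso_ABCD}. Given $\rho_{\Omega'},\rho_{\Omega'}'\in\Sigma(\Omega')$, set $\rho_\Omega:=\Tr_D(\rho_{\Omega'})$ and $\rho_\Omega':=\Tr_D(\rho_{\Omega'}')$; both lie in $\Sigma(\Omega)$ by Eq.~\eqref{eq:Tr_and_e}. Equation~\eqref{eq:iso_2} guarantees $\calE^\sigma_{C\to CD}(\rho_\Omega)=\rho_{\Omega'}$ and $\calE^\sigma_{C\to CD}(\rho_\Omega')=\rho_{\Omega'}'$, so substituting into the first identity yields
\begin{equation*}
S(\rho_{\Omega'})-S(\rho_{\Omega'}')=S\bigl(\Tr_D(\rho_{\Omega'})\bigr)-S\bigl(\Tr_D(\rho_{\Omega'}')\bigr),
\end{equation*}
as required.

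I do not anticipate a genuine obstacle: the content of this corollary is essentially a bookkeeping reformulation of property (3) of the Merging Lemma, and every hypothesis needed to invoke that lemma has already been verified in the proof of Proposition~\ref{Prop: iso_ABCD}. The only thing worth being careful about is to confirm that the merged state produced in that proof is literally $\calE^\sigma_{C\to CD}(\rho_\Omega)$ (so that the entropy identity transfers verbatim) and that the $\mu$-disk containment of $CD$ required in Fig.~\ref{fig:elementary step} is what licenses the use of $I(B:D\vert C)_\sigma=0$ from Axiom~\ref{as:A1}.
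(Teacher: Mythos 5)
Your proposal is correct and matches the paper's route: the paper itself states that the corollary "follows immediately from property (3) of Lemma~\ref{Prop: merging technique}" as a byproduct of the proof of Proposition~\ref{Prop: iso_ABCD}, which is exactly what you do for the first identity, and deducing the second identity from the first via the bijection of Eqs.~\eqref{eq:Tr_and_e}--\eqref{eq:iso_2} is the intended bookkeeping. Your verification of the merging hypotheses (consistency on $BC$ via Proposition~\ref{Prop:disk_in_Omega} and the vanishing conditional mutual informations via Axiom~\ref{as:A1} extended by Proposition~\ref{Prop:axiom_large}) is the same as what the paper's appendix proof of Proposition~\ref{Prop: iso_ABCD} invokes.
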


Therefore, given a subsystem $\Omega$ and its information convex set $\Sigma(\Omega)$ we can establish a bijection between $\Sigma(\Omega)$ and $\Sigma(\Omega')$ where $\Omega'$ is a slight deformation of $\Omega$. In order to apply Proposition~\ref{Prop: iso_ABCD}, $\Omega'$ must be topologically equivalent to $\Omega$. We refer to the process of subtracting/adding a disk-like region to a given subsystem as the \emph{elementary step} of the deformation. 

 The isomorphism between two information convex sets can be established by repeating these elementary steps. However, we have to be careful on two points. First, for two given topologically equivalent subsystems, there can be more than one way to deform one to the other. Second, even if the underlying subsystems are topologically equivalent, there may not be a smooth deformation between the two. As a trivial example, suppose we have two spheres. We can place two subsystems on each of these spheres. Even if these subsystems are topologically equivalent to each other, there is no sequence of subsystems that smoothly deforms one to the other. Even on a connected space, one cannot make such a statement; see Fig.~\ref{Annulus_X0_X1}.   

Therefore, these (potentially different) isomorphisms must be labeled by their paths. Let us formalize this notion below.
\begin{definition} (Path)\label{def:path}
A finite sequence of subsystems $\{\Omega^t\}$ with  $t={i/N}$ and $i=0,1,2,\cdots, N$, ($N$ is a positive integer), is a \emph{path} connecting $\Omega^0$ and $\Omega^1$ if each pair of nearby subsystems in the sequence are related by an elementary step of deformation, illustrated in Fig.~\ref{fig:elementary step}.
\end{definition}

Because a path is built up from elementary steps, we obtain the following theorem.
\begin{theorem} (Isomorphism Theorem) \label{thm: the isomorphism theorem}
	If $\Omega^0$ and $\Omega^1$ are connected by a path $\{ \Omega^t \}$, then there is an isomorphism 
	\begin{equation}
	\Phi_{\{\Omega^t\}}:\quad \Sigma(\Omega^0)\to \Sigma(\Omega^1)
	\end{equation}
	uniquely determined by the path $\{\Omega^t\}$. Moreover, it preserves the distance and the entropy difference between elements
	\begin{align}
	D(\rho, \sigma) &= D\left(\Phi_{\{\Omega^t\}}(\rho), \Phi_{\{\Omega^t\}}(\sigma)\right)\label{eq:preserve_F} \\
	S(\rho) - S(\sigma) &= S\left(\Phi_{\{\Omega^t\}}(\rho)\right)-S\left(\Phi_{\{\Omega^t\}}(\sigma)\right),\label{eq:preserve_S}
	\end{align}	
	where $D(\cdot,\cdot)$ is any distance measure which is non-increasing under CPTP-maps. 
\end{theorem}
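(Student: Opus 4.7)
The plan is to construct $\Phi_{\{\Omega^t\}}$ by composing the elementary-step isomorphisms along the path and then propagate the distance and entropy-difference preservation statements through this composition. Recall that the path is a finite sequence $\Omega^0, \Omega^{1/N}, \Omega^{2/N}, \ldots, \Omega^1$, where each consecutive pair differs by a single elementary step (a disk-like region added or removed in the manner of Fig.~\ref{fig:elementary step}). For each such consecutive pair $(\Omega^{i/N}, \Omega^{(i+1)/N})$, Proposition~\ref{Prop: iso_ABCD} gives a bijection between the information convex sets: it is $\calE^\sigma_{C\to CD}$ when one grows into a larger region and $\Tr_D$ when one shrinks. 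I will define $\Phi_{\{\Omega^t\}}$ to be the composition of these $N$ bijections, taken in the order dictated by the path.

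The key observations to assemble are then very clean. First, a composition of bijections is a bijection, so $\Phi_{\{\Omega^t\}}$ is a bijection between $\Sigma(\Omega^0)$ and $\Sigma(\Omega^1)$. Second, at each elementary step Corollary~\ref{coro:distance_preservation} guarantees that any distance measure non-increasing under CPTP maps is in fact preserved, and Corollary~\ref{coro:entropy_difference} guarantees the entropy difference is preserved; both properties chain through the composition since they are equalities. This immediately yields Eqs.~\eqref{eq:preserve_F} and~\eqref{eq:preserve_S}. Third, uniqueness of $\Phi_{\{\Omega^t\}}$ given the path follows from the side remark after Proposition~\ref{Prop: iso_ABCD}: each elementary-step isomorphism is independent of the auxiliary choice of $B$ and $C$, so the composite map depends only on the sequence $\{\Omega^t\}$ itself and not on the intermediate partitions used to invoke Proposition~\ref{Prop: iso_ABCD}.

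To keep the argument fully rigorous, I would additionally verify at each step that the geometric hypotheses of Proposition~\ref{Prop: iso_ABCD} can be realized, in particular that at the site of deformation one can always choose a buffer $B$ thick enough that $A$ and $D$ are separated by at least $2r+1$. Since the elementary step is a local modification on a disk-sized patch and the rest of $\Omega^{i/N}$ can be taken as (part of) $A$, such a choice is always possible by placing the ``cut'' $B$ just inside the patch being modified. I would also note that the base case of ``zero elementary steps'' gives the identity map, which trivially satisfies all of the claimed properties.

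I do not anticipate a substantive obstacle: the heavy lifting (existence of the elementary-step bijection, its invariance under the choice of auxiliary partition, and preservation of distance and entropy difference) has already been done in Proposition~\ref{Prop: iso_ABCD} and Corollaries~\ref{coro:distance_preservation}, \ref{coro:entropy_difference}. The only mildly delicate point is bookkeeping: one should emphasize that the isomorphism is attached to the path, not just to the endpoints, because two different paths with the same endpoints may yield genuinely different maps (this is foreshadowed by Fig.~\ref{Annulus_X0_X1} and is exactly what later allows nontrivial braiding/monodromy structure to emerge). Making this labeling explicit in the statement and the conclusion of the proof is where I would take the most care.
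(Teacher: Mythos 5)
Your proposal is correct and follows essentially the same route as the paper, which simply states that the theorem follows by applying Proposition~\ref{Prop: iso_ABCD} repeatedly along the path; your added care about the independence of the auxiliary partition and the path-dependence of the resulting map matches the paper's surrounding remarks.
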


We omit the proof since it straightforwardly follows by applying Proposition~\ref{Prop: iso_ABCD} repeatedly. For any path $\{\Omega^t\}$, we can define an inverted path $\{\Omega^{1-t} \}$ which reverses the sequence of subsystems. This leads to the inverse isomorphism  $\Phi_{\{\Omega^{1-t}\}}:\,\, \Sigma(\Omega^1)\to \Sigma(\Omega^0)$.

Generally speaking, different paths may give rise to different isomorphisms. That is, under two different isomorphisms, an element of the information convex set may be mapped to two distinct elements. However, sometimes, we merely need the existence of an isomorphism. In that case, we will use a notation
\begin{equation}
\Sigma(\Omega^0)\cong \Sigma(\Omega^1) \nonumber
\end{equation}
to indicate the existence of such an isomorphism. Under this condition, any distance measure and entropy difference is preserved; see Eqs.~(\ref{eq:preserve_F}) and (\ref{eq:preserve_S}).

\section{Fusion data from information convex sets}\label{Sec.Fusion_data}

The isomorphism theorem (Theorem~\ref{thm: the isomorphism theorem}) guarantees that the structure of the information convex set only depends on the topology, as long as the underlying subsystems can be smoothly deformed from one to another along some path. 

We now focus on how to extract the information of the topological charges and the corresponding fusion rules from the information convex set. We do this by studying how the geometry of the information convex set depends on the topology of the underlying subsystem. We then use the merging technique (Lemma~\ref{Prop: merging technique}) to relate subsystems with different topologies and obtain several consistency equations. We then define the fusion rules and show that they satisfy all the constraints expected from the known algebraic theory of anyon~\cite{2006AnPhy.321....2K}. The result of this study is summarized in Table~\ref{tab:higher}.

\begin{table}[h]
	\centering
	\begin{tabular}{|c|c|}
		\hline
		Physical data  & Number of holes       \\
		\hline
		Superselection sectors   &    $1$      \\
		\hline
		Fusion multiplicities & $2$   \\
		\hline
		Axioms of the fusion theory & $1,2,3,4$ (merging) \\
		\hline
	\end{tabular}
	\caption{Physical data that can be extracted from disks with different number of holes.}
	\label{tab:higher}
\end{table}

\subsection{Superselection sectors/topological charges} \label{Sec. 2 punctures}
Let us define a notion of superselection sectors, which is one of the key ingredients of the algebraic theory of anyon \cite{2006AnPhy.321....2K}. Historically, the notion of superselection sectors was introduced in the context of local field theory; see~\cite{Fredenhagen1989,Haag1996}.
In the context of topologically ordered systems which is most relevant to our discussion, a nontrivial superselection sector corresponds to an anyon type that cannot be created by any local operator. 

There are several recent attempts to rigorously formulate the superselection sectors based on operator algebra assumptions. One approach is based on the operator algebra on an annulus~\cite{Haah2016,2018arXiv181002376K}, and another approach is based on the operator algebra on a cone-like subsystem of an infinite lattice~\cite{naaijkens2017quantum,2019CMaPh.373..219C}. Our approach to characterize the superselection sectors is similar to the one based on the operator algebra on annuli. However, these two approaches differ in their assumptions and their range of validity.

We will identify a well-defined information-theoretic object and find that this object coincides with the conventional notion of superselection sectors in anyon theory. Importantly, we find that the information convex set of an annulus forms a simplex (Theorem~\ref{Prop: structure_2}).  The simplex has a finite number of extreme points. Moreover, these extreme points are orthogonal to each other. See Fig.~\ref{Annulus_LMR}(c) for an illustration.  (See Appendix~\ref{app:convexsets} for the general definition of extreme points.) We will define these extreme points as the superselection sectors. 

\begin{restatable}[]{theorem}{simplex}
(Simplex Theorem) \label{Prop: structure_2}
	For an annulus $X$, the information convex set is the convex hull of a finite number of orthogonal extreme points, $\{ \sigma^a_{X} \}$, i.e. 
	\begin{equation}\label{Eq:directsumconv}
	\Sigma(X)=\left\{ \rho_{X} \left\vert \rho_{X}=\bigoplus_a p_a \sigma^a_{X}  \right.\right\},
	\end{equation}
	where $\{ a \}$ is a finite set of labels and $\{ p_a \}$ is a probability distribution. 
\end{restatable}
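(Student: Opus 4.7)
My plan is to establish the simplex structure of $\Sigma(X)$ in three stages: reduce the claim to showing mutual orthogonality of extreme points, translate orthogonality into saturation of a classical entropy-mixing inequality, and derive this saturation using the isomorphism theorem together with the merging lemma. Since $\Sigma(X)$ is a compact convex subset of a finite-dimensional space (Proposition~\ref{Prop.Info_convex_basic}), Krein--Milman already guarantees that every element is a convex combination of extreme points, and the total number of extreme points is bounded by $\dim\calH_X$ once orthogonality is proven. So the substantive content of the theorem is the orthogonality of extreme points.

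For the reduction, for any two distinct extreme points $\sigma^a, \sigma^b \in \Sigma(X)$ the standard mixing inequality
\[
S\!\left(\tfrac{1}{2}(\sigma^a+\sigma^b)\right) \le \tfrac{1}{2}\bigl(S(\sigma^a) + S(\sigma^b)\bigr) + \ln 2,
\]
holds, with equality if and only if the supports of $\sigma^a$ and $\sigma^b$ are orthogonal. My plan is to prove that this bound is saturated for every pair of distinct extreme points. To that end, I would thicken $X$ to a larger annulus $\hat X$ via a path of elementary deformations, and use the isomorphism theorem (Theorem~\ref{thm: the isomorphism theorem}) to obtain a bijection $\Phi: \Sigma(X) \to \Sigma(\hat X)$. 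By Corollary~\ref{coro:entropy_difference}, the quantity $S(\tfrac{1}{2}(\rho+\rho')) - \tfrac{1}{2}(S(\rho)+S(\rho'))$ is preserved under $\Phi$, so computing it on $\hat X$ is equivalent to computing it on $X$. Enlarging $X$ gives us access to auxiliary partitions in which the mixing entropy can be bounded below using entropic axioms.

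For the entropy computation on $\hat X$, I would split $\hat X$ into a thin inner annulus $X_{\text{in}}$, a thin outer annulus $X_{\text{out}}$, and a buffer annulus $Y$ between them. Using the extended axioms (Proposition~\ref{Prop:axiom_large}) on disk-like subregions of $\hat X$, and using Proposition~\ref{Prop:disk_in_Omega} to identify every disk-level reduction of an arbitrary $\rho\in\Sigma(\hat X)$ with the corresponding reduction of $\sigma$, I would verify the conditional-independence premises of Lemma~\ref{Prop: merging technique}. The merging then lets me reconstruct $\rho$ from its reductions on $X_{\text{in}}Y$ and $YX_{\text{out}}$ in a controlled way, and property~(3) of the merging lemma (preservation of entropy differences) combined with Corollary~\ref{coro:entropy_difference} is what I expect to deliver the desired lower bound of $\ln 2$, forcing equality in the mixing inequality and hence $\mathrm{supp}(\sigma^a) \perp \mathrm{supp}(\sigma^b)$.

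The hardest step will be the last one: orchestrating the merging and the choice of auxiliary partition so that the entropy bookkeeping produces exactly the $\ln 2$ surplus needed to saturate the mixing bound. The delicate feature is that the extended axioms and Lemma~\ref{Prop: merging technique} speak directly about the reference state $\sigma$, whereas I need the conclusion for a generic mixture of elements of $\Sigma(X)$; the bridge is Proposition~\ref{Prop:disk_in_Omega}, which pins down the disk-level reductions of every $\rho\in\Sigma(X)$. Getting the topology of the auxiliary partition right (so every intermediate step is a legitimate elementary deformation, and so that the buffer $Y$ separates $X_{\text{in}}$ from $X_{\text{out}}$ by more than $2r+1$) is where I expect most of the technical effort to go.
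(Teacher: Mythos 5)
Your reduction of the theorem to the mutual orthogonality of extreme points is sound, and the finiteness and convex-decomposition parts are indeed routine once orthogonality is in hand. The gap is in the mechanism you propose for saturating the mixing inequality. The tools you invoke --- the isomorphism theorem (Theorem~\ref{thm: the isomorphism theorem}), Corollary~\ref{coro:entropy_difference}, and property~(3) of the merging lemma (Lemma~\ref{Prop: merging technique}) --- all preserve \emph{entropy differences}, and entropy is additive under the decompositions you have in mind. Concretely, define the mixing surplus $\Delta_W \equiv S\bigl(\tfrac12(\sigma^a_W+\sigma^b_W)\bigr)-\tfrac12 S(\sigma^a_W)-\tfrac12 S(\sigma^b_W)$ for an annulus $W$. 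Since the isomorphisms are affine and preserve entropy differences, $\Delta$ takes the same value on $X$, on $X_{\text{in}}Y$, on $YX_{\text{out}}$, and on $Y$; and since every element of $\Sigma(\hat X)$ satisfies $I(X_{\text{in}}:X_{\text{out}}\vert Y)=0$ (Lemma~\ref{Prop: QMS_around_an_entangle_cut}), the entropy identity $S_{\hat X}=S_{X_{\text{in}}Y}+S_{YX_{\text{out}}}-S_Y$ yields only the tautology $\Delta=\Delta+\Delta-\Delta$. Merging cannot rescue this: merging $\rho\vert_{X_{\text{in}}Y}$ with $\rho\vert_{YX_{\text{out}}}$ just returns $\rho$ by Lemma~\ref{lemma_growth}, and you cannot merge $\sigma^a$ on the inner piece with $\sigma^b$ on the outer piece because they are inconsistent on the buffer $Y$. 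So no lower bound of $\ln 2$ emerges from this bookkeeping, and the crucial step fails.

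What the argument actually needs is a quantity that is \emph{multiplicative} rather than additive across the two boundary components, together with the factorization of extreme points there. The paper first proves (Lemma~\ref{Prop: product} and Corollary~\ref{coro:product_X}, via the Markov structure of onion-like partitions) that every extreme point satisfies $\Tr_M\,\sigma^a_X=\sigma^a_L\otimes\sigma^a_R$ for the partition $X=LMR$ of Fig.~\ref{Annulus_LMR}(a) --- a structural fact absent from your outline. It then runs a self-consistency argument with the fidelity $F$: the isomorphism theorem gives $F(\sigma^a_X,\sigma^b_X)=F(\sigma^a_L,\sigma^b_L)=F(\sigma^a_R,\sigma^b_R)$, while monotonicity under partial trace and multiplicativity under tensor products give $F(\sigma^a_X,\sigma^b_X)\le F(\sigma^a_L,\sigma^b_L)\cdot F(\sigma^a_R,\sigma^b_R)$, hence $F\le F^2$ and $F\in\{0,1\}$. (A relative-entropy version of the same trick only yields $D=0$ or $D=\infty$, which is weaker than orthogonality of supports.) If you want to keep an entropic formulation, you would still have to first establish the tensor-product factorization on $L\sqcup R$ and then import some multiplicative distinguishability measure; as written, your plan cannot close.
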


\begin{figure}[htbp]
	\centering
	\includegraphics[scale=1.10]{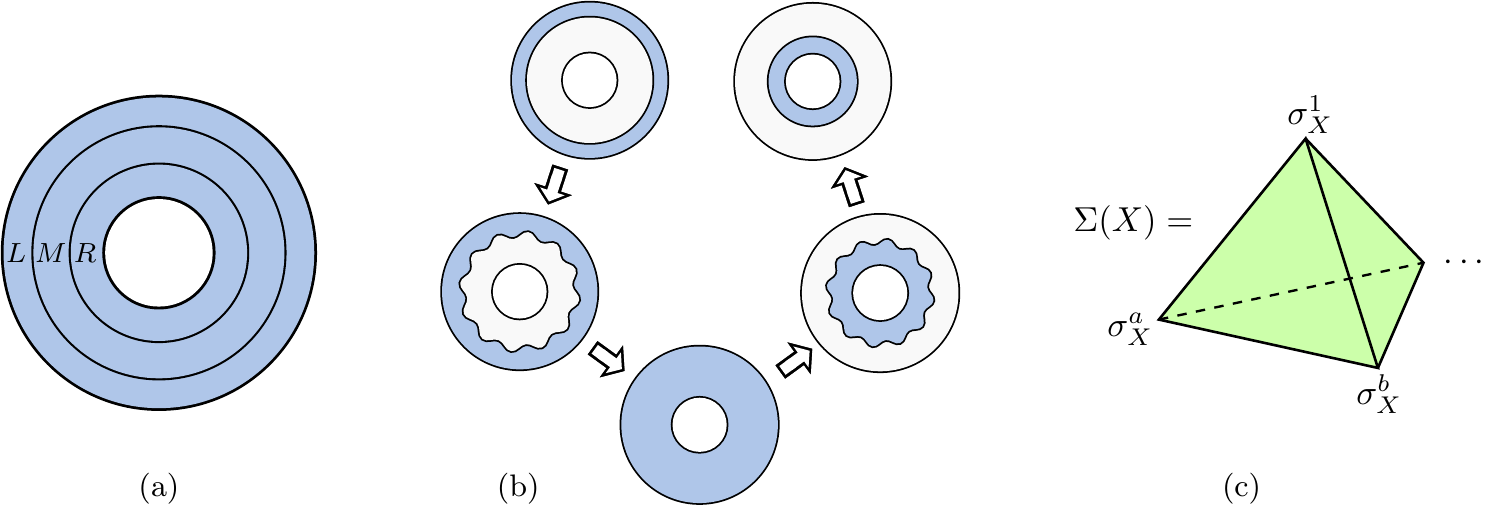}
	\caption{(a) Division of an annulus $X$ into three thinner annuli $L$, $M$, $R$. (b) A path (extensions-extensions-restrictions-restrictions) which generates an isomorphism $\Sigma(L)\cong \Sigma(X)\cong \Sigma(R)$. (c) A schematic depiction of the simplex structure of $\Sigma(X)$. The extreme points are the ``corners" of the simplex. If the annulus $X$ is contained in a disk, then one of the extreme points has the vacuum label ``1".}
	\label{Annulus_LMR}
\end{figure}
Here we show a sketch of the proof of Theorem~\ref{Prop: structure_2}. (See Appendix~\ref{Proof_of_Prop: structure_2} for the full proof). The orthogonality follows from the factorization property of the fidelity $F(\rho,\tau)=(\Tr\sqrt{\sqrt{\rho}\,\tau\sqrt{\rho}})^2$. Let $F_X$ be the fidelity of two extreme points in the information convex set of $X=LMR$ in Fig.~\ref{Annulus_LMR}(a). (We use the same convention for subsystems.) By using the fact that any extreme point has $I(L:R)=0$ (Corollary~\ref{coro:product_X}),  we find
\begin{equation}
F_{LR}= F_L F_R.
\end{equation}
Because the fidelity is non-decreasing under a partial trace, we have $F_{LMR} \leq F_{LR}$. Since $X$ and $L,R$ are annuli connected by paths, see Fig.~\ref{Annulus_LMR}(b), the isomorphism theorem implies $F=F_{L}=F_R=F_{LMR}$ and thus 
\begin{equation}
F \leq F^2. 
\end{equation}
$F\in [0,1]$. Therefore, the two extreme points are either the same ($F=1$) or orthogonal ($F=0$).  This derivation also shows that we can copy the information of the extreme point to $L$ and $R$ simultaneously. 
The finiteness of the label set follows from the orthogonality and the setup that the Hilbert space is finite-dimensional.

Theorem~\ref{Prop: structure_2} implies that $\Sigma(X)$ forms a simplex in the state space. It has a finite number of extreme points $\{\sigma^a_{X} \}$, which can be perfectly distinguished from each other by a projective measurement supported on the annulus. The simplex structure also implies that its elements can only store classical information in the probability distribution $\{p_a\}$. The isomorphism theorem~\ref{thm: the isomorphism theorem} guarantees the universality of the label set, i.e., the fact that the same set of labels applies to all annuli, which could be connected to each other by a path. Note that there could be annuli not connected by any path, e.g., the $X^0$ and $X^1$ in Fig.~\ref{Annulus_X0_X1}. Theorem~\ref{Prop: structure_2} is still applicable for both annuli, but the label sets for them can be different. This is related to the existence of topological defects~\cite{Bombin2010}.

\begin{figure}[h]
	\centering
\includegraphics[scale=1.8]{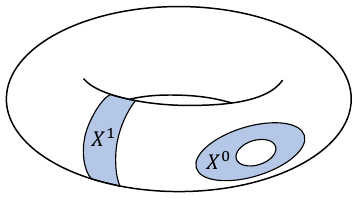}
	\caption{A pair of annuli $X^0$ and $X^1$ on a torus. They cannot be connected by any smooth path because $X^0$ is contractible and $X^1$ is non-contractible. }
	\label{Annulus_X0_X1}
\end{figure}

One of the extreme points is special. Consider a contractible annulus (see $X^0$ in Fig.~\ref{Annulus_X0_X1} for example). The information convex set of such annulus has a special extreme point which we label as ``$1$"; physically, this corresponds to the vacuum sector.
\begin{restatable}[]{Proposition}{propextreme}
 \label{Prop. 1 is extreme}
	Let $\omega$ be a disk. For any annulus $X \subseteq \omega$ 
	\begin{equation}
	\sigma^1_{X} \equiv \Tr_{\omega\backslash X} \, \sigma_{\omega},
	\end{equation}
	is an extreme point of $\Sigma(X)$. 
\end{restatable}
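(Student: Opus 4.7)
The plan is to prove something slightly stronger than the proposition asserts, namely that $\Sigma(X)$ is in fact a singleton. Since the unique element of a singleton is trivially extreme, this implies the claim. The main tool is the isomorphism theorem (Theorem~\ref{thm: the isomorphism theorem}), which lets us replace $X$ by a small annulus $X_0$ whose information convex set can be read off directly from the definition.

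First, I would deform $X$ to a small target annulus inside $\omega$. Because $\omega$ is a disk and therefore simply connected, every annulus contained in $\omega$ is contractible within $\omega$. I would discretize such a contraction into a path $\{X^t\}$ of annuli as in Definition~\ref{def:path}, connecting $X^0=X$ to $X^1=X_0$, where $X_0$ is chosen so small that its thickening $X_0'$ is contained in a single $\mu$-disk $b$. This only requires the cutoff $r$ in Eq.~\eqref{eq:mu(r)} to be large enough to host a thin annular region and its thickening, which is a mild constraint on the setup.

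Second, I would compute $\Sigma(X_0)$ directly from Definition~\ref{def: info_convex}. Since $X_0'\subseteq b$, any $\rho_{X_0'}\in \tilde\Sigma(X_0')$ must agree with $\sigma_b$ on their common support $X_0'$; hence $\rho_{X_0'}=\Tr_{b\setminus X_0'}\sigma_b=\sigma_{X_0'}$, and therefore $\Sigma(X_0)=\{\sigma_{X_0}\}$. The isomorphism theorem applied along the path then produces a bijection $\Phi:\Sigma(X_0)\to \Sigma(X)$. Because every elementary step is implemented either by $\Tr_D$ or by $\mathcal{E}^\sigma_{C\to CD}$, and both of these send the reference state on the smaller region to the reference state on the larger one (for the Petz map this uses Lemma~\ref{lemma_Petz} together with the extension of Axiom~\ref{as:A1} in Proposition~\ref{Prop:axiom_large}), we conclude $\Phi(\sigma_{X_0})=\sigma^1_X$. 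Thus $\Sigma(X)=\{\sigma^1_X\}$, and the unique element is an extreme point.

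The only real obstacle I foresee is geometric bookkeeping: verifying that a thin enough annulus $X_0$ together with its thickening indeed fits inside a $\mu$-disk, and that the contraction of $X$ to $X_0$ can be realized as a legitimate sequence of elementary steps whose intermediate subsystems are all annuli contained in $\omega$. Both points are routine once one grants a sufficiently refined lattice and a sufficiently large $r$. After that, the argument is essentially a packaging of previously established results: Proposition~\ref{Prop.Info_convex_basic} to place $\sigma^1_X$ in $\Sigma(X)$, the isomorphism theorem to transport singletons, and the consistency condition in Definition~\ref{def: info_convex} to pin down the unique element at the small-annulus end.
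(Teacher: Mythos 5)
Your proposed strengthening is false, so the argument cannot be repaired as stated. If $\Sigma(X)$ were a singleton for every annulus inside a disk, the Simplex Theorem (Theorem~\ref{Prop: structure_2}) would force every such annulus to carry only the vacuum sector, i.e.\ the theory would have no nontrivial superselection sectors at all. The toric code ground state satisfies Axioms \ref{as:A0'} and \ref{as:A1} and has a four-point simplex for $\Sigma(X)$, so a proof that $\Sigma(X)$ is a singleton would prove too much. The states $\sigma^a_X$ with $a\neq 1$ agree with $\sigma$ on every $\mu$-disk meeting the (thickened) annulus --- they are reduced density matrices of configurations with an anyon hidden in the hole --- yet they differ globally from $\sigma^1_X$; this is precisely why $\Sigma(X)$ is larger than $\{\sigma^1_X\}$.

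The concrete gap is your first step: the claimed path from $X$ to a tiny annulus $X_0$ whose thickening fits inside a single $\mu$-disk is not a legitimate path in the sense of Definition~\ref{def:path}. Each elementary step (Proposition~\ref{Prop: iso_ABCD}, Fig.~\ref{fig:elementary step}) requires a buffer $B$ separating $A$ from the added/removed patch $D$ by at least $2r+1$, so every intermediate annulus must remain thick on the scale of the $\mu$-disks, and the hole can never be contracted below that scale (once the hole is of order one lattice spacing the thickening ceases to be an annulus at all). Your computation that $\Sigma(X_0)$ is a singleton when $X_0'\subseteq b$ is fine in isolation, but such an $X_0$ is simply not path-connected to a macroscopic annulus; the isomorphism theorem gives you nothing. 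The paper's actual proof is different and only establishes extremality: writing $X=LMR$ and letting $D$ be the disk filling the hole, the extended Axiom \ref{as:A0'} (Proposition~\ref{Prop:axiom_large}) applied to $MRD$ gives $S_{MRD}+S_{RD}-S_M=0$, hence $I(L:R)_{\sigma^1}=0$, and Corollary~\ref{Coro_1} identifies the extreme points of $\Sigma(X)$ as exactly the elements with $I(L:R)=0$. You should aim at that factorization criterion rather than at uniqueness.
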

See Appendix~\ref{sec:orthogonality} for the proof. Importantly, the notion of vacuum sector is unambiguous because $\Sigma(\omega)$ has a unique element if $\omega$ is a disk; see Proposition~\ref{Prop: structure_1}.

Now we are ready to define the superselection sectors in our framework. When there is a pair of anyons, the topological charge can be measured by an Aharonov-Bohm type interferometry measurement by braiding the anyons~\cite{BONDERSON20082709}. Indeed, the projective measurement used for distinguishing different $\sigma^a_X$ corresponds to this interferometry measurement for several exactly solvable models. Based on this observation, we identify each label of the extreme points as a superselection sector of the system. 
\begin{definition}
    Let $X$ be a contractible annulus. The set of superselection sectors is a set of extreme points in $\Sigma(X)$. 
\end{definition}
Except for the vacuum sector, we label each extreme points with the lower-case Roman letters:
\begin{equation}
	\calC=\{1, a,b,c,\cdots\}\,.
\end{equation}

Several authors have already made attempts to define superselection sectors in 2D gapped phases. A statement analogous to the simplex theorem was obtained recently in \cite{2018arXiv181002376K} for models with a local commuting parent Hamiltonian. This proof is based on the operator-algebraic framework of Haah~\cite{Haah2016}.

We expect our derivation to hold more generally, because we make no assumption about the parent Hamiltonian. If the area law (Eq.(\ref{eq:area_law})) holds, our results follow. In particular, if we can prove approximate versions of our statements, we may be able to rigorously define a notion of superselection sectors for models with non-zero Hall conductance or non-zero chiral central charge. These models cannot have a commuting projector parent Hamiltonian~\cite{2019CMaPh.tmp..257K,2019arXiv190506488K}.

The isomorphism theorem guarantees the label set of the superselection sector to be independent of the details of the annulus. However, this theorem in itself does not imply there is a well-defined way to compare the topological charges for two annuli. Indeed, in the presence of a topological defect, transporting the same superselection in two different ways may result in different sector~\cite{Bombin2010}. 

In order to show that the isomorphisms associated with different paths are identical, one necessarily has to invoke an extra condition on the paths. Lemma~\ref{prop:topological_deformation} establishes one such condition. Roughly speaking, it is possible to compare two annuli unambiguously independent of the path, as long as both paths lie in a single disk.

\begin{restatable}[]{lemma}{deformation}
 \label{prop:topological_deformation}
	Let $X^0$ and $X^1$ be two annuli contained in a disk $C$; see Fig.~\ref{Topo_deformation} for example. Let $\{ X^t_{(1)} \}$ and $\{ X^t_{(2)} \}$ be two paths connecting $X^0$ and $X^1$ such that $X^0_{(i)} = X^0$, $X^1_{(i)} = X^1$ for $i=1,2$. Moreover, assume that  
	$
	\cup_t X_{(1)}^t \subseteq C$, $ \cup_t X_{(2)}^t \subseteq C.
	$
	Then, the isomorphisms
	\begin{eqnarray}
	&& \Phi_{\{ X^t_{(1)} \}}: \Sigma(X^0)\to \Sigma(X^1)  \nonumber \\
	\textrm{and}&& \Phi_{\{ X^t_{(2)} \}}: \Sigma(X^0)\to \Sigma(X^1) \nonumber
	\end{eqnarray}
	are identical.
\end{restatable}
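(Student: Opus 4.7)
My plan is to reduce the equality of the two path-induced isomorphisms to the existence of a single \emph{path-independent} extension of each $\rho\in\Sigma(X^0)$ to a large subsystem containing both paths. Concretely, I will find a $\tilde\rho\in\Sigma(Z)$ with $\Tr_{Z\setminus X^0}\tilde\rho=\rho$, where $Z\subseteq C$ is an annulus containing every annulus visited by either path, and then show by induction along each path that the state at each intermediate annulus $\Omega_i\subseteq Z$ is exactly $\Tr_{Z\setminus\Omega_i}\tilde\rho$. The endpoint state at $X^1$ will then equal $\Tr_{Z\setminus X^1}\tilde\rho$, which depends only on $\rho$, not on the path.

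To construct $Z$ I would thicken the union $\cup_t X^t_{(1)}\cup\cup_t X^t_{(2)}$, inside $C$, to an annulus $Z\subseteq C$ large enough that every elementary step of either path fits in $Z$ in the sense required by Proposition~\ref{Prop: iso_ABCD}. Picking any monotone growing path of annuli from $X^0$ to $Z$ inside $C$, Theorem~\ref{thm: the isomorphism theorem} yields a bijection $\Phi:\Sigma(X^0)\to\Sigma(Z)$; by Proposition~\ref{Prop: iso_ABCD} its inverse is the composition of the $\Tr_D$'s from each elementary step, which is nothing but the total partial trace $\Tr_{Z\setminus X^0}$. Since $\Tr_{Z\setminus X^0}$ is defined without reference to any path, $\Phi$ is unique, and $\tilde\rho:=\Phi(\rho)$ is the unique element of $\Sigma(Z)$ restricting to $\rho$ on $X^0$.

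Next I would prove the invariant $\rho_i=\Tr_{Z\setminus\Omega_i}\tilde\rho$ by induction on the number $i$ of elementary steps already performed along either given path. The base case is the defining property of $\tilde\rho$. For a restriction step $\Omega_{i+1}=\Omega_i\setminus D'$, the claim follows from transitivity of the partial trace. For an extension step $\Omega_{i+1}=\Omega_i\cup D$, the state $\Tr_{Z\setminus\Omega_{i+1}}\tilde\rho\in\Sigma(\Omega_{i+1})$ has partial trace $\rho_i$ on $D$; since by Proposition~\ref{Prop: iso_ABCD} the composition $\calE^\sigma_{C\to CD}\circ\Tr_D$ is the identity on $\Sigma(\Omega_{i+1})$, applying $\calE^\sigma_{C\to CD}$ to $\rho_i$ must return $\Tr_{Z\setminus\Omega_{i+1}}\tilde\rho$, which is exactly $\rho_{i+1}$. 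Evaluating the invariant at the last step shows $\Phi_{\{X^t_{(1)}\}}(\rho)=\Phi_{\{X^t_{(2)}\}}(\rho)=\Tr_{Z\setminus X^1}\tilde\rho$.

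The main obstacle I anticipate is the topological construction of $Z$. A single annulus $Z\subseteq C$ containing the whole union of both paths is straightforward to exhibit when the inner holes of all the visited annuli share a common disk inside $C$, but this may fail in general (for instance when the two paths slide the annulus across $C$ along very different routes whose inner holes have empty common intersection). In such cases I would replace the single-$Z$ argument by a chain of overlapping annuli $Z_j\subseteq C$ covering both paths, apply the invariant on each $Z_j$, and patch the resulting identities on overlaps; alternatively, one can reduce to a local-moves argument---cancellation of a back-and-forth grow/shrink pair via Proposition~\ref{Prop: iso_ABCD}, and commutation of disjoint elementary steps via the uniqueness clause of the merging lemma~\ref{Prop: merging technique}---combined with the topological fact that any two paths of annuli inside the simply-connected region $C$ are related by a finite sequence of such moves.
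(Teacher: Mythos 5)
Your core mechanism---extend each $\rho$ to a common larger annulus on which elements are rigid, then show by induction that every intermediate state along either path is a marginal of that single extension---is sound, and the induction step via Proposition~\ref{Prop: iso_ABCD} (the extension map is the two-sided inverse of $\Tr_D$) is correct as far as it goes. The genuine gap is precisely the obstacle you flag and then defer: an annulus $Z\subseteq C$ containing $\cup_t X^t_{(1)}\cup\cup_t X^t_{(2)}$ need not exist, and this is not a corner case. Whenever the two paths carry the hole of the annulus along different routes through $C$, the union of the visited annuli is typically simply connected (or all of $C$), and no annulus inside $C$ can contain every $X^t$; even when the union does fit inside some annulus, you additionally need each $X^t$ to encircle the hole of $Z$ so that $\Tr_{Z\setminus X^t}\tilde\rho$ is the relevant element and a monotone growing path $X^0\to Z$ exists. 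Your fallbacks do not close this: the chain-of-annuli patching hits the same topological obstruction on each piece, and the local-moves route rests on an unproved discrete isotopy claim (that any two paths of annuli in $C$ are related by finitely many cancellations and commutations of elementary steps), which is itself a nontrivial combinatorial-topology lemma.

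The paper's fix is to place the rigid reference region \emph{outside} $C$ rather than inside it: take an annulus $B$ surrounding the disk $C$ (so $BC$ is a disk) and extend every intermediate annulus $X^t_{(i)}$ to an annulus $\tilde X^t_{(i)}$ containing both $X^t_{(i)}$ and $B$ (essentially $BC$ minus the hole of $X^t_{(i)}$). Because all the $X^t_{(i)}$ lie inside $C$, this extension always exists and the extended sequences are again paths. Along the extended paths the marginal on the fixed annulus $B$ never changes (each elementary step acts away from $B$), and an element of $\Sigma(\tilde X^1)$ is uniquely determined by its marginal on $B$ by repeated use of Lemma~\ref{lemma_growth}; hence the two extended isomorphisms coincide, and the original ones are recovered from them by partial trace. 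If you replace your $Z\subseteq C$ by this exterior annulus $B$, with $\tilde X^t_{(i)}$ playing the role of the region containing each $\Omega_i$, your induction goes through essentially verbatim and the topological obstruction disappears.
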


\begin{figure}[h]
	\centering
\includegraphics[scale=1.2]{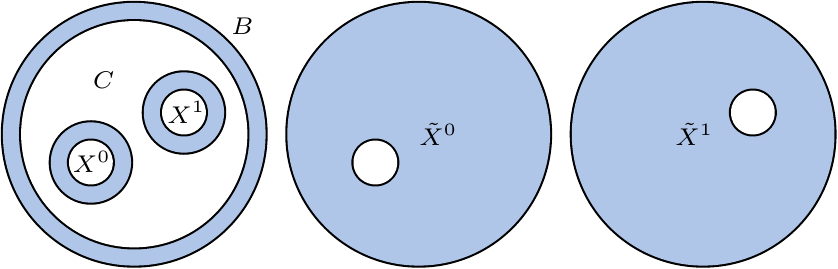}
	\caption{Both $C$ and $BC$ are disks. $B$ is an annulus. $X^0$, $X^1$, $\tilde{X}^0$ and $\tilde{X}^{1}$ are annuli. Note that $X^0$ and $X^1$ are subsets of $C$. In the proof of Lemma~\ref{prop:topological_deformation}, we construct an extension $X^t\to \tilde{X}^t$ with $B\subseteq \tilde{X}^t$. }
	\label{Topo_deformation}
\end{figure}
Lemma~\ref{prop:topological_deformation} implies that we can always treat $\Phi_{\{X^t_{(1)}\}}(\sigma_{X^0}^a)$ as the label $a$ for $X^1$. The key idea behind the proof is that one can copy the information about which superselection sector lies inside an annulus to a common annulus; see Fig.~\ref{Topo_deformation} for illustration. The proof is left in Appendix~\ref{sec:orthogonality}.

Now that we have defined a notion of superselection sectors, we can define their quantum dimensions. We will do so by investigating a contribution to the entanglement entropy that depends on the choice of this sector. We will use the following definition. Later, we will be able to determine their value.
\begin{definition}[Entropy contribution of superselection sector]\label{def:f(a)}
	For a contractible annulus $X$, we define the universal contribution to von Neumann entropy from superselection sector $a$ as 
	\begin{equation}
	f(a)\equiv \frac{S(\sigma^a_{X}) - S(\sigma^1_{X})}{2}.
	\label{eq: entanglement of extreme}
	\end{equation}
\end{definition}
The denominator 2 is introduced to take into account that $X$ has two boundaries. For a connected 2D manifold,  $f(a)$ is a real number that does not depend on the choice of the contractible annulus. This is because the entropy difference is preserved by an isomorphism. Furthermore, $f(1)=0$ by definition. 

Later, we shall study similar contributions for a $n$-hole disk with $n\ge 2$. We will find that $f(a)$ appears generically, even for these more generic subsystems. The repeated appearance of these objects hint at a possibility that there may be nontrivial relations concerning $f(a)$. Indeed, we will later see that $f(a)= \ln d_a$, where $d_a$ is the quantum dimension of an anyon/superselection sector $a$. We will also be able to derive the fusion axioms and an expression for topological entanglement entropy.

\subsection{Fusion rules and fusion spaces} \label{Sec. 3 punctures}
The fusion rules determine the possible choice of the total composite topological charge of two given topological charges.  In the algebraic theory of anyons, the fusion rule for charges $a$ and $b$ is formally written as  
\begin{align}
a\times b=\sum_c N_{ab}^c\,c\,, \nonumber
\end{align}
where $N_{ab}^c\in \mathbb{Z}_{\geq0}$ is the fusion multiplicity.
This is analogous to the fact that two spin-$\frac{1}{2}$ particles can fuse into spin-$0$ or spin-$1$ particle.
Nevertheless, there is a fundamental difference between the fusion of spins and that of anyons. For the definition of particle spins and their fusion rules, rotational symmetry is often needed either in the Hamiltonian or the Lagrangian. In contrast, the notion of topological charges and their fusions are expected to emerge from the collective properties of a many-body quantum system~\cite{anderson1972more,wen2004quantum,2006AnPhy.321....2K}. Indeed, we emphasize that our axioms (Axiom \ref{as:A0'} and \ref{as:A1}) are unrelated to any symmetry.

In our framework, the superselection sectors were identified from annuli. Thus, one may expect the fusion rules to be extracted from $2$-hole disks. In this section, we show that this is indeed the case.  Let us consider a 2-hole disk $Y$, which we depict in Fig.~\ref{Y_BC}. Let $B_1$, $B_2$ and $B_3$ be the three annuli around the boundaries of $Y$. The information convex set of each annulus has the same simplex structure, and we can label the extreme points by the same label set.  
Let $\Sigma^c_{ab}(Y)$ be a convex subset of $\Sigma(Y)$, defined as
\begin{align}
\Sigma^c_{ab}(Y)\equiv\left\{\rho_Y \in\Sigma(Y) \left|   
\begin{array}{l}
\Tr_{Y\backslash B_1} \,\rho_Y= \sigma^a_{B_1} \\
\Tr_{Y\backslash B_2} \,\rho_Y= \sigma^b_{B_2} \\
\Tr_{Y\backslash B_3} \,\rho_Y= \sigma^c_{B_3} 
\end{array} \right. \right\}\,,
\end{align}
where $\sigma^a_{B_i}$ is an extreme point of $\Sigma(B_i)$, $i=1,2,3$. The convention of charge labeling among the different annuli is fixed by Lemma~\ref{prop:topological_deformation}.  $\Sigma_{ab}^c(Y)$ may be empty if there is no state satisfying all the conditions. We call such a combination of $(a,b,c)$ {\it forbidden}.

\begin{figure}[h]
	\centering
\includegraphics[scale=1.1]{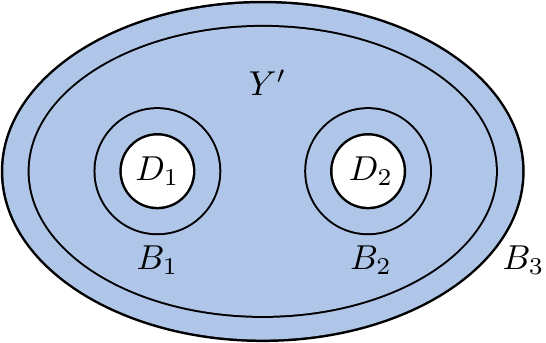}
	\caption{A 2-hole disk $Y=BY'$, with $B=B_1B_2B_3$. $B_1$, $B_2$ and $B_3$ are three annuli surrounding the three boundaries of $Y$.  $YD$ is a disk, where  $D=D_1D_2$. $D_1$ and $D_2$ are the two disks surrounded by annuli $B_1$ and $B_2$. }
	\label{Y_BC}
\end{figure}

We show that every extreme point of $\Sigma(Y)$ is contained in some $\Sigma_{ab}^c(Y)$. Because $\Sigma(Y)$ is a convex set, the entire set can be characterized by $\{\Sigma_{ab}^c(Y)\}$. Each $\Sigma_{ab}^c(Y)$ is isomorphic to the state space of a finite-dimensional Hilbert space $\mathbb{V}_{ab}^c$. These two results are summarized below in  Theorems~\ref{Thm:abc} and \ref{Thm:}.

\begin{theorem}\label{Thm:abc}
	For a 2-hole disk $Y$, the information convex set $\Sigma(Y)$ is the following  convex combination 
	\begin{equation}
	\Sigma(Y)= \left\{\left.\rho_{Y}=\bigoplus_{a,b,c\in\calC}p_{ab}^c\,\,\rho^{abc}_Y\,\right|\; \rho^{abc}_Y\in\Sigma_{ab}^c(Y) \right\}, \label{eq:conv_2-hole}
	\end{equation}
	where $\{p_{ab}^c\}$ is a probability distribution.
\end{theorem}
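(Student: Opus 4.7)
The inclusion $\supseteq$ is essentially free: each $\Sigma^c_{ab}(Y) \subseteq \Sigma(Y)$ by definition, and by convexity of information convex sets (Proposition~\ref{Prop.Info_convex_basic}) any direct-sum/convex-combination of such elements is again in $\Sigma(Y)$. The content is the direction $\subseteq$.

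Take $\rho_Y \in \Sigma(Y)$. First, reduce to each boundary annulus: by Proposition~\ref{Prop.Info_convex_basic}, $\Tr_{Y\setminus B_i}\rho_Y \in \Sigma(B_i)$, and the Simplex Theorem (Theorem~\ref{Prop: structure_2}) gives
\[
\Tr_{Y\setminus B_i}\rho_Y \;=\; \sum_a p_a^{(i)} \sigma^a_{B_i},
\]
with mutually orthogonal extreme points. Orthogonality supplies projectors $\Pi_i^a$ on $\mathcal{H}_{B_i}$ onto $\mathrm{supp}(\sigma^a_{B_i})$ satisfying $\Pi_i^a \Pi_i^{a'} = \delta_{aa'}\Pi_i^a$, and projectors for different $i$ commute because they act on disjoint regions. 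Lemma~\ref{prop:topological_deformation} guarantees that the three label sets are aligned consistently.

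Granting (momentarily) the block-diagonality identity
\[
\rho_Y \;=\; \sum_{a,b,c}\bigl(\Pi^a_1 \Pi^b_2 \Pi^c_3\bigr)\, \rho_Y \,\bigl(\Pi^a_1 \Pi^b_2 \Pi^c_3\bigr),
\]
I would define $p_{ab}^c := \Tr\bigl[(\Pi^a_1 \Pi^b_2 \Pi^c_3)\rho_Y\bigr]$ and, for $p_{ab}^c > 0$, $\rho^{abc}_Y := (p_{ab}^c)^{-1}(\Pi^a_1 \Pi^b_2 \Pi^c_3)\rho_Y(\Pi^a_1 \Pi^b_2 \Pi^c_3)$. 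Each $\rho^{abc}_Y$ is a density matrix whose marginal on $B_i$ is the corresponding extreme point $\sigma^{a,b,c}_{B_i}$, hence lies in $\Sigma^c_{ab}(Y)$; orthogonality of the projectors turns the sum into a genuine direct sum, yielding the claimed decomposition.

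The real work is the block-diagonality identity. My approach is to thicken each $B_i$ to a slightly fatter annulus $\tilde B_i \subset Y$ so that no $\mu$-disk in $Y$ straddles $\partial \tilde B_i$; the Isomorphism Theorem (Theorem~\ref{thm: the isomorphism theorem}) transports each simplex structure to $\Sigma(\tilde B_i)$, giving projectors $\tilde\Pi^a_i$ on $\mathcal{H}_{\tilde B_i}$. A standard support-containment argument (positivity of $\rho_Y$ plus the fact that $\rho_{\tilde B_i}$ is supported on $\bigoplus_a \mathrm{supp}(\tilde\Pi^a_i)$) forces $\rho_Y$ to have support inside $(\bigoplus_a \mathrm{supp}(\tilde\Pi^a_i)) \otimes \mathcal{H}_{Y\setminus \tilde B_i}$. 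I would then form the ``measured'' operator
\[
\tilde \rho_Y \;:=\; \sum_{a,b,c}(\tilde \Pi^a_1 \tilde \Pi^b_2 \tilde \Pi^c_3)\rho_Y(\tilde \Pi^a_1 \tilde \Pi^b_2 \tilde \Pi^c_3),
\]
verify by direct computation using orthogonality and the $\mu$-disk consistency of $\rho_Y$ that $\tilde \rho_Y \in \Sigma(Y)$ and that it shares with $\rho_Y$ the reductions to each $\tilde B_i$ as well as to $Y\setminus(\tilde B_1\cup\tilde B_2\cup \tilde B_3)$, and finally invoke the extended conditional-independence statements (Proposition~\ref{Prop:axiom_large}) together with iterated applications of the Petz recovery map (Lemma~\ref{lemma_Petz} and Lemma~\ref{lemma_growth}) to reconstruct both $\rho_Y$ and $\tilde \rho_Y$ from the same chain of Markovian marginals along buffer regions surrounding each $\tilde B_i$, concluding $\rho_Y = \tilde \rho_Y$.

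The hardest point is that last reconstruction step: because $Y$ itself is not a disk, the axioms \ref{as:A0'}--\ref{as:A1} do not apply directly to its global tripartitions, and one must arrange a sequence of sub-regions --- each of which is topologically a disk, so that the extended axioms do apply --- that collectively ``sweep out'' $Y$ and pin down the state. Constructing this sweep while respecting the three-hole topology, and keeping track of conditional independences at each step, is where the bulk of the technical labor lies; everything else is bookkeeping around the Simplex Theorem.
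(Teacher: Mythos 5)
There is a genuine gap. Your setup is fine: the inclusion $\supseteq$, the alignment of labels via Lemma~\ref{prop:topological_deformation}, and the bookkeeping that turns a block-diagonal $\rho_Y$ into the claimed direct sum are all correct. But the entire content of the theorem is the block-diagonality identity, and your plan for it does not close. The proposed reconstruction argument would identify $\rho_Y$ with the pinched state $\tilde\rho_Y$ on the grounds that they share marginals on each $\tilde B_i$ separately and on $Y\setminus(\tilde B_1\cup\tilde B_2\cup\tilde B_3)$, together with Markov conditions along collars. Those data cannot determine the state: the correlations \emph{among} the three boundary annuli are exactly what encode the joint distribution $p_{ab}^c$ and any would-be coherences between sectors, and they are invisible in the single-annulus marginals. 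Indeed $\rho_Y$ and $\tilde\rho_Y$ share all the marginals you list essentially by construction, so agreement of those marginals carries no information about whether the off-diagonal blocks vanish. The available Markov condition (Lemma~\ref{Prop: QMS_around_an_entangle_cut}, with $B$ and $C$ the unions of collars over all three boundary components) determines $\rho_Y$ from $\rho_{AB}$ and the \emph{joint} collar marginal $\rho_{BC}$ on $\tilde B_1\tilde B_2\tilde B_3$ — and showing that this joint marginal is classical across sectors is the block-diagonality problem all over again, merely relocated.

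The paper avoids this entirely by arguing at the level of extreme points. Since $\Sigma(Y)$ is compact and convex (Proposition~\ref{Prop.Info_convex_basic}), every element is a finite convex combination of extreme points. For an extreme point $\sigma^{\langle e\rangle}_Y$, Lemma~\ref{Prop: product} gives $I(A':B_i)_{\sigma^{\langle e\rangle}}=0$ for an annulus $A'$ adjacent to each boundary collar (this rests on the Markov structure near the boundary plus Lemma~\ref{Prop: purification}), and Corollary~\ref{Coro_1} then forces the restriction to each $B_i$ to be an \emph{extreme point} of $\Sigma(B_i)$, i.e.\ a single charge label rather than a mixture — this is Lemma~\ref{Prop: product_entanglement_cut}. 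Hence every extreme point of $\Sigma(Y)$ lies in a single $\Sigma^c_{ab}(Y)$, and the orthogonality of the annulus extreme points (Theorem~\ref{Prop: structure_2}) upgrades the convex combination to a direct sum. I recommend you replace your pinching-and-reconstruction step with this extreme-point argument; your projector bookkeeping then becomes unnecessary.
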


\begin{proof}
	After taking a partial trace, the reduced density matrix of an extreme point of $\Sigma(Y)$ reduces to an extreme point of $\Sigma(B_1)$, $\Sigma(B_2)$ and $\Sigma(B_3)$. This fact follows from Lemma~\ref{Prop: product_entanglement_cut} in the appendix. Therefore, every extreme point of $\Sigma(Y)$ is in $\Sigma_{ab}^c(Y)$ for some $a, b,$ and $c$. This implies Eq.~(\ref{eq:conv_2-hole}).
\end{proof}

This theorem implies that one can classify the extreme points by a triple of labels $(a,b,c)$. Furthermore, the convex combination in Eq.~(\ref{eq:conv_2-hole}) is orthogonal, since one can perfectly distinguish these labels by projective measurements on the three distinct annuli.

Now we study the geometric structure of each $\Sigma_{ab}^c(Y)$. We should emphasize an important difference between $\Sigma_{ab}^c(Y)$ and the information convex set of an annulus. On an annulus, the information convex set has a classical structure specified by a probability distribution $\{p_a\}_{a\in\calC}$.  
In contrast, $\Sigma_{ab}^c(Y)$ is coherent in the sense that it is isomorphic to the state space $\calS(\mathbb{V}_{ab}^c)$ of a certain finite dimensional Hilbert space $\mathbb{V}_{ab}^c$. If the dimension of $\mathbb{V}_{ab}^c$ is greater or equal to $2$, the structure  of $\Sigma_{ab}^c(Y)$ allows the storage of quantum information. 
This structure is established by the following theorem. 
\begin{restatable}[]{theorem}{twohole}
\label{Thm:}
	Consider a 2-hole disk $Y$.  $\forall a,b,c\in \mathcal{C}$, 
	\begin{equation}\label{eq:simeq_iso}
	\Sigma_{ab}^c(Y) \cong \calS (\mathbb{V}_{ab}^c),
	\end{equation} 
	where $\mathbb{V}_{ab}^c$ is a finite-dimensional Hilbert space. 
\end{restatable}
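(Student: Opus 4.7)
My plan is to establish the isomorphism $\Sigma_{ab}^c(Y) \cong \calS(\mathbb{V}_{ab}^c)$ by first pinning down the boundary data shared by every element of $\Sigma_{ab}^c(Y)$, and then exposing the internal Hilbert-space structure through a purification argument grounded in the merging lemma.

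First, I would show that every $\rho \in \Sigma_{ab}^c(Y)$ has the same reduced density matrix on the composite boundary $B = B_1 \sqcup B_2 \sqcup B_3$, namely the product $\sigma^a_{B_1}\otimes\sigma^b_{B_2}\otimes\sigma^c_{B_3}$. The per-annulus marginals are fixed by definition, so the only content is the absence of cross-correlations. By thickening $Y$ sufficiently (using Theorem~\ref{thm: the isomorphism theorem} to replace $Y$ by an isomorphic, larger representative), the three annuli can be made mutually well-separated inside $Y$, and the extended Axiom~\ref{as:A0'} from Proposition~\ref{Prop:axiom_large} kills the pairwise mutual informations. A short induction on pairs then gives full factorization of $\rho_B$.

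Next, I would extract the Hilbert space structure via purification. Fix a reference system $\calR$ large enough to purify any $\rho \in \Sigma_{ab}^c(Y)$. Since $\rho_B$ is now the same for all such $\rho$, I can fix once and for all a canonical purification $|\Phi_B\rangle \in B\otimes\calR$ of $\sigma^a_{B_1}\otimes\sigma^b_{B_2}\otimes\sigma^c_{B_3}$. Any purification of $\rho$ then differs from one extending $|\Phi_B\rangle$ by an isometry acting on $\calR$ together with the interior $Y\setminus B$. The extended Axiom~\ref{as:A1} applied to annular shells just inside $B$ guarantees a quantum Markov chain structure conditioned on $B$; Lemma~\ref{lemma_Petz} then says that the full state $\rho$ can be reconstructed from its marginal on a small ``core'' subregion $Y_{\mathrm{core}} \subset Y$ via the Petz recovery map built from $\sigma$. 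This gives an affine bijection between $\Sigma_{ab}^c(Y)$ and a convex compact subset $K$ of the state space on $Y_{\mathrm{core}}$, and Corollaries~\ref{coro:distance_preservation} and \ref{coro:entropy_difference} show that this bijection preserves distances and entropy differences, so $K$ is affinely and metrically equivalent to $\Sigma_{ab}^c(Y)$.

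Finally, I must identify $K$ with the full state space $\calS(\mathbb{V}_{ab}^c)$ of some finite-dimensional Hilbert space, and this is the principal obstacle: for an annulus the analogous set turns out to be a \emph{classical} simplex (Theorem~\ref{Prop: structure_2}), so the argument must genuinely exploit the additional topology of the 2-hole disk to recover quantum (rather than merely probabilistic) structure. I expect to do this by constructing enough extreme points to span the full complex projective space of rank-one density matrices on $\mathbb{V}_{ab}^c$. Concretely, I would close the two holes by disks $D_1, D_2$ to obtain the larger disk $YD$ with its unique state $\sigma_{YD}$ (Proposition~\ref{Prop: structure_1}), and use the merging lemma (Lemma~\ref{Prop: merging technique}) to manufacture families of elements of $\Sigma_{ab}^c(Y)$ by gluing $\rho$'s along $B$ with isometry-adjusted copies of $\sigma_{YD}$. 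Unitary freedom in these gluings should produce a unitary action on $K$ that is transitive on extreme points and compatible with superposition, forcing $K$ to be a full quantum state space rather than any strict convex sub-body.
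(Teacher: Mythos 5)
Your opening moves are sound and overlap with the paper's own strategy: the common product marginal $\sigma^a_{B_1}\otimes\sigma^b_{B_2}\otimes\sigma^c_{B_3}$ on the boundary annuli is indeed established there (Lemma~\ref{Prop: product}, Corollary~\ref{coro:entropy_B_B1B2B3}), and purifying the boundary data is the right vehicle. But your argument breaks down at exactly the point you flag as ``the principal obstacle,'' and the mechanism you propose there does not work. Concretely: you cannot merge an element of $\Sigma_{ab}^c(Y)$ with (isometry-adjusted copies of) the unique disk state $\sigma_{YD}$ over the holes unless $a=b=1$, because by Proposition~\ref{Prop:disk_in_Omega} every element of $\Sigma(YD)$ restricts to the vacuum extreme point $\sigma^1_{B_i}$ on each $B_i$, so the consistency hypothesis $\rho_{BC}=\sigma_{BC}$ of Lemma~\ref{Prop: merging technique} fails for a charged state. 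More fundamentally, even a repaired construction giving a transitive unitary action on the extreme points of $K$ would not force $K$ to be a quantum state space: a classical simplex --- which is precisely what $\Sigma(X)$ is for an annulus by Theorem~\ref{Prop: structure_2} --- also admits a transitive symmetry group on its extreme points. What must actually be proved is closure under \emph{complex superposition} of suitably purified extreme points, and your proposal contains no mechanism for that. (Your intermediate step of mapping $\Sigma_{ab}^c(Y)$ to a set $K$ on a core region via Petz recovery is harmless but also contentless here: it is just the isomorphism theorem again and reveals nothing new about the convex geometry.)

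The paper's mechanism, which you would need to reproduce, runs as follows. One first shows all extreme points of $\Sigma_{ab}^c(Y)$ have \emph{equal} von Neumann entropy (Lemma~\ref{lemma_identical_entropy}), whence any element with that entropy is itself extreme (Corollary~\ref{corollary_optional}); this lemma is absent from your proposal and is used twice below. One then purifies a single extreme point into $\vert\varphi^x_{EY}\rangle$ and uses the merging lemma, with $I(E:B_MB_R\vert B_L)=0$ from Corollary~\ref{coro:generalized_A0} and $I(B_L:Y'\vert B_MB_R)=0$, to build one channel $\calE^{\rho}_{B_L\to EB_L}$ that maps all of $\Sigma_{ab}^c(Y)$ into a set $\calS_{EY}$ isomorphic to it; entropy preservation plus equal entropies implies this single channel purifies \emph{every} extreme point simultaneously. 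The superposition step then goes through because (i) by Uhlmann's theorem all these purifications differ by unitaries supported on $Y'$ alone, so $\sum_i z_i\vert\varphi^{y_i}_{EY}\rangle = O_{Y'}\vert\varphi^x_{EY}\rangle$ and the decoupling $I(EB_LB_M:Y')=0$ fixes its marginal on $EB_LB_M$; (ii) its marginal on $B_MB_RY'$ is consistent with all $\mu$-disks and has the entropy of an extreme point, hence \emph{is} one by Corollary~\ref{corollary_optional}; and (iii) the Markov condition $I(B_L:B_RY'\vert B_M)=0$ plus Lemma~\ref{lemma_growth} reassembles these marginals uniquely into an element of $\calS_{EY}$. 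Without an argument of this kind --- one that genuinely converts superposition of purifications into membership in the convex set --- the identification of $K$ with a full state space $\calS(\mathbb{V}_{ab}^c)$ is not established.
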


A particular choice of $(a,b,c)$ is forbidden, when $\dim\mathbb{V}_{ab}^c=0$. See Appendix~\ref{section:fusion_space_proofs} for the proof of Theorem~\ref{Thm:}. The key idea is to show that there is a quantum channel which simultaneously purifies every extreme point of $\Sigma_{ab}^c(Y)$  into a state in Hilbert space $\mathcal{H}_{EY}$ ($E$ is an auxiliary system).  We then show that any  superposition of the purified states reduces to an extreme point of $\Sigma_{ab}^c(Y)$ on $Y$. It follows that the  quantum channel which achieves the purification provides an isomorphism between $\Sigma_{ab}^c(Y)$ and $\calS (\mathbb{V}_{ab}^c)$.

We call the Hilbert space $\mathbb{V}_{ab}^c$ defined in Theorem~\ref{Thm:} as the \emph{fusion space}. Physically, this Hilbert space is nonempty if the superselection sectors $a,b$ has a total charge of $c$. We thus define the fusion rule using the dimension of the corresponding fusion space. 
\begin{definition}\label{def:fusion rule}
	We define the \emph{fusion rule} of labels $a, b$ in $\calC$ by the formal product
	\begin{align}
	a\times b=\sum_{c\in\calC}N_{ab}^c\,c\,,
	\end{align}
	where $N_{ab}^c\equiv \dim\mathbb{V}_{ab}^c$. 
\end{definition} 
The results in Theorem~\ref{Thm:abc} and Theorem~\ref{Thm:} generalize to $n$-hole disks with $n\ge 3$. The same applies to the concepts of fusion spaces and fusion rules.

\subsection{Derivation of the axioms of the fusion rules} \label{Sec. Fusion}

In this section, we show how the axioms of the anyon fusion theory emerge from our axioms. This derivation includes the existence of antiparticles and a set of rules that $\{ N_{ab}^c \}$ has to satisfy. 

We have defined the set of superselection sectors 
\begin{equation}
\mathcal{C}=\{1,a,b,c,\cdots \}\nonumber
\end{equation} 
in terms of the extreme points of $\Sigma(X)$, where $X$ is a contractible annulus. $\mathcal{C}$ is always a finite set and there is a unique sector $1\in \mathcal{C}$ which we refer to as the \emph{vacuum}.  We have also identified a set of non-negative integers $\{ N_{ab}^c \}$ encoded in the structure of $\Sigma(Y)$ with a 2-hole disk $Y$.

The following is a list of the results we are going to prove under our definitions.
\begin{enumerate}
	\item $N_{ab}^c = N_{ba}^c$. (Proposition~\ref{as:2}).
	\item $N_{a1}^c = N_{1a}^c =\delta_{a,c}$. (Proposition~\ref{as:1}.)
	\item The existence of an anti-sector $\bar{a}\in \calC$ for $\forall a\in \calC$ such that $N_{ab}^1=\delta_{b,\bar{a}}$. 
	(Proposition~\ref{Prop: a_bar} and Definition \ref{def: antiparticle}.)
	\item $N_{ab}^c = N_{\bar{b} \bar{a}}^{\bar{c}}$. (Proposition~\ref{as:-2}.)
	\item $\sum_i N_{ab}^i N_{ic}^d = \sum_j N_{aj}^d N^j_{bc}$. (Proposition~\ref{as:-1}.)
\end{enumerate}
Together, these properties form a subset of the axioms of the algebraic theory of anyon outlined in Appendix E of \cite{2006AnPhy.321....2K}, also known as the unitary modular tensor category (UMTC). Concretely, what we derive in this section is the set of axioms of \emph{fusion rule algebra} \cite{frohlich2006quantum} which is also known under the name  \emph{commutative fusion ring} \cite{etingof2016tensor}. It contains  slightly less axioms than a  fusion category because we have not defined the $F$-symbols.\footnote{$F$-symbols will give further constraints to the  fusion multiplicities, the simplest examples are studied in \cite{ostrik2002fusion}. }

In the derivation of the axioms of the fusion rules, we will extensively use the merging technique to relate subsystems of different topologies. From our axioms, we can infer that elements in the information convex sets are quantum Markov states with respect to many relevant partitions of certain subsystems. Because quantum Markov states can be merged together (Lemma~\ref{Prop: merging technique}), we can merge many of the elements of the information convex sets together.  Moreover, this merging process can be repeated many times. With this process, we can generate elements of an information convex set over some region from information convex sets of its subregions. We refer the readers to Proposition~\ref{Prop:merging_Sigma} in the appendix for the technical details.

\begin{Proposition} \label{as:2}
	\begin{equation}
	N_{ab}^c = N_{ba}^c. \label{eq: fusion_6}
	\end{equation}
\end{Proposition}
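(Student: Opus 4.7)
The plan is to exhibit a convex-set isomorphism between $\Sigma_{ab}^c(Y)$ and $\Sigma_{ba}^c(Y^\ast)$ for a suitably chosen second 2-hole disk $Y^\ast$. Once this is done, applying Theorem~\ref{Thm:} to both sides yields $\mathcal{S}(\mathbb{V}_{ab}^c) \cong \mathcal{S}(\mathbb{V}_{ba}^c)$, and hence $N_{ab}^c = N_{ba}^c$.

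First, I would embed $Y$ in a disk $D \subseteq V$ large enough to accommodate a deformation of $Y$ that slides the two inner holes around each other; this is a mild geometric assumption that may be made without loss of generality by working on a sufficiently large (or refined) lattice. I would then choose a second 2-hole disk $Y^\ast \subset D$ whose three boundary annuli $B_1^\ast, B_2^\ast, B_3^\ast$ are located near the positions of $B_2, B_1, B_3$ of $Y$, respectively. Next, I would construct a path $\{Y^t\}$ within $D$ from $Y$ to $Y^\ast$, consisting of elementary steps in the sense of Definition~\ref{def:path}, which realizes the lattice analogue of a half-braid: along the path, the annulus $B_1$ is continuously deformed into $B_2^\ast$, $B_2$ into $B_1^\ast$, and $B_3$ into $B_3^\ast$.

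Applying Theorem~\ref{thm: the isomorphism theorem} to this path yields a bijection $\Phi:\Sigma(Y)\to\Sigma(Y^\ast)$ that preserves distances, entropy differences, and (since it is induced by CPTP maps in each direction, via Proposition~\ref{Prop: iso_ABCD}) the convex structure. To determine how $\Phi$ acts on the boundary reductions, I would restrict the path to each annulus separately. Since each restricted path lies entirely within the disk $D$, Lemma~\ref{prop:topological_deformation} applies and guarantees that the induced annulus-level isomorphism coincides with the canonical label-preserving identification: $\sigma^a_{B_1} \mapsto \sigma^a_{B_2^\ast}$, $\sigma^b_{B_2} \mapsto \sigma^b_{B_1^\ast}$, and $\sigma^c_{B_3} \mapsto \sigma^c_{B_3^\ast}$. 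Consequently, $\Phi\bigl(\Sigma_{ab}^c(Y)\bigr) = \Sigma_{ba}^c(Y^\ast)$, so the restriction of $\Phi$ is the desired convex-set isomorphism.

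The main obstacle I anticipate is the careful bookkeeping of how the labels transport along the half-braid path. A priori such a half-braid could act nontrivially on the labels themselves, as happens for finer anyon data (e.g.\ the $R$-symbols of a unitary modular tensor category). The key resolution is Lemma~\ref{prop:topological_deformation}: the path-dependence ambiguity vanishes whenever the transport happens inside a single disk, which is precisely the situation we arrange by embedding the entire path inside $D$.
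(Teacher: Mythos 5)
Your proposal is correct and is essentially the paper's own argument: the paper likewise considers a path that exchanges the two internal holes of $Y$, obtains an automorphism of $\Sigma(Y)$ via Theorem~\ref{thm: the isomorphism theorem}, and concludes $\Sigma_{ab}^c(Y)\cong\Sigma_{ba}^c(Y)$, hence $N_{ab}^c=N_{ba}^c$. Your explicit invocation of Lemma~\ref{prop:topological_deformation} to verify that the half-braid transports the boundary labels by the canonical identification is a detail the paper leaves implicit, but it is the right way to close that bookkeeping gap.
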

\begin{proof}
	Let us consider a path which maps a 2-hole disk $Y$ back to itself by exchanging the two internal holes. Associated with this path, there is an automorphism $\Sigma(Y) \cong \Sigma(Y)$. The automorphism permutes the labeling and induces an isomorphism $\Sigma_{ab}^c(Y)\cong \Sigma_{ba}^c(Y)$ for each $a,b,c$.  Thus, $N_{ab}^c = N_{ba}^c$.
\end{proof}

\begin{Proposition} \label{as:1}
	\begin{equation}
	N_{1a}^c  = N_{a1}^c =\delta_{a,c}. \label{eq: fusion_1}
	\end{equation}
\end{Proposition}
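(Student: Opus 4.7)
The plan is to reduce the claim to the simplex theorem for an annulus by "filling in" the vacuum hole. By Proposition~\ref{as:2} we already have $N_{1a}^c=N_{a1}^c$, so it suffices to show that $\Sigma_{1a}^c(Y)$ is isomorphic to the state space of a one-dimensional Hilbert space if $a=c$ and is empty otherwise. Adopting the notation of Fig.~\ref{Y_BC}, I let $D_1$ be the disk bounded on the inside by $B_1$, so that $YD_1$ is an annulus with inner boundary around $B_2$ and outer boundary around $B_3$.

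The first step is to exhibit a bijection between $\Sigma_{1a}^c(Y)$ and a subset of $\Sigma(YD_1)$ by merging. For any $\rho_Y\in \Sigma_{1a}^c(Y)$, Definition of $\Sigma_{1a}^c(Y)$ forces $\rho_{B_1}=\sigma^1_{B_1}$, and by Proposition~\ref{Prop. 1 is extreme} this agrees with the restriction of the unique element $\sigma_{B_1D_1}\in \Sigma(B_1D_1)$ to $B_1$. Because the axioms hold on large disks (Proposition~\ref{Prop:axiom_large}), the conditional mutual information prerequisites of the merging lemma (Lemma~\ref{Prop: merging technique}) are met when I split $B_1$ into an inner collar adjacent to $D_1$ and an outer collar adjacent to $Y\setminus B_1$. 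The merge yields a state $\tilde\rho\in \Sigma(YD_1)$ with $\Tr_{D_1}\tilde\rho=\rho_Y$ and $\tilde\rho|_{B_1D_1}=\sigma_{B_1D_1}$.

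Next I exploit the annular topology of $YD_1$. By the simplex theorem (Theorem~\ref{Prop: structure_2}) every element of $\Sigma(YD_1)$ decomposes as $\bigoplus_x q_x\sigma^x_{YD_1}$. The thin annuli $B_2$ and $B_3$ are both sub-annuli of $YD_1$ that can be reached by paths lying entirely inside $YD_1$, so by the isomorphism theorem (Theorem~\ref{thm: the isomorphism theorem}) together with Lemma~\ref{prop:topological_deformation}, the label $x$ obtained from restricting $\sigma^x_{YD_1}$ to $B_2$ must coincide with the label obtained from restricting to $B_3$. Applied to $\tilde\rho$, this forces the $B_2$-label $a$ to equal the $B_3$-label $c$. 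Hence $\Sigma_{1a}^c(Y)=\emptyset$ and $N_{1a}^c=0$ whenever $a\neq c$.

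For the remaining case $a=c$, I produce the inverse of the merging map. Starting from $\sigma^a_{YD_1}\in \Sigma(YD_1)$, I set $\rho_Y:=\Tr_{D_1}\sigma^a_{YD_1}$. By Proposition~\ref{Prop.Info_convex_basic} $\rho_Y\in\Sigma(Y)$; its restriction to $B_1$ agrees with the restriction of $\sigma^a_{YD_1}$ to $B_1D_1$, which by Proposition~\ref{Prop: structure_1} must equal $\sigma_{B_1D_1}$, so the $B_1$-charge is $1$; and the restrictions to $B_2$, $B_3$ carry label $a$ by the argument of the previous paragraph. Therefore $\rho_Y\in\Sigma_{1a}^a(Y)$, and the maps $\rho_Y\mapsto\tilde\rho$ and $\tilde\rho\mapsto\Tr_{D_1}\tilde\rho$ are inverse bijections between $\Sigma_{1a}^a(Y)$ and the singleton $\{\sigma^a_{YD_1}\}$. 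By Theorem~\ref{Thm:}, a singleton information convex set corresponds to $\dim\mathbb{V}_{1a}^a=1$, giving $N_{1a}^a=1$.

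The main technical obstacle is the first step: verifying that the state produced by the merging lemma actually lies in $\Sigma(YD_1)$ (as opposed to merely being \emph{some} extension of the two marginals), and that the merging and restriction operations are mutually inverse on the sectors of interest. Both points depend on the compatibility of the merging procedure with the information convex set construction; this is exactly the content expected from the upcoming Proposition~\ref{Prop:merging_Sigma} referenced in Sec.~\ref{Sec. Fusion}, which is why I rely on it rather than re-deriving the consistency from scratch.
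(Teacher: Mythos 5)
Your proposal is correct and follows essentially the same route as the paper: merge the vacuum hole with a disk (the paper's Fig.~\ref{Merging_change_patch_III}) to turn the $2$-hole disk into an annulus, then invoke the isomorphism/simplex structure to force the two boundary labels to coincide, with $N_{a1}^c$ obtained from Proposition~\ref{as:2}. You spell out the bijection and the $N_{1a}^a=1$ count more explicitly than the paper does, but the underlying argument is the same.
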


\begin{figure}[h]
	\centering
\includegraphics[scale=1.1]{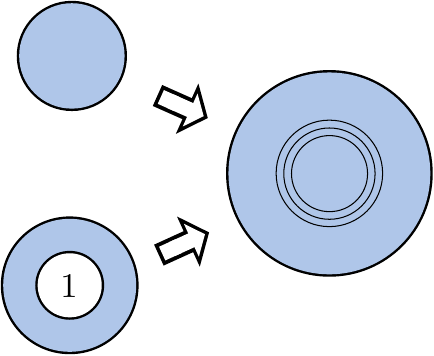}
	\caption{A hole with the vacuum charge can be merged with a disk. The case shown in this diagram involves an annulus and a disk, but the idea works for any $n$-hole disk with $n\ge 1$. The left side shows the topology of the subsystems before they are merged. Also, the number ``$1$" is the vacuum sector.  The merged subsystem is shown on the right. The three concentric lines partition the disk into the four subsystems used in the merging process.}
	\label{Merging_change_patch_III}
\end{figure}

\begin{proof}
	Suppose $\rho_Y \in \Sigma^c_{1a}(Y)$ for 2-hole disk $Y$. Then the hole with the vacuum charge can be merged with a disk, see Fig.~\ref{Merging_change_patch_III}. After the merging process, we obtain an annulus $X$. The density matrix obtained from the merging process belongs to $\Sigma(X)$. The isomorphism theorem implies that the two boundaries of $X$  detect the same topological charge. Therefore, $N_{1a}^c=\delta_{a,c}$. Then, $N_{a1}^c=\delta_{a,c}$ follows from Proposition~\ref{as:2}.
\end{proof}

One implication of this result is that $\Sigma_{11}^1(Y)$ contains a unique element, which we call $\sigma^1_{Y}$. This statement generalizes to $n$-hole disks with $n\ge 3$. 
The following lemma, which is about the universal contribution to the von Neumann entropy, will be useful for the rest of the proofs. Moreover, this lemma will be one of the key results that establish a connection between this contribution and the quantum dimension.
\begin{restatable}[]{lemma}{lemmafabc}
\label{lemma_fabc}
	Let $\rho_Y$ be an extreme point of $\Sigma_{ab}^c(Y)$ and $\sigma^1_Y$ be the unique element of $\Sigma_{11}^1(Y)$, then
	\begin{equation}
	S(\rho_Y)- S(\sigma^1_Y) = f(a) + f(b) + f(c),
	\end{equation}
	where $f(\cdot)$ is the function defined in Definition \ref{def:f(a)}.
\end{restatable}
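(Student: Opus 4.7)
The plan is to establish first that $\Delta(a,b,c) := S(\rho_Y) - S(\sigma^1_Y)$ is well-defined as a function of the labels $(a,b,c)$ alone. By Theorem~\ref{Thm:}, $\Sigma_{ab}^c(Y)\cong \calS(\mathbb{V}_{ab}^c)$, so every extreme point of $\Sigma_{ab}^c(Y)$ corresponds to a pure state in $\mathbb{V}_{ab}^c$; all such pure states have zero entropy in $\mathbb{V}_{ab}^c$, which via the isomorphism forces all extreme points of $\Sigma_{ab}^c(Y)$ to have the same von Neumann entropy. Combined with the entropy-difference preservation of Corollary~\ref{coro:entropy_difference} (applied to any path of elementary deformations keeping $Y$ a 2-hole disk), $\Delta$ therefore depends only on $(a,b,c)$ and is insensitive to the detailed shape of $Y$.

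Next, I would choose $Y$ to be a ``convenient'' 2-hole disk so that its three boundary annuli $B_1,B_2,B_3$ are pairwise separated by distances larger than the range $r$ of the axioms, and each is thick enough for the extended axioms of Proposition~\ref{Prop:axiom_large} to apply. Extended \ref{as:A0'} then gives the product structure $\rho_{B_1B_2B_3}=\sigma^a_{B_1}\otimes \sigma^b_{B_2}\otimes \sigma^c_{B_3}$ (and analogously $\bigotimes_i \sigma^1_{B_i}$ for the vacuum), so the contribution of the boundary annuli to the entropy difference is $\sum_i [S(\sigma^{a_i}_{B_i})-S(\sigma^1_{B_i})] = 2\sum_i f(a_i)$ by the very definition of $f$.

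The main computation then uses the merging lemma (Lemma~\ref{Prop: merging technique}) and its entropy-preservation property (3) to convert between the 2-hole disk $Y$ and simpler topologies. First, in the special case $(1,b,c)$, Proposition~\ref{as:1} forces $c=b$, and the vacuum hole can be closed by merging $\rho_Y$ with the unique disk state $\sigma_{D}$ filling that hole; the merged state on the resulting annulus $X$ is $\sigma^b_X$, and property (3) of the merging lemma gives $\Delta(1,b,b) = S(\sigma^b_X)-S(\sigma^1_X) = 2f(b) = f(1)+f(b)+f(b)$, confirming the formula. For the general case, the strategy is to reduce $(a,b,c)$ to the special cases by merging $\rho_Y$ with an auxiliary annular state carrying antiparticle charge $\bar a$ around one of the inner holes (which exists by the fusion/antiparticle discussion leading to Proposition~\ref{Prop: a_bar}); this produces, after one closure, an extreme point of a 2-hole disk with a vacuum label, whose entropy difference is already computed. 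The entropy-preservation property of merging then gives an additive formula, and each successive merge contributes precisely one $f(a_i)$ per external boundary of $Y$.

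The hard part, and the main obstacle, is to set up the merging so that the ``per-boundary'' contribution is actually $f(a_i)$ and not $2f(a_i)$, and so that the argument is not tautological. A naive Markov decomposition of $Y$ into thick boundary annuli plus an interior 2-hole disk keeps the interior topologically a 2-hole disk with the same sector labels, which forces the decomposition to collapse to $\Delta = \Delta$ by the isomorphism theorem. The key is therefore to use the merging lemma in a topology-changing step (closing a vacuum hole, or inserting an antiparticle annulus to fuse away a non-trivial charge); the factor of $\tfrac{1}{2}$ in the definition of $f(a)$ then encodes the fact that an annulus has two boundaries, so that a single \emph{external} boundary of $Y$ carrying charge $a_i$ inherits exactly one copy of $f(a_i)$, with the second $f(a_i)$ from the annulus being cancelled by the matching interface contribution produced by the merging.
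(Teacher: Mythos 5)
Your first two steps are essentially right and match the paper: all extreme points of $\Sigma_{ab}^c(Y)$ share the same entropy, and the three boundary annuli of an extreme point are in a product state $\sigma^a_{B_1}\otimes\sigma^b_{B_2}\otimes\sigma^c_{B_3}$, so that $B=B_1B_2B_3$ contributes $2(f(a)+f(b)+f(c))$ to the entropy difference. Two caveats, though. The paper proves the equal-entropy fact directly (Lemma~\ref{lemma_identical_entropy}) \emph{before} Theorem~\ref{Thm:}, and then uses it inside the proof of Theorem~\ref{Thm:}; citing Theorem~\ref{Thm:} to get the equal-entropy fact is therefore circular in the paper's logical order. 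Also, the product structure of $B_1B_2B_3$ is not a consequence of extended \ref{as:A0'} applied to the reference state; it is a property of extreme points (Lemma~\ref{Prop: product}, Corollary~\ref{coro:entropy_B_B1B2B3}) derived from their quantum Markov structure.

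The genuine gap is exactly where you flag it: you never close the factor-of-two bookkeeping, and the route you sketch cannot be made to work. The paper's resolution is not a topology-changing merge at all. It is the saturated Araki-Lieb identity $(S_{BY'}+S_{Y'}-S_B)_{\rho}=0$ for any extreme point of $\Sigma(Y)$ with $Y=BY'$ (Corollary~\ref{coro:generalized_A0}, following from the purification structure of extreme points, Lemma~\ref{Lemma_similar_to_A0}). Writing $2f(a,b,c)=(S_Y+S_{Y'})_{\rho}-(S_Y+S_{Y'})_{\sigma^1}$ --- the two terms are equal by the isomorphism theorem since $Y'$ is a shrunken $2$-hole disk carrying the same labels --- that identity converts the left-hand side into $(S_B)_{\rho}-(S_B)_{\sigma^1}=2(f(a)+f(b)+f(c))$, and the factors of two cancel. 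Your proposed alternative, merging in an annulus of charge $\bar a$ to ``fuse away'' a nontrivial charge, fails on two counts: (i) it is circular, because the existence and uniqueness of $\bar a$ (Proposition~\ref{Prop: a_bar}) is derived in the paper from merging computations whose entropy bookkeeping already presupposes Lemma~\ref{lemma_fabc}; and (ii) fusing $a$ with $\bar a$ is not deterministic --- the merged state is a convex mixture over all $c$ with $N_{a\bar a}^c>0$, not an extreme point of a vacuum-labeled sector --- so the closure step does not land in the special case you computed. The special case $(1,b,b)$ is fine on its own, but it cannot be bootstrapped to general $(a,b,c)$ by the means you describe.
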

The proof is in Appendix~\ref{section:fusion_space_proofs}. The key idea is that for an extreme point of $\Sigma_{ab}^c(Y)$, we can prove a condition  similar to that in \ref{as:A0'}, which converts the entropy of a pair of $2$-hole disks into that of the three annuli around the three disjoint boundaries of $Y$. The result generalizes easily to $n$-hole disks for any $n\ge 3$.

Compared to the previous proofs, the proofs of the rest of the properties requires a new technique. The key idea lies in deriving consistency equations of the entropy difference, obtained by the following four steps:

(i) Obtain an element of an information convex set by merging two (or three) extreme points of the  information convex sets associated with the subsystems.

(ii) Compute the entropy of the merged element from the entropy formulas with respect to the pre-merged regions.

(iii) Compute the entropy of the merged element from the entropy formula with respect to the post-merged regions.

(iv) The entropy obtained from these two perspectives must yield the same result. This leads to a set of consistency equations, which leads to a set of nontrivial relations.

\begin{figure}[h]
	\centering
\includegraphics[scale=1.1]{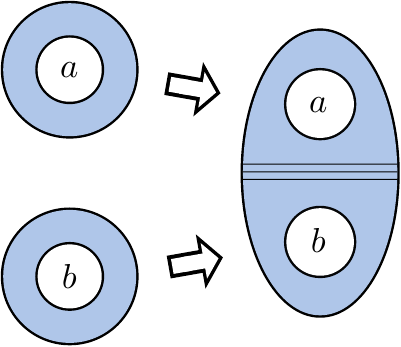}
	\caption{ Merging two annuli and obtain a 2-hole disk.  On the right side, there are two thin disk-like regions in the middle, which are chosen to be the $B$ and $C$ subsystem in the merging lemma (Lemma~\ref{Prop: merging technique}).}
	\label{Merging_a_b}
\end{figure}

For a concrete example of the method, let us study the case shown in Fig.~\ref{Merging_a_b}. The cases in Fig.~\ref{Merging_1_a}, \ref{Last_fusion_1}, \ref{Merging_3_to_4_III}, \ref{Merging_3_annulus_III}, \ref{Merging_change_1_2_III} employ a similar logic. Let us explain the idea, which is broken down into four steps.

(i) We can merge the pair of annuli for any chosen charge pair $a,b\in\calC$. This is possible because the conditions required for merging  are satisfied. Let us call the merged state as $\sigma^{a\times b}_{Y}$. It follows that $\sigma^{a\times b}_{Y}\in \textrm{conv}\big( \bigcup_{c} \,\Sigma_{ab}^c(Y) \big)$. Since the merged state exists, the set $ \textrm{conv}\big( \bigcup_{c} \,\Sigma_{ab}^c(Y) \big)$ is nonempty. This implies that $\sum_c N_{ab}^c \ge 1$, $\forall a,b\in \mathcal{C}$. Moreover, $\sigma^{1\times 1}_Y$ is equal to $\sigma^1_Y$, which is the unique element of $\Sigma^1_{11}(Y)$.

(ii) From the perspective of the two annuli, the von Neumann entropy difference can be expressed as:
\begin{equation}
S(\sigma_Y^{a\times b})-S(\sigma_Y^{1})= 2 f(a) + 2 f(b). \label{eq: perspective_part}
\end{equation}
This result follows from the fact that merging preserves the entropy difference; see property (3) of Lemma~\ref{Prop: merging technique}. More explicitly, this result follows from conditional independence condition ($I(A:D\vert BC)=0$) of the merged state;  see  Fig.~\ref{Merging_a_b}. Here $A$ ($D$) is the upper  (lower) annuli, with charge $a$ ($b$); the two annuli are separated by disk-like region $BC$ in the middle.

(iii) From the perspective of the 2-hole disk $Y$, 
the von Neumann entropy difference is
\begin{equation}
\begin{aligned}
&S(\sigma_Y^{a\times b})-S(\sigma_Y^{1})\\
&= f(a) + f(b) + \ln (\sum_{c} N_{ab}^c e^{f(c)}). \label{eq: perspective_whole}
\end{aligned}
\end{equation}
To derive this result, note that the merged state is the maximal entropy element in $ \textrm{conv}\big( \bigcup_{c} \,\Sigma_{ab}^c(Y) \big)$. This is because the entropy of any state in  $ \textrm{conv}\big( \bigcup_{c} \,\Sigma_{ab}^c(Y) \big)$ can be upper bounded by its marginals by the SSA and the merged state saturates this bound. Given the structure of $\Sigma(Y)$, it is easy to find the maximal entropy in terms of $\{ N_{ab}^c \}$ and $f(\cdot)$. We calculated the maximal entropy and obtained Eq.~(\ref{eq: perspective_whole})~\cite{2019PhRvR...1c3048S}.

(iv) By comparing the two perspectives in Eq.~(\ref{eq: perspective_part}) and Eq.~(\ref{eq: perspective_whole}), we find
\begin{equation}
e^{f(a)} e^{f(b)} = \sum_c N_{ab}^c e^{f(c)}. \label{eq: quantum_dim}
\end{equation}
Readers well-versed in the fusion theory of anyon may have noticed the similarity between $e^{f(a)}$ and the quantum dimension~\cite{2006AnPhy.321....2K}. This is not a coincidence. In Sec.~\ref{Sec. TEE}, we shall see that they are, in fact, the same thing. To establish their equivalence, we will derive a few more identities involving $\{ N_{ab}^c \}$.

Moreover, we can calculate the probability of having charge $c$ on the third boundary
\begin{equation}
P_{(a\times b\to c)} = \frac{N_{ab}^c e^{f(c)}}{e^{f(a)} e^{f(b)}}. \label{eq: fusion_prob}
\end{equation}
Its  physical meaning  is the probability to have an outcome $c$ from the fusion of two independently created charges $a$ and $b$. In terms of the density matrices, $P_{(a\times b\to c)}$ is the coefficient of the element in the center of $\Sigma_{ab}^c(Y)$ when writing $\sigma_Y^{a\times b}$ in terms of a convex combination.

It is worth noting that the same function $f(\cdot)$ appears in the entropy of the annulus and the 2-hole disk. This is crucial for the comparing the two perspectives (Eq.~(\ref{eq: perspective_part}) and (\ref{eq: perspective_whole})). With this equivalence, we are in a position to derive more properties of $\{N_{ab}^c\}$. In deriving these properties, we will curtail our explanation a bit, because the argument is essentially the same.

\begin{figure}[h]
	\centering
	\centering
\includegraphics[scale=1.1]{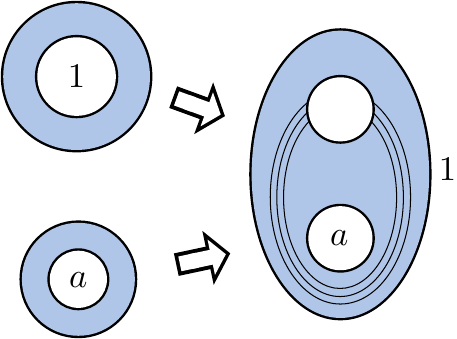}
	\caption{Merging a pair of annuli to obtain a 2-hole disk. We first deform the annulus associated to $1$ so that it becomes ``longer" vertically. Then, the annulus associated to $a$ is merged into the interior of this deformed annulus. The two thin $U$-shaped disk-like regions are chosen to be the subsystem $B$ and $C$ in the merging lemma (Lemma~\ref{Prop: merging technique}).}
	\label{Merging_1_a}
\end{figure}

\begin{Proposition} \label{Prop: a_bar}
	For each charge sector $a\in \calC$, there is a unique sector $\bar{a}\in \calC$ such that
	\begin{equation}
	N_{ab}^1 =\delta_{b,\bar{a}}.     \label{eq: fusion_2} 
	\end{equation}
	It further satisfies the following properties.
	\begin{equation}
	\bar{1}=1,\quad \bar{\bar{a}} =a,\quad f(\bar{a}) = f(a). \label{eq:fa=fbara}
	\end{equation}
\end{Proposition}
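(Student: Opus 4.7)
The plan is to replicate the four-step merging-and-entropy-matching technique that was used to derive Eq.~(\ref{eq: quantum_dim}), applied now to the merging shown in Fig.~\ref{Merging_1_a}. Starting from an outer contractible annulus in the vacuum sector together with a nested inner annulus in sector $a$, I would use Lemma~\ref{Prop: merging technique} to produce a state $\sigma^{1\star a}_Y\in\Sigma(Y)$ on a 2-hole disk whose outer boundary carries charge $1$ and one of whose inner boundaries carries charge $a$. By Theorem~\ref{Thm:abc}, this state lies in $\mathrm{conv}\bigl(\bigcup_b \Sigma^1_{ab}(Y)\bigr)$, which already forces $\sum_b N^1_{ab}\ge 1$.

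Next I would compute $S(\sigma^{1\star a}_Y)-S(\sigma^1_Y)$ from two perspectives. From the pre-merged side, property~(3) of Lemma~\ref{Prop: merging technique}, applied by varying the inner annulus between $\sigma^1$ and $\sigma^a$, yields $2f(a)$ using Definition~\ref{def:f(a)}. From the post-merged side, $\sigma^{1\star a}_Y$ is the maximum-entropy element of $\mathrm{conv}\bigl(\bigcup_b \Sigma^1_{ab}(Y)\bigr)$; combining Lemma~\ref{lemma_fabc} (extreme points of $\Sigma^1_{ab}(Y)$ all have entropy $S(\sigma^1_Y)+f(a)+f(b)$), the orthogonality of different sector branches in Theorem~\ref{Thm:abc}, and a standard constrained maximization, this entropy difference equals $f(a)+\ln\bigl(\sum_b N^1_{ab}\,e^{f(b)}\bigr)$. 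Matching the two expressions and writing $d_a:=e^{f(a)}$ gives the key identity
\begin{equation}
d_a \;=\; \sum_b N^1_{ab}\,d_b.
\label{eq:plan_key}
\end{equation}

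From Eq.~(\ref{eq:plan_key}) the whole proposition follows. Since every $d_b>0$, any $b$ with $N^1_{ab}\ge 1$ satisfies $d_b\le d_a$; running Eq.~(\ref{eq:plan_key}) with the roles of $a$ and $b$ swapped and invoking Proposition~\ref{as:2} ($N^1_{ab}=N^1_{ba}$) gives the reverse inequality. Hence $d_b=d_a$ for every $b$ in the support of $N^1_{a\cdot}$, and substituting back collapses Eq.~(\ref{eq:plan_key}) to $\sum_b N^1_{ab}=1$. Because each $N^1_{ab}\in\mathbb{Z}_{\ge 0}$, this forces the existence of a unique sector $\bar a$ with $N^1_{a\bar a}=1$ and $N^1_{ab}=0$ for $b\ne\bar a$, and also gives $f(\bar a)=f(a)$. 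The identities $\bar 1=1$ and $\bar{\bar a}=a$ are then immediate: the former from $N^1_{1b}=\delta_{b,1}$ (Proposition~\ref{as:1}), and the latter from $N^1_{\bar a a}=N^1_{a\bar a}=1$ (Proposition~\ref{as:2}).

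The main obstacle I anticipate is the bookkeeping for the post-merge entropy formula, namely verifying that the state produced by the Petz recovery map is in fact the entropy-maximizing element of $\mathrm{conv}\bigl(\bigcup_b \Sigma^1_{ab}(Y)\bigr)$ rather than merely some element of it. This should follow from the vanishing conditional mutual information in Eq.~\eqref{eq:marginal2} combined with the perfect distinguishability of the sector labels on the three boundary annuli, so that the entropy of any convex combination decomposes in the standard classical-plus-quantum form and is upper-bounded by the SSA bound that the merged state saturates. Once this step is in hand, the remaining manipulations mirror the derivation of Eq.~(\ref{eq: quantum_dim}).
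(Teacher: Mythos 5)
Your proposal is correct and follows essentially the same route as the paper's own proof: merge the vacuum annulus with an $a$-annulus as in Fig.~\ref{Merging_1_a}, match the pre-merge entropy difference $2f(a)$ against the post-merge maximal-entropy expression to obtain $e^{f(a)}=\sum_b N_{ab}^1 e^{f(b)}$, and then use positivity of $e^{f(\cdot)}$ together with the symmetry $N_{ab}^1=N_{ba}^1$ to force $\sum_b N_{ab}^1=1$, with $\bar 1=1$ and $\bar{\bar a}=a$ following from Propositions~\ref{as:1} and~\ref{as:2}. Your anticipated resolution of the maximal-entropy step (SSA upper bound saturated by the merged state) is exactly how the paper handles it in the derivation of Eq.~(\ref{eq: quantum_dim}).
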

\begin{proof}
	From the merging of two annuli with charges $1$ and $a$ shown in Fig.~\ref{Merging_1_a}, we can derive that $\sum_{b}N_{ab}^1\ge 1$, $\forall \,a$ because the merged state always exists. Furthermore, $e^{f(a)} =\sum_{b} N_{ab}^1 e^{f(b)}$.
	
	Let us pick a sector $b$ such that $N_{ab}^1\ge 1$. We can see that $e^{f(a)}\ge e^{f(b)}$. However, since $N^1_{ab}= N^1_{ba}$, by repeating the same logic we obtain $e^{f(b)}\ge e^{f(a)}$. For both of them to be true, we must have a unique sector $\bar{a}$ such that $N_{ab}^1 =\delta_{b,\bar{a}}$ and $f(a) = f(\bar{a})$.
	Then it follows from $N_{11}^1=1$ that $\bar{1}=1$. Since $N^1_{\bar{a}a} = N_{a \bar{a}}^1 =1$, we have $\bar{\bar{a}}=a$.
\end{proof}
\begin{definition}[Antiparticle]\label{def: antiparticle}
	We define the antiparticle of $a\in \calC$ as the unique sector $\bar{a}\in \calC$ established in  Proposition~\ref{Prop: a_bar}.
\end{definition}
The definition of $\bar{a}$ is universal and insensitive to the choice of the subsystem. Furthermore, on a sphere, one could alternatively define $\bar{a}$ according to a nontrivial automorphism of $\Sigma(X)$, see Appendix~\ref{Antiparticle_sphere}.

\begin{Proposition}\label{as:-2}
	\begin{equation}
	N_{ab}^c = N_{\bar{b}\bar{a}}^{\bar{c}}. \label{eq: fusion_4} 
	\end{equation}
\end{Proposition}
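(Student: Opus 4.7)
The plan is to mimic the strategy used in the proof of Proposition~\ref{Prop: a_bar}: construct a specific merging of information convex sets, compare the entropy of the merged state from two perspectives, and extract a relation involving the fusion multiplicities. However, whereas Proposition~\ref{Prop: a_bar} only yielded a \emph{summed} identity of the form $\sum_c N_{ab}^c e^{f(c)}=e^{f(a)}e^{f(b)}$, here we must match the multiplicities charge-by-charge, so the geometric setup has to be more refined than a single two-annulus merge.

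Concretely, I would first establish cyclic symmetry on the 3-punctured sphere by embedding $Y$ in a sphere $S^2$ and applying the isomorphism theorem (Theorem~\ref{thm: the isomorphism theorem}) to a rotational deformation of $Y$ that cyclically permutes the three boundary annuli $B_1,B_2,B_3$. Careful bookkeeping of the outer-boundary labeling convention (outer label $c$ corresponds to charge $\bar c$ in the complementary disk on the sphere) then upgrades this rotational isomorphism to the fusion-rule identity
\[
    N_{ab}^c \;=\; N_{b\bar c}^{\bar a} \;=\; N_{\bar c a}^{\bar b}.
\]
Step two is the genuinely new input: to bridge the gap between cyclic symmetry $+$ Proposition~\ref{as:2} (which by themselves only generate the six expressions $\{N_{ab}^c,N_{ba}^c,N_{b\bar c}^{\bar a},N_{\bar c b}^{\bar a},N_{\bar c a}^{\bar b},N_{a\bar c}^{\bar b}\}$) and the desired equality with $N_{\bar b\bar a}^{\bar c}$, I would introduce a three-piece merging. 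Take a 3-hole disk $Z$ whose three inner boundaries are glued, via annuli in extreme states of charges $a$, $b$, and $\bar c$ (their existence on the respective annuli being guaranteed by Proposition~\ref{Prop: a_bar}), to the boundaries of $Y$. Using Lemma~\ref{Prop: merging technique} together with Axioms~\ref{as:A0'} and \ref{as:A1} (which hold on all disks by Proposition~\ref{Prop:axiom_large}), the merged state lives on a disk and thus on $\Sigma(\omega)=\{\sigma_\omega\}$ by Proposition~\ref{Prop: structure_1}.

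Step three extracts the identity from this merged object in two ways. Viewing the merge as ``$Y$ with outer charge $c$ capped by an annulus of charge $\bar c$,'' the relevant fusion multiplicity is $N_{ab}^c$. Viewing the same configuration by first fusing the $b$- and $\bar c$-annuli onto the $Z$ side and then gluing the remaining boundary, the multiplicity becomes an entry of $N_{\bar b\bar a}^{\bar c}$. Because both viewpoints describe the same convex set of merged states, the Hilbert-space isomorphism from Theorem~\ref{Thm:} forces the two multiplicities to agree, giving $N_{ab}^c=N_{\bar a\bar b}^{\bar c}$ after a single application of Proposition~\ref{as:2}.

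The main obstacle will be step two: checking that the conditional-mutual-information hypotheses of Lemma~\ref{Prop: merging technique} are satisfied at \emph{every} stage of the three-piece merge, so that the isomorphism theorem and the simplex structure of $\Sigma$(annulus) (Theorem~\ref{Prop: structure_2}) can be invoked to identify the resulting convex set with both $\Sigma_{ab}^c(Y)$ and $\Sigma_{\bar a\bar b}^{\bar c}(Y)$. In particular, one must ensure that the annular ``cap'' of charge $\bar c$ can be consistently glued to the outer boundary carrying charge $c$---this is precisely the point where Proposition~\ref{Prop: a_bar} (uniqueness of the antiparticle together with $N_{a\bar a}^1=1$) does the heavy lifting, since only the $\bar c$ annulus can produce a disk-like merged region without running afoul of the orthogonality of distinct superselection sectors.
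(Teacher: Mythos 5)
Your step two contains a topological error that breaks the argument. Gluing the three inner boundaries of a $3$-hole disk $Z$ to the three boundaries of the $2$-hole disk $Y$ identifies the two surfaces along three circles; an Euler-characteristic count ($\chi(Y)=-1$, $\chi(Z)=-2$, the circles contributing $0$) gives $\chi=-3$ for the union, which, having only the outer boundary of $Z$ left over, is a genus-$2$ surface with one boundary component --- not a disk. Even setting the topology aside, the conclusion you want to draw from landing on a disk is self-defeating: by Proposition~\ref{Prop: structure_1} the information convex set of a disk is the single point $\{\sigma_\omega\}$, so the ``convex set of merged states'' seen from your two viewpoints is a point either way and carries no multiplicity data; there is no pair of Hilbert spaces whose dimensions Theorem~\ref{Thm:} could be asked to compare. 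Separately, your step one (cyclic symmetry from a rotation of the sphere) silently uses the rule that the extreme point of the outer annulus labelled $c$ from inside $Y$ is labelled $\bar c$ when read from the complementary disk. That relabelling is exactly the nontrivial automorphism discussed in Appendix~\ref{Antiparticle_sphere}, which the paper presents as a consequence of the fusion results rather than as an ingredient, and which is not established in enough detail to be imported here without further work.

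The paper's actual proof avoids all of this. It merges an extreme point of $\Sigma^1_{a\bar a}(Y_u)$ with one of $\Sigma^1_{b\bar b}(Y_d)$ (Fig.~\ref{Last_fusion_1}) to obtain a state on a $4$-hole disk whose outer charge is forced to be $1$ by $N_{11}^a=\delta_{1,a}$. Regrouping the four holes as $(a\times b)$ against $(\bar b\times\bar a)$ and using $N_{cd}^1=\delta_{d,\bar c}$ shows that the two fusion outcomes are perfectly correlated, so $P_{(a\times b\to c)}=P_{(\bar b\times\bar a\to\bar c)}$; a conditional-independence argument shows that both probabilities obey the formula $P_{(a\times b\to c)}=N_{ab}^c\, e^{f(c)}/(e^{f(a)}e^{f(b)})$ of Eq.~(\ref{eq: fusion_prob}), and cancelling via $f(x)=f(\bar x)$ gives $N_{ab}^c=N_{\bar b\bar a}^{\bar c}$ charge by charge. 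If you want to keep your outline, the piece you are missing is precisely this probabilistic bookkeeping on a $4$-hole disk with vacuum total charge; no genus-changing gluing and no sphere automorphism is needed.
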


\begin{figure}[h]
	\centering
\includegraphics[scale=1.0]{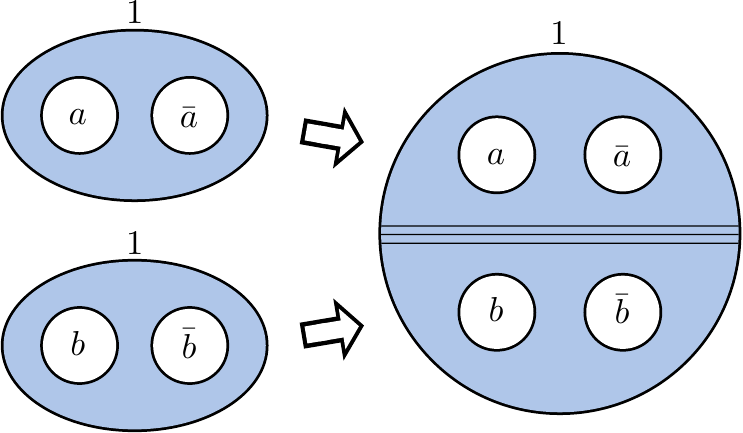}
	\caption{Merging a pair of 2-hole disks to obtain a 4-hole disk.}
	\label{Last_fusion_1}
\end{figure}

\begin{figure}[h]
	\centering
\includegraphics[scale=1.0]{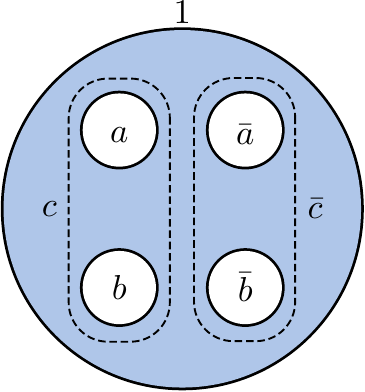}
	\caption{The fusion of $(a,b)$ and  $(\bar{b},\bar{a})$, and matching the fusion probabilities $P_{(a\times b\to c)}$ and $P_{(\bar{b}\times \bar{a}\to \bar{c})}$.  }
	\label{Last_fusion_2}
\end{figure}

\begin{proof}
	We consider the merging process in Fig.~\ref{Last_fusion_1}. Before merging, the density matrices are two extreme points from $\Sigma^1_{a\bar{a}}(Y_u)$ and $\Sigma^1_{b\bar{b}}(Y_d)$, if we call the pair of 2-hole disks as $Y_u$ and $Y_d$. Since $N_{11}^a=\delta_{1,a}$, the outermost boundary of the merged subsystem 
	must have charge $1$. 
	
	Now let us view the merged state in a different way, as depicted in Fig.~\ref{Last_fusion_2}. We have derived that $N^1_{cd}=\delta_{d, \bar{c}}$, which implies that in the merged state, the fusion outcome of $a\times b$ and that of $\bar{b}\times \bar{a}$ are perfectly correlated. Whenever we get the outcome $c$ from the fusion of $a$ and $b$, we must get $\bar{c}$ from the fusion of $\bar{b}$ and $\bar{a}$. 
	
	Furthermore, $a$ and $b$ in this state are ``independently created" in the sense that fusion probability $P_{(a\times b \to c)}$ obeys Eq.~(\ref{eq: fusion_prob}). To see why, consider partial trace operations over (i) a region which connects the hole with charge $\bar{a}$ to the outer boundary and (ii) a region which connects the hole with charge $\bar{b}$ to the outer boundary. These regions are chosen so that the remaining subsystems are topologically equivalent to the ones appearing on the right side of Fig.~\ref{Merging_a_b}. Recalling the general inequality $I(AA':CC'|B) \geq I(A:C|B)$, we observe that the annulus associated with $a$ and the annulus associated with $b$ are independent conditioned on a disk-like region in between them that separates the two annuli. As we have already discussed above, this conditional independence condition implies that $P_{(a\times b \to c)}$ obeys Eq.~(\ref{eq: fusion_prob}). Of course, an analogous argument can be applied to $P_{(\bar{b} \times \bar{a} \to \bar{c})}$.
	
	Because $c$ and $\bar{c}$ are completely correlated,
	\begin{equation}
	P_{(a\times b\to c)} = P_{(\bar{b} \times \bar{a} \to \bar{c})}.
	\end{equation}
	Then, noticing $f(a)= f(\bar{a})$ from Eq.~(\ref{eq:fa=fbara}), we can derive $N_{ab}^c= N_{\bar{b}\bar{a}}^{\bar{c}}$. 
\end{proof}

As mentioned earlier, the results in Theorem~\ref{Thm:abc} and Theorem~\ref{Thm:} generalize to $n$-hole disks with $n\ge 3$, and the same applies to the concepts of fusion space and fusion rules. Let us introduce a few notations for $n=3$ which are useful for the next proof. For a 3-hole disk $Z$, we use $\Sigma(Z)$ to denote its information convex set, $\Sigma_{abc}^d(Z)$ to denote the convex subset of $\Sigma(Z)$ with fixed charges $a,b,c,d$ on the boundaries (Fig.~\ref{Merging_3_to_4_III}). The corresponding fusion space $\mathbb{V}_{abc}^d$ has a finite dimension $N_{abc}^d\in \mathbb{Z}_{\geq 0}$.

\begin{Proposition}\label{as:-1}
	The fusion rules are associative, i.e.,
	\begin{equation}
	N_{abc}^d=\sum_i N_{ab}^i N_{ic}^d = \sum_j N_{aj}^d N_{bc}^j. \label{eq: fusion_5}
	\end{equation}
\end{Proposition}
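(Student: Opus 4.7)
The plan is to apply the four-step merging template used throughout Section~\ref{Sec. Fusion} to a decomposition of a 3-hole disk $Z$ into two 2-hole disks $Y_1 \cup Y_2$ glued along an annulus $X$, and to match the maximum von Neumann entropy of $\Sigma_{abc}^d(Z)$ computed in two different ways. For the first identity in \eqref{eq: fusion_5}, I would take $Y_1$ to be the 2-hole subdisk of $Z$ enclosing the holes that carry charges $a$ and $b$, so that the annulus $X$ is the outer boundary of $Y_1$ and an internal boundary of $Y_2$, and is where the intermediate fusion channel $i$ lives. The second identity follows from the mirror-image decomposition in which $Y_1$ encloses instead the hole carrying $a$ (together with a piece of $Z$'s outer boundary), with $b,c$ packed inside $Y_2$, and the intermediate channel is called $j$. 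Since the two arguments are parallel, I would write only the first.

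On the direct side, the generalization of Theorem~\ref{Thm:} to 3-hole disks supplies an isomorphism $\Sigma_{abc}^d(Z)\cong \calS(\mathbb{V}_{abc}^d)$, and the analogue of Lemma~\ref{lemma_fabc} pins the entropy of every extreme point to $S(\sigma^1_Z)+f(a)+f(b)+f(c)+f(d)$. Since the isomorphism preserves entropy differences (the same property used to derive Eq.~\eqref{eq: perspective_whole} for the 2-hole disk), the maximum entropy element of $\Sigma_{abc}^d(Z)$, which corresponds to the maximally mixed state on $\mathbb{V}_{abc}^d$, has entropy
\begin{equation*}
    S_{\max} \;=\; S(\sigma^1_Z)+f(a)+f(b)+f(c)+f(d)+\ln N_{abc}^d.
\end{equation*}

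On the merged side, Proposition~\ref{Prop:axiom_large} guarantees that every state in $\Sigma(Z)$ satisfies $I(Y_1\setminus X : Y_2\setminus X \mid X)=0$, while the simplex theorem (Theorem~\ref{Prop: structure_2}) applied to $\Sigma(X)$ forces the intermediate charge to behave as a classical label. Together they imply that every element of $\Sigma_{abc}^d(Z)$ is a classical convex combination $\sum_i q_i\,\rho_{Z,i}$ with $\rho_{Z,i}\in \Sigma_{abc,i}^d(Z)$ (definite intermediate charge $i$), and that each $\rho_{Z,i}$ arises through Lemma~\ref{Prop: merging technique} by merging some $\rho_{Y_1}\in \Sigma_{ab}^i(Y_1)$ with some $\rho_{Y_2}\in \Sigma_{ic}^d(Y_2)$ across $\sigma^i_X$, with entropy
\begin{equation*}
    S(\rho_{Z,i}) \;=\; S(\rho_{Y_1})+S(\rho_{Y_2})-S(\sigma^i_X).
\end{equation*}
Maximizing $S(\rho_{Y_1})$ and $S(\rho_{Y_2})$ independently over their respective $\calS(\mathbb{V}^{\bullet}_{\bullet \bullet})$'s contributes $S(\sigma^1_{Y_1})+f(a)+f(b)+f(i)+\ln N_{ab}^i$ and $S(\sigma^1_{Y_2})+f(i)+f(c)+f(d)+\ln N_{ic}^d$; using $S(\sigma^i_X)=S(\sigma^1_X)+2f(i)$, the $f(i)$ terms cancel, and the vacuum merging identity $S(\sigma^1_Z)=S(\sigma^1_{Y_1})+S(\sigma^1_{Y_2})-S(\sigma^1_X)$ collapses the constant parts. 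The maximum entropy inside $\Sigma_{abc,i}^d(Z)$ is therefore $S(\sigma^1_Z)+f(a)+f(b)+f(c)+f(d)+\ln(N_{ab}^i N_{ic}^d)$. A final optimization over the classical mixing $\{q_i\}$, saturated at $q_i \propto N_{ab}^i N_{ic}^d$, yields
\begin{equation*}
    S_{\max} \;=\; S(\sigma^1_Z)+f(a)+f(b)+f(c)+f(d)+\ln \sum_i N_{ab}^i N_{ic}^d.
\end{equation*}
Matching this against the direct formula gives $N_{abc}^d=\sum_i N_{ab}^i N_{ic}^d$, and the symmetric decomposition produces the other equality.

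The main obstacle is not the merging step itself but rather verifying that the two structural ingredients borrowed from the 2-hole analysis extend cleanly to $n=3$: (i) that $\Sigma_{abc}^d(Z)$ is indeed isomorphic (as convex sets with preserved entropy differences) to $\calS(\mathbb{V}_{abc}^d)$, so that the direct-side $\ln N_{abc}^d$ term is legitimate; and (ii) that a state in $\Sigma(Z)$ with only a classical marginal on $X$ really does decompose as a classical mixture over the intermediate charge $i$, rather than merely having a block-classical reduction there. Both are the natural $n=3$ analogues of statements already established for $n\le 2$ in the paper and should follow from the same combination of the extended axioms with the merging lemma; once they are in hand, the entropy bookkeeping above is essentially forced.
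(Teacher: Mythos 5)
Your overall strategy (compute the maximal entropy of the fixed-sector set two ways across a decomposition $Z=Y_1\cup Y_2$) is viable and genuinely different from the paper's, but the merged-side justification contains a false step. You claim that every state in $\Sigma(Z)$ satisfies $I(Y_1\setminus X : Y_2\setminus X \mid X)=0$ and hence that every element of $\Sigma_{abc}^d(Z)$ is a \emph{classical} convex combination over the intermediate charge $i$. Neither holds. The simplex theorem only constrains the \emph{marginal} on $X$ to be block-diagonal in $i$; it does not force the global state on $Z$ to be block-diagonal (a marginal $\bigoplus_i q_i\sigma^i_X$ is perfectly compatible with coherences between blocks on the larger region, exactly as $\ket{00}+\ket{11}$ has a diagonal one-party marginal). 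Indeed, the $3$-hole analogue of Theorem~\ref{Thm:} gives $\Sigma_{abc}^d(Z)\cong\calS(\mathbb{V}_{abc}^d)$ with $\mathbb{V}_{abc}^d\cong\bigoplus_i\mathbb{V}_{ab}^i\otimes\mathbb{V}_{ic}^d$, so whenever more than one $i$ contributes there are extreme points that are coherent superpositions across intermediate channels; for such a state a direct computation (using the common entropy of extreme points and the orthogonality of the $i$-blocks on $Y_1$, $Y_2$, $X$) gives $I(Y_1\setminus X:Y_2\setminus X\mid X)=H(\{|z_i|^2\})>0$. Proposition~\ref{Prop:axiom_large} does not rescue this: it extends the axioms to large \emph{disks}, and does not yield vanishing conditional mutual information across an interior annulus for arbitrary elements of an information convex set.

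The good news is that the bookkeeping survives once you weaken the equality to the SSA inequality $S(\rho_Z)\le S(\rho_{Y_1})+S(\rho_{Y_2})-S(\rho_X)$, which holds for \emph{all} elements and is exactly the direction needed for the upper bound $S_{\max}\le S(\sigma^1_Z)+f(a)+f(b)+f(c)+f(d)+\ln\sum_i N_{ab}^iN_{ic}^d$; achievability then comes from the merging lemma applied to maximal-entropy elements of $\Sigma_{ab}^i(Y_1)$ and $\Sigma_{ic}^d(Y_2)$ mixed with $q_i\propto N_{ab}^iN_{ic}^d$. With that repair your argument closes, and it is worth noting it differs from the paper's: the paper obtains $N_{abc}^d\ge\sum_iN_{ab}^iN_{ic}^d$ by exhibiting orthogonal merged extreme points (Fig.~\ref{Merging_3_to_4_III}) and, explicitly unable to reverse that inequality from the same picture, turns it into an equality only after merging three annuli (Fig.~\ref{Merging_3_annulus_III}) and comparing $\sum_d N_{abc}^d e^{f(d)}$ with $\sum_d(\sum_iN_{ab}^iN_{ic}^d)e^{f(d)}$, using positivity of $e^{f(d)}$ to conclude term by term. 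Your repaired route works sector-by-sector in $d$ and avoids that final summation step, at the cost of leaning harder on the $n=3$ analogues of Theorem~\ref{Thm:} and Lemma~\ref{lemma_fabc}, which the paper asserts but does not spell out.
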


\begin{proof}	
	The key idea is to obtain a 3-hole disk $Z$ in two different ways; see Fig.~\ref{Merging_3_to_4_III} and \ref{Merging_3_annulus_III}. The first method gives us a lower bound of $N_{abc}^d$ in terms of $N_{ab}^c$, and the second method shows the bound saturates.

	\begin{figure}[h]
	\centering
\includegraphics[scale=1.0]{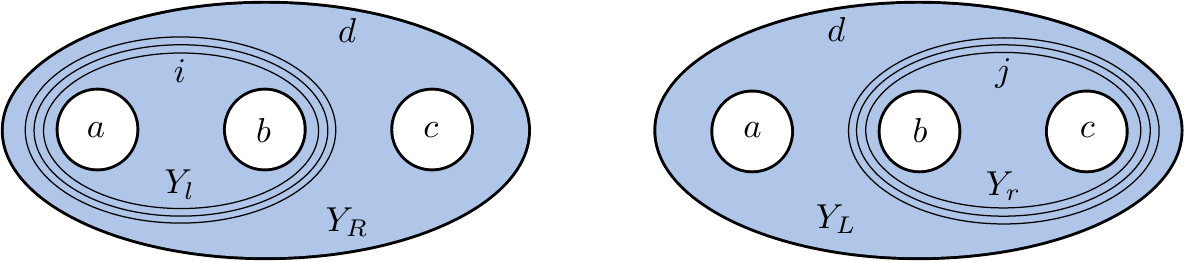}
		\caption{ Merging a pair of 2-hole disks to obtain a 3-hole disk. Here $Z=Y_l \cup Y_R =Y_L \cup Y_r$. Here $a,b,c,d, i,j$ are labels of the topological charges.} 
		\label{Merging_3_to_4_III}
	\end{figure}

	Let us consider the merging of a pair of 2-hole disks to obtain a 3-hole disk shown in Fig.~\ref{Merging_3_to_4_III}. We summarize the logic in a streamlined fashion in (i), (ii), (iii) below. 
	
	(i)  Let us consider the left side of Fig.~\ref{Merging_3_to_4_III}, which describes the merging of $Y_l$ and $Y_R$. We pick an orthonormal basis of $\mathbb{V}_{ab}^{i}$, which can be chosen to be the extreme points of $\Sigma_{ab}^i (Y_l)$.  The number of such extreme points is equal to $N_{ab}^i$, which is the dimension of the Hilbert space $\mathbb{V}_{ab}^{i}$. Applying the same logic to $Y_R$,  we see that the number of these extreme points is $N_{ic}^d$.
	
	(ii) Let us pick two arbitrary extreme points from the sets discussed above (one from $\Sigma_{ab}^i (Y_l)$ and another from $\Sigma_{ic}^d(Y_R)$) and merge them. We get an element in $\Sigma_{abc}^d(Z)$. It is an extreme point. This fact is verified by calculating the von Neumann entropy and making use of the 3-hole version of Lemma~\ref{lemma_fabc}.
	This way, we get $N_{ab}^i N_{ic}^d$ number of extreme points, and any two of them are orthogonal. This follows from the fact that fidelity is nondecreasing under a CPTP map. 
	
	(iii) By applying the merging process for all  $i$, we find $\sum_i N_{ab}^i N_{ic}^d$ mutually orthogonal extreme points of $\Sigma_{abc}^d(Z)$. 
	Therefore, we must have 
	\begin{equation}
	N_{abc}^d\ge \sum_i N_{ab}^i N_{ic}^d. \label{eq: ge1}
	\end{equation}
	The reason is $N_{abc}^d$ is the maximal number of mutually orthogonal extreme points in $\Sigma_{abc}^d(Z)$.
	Similarly, from the right side of Fig.~\ref{Merging_3_to_4_III}, we have
	\begin{equation}
	N_{abc}^d\ge \sum_j N_{aj}^d N_{bc}^j. \label{eq: ge2}
	\end{equation}

	\begin{figure}[h]
	\centering
    \includegraphics[scale=1.1]{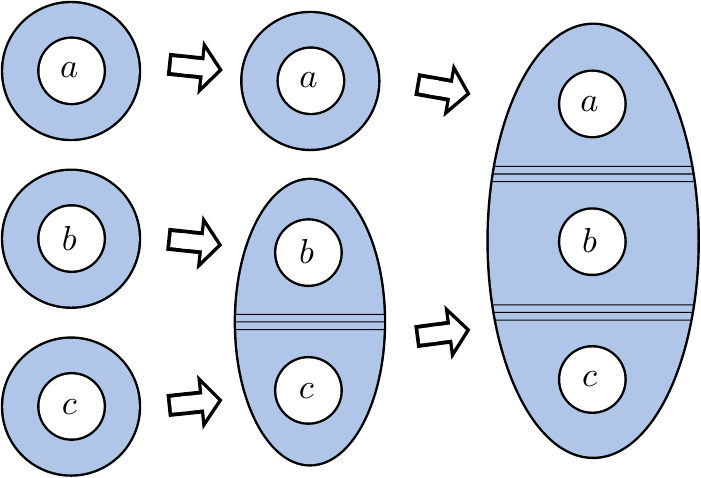}
		\caption{ Merging three annuli to obtain a 3-hole disk.}
		\label{Merging_3_annulus_III}
	\end{figure}
	
	We did not find a way to turn ``$\ge$" into ``$=$" from Fig.~\ref{Merging_3_to_4_III} alone. However, we can show ``=" by considering a different way of merging subsystems; see Fig.~\ref{Merging_3_annulus_III}. 
	The merged element, which we call $\sigma^{a\times b\times c}_Z$, is the maximal entropy element of  $\textrm{conv}\big(\bigcup_{d}\Sigma_{abc}^d(Z) \big)$. Comparing with $\sigma^1_{Z}$, the unique element of $\Sigma_{111}^1(Z)$, it has an extra  $2(f(a)+ f(b) + f(c))$ contribution to the von Neumann entropy. On the other hand, we use the structure of $\Sigma(Z)$  to calculate the maximal entropy in the sector with charge $a$, $b$, $c$ in terms of $\{ N_{abc}^d \}$, we find a contribution equals to $f(a) + f(b) + f(c) +\ln (\sum_d N_{abc}^d e^{f(d)})$. These two perspectives must provide a consistent answer. Thus,
	\begin{equation}
	e^{f(a)}e^{f(b)}e^{f(c)} = \sum_d N_{abc}^d e^{f(d)}.
	\end{equation}
	However, from Eq.~(\ref{eq: quantum_dim}) we know that 
	\begin{equation}
	e^{f(a)}e^{f(b)}e^{f(c)} = \sum_d (\sum_i N_{ab}^i N_{ic}^d) e^{f(d)}
	\end{equation}
	and $e^{f(\cdot)}$ is positive since $f(\cdot)$ is real. So the ``$\ge$" in Eq.~(\ref{eq: ge1}) must be replaced by ``$=$" and the same replacement works for Eq.~(\ref{eq: ge2}). Thus, we conclude that
	Eq.~(\ref{eq: fusion_5}) holds.	
\end{proof}
The result and proof of proposition~\ref{as:-1} generalizes to $n$-hole disks with $n>3$.

\section{Topological entanglement entropy}\label{Sec. TEE}
In this section, we show that the sub-leading term $\gamma$ of the area law~(\ref{eq:area_law}) for a disk is given by the well-known formula
\begin{align}\label{eq:TEE}
\gamma=\ln \mathcal{D}\,,    
\end{align}
where  $\mathcal{D}$ is the total quantum dimension defined from our definition of the fusion multiplicities $\{N_{ab}^c\}$. 
We show this result by calculating two different linear combinations of subsystem entropies\footnote{More precisely speaking, we show that a certain linear combination of entanglement entropy must be $\ln \mathcal{D}$. This result implies that $\gamma =\ln \mathcal{D}$ is the only consistent value of the sub-leading term in the area law formula Eq.~(\ref{eq:area_law}).} respectively proposed by Kitaev-Preskill~\cite{Kitaev2006} and Levin-Wen~\cite{Levin2006}, see Fig.~\ref{TEE_LW_PK}.

\begin{figure}[h]
	\centering
\includegraphics[scale=1.1]{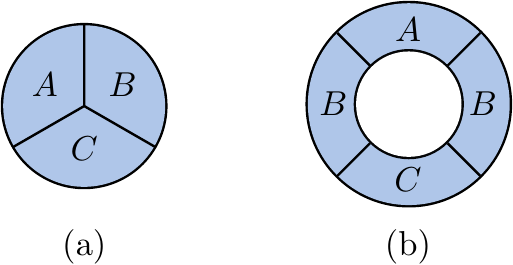}
	\caption{(a) The Kitaev-Preskill partition; (b) the Levin-Wen partition.}
	\label{TEE_LW_PK}
\end{figure}

The sub-leading term $\gamma$ is called the topological entanglement entropy (TEE)~\cite{Kitaev2006}. There are two known methods for deriving TEE: assuming an underlying field theory description or explicitly calculating entropy in an exactly solvable model. Our method, on the other hand, shows that the area law formula itself implies the equivalence of TEE and $\ln\calD$, which may be applicable to a larger class of systems. The ingredients behind this proof are scattered in literature~\cite{Kim2013,Kim2015,2016PhRvA..93b2317K,2019PhRvB..99c5112S}. Recently, one of us showed that the quantum dimension must show up in the von Neumann entropy if the fusion space is coherently encoded in the 2-hole disk~\cite{2019PhRvR...1c3048S}. In this work, we further reduce the assumption to our Axiom \ref{as:A0'} and \ref{as:A1}. The end result is the same.

We begin by defining the quantum dimensions in our framework. 
\begin{definition}
	We define the set of quantum dimensions $\{ d_a \}$ as the unique positive solution of the equation set
	\begin{equation}\label{eq:qddef}
	d_a d_b =\sum_{c} N_{ab}^c \,d_c\,,
	\end{equation}
	where $N_{ab}^c$ is defined in Definition~\ref{def:fusion rule}. We also define the total quantum dimension $\calD$ by $\mathcal{D}=\sqrt{\sum_{a\in\calC} d_{a}^2}$. 
\end{definition}
Note that given the results in Sec.~\ref{Sec. Fusion}, the uniqueness of Eq.~\eqref{eq:qddef} is guaranteed by the Perron-Frobenius theorem, see e.g.  appendix of~\cite{2019PhRvR...1c3048S} for a self-contained derivation. Furthermore,
\begin{equation}
d_{\bar{a}} = d_{a},\quad d_a\ge 1,\quad d_1=1.
\end{equation}
Recall that, from the merging in Fig.~\ref{Merging_a_b}, we have obtained Eq.~(\ref{eq: quantum_dim}), and since $e^{f(a)}$ is positive, we must have
\begin{equation}
f(a)= \ln d_{a}. \label{eq:f(a)}
\end{equation}

In Ref.~\cite{Levin2006}, it is proposed that the conditional mutual information $I(A:C|B)$ for the partition in Fig.~\ref{TEE_LW_PK}(b) matches to $2\ln\calD$. In the paper, it is proven for a class of exactly solvable model called the Levin-Wen model (also known as the string-net model)~\cite{2005PhRvB..71d5110L}. Here we show that the same formula also holds in our framework. 
\begin{Proposition} \label{Prop: LW_TEE}
	For the Levin-Wen partition (Fig.~\ref{TEE_LW_PK}$(b)$), it holds that
	\begin{equation}
	I(A:C\vert B)_{\sigma^1}= 2 \ln \mathcal{D}. \label{eq: TEE-LW}
	\end{equation}
\end{Proposition}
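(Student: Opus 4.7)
The plan is to exploit the specific topology of the Levin--Wen partition in Fig.~\ref{TEE_LW_PK}(b). The three regions can be chosen so that $AB$ and $BC$ are topologically disks, $B$ is the disjoint union of two disks, and $ABC$ is a contractible annulus. By Proposition~\ref{Prop:disk_in_Omega}, every element of $\Sigma(ABC)$ restricts to the unique disk state $\sigma_\omega$ on each disk $\omega\subseteq ABC$, so the reductions of $\sigma^1$ onto $AB$, $BC$, $B$ are just $\sigma_{AB}$, $\sigma_{BC}$, $\sigma_B$. The task therefore reduces to evaluating
\begin{equation}
I(A:C|B)_{\sigma^1} \;=\; S(\sigma_{AB}) + S(\sigma_{BC}) - S(\sigma_B) - S(\sigma^1_{ABC}).
\end{equation}

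My central step is to identify the \emph{maximum-entropy element} $\tau_{ABC}$ of $\Sigma(ABC)$ and compute its entropy in two ways. The first way is by a merging argument in the style of Sec.~\ref{Sec. Fusion}: merging the disk states $\sigma_{AB}$ and $\sigma_{BC}$ along $B$ using Lemma~\ref{Prop: merging technique} (the vanishing conditional mutual informations required for the merge being guaranteed by the extended axioms, Proposition~\ref{Prop:axiom_large}) yields a state $\tau_{ABC}\in\Sigma(ABC)$ with $I(A:C|B)_\tau=0$ and, by entropy preservation,
\begin{equation}
S(\tau_{ABC}) \;=\; S(\sigma_{AB}) + S(\sigma_{BC}) - S(\sigma_B).
\end{equation}
Because every $\rho\in\Sigma(ABC)$ shares the three disk reductions with $\tau_{ABC}$ (Proposition~\ref{Prop:disk_in_Omega}), strong subadditivity bounds $S(\rho)$ by the same right-hand side, with saturation iff $I(A:C|B)_\rho=0$. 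Hence $\tau_{ABC}$ is indeed the maximum-entropy element of $\Sigma(ABC)$.

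The second evaluation uses the simplex theorem (Theorem~\ref{Prop: structure_2}) and the identity $f(a)=\ln d_a$ recorded in Eq.~\eqref{eq:f(a)}. Writing $\tau_{ABC}=\bigoplus_a p_a\,\sigma^a_{ABC}$, orthogonality of the extreme points together with Definition~\ref{def:f(a)} give
\begin{equation}
S(\tau_{ABC}) \;=\; S(\sigma^1_{ABC}) + H(\{p_a\}) + 2\sum_a p_a \ln d_a.
\end{equation}
A direct Lagrange-multiplier optimization of $H(\{p_a\})+2\sum_a p_a \ln d_a$ over probability distributions yields the maximizer $p_a=d_a^2/\mathcal{D}^2$ and maximum value $2\ln\mathcal{D}$. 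Equating the two evaluations of $S(\tau_{ABC})$ gives $S(\sigma_{AB})+S(\sigma_{BC})-S(\sigma_B)-S(\sigma^1_{ABC})=2\ln\mathcal{D}$, which is the claim.

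The main obstacle I anticipate is justifying that the merged state $\tau_{ABC}$ actually lies in $\Sigma(ABC)$, not merely in the weaker set of states consistent with $\sigma$ on $\mu$-disks inside $ABC$. This requires an upgrade of Lemma~\ref{Prop: merging technique} to information convex sets, the same upgrade already invoked implicitly throughout Sec.~\ref{Sec. Fusion} when constructing elements such as $\sigma_Y^{a\times b}$ from merged extreme points. Granting this, the entire argument collapses to one short convex optimization and one entropy balance, both in the spirit of the two-perspectives computations already carried out for the fusion axioms.
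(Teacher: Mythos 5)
Your proposal is correct and follows essentially the same route as the paper's proof: merge the two disk states along $B$ to produce the maximum-entropy element of $\Sigma(ABC)$ with vanishing $I(A{:}C|B)$, then evaluate its entropy a second time via the simplex theorem and $f(a)=\ln d_a$, obtaining $p_a=d_a^2/\mathcal{D}^2$ and the excess $2\ln\mathcal{D}$ over $\sigma^1$. The membership of the merged state in $\Sigma(ABC)$, which you correctly flag as the delicate point, is exactly what the paper supplies via Proposition~\ref{Prop:merging_Sigma} and the accompanying remark covering the merging process of Fig.~\ref{Merging_change_1_2_III}.
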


\begin{figure}[h]
	\centering
	\includegraphics[scale=1.1]{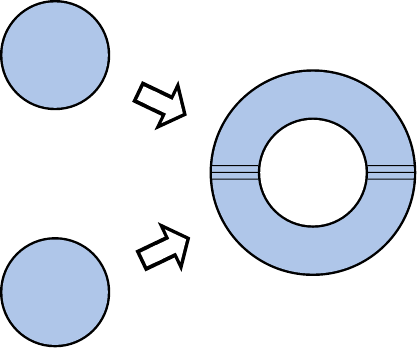}
	\caption{Merging a pair of disks to obtain an annulus. Two disks are deformed so that, once merged together, they form an annulus.}
	\label{Merging_change_1_2_III}
\end{figure}

\begin{proof}
	Let us consider the merging process in Fig.~\ref{Merging_change_1_2_III}, which obtains an annulus $X$ from a pair of disks.
	Let  $\tilde{\sigma}_{X}\in\Sigma(X)$ be the element obtained from merging. It is in the center of $\Sigma(X)$, i.e. the maximal entropy element. Dividing $X$ according to the Levin-Wen partition in Fig.~\ref{TEE_LW_PK}(b), gives
	$I(A:C\vert B)_{\tilde{\sigma}} =0$
	because of the property of the merged state; see Lemma~\ref{Prop: merging technique}. 
	
	Because of the simplex structure of $\Sigma(X)$ (see theorem~\ref{Prop: structure_2}) and the fact that $f(a)$ is equal to $\ln d_{a}$ (see Eq.~(\ref{eq:f(a)})), we can express $\tilde{\sigma}_{X}$ as a convex combination of extreme points
	\begin{equation}
	\tilde{\sigma}_{X} = \sum_{a} \frac{d^2_{a}}{\mathcal{D}^2} \,\sigma^a_{X}. \nonumber
	\end{equation}
	This formula is obtained by maximalizing the von Neumann entropy.
	From it, one derives
	$S(\tilde{\sigma}_X) - S(\sigma^1_{X}) = 2 \ln \mathcal{D}$.
	It follows that Eq.~(\ref{eq: TEE-LW}) is true. 
\end{proof}

\begin{restatable}[]{Proposition}{tee}
\label{Prop: KP_TEE}
	For the Kitaev-Preskill partition,
	\begin{equation}
	\gamma \equiv (S_{AB} + S_{BC} + S_{CA} - S_A - S_{B} - S_{C} - S_{ABC})_{\sigma_\omega} \label{eq: KP_TEE}
	\end{equation}
	where $\omega=ABC$, see Fig.~\ref{TEE_LW_PK}(a), then $\gamma = \ln \mathcal{D}$.
\end{restatable}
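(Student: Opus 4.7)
The plan is to reduce the Kitaev-Preskill formula to the Levin-Wen result already established in Proposition~\ref{Prop: LW_TEE}. The natural starting point is the purely algebraic identity
\begin{equation*}
\gamma = I(A:C|B) - I(A:C),
\end{equation*}
which rewrites the KP combination as a difference of a conditional mutual information and a mutual information; analogous identities hold under any permutation of $\{A,B,C\}$. Since $\omega = ABC$ is a disk, $\sigma_\omega$ is the unique element of $\Sigma(\omega)$ (Proposition~\ref{Prop: structure_1}), and all entropies appearing in $\gamma$ are fixed once $\sigma_\omega$ is specified.

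Next, following the strategy of the LW proof, I would set up a merging of $\sigma_\omega$ with an auxiliary state on a carefully chosen subsystem, so that the resulting state lives in the information convex set of a subsystem whose structure is known. Lemma~\ref{Prop: merging technique} will guarantee that the merged state has vanishing conditional mutual informations along the merging seams and, crucially, that its entropy differs from a suitable reference state only through the universal contributions $f(a) = \ln d_a$ (Eq.~\eqref{eq:f(a)}). The extended axioms (Proposition~\ref{Prop:axiom_large}) will be invoked to certify the Markov conditions needed to merge, and the isomorphism theorem (Theorem~\ref{thm: the isomorphism theorem}) together with Corollary~\ref{coro:entropy_difference} will let me transport entropy differences between topologically equivalent subsystems as needed.

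I then plan to compute the entropy of the merged state from two perspectives, following the same four-step argument used throughout Section~\ref{Sec. Fusion} and in Proposition~\ref{Prop: LW_TEE}. From one perspective, the merged state is the maximum-entropy element of the information convex set of the merged subsystem; using the simplex theorem (Theorem~\ref{Prop: structure_2}) and the characterization of $\Sigma(Y)$ for a 2-hole disk (Theorems~\ref{Thm:abc} and \ref{Thm:}), this entropy can be evaluated in terms of the fusion multiplicities $N_{ab}^c$ and the quantum dimensions $d_a$. From the other perspective, the entropy is built out of the entropies of the pre-merged regions, which include $\sigma_\omega$ restricted to the KP sub-regions. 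Matching these two expressions should yield the desired equality $\gamma = \ln \mathcal{D}$.

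The main obstacle is to design the merging so that it produces exactly one factor of $\ln \mathcal{D}$, as opposed to the $2\ln \mathcal{D}$ that appears in Proposition~\ref{Prop: LW_TEE}. Physically, the LW partition of an annulus has two topological ``bridges,'' each contributing $\ln \mathcal{D}$, whereas the KP partition of a disk has a single Y-junction contributing $\ln \mathcal{D}$; the merging must isolate exactly one such contribution. I expect this will require constructing the merge so that the resulting subsystem is (topologically) a disk with a single puncture, or an annulus obtained with a single ``seam,'' rather than the doubly-seamed annulus produced in the LW argument. Getting the topology and the Markov conditions aligned so that only a single $\sum_a d_a^2/\mathcal{D}^2$ weighting appears is the delicate part of the plan; everything else is dictated by the isomorphism theorem and the structure theorems already proved.
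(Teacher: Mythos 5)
Your opening identity $\gamma = I(A:C|B) - I(A:C)$ is algebraically correct, but it does not by itself connect to Proposition~\ref{Prop: LW_TEE}: in the Kitaev--Preskill partition the three regions meet at a Y-junction, so $B$ does not separate $A$ from $C$ and neither term on the right-hand side is individually computable from the Levin--Wen result. More importantly, the step you yourself flag as ``the delicate part'' --- designing a merge that isolates exactly one factor of $\ln\mathcal{D}$ rather than the $2\ln\mathcal{D}$ of the annulus --- is precisely the content of the proposition, and your proposal leaves it unresolved. A single-seam merge producing a once-punctured disk or annulus is not developed, and it is not clear it can be made to work: every merging seam in this framework is a disk-like region across which the merged state is Markov, and the entropy bookkeeping of Lemma~\ref{Prop: merging technique} naturally produces the annulus quantities $2f(a)$, not half of them. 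So there is a genuine gap.

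The paper's proof avoids constructing any new merge. It first shows that the Kitaev--Preskill combination $\gamma$ is a topological invariant, unchanged under small deformations of $A$, $B$, $C$ (using the growth argument of Fig.~\ref{Grow_a_disk_12_PDF}). It then takes the Levin--Wen annulus $ABC$ with $B=B_1B_2$ and an inner disk $D$ (Fig.~\ref{TEE_appendix}) and writes $\gamma$ twice: once for the KP-equivalent partition $(A,B_1,D)$ and once for $(AD,B_1,B_2C)$. Adding the two expressions and simplifying with the Markov-chain identities supplied by the extended axioms (Proposition~\ref{Prop:axiom_large}), everything telescopes to $2\gamma = S_{AB}+S_{BC}-S_B-S_{ABC} = I(A:C|B)$ for the Levin--Wen partition, which equals $2\ln\mathcal{D}$ by Proposition~\ref{Prop: LW_TEE}. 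In other words, the resolution of your ``factor of two'' problem is not to halve the annulus contribution by a cleverer merge, but to double the KP combination --- two deformation-equivalent copies of $\gamma$ sum to the LW quantity. If you want to salvage your plan, this invariance-plus-doubling argument is the missing idea.
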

The idea of the proof is to relate the Levin-Wen combination with two copies of Kitaev-Preskill combinations. See Appendix~\ref{P_Sec. Proof of {Prop: KP_TEE}} for the proof.

Because there are gapped systems in which these axioms are violated by \emph{spurious} contributions to the area law~\cite{Bravyi2008,2016PhRvB..94g5151Z}, one should not expect our result is applicable to every  gapped system. These violations may be pathological unless certain symmetries are imposed. However, they can persist in certain subsystem symmetry-protected phases \cite{2019PhRvL.122n0506W}. 
Reconciling our framework with these systems remains as an outstanding open problem. Nonetheless, our result does shed some light on a related issue: if we check the quantum state on a finite length scale and verify \ref{as:A0'} and \ref{as:A1}, then it is guaranteed that TEE will not suffer from any spurious contribution on all larger length scales.

\section{Summary and discussions}\label{Sec: discussion}

In this paper, we have initiated a derivation of the axioms of the algebraic theory of anyon from a conjectured form of entanglement area law for the ground states of 2D gapped phases. 
Our framework is based on two entropic constraints (axiom \ref{as:A0'} and \ref{as:A1}), which are implied by the area law formula. We have defined the superselection sectors and the fusion spaces through the geometry of the information convex sets. The axioms of the anyon fusion theory are derived from the internal self-consistency relation of the information convex sets. 
Moreover, we have provided a rigorous derivation of the well-known formula for TEE, $\gamma=\ln\calD$. While our main physical motivation was to consider ground states of 2D gapped phases, 
we only required a single quantum state satisfying our axioms as the input. Our result implies that  many of the anyon data can be extracted from local information of a single ground state alone.

Some of the readers may contest that our exact area law assumption is unrealistic. It would be desirable to relax this assumption to something that is less restrictive. We expect our framework to have a natural extension to the case in which Axiom \ref{as:A0'} and \ref{as:A1} holds approximately. This is because every technical tool we have used in this paper has an analog for such situations. For instance, the merging lemma can be generalized by using the approximate recovery map~\cite{Fawzi2015}. 

It should be noted that there are gapped systems in which Axiom \ref{as:A1} is violated. Such corrections  are known as the \emph{spurious} contributions to the area law~\cite{Bravyi2008,2016PhRvB..94g5151Z}. The existence of the spurious contribution implies that one should not expect our result to hold in every gapped system. While we do not have a solution to this problem, one may hope to take one of the following approaches. First, one may show that the notion of superselection sectors and the fusion rules are stable under a finite-depth quantum circuit when starting from a state that satisfies our axioms. Alternatively, one may attempt to show that there is always a finite-depth quantum circuit that can remove the spurious contribution.

While we have proved a set of axioms pertinent to the anyon fusion theory, further work is necessary to reproduce the anyon theory in its known form. It would be interesting to investigate whether our axioms give rise to a well-defined notion of $R$ and $F$-symbols. Also, could the $S$ and $T$-matrices be extracted from a single ground state? Can we prove that every anyon theory consistent with our axioms are modular? To tackle these questions, one may need to recover a certain $U(1)$ phase that is missing in the density matrix formulation, perhaps with the help of the string operator shown to exist in Appendix~\ref{appendix:String}. 

A more ambitious question is whether we can arrive at a complete classification of two-dimensional gapped phases from our axioms. The current conjecture~\cite{2006AnPhy.321....2K} is that two systems are in the same phase if and only if their underlying anyon theory and the chiral central charges are identical. In Ref.~\cite{2006AnPhy.321....2K}, Kitaev speculated: ``To prove or disprove this statement, a mathematical notion of equivalence between topological phases is necessary. It may be based on local (or quasilocal) isomorphisms between operator algebras."  Our framework seems to be the step in the right direction, given that we have a sensible definition of isomorphism between different subsystems and that we could derive axioms of the anyon fusion theory from a reasonable physical assumption. Such a feat will be a complete and rigorous justification of the point of view that 2D gapped quantum phases can be described by the anyon theory~\cite{2006AnPhy.321....2K,wen2004quantum}.

Compared to the existing approaches that compute entanglement measures on ground states, information convex set leads to a more incisive understanding of the underlying topological phase. Upon calculating entanglement entropy from a given ground state, one often obtains an order parameter that reveals partial information about the underlying quantum phase. An oft-cited example is the total quantum dimension \cite{Kitaev2006,Levin2006}. Remarkably, we found that a much more refined set of data can be extracted by studying the information convex set of the ground state. This includes quantum dimensions and fusion multiplicities. Because two different anyon theories can give rise to the same total quantum dimension but different (individual) quantum dimensions and fusion multiplicities, information convex set clearly contains richer data than entanglement entropy.

Moreover, starting from our entanglement-based assumptions (Axiom \ref{as:A0'} and \ref{as:A1}), we were able to derive the basic concepts of anyon theory from the internal consistency relations between the elements of the information convex set. What is surprising is that the emergent physical laws that appear in these topologically ordered systems were logical consequences of these axioms; we did not need to make any further assumptions. It would be interesting to understand how widely this approach can be applied in other circumstances.

Obvious areas to explore further would be higher dimensions and setups in which a topological defect~\cite{Bombin2010} or a boundary is present~\cite{Bravyi1998}. Such studies may be an ideal framework to classify topological phases in 3D, which remains as an outstanding open problem. We will discuss these applications in our upcoming work.

\section*{Acknowledgments} BS wishes to thank Yilong Wang for a discussion on possible obstruction of the categorification of fusion rings. BS is supported by the National Science Foundation under Grant No. NSF DMR-1653769. IK's research at Perimeter Institute was supported by the Government of Canada through Industry Canada and by the Province of Ontario through the Ministry of Economic Development and Innovation. IK also acknowledges support from IBM T. J. Watson research center as well as Simons foundation. KK is supported by the Institute for Quantum Information and Matter (IQIM) at Caltech, which is a National Science Foundation (NSF) Physics Frontiers Center (NSF Grant PHY-1733907).

\appendix
\renewcommand*{\thesection}{\Alph{section}}
\renewcommand\thefigure{\arabic{figure}}

\section{Notations and useful facts}  
In this appendix, we summarize basic notations of convex analysis and quantum information theory. Well-known facts in quantum information theory will be summarized in a self-contained manner. This section can be skipped for readers who are familiar with convex analysis and strong subadditivity of entropy.

\subsection{Convex sets}\label{app:convexsets}
Here we review facts about convex sets. We consider a subset of a finite-dimensional real space $\mathbb{R}^N$ closed under convex combinations, where $N\in \mathbb{Z}_{\ge0}$. The convex set is compact if it is a compact subset of $\mathbb{R}^N$. For our purpose, for an $N$ dimensional Hilbert space, the real space  $\mathbb{R}^{2N^2}$ could be identified as the $2N^2$ real components of an operator acting on the Hilbert space.

We use $\conv(\mathcal{X})$ to denote
the convex hull of a set $\mathcal{X} \subseteq \mathbb{R}^{N}$, which is the smallest convex set that contains set $\mathcal{X}$. In other words, it is the set of all convex combinations of elements in $\mathcal{X}$.

An \emph{extreme point} of a convex set $\calS$ is a point in $\calS$, which does not lie in any open line segment joining two points of $\calS$. We use $\ext(\calS)$ to denote the set of extreme point of a convex set $\calS$.

Finally, we notice the \emph{Minkowski-Caratheodory theorem}, which states that: 
\emph{Let $\calS$ be a compact convex subset of $\mathbb{R}^{N}$ of dimension $n$. Then any point in $\calS$ is a convex combination of at most $n +1$ extreme points.}  This is the reason we often talk about extreme points. Note that, without compactness, an element of a convex set sometimes cannot be written as a convex combination of extreme points.

\subsection{Quantum information facts}\label{app:QIbasic}
We shall use Greek letters, e.g., $\rho$, $\sigma$ for density matrices. Subsystems are specified in the subscript, e.g. $\rho_{A}$, $\sigma_{B}$. We frequently use $AB$ as a shorthand notation of the disjoint union of $A$ and $B$ (i.e., $A\sqcup B$) when $A\cap B=\emptyset$. We will sometimes call a (reduced) density matrix as a state for short. Also, we will sometimes refer to the reduced density matrix of a state as its marginal. We use $\calS(\mathcal{H})$ to denote the \emph{state space} of a Hilbert space $\mathcal{H}$. It is the set of all density matrices on $\calH$. A direct sum is denoted as $\bigoplus$. It is a sum with objects living on orthogonal supports. 

In order to quantify the distance between two quantum states, we use the trace distance, defined as 
\begin{equation*}
\| \rho - \sigma \|_{1}\equiv \Tr{\sqrt{(\rho-\sigma)^2}}
\end{equation*}
for any pair of density matrices $\rho$ and $\sigma$. This is a reasonable notion of distance because two states close in trace distance cannot be distinguished well by any measurement.

We shall use a variety of quantum mechanical entropies in our discussion. These are all defined in terms of the von Neumann entropy of a state, which is defined as 
\begin{equation*}
S(\rho)\equiv -\Tr(\rho\ln \rho).
\end{equation*}
Depending on the context, we shall use the following shorthand notations to denote the von Neumann entropy of the reduced density matrix over some subsystem: $S_A$, $(S_A+S_B)_{\rho}$. In the first case, the global state should be obvious from the context. In the second case, the global state is $\rho$.

There are two information-theoretic quantities that will play an important role in this paper:
\begin{equation*}
\begin{aligned}
I(A:B) &\equiv S_A + S_B - S_{AB}, \\
I(A:C|B) &\equiv S_{AB} +S_{BC} -S_B -S_{ABC}.
\end{aligned}
\end{equation*}
The first quantity, known as the mutual information between $A$ and $B$, quantifies a correlation between $A$ and $B$. The second quantity, known as the conditional mutual information between $A$ and $C$ conditioned on $B$, quantifies the correlation between $A$ and $C$ given a knowledge on $B$. By the strong subadditivity of entropy~\cite{1973JMP....14.1938L}, $I(A:C\vert B) \geq 0$ for any quantum state.

Below, we summarize the basic facts about quantum states and entropies. Most of these statements can be found in~\cite{Nielsen2011}.

\subsubsection{Fidelity} 

Let us begin with the fidelity between two quantum states, which is defined as 
\begin{equation}
F(\rho,\sigma)=\left( \Tr \sqrt{{\rho^{\frac{1}{2}}}\sigma \rho^{\frac{1}{2}}} \right)^2.
\end{equation}
This is a natural generalization of the absolute value of inner product into mixed state (and from the whole system to subsystems). Indeed, one can easily verify $F(\rho,\sigma)= \vert \langle \psi\vert \varphi\rangle \vert^2$ when $\rho=\vert\psi\rangle \langle \psi\vert$ and $\sigma= \vert \varphi\rangle \langle \varphi\vert$. Furthermore, $F(\rho, \sigma) \in [0,1]$ and $F(\rho,\sigma)= F(\sigma,\rho)$.

Why do we care about fidelity? It is because fidelity enjoys several useful properties. As a starter, $F(\rho,\sigma)$ can give a reasonably tight upper and lower bound on $\|\rho - \sigma \|_1$. In particular, $F(\rho,\sigma)=1$ if and only if $\| \rho - \sigma\|_1=0$ and $F(\rho,\sigma)=0$ if and only if $\|\rho-\sigma \|_1=2$. Moreover, fidelity has a rather special property:
\begin{equation}
F(\rho_A\otimes \rho_B,\sigma_A\otimes \sigma_B) = F(\rho_A,\sigma_A)\cdot F(\rho_B,\sigma_B),
\end{equation}
which will play an important role in the proof of Theorem~\ref{Prop: structure_2}. There is an intuitive explanation for both of these facts. The first fact says that two density matrices have zero (unit) overlap if and only if the two states are orthogonal (identical). 

Lastly, if two quantum states have unit fidelity, their purifications are identical up to a unitary operator acting on the purifying space~\cite{Uhlmann1976}. In other words, two states $\vert \psi_{AB}\rangle$ and $\vert \varphi_{AB}\rangle$ have the same reduced density matrix on subsystem $B$ if only if there is a unitary operator $U_A$ such that 
\begin{equation}
\vert \varphi_{AB}\rangle = U_{A}\otimes I_B \vert \psi_{AB}\rangle. \label{eq:U_A}
\end{equation}

\subsubsection{Quantum channels}

Quantum channel, also known as complete-positive trace-preserving (CPTP) map, is the most general form of physical operation that can be applied to a quantum state. It is a linear map from bounded operators  on $\mathcal{H}_{A}$ to bounded operators  on $\mathcal{H}_{A'}$. It preserves positivity, even in the presence of any ancillary system, and also preserves trace and hermiticity.\footnote{That we require the map to preserve positivity in the presence of any ancillary system is important. Otherwise, there are operations, e.g., transpose of a matrix, that preserves the positivity in the absence of an ancillary system but may not if the ancillary system is entangled with the system of interest.} In particular, it maps density matrices to density matrices. It can be written in an explicit form using a set of Kraus operators $\{ M_a \}$:
\begin{equation}
\mathcal{E}_{A\to A'}(X_{A})= \sum_a M_a X_A M_a^{\dagger},
\end{equation}
where $\sum_a M_a^{\dagger} M_a = I_{A}$ and $I_A$ is the identity operator on $\mathcal{H}_A$. With the definition of CPTP map, we could discuss some additional properties.

Quantum channels do not make two quantum states more distinguishable than they already are. This means that the trace distance is nonincreasing under quantum channels. More relevant to us is the fact that the fidelity is nondecreasing under quantum channels.
\begin{equation}
F(\mathcal{E}(\rho),\mathcal{E}(\sigma)) \ge F(\rho,\sigma),
\end{equation}
for any quantum channel $\mathcal{E}$. Since partial trace is also a quantum channel, we have
\begin{equation}
F(\rho_A,\sigma_A) \ge F(\rho_{AB},\sigma_{AB}). \label{eq: monotonicity of fidelity} 
\end{equation}

\subsubsection{Properties of entropies}

Let us begin with a few elementary facts about entropy. First, $S_{A}=S_{B}$ for an arbitrary pure state $\vert \varphi_{AB}\rangle$. Secondly, suppose a set of density matrices $\{ \rho^i \}$ has mutually orthogonal supports, i.e. $\rho^i\perp \rho^j,\forall i\ne j$, then
\begin{equation}
S(\sum_i p_i \rho^i)= \sum_i p_i (S(\rho^i)-\ln p_i),
\end{equation}
where $\{ p_i \}$ is a probability distribution.

For a bipartite quantum state, we have the following set of well-known inequalities:
\begin{equation}
\begin{aligned}
I(A:B)_{\rho} &\ge 0 \\
S_{BC} + S_C - S_B &\ge 0.
\end{aligned}
\end{equation}
The first inequality is known as the subadditivity of entropy, and the second inequality is known as the Araki-Lieb inequality~\cite{1970CMaPh..18..160A}. It is interesting to study the conditions under which these inequalities are satisfied with equality. The mutual information is $0$ if and only if the underlying state is a product state over $A$ and $B$. The condition for saturating the Araki-Lieb inequality is more subtle and interesting. We will revisit this condition later after we discuss inequalities for tripartite quantum states.

The most important inequality involving a tripartite quantum state is the strong subadditivity (SSA) of entropy~\cite{1973JMP....14.1938L}:
\begin{equation}
I(A:C|B) \ge 0.
\end{equation}
This inequality is surprisingly powerful in that inequalities that may look ``stronger" than this inequality are in fact implied by SSA. Here is a list of such inequalities:
\begin{equation*}
\begin{aligned}
&I(AA':BB') \ge I(A:B) \\
&I(AA':CC'\vert B) \ge I(A:C\vert B)\\
&I(AA':CC'\vert B) \ge I(A:C\vert A'BC')\\
&S_{BC} + S_{C} - S_B \ge I(A:C)\\
&S_{BC} + S_{C} - S_B \ge I(A:C\vert B)\\
&S_{BC} + S_{C} - S_B \ge S_{BB'C} + S_C -S_{BB'}\\
&S_{BC} + S_{CD} - S_B - S_D\ge I(A:C\vert B)\\
&S_{BC} + S_{CD} - S_B - S_D\ge S_{BB'C} + S_{CDD'} - S_{BB'} - S_{DD'}.
\end{aligned}
\end{equation*}
Also, let $\{ \rho^i_{AB} \}$ be a set of density matrices and $\{ p^i \}$ is a probability distribution, then
\begin{equation}
\sum_i p_i(S_{AB}-S_B)_{\rho^i} \le (S_{AB}-S_B)_{\sum_i p_i \rho^i}.
\label{eq: SSA_combination}
\end{equation}
To see why, let us introduce an auxiliary system $C$ with an orthonormal basis $\{ \vert i_C \rangle \}$. Let $\rho_{ABC}\equiv \sum_i p_i \, \rho^i_{AB}\otimes \vert i_C \rangle \langle i_C \vert$
and notice $I(A:C\vert B)_{\rho}\ge 0$.

\subsubsection{The structure of quantum Markov states}
If a tripartite quantum state satisfies SSA with equality, then such a state has a rather special property. Such a state is referred to as a quantum Markov state~\cite{Petz1987,2004CMaPh.246..359H}. Let $\rho$ be a quantum Markov state such that it has a vanishing conditional mutual information $I(A:C\vert B)_{\rho}=0$. Then the following facts hold.
\begin{enumerate}
	\item $\rho_{ABC}$ is uniquely determined by its marginals $\rho_{AB}$ and $\rho_{BC}$. The recovery can be done with a quantum channel, see Lemma~\ref{lemma_growth} and \ref{lemma_Petz} of the main text.
	
	\item There is a decomposition of the Hilbert space $\mathcal{H}_B$ into a direct sum of tensor products
	$	\mathcal{H}_B =\bigoplus_{j}\mathcal{H}_{b^{L}_j} \otimes \mathcal{H}_{b^{R}_j}$
	such that 
	\begin{equation}
	\rho_{ABC}= \bigoplus_{j} p_{j} \,\rho_{Ab^{L}_{j}}\otimes \rho_{b^{R}_{j}C}, \label{eq: QMS_explicit}
	\end{equation}
	where $\{p_j\}$ is a probability distribution, $\rho_{Ab^{L}_{j}}$ is a density matrix on $\mathcal{H}_A \otimes \mathcal{H}_{b^{L}_{j}}$ and  $\rho_{Ab^{R}_{j}}$ is a density matrix on $\mathcal{H}_{b^{R}_{i}}\otimes \mathcal{H}_C$.
	\item Eq.~(\ref{eq: QMS_explicit}) implies that 
	\begin{equation}
	\Tr_{B} \, \rho_{ABC}= \sum_j p_j \rho^j_A \otimes \rho^j_C.  \label{eq: QMS_explicit_mutual}
	\end{equation}
	Note that it is separable and therefore subsystem $A$ and $C$ have only classical correlations (no quantum correlation). 
\end{enumerate}

Now, we can make an important connection between the states saturating the Araki-Lieb inequality and quantum Markov states. The density matrices which saturate Araki-Lieb inequality have the properties summarized in  Lemma~\ref{Prop: the isolation technique}.
\begin{lemma}\label{Prop: the isolation technique}
	The following conditions about density matrix $\rho_{BC}$ are equivalent.
	\begin{enumerate}
		\item[(1)] $(S_{BC}+S_C -S_B)_{\rho}=0$, (saturated Araki-Lieb).
		\item[(2)] Any state $\rho_{ABC}$ which reduces to $\rho_{BC}$ on $BC$ has $I(A:C)_{\rho}=0$ and $I(A:C\vert B)_{\rho}=0$.
		\item [(3)] For any expression of the form $\rho_{BC}=\sum_i q_i \, \rho^i_{BC}$, where  $\{q_i \}$ is a probability distribution with $q_i>0,\,\forall\, i$ and $\{ \rho^i_{BC} \}$ is a set of density matrices, we have
		\begin{equation}
		\rho_{C}=\Tr_{B}\, \rho^i_{BC},  \quad \forall \, i. \label{eq:trac_B_constraint}
		\end{equation} 
		\item [(4)] Let $\rho_{BC}= \sum_{i} q_i \vert i_{BC}\rangle \langle i_{BC} \vert$, with $q_i>0$, $\forall \, i$ and $\langle i_{BC}\vert j_{BC}\rangle =\delta_{i,j}$, $\forall \, i,j$, we have 
		\begin{equation}
		\Tr_B \,\vert i_{BC} \rangle \langle j_{BC} \vert =\delta_{i,j}\, \rho_{C},\quad \forall\, i,j. \label{eq: off_diagonal}
		\end{equation}
	\end{enumerate}
\end{lemma}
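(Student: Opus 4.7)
The plan is to close the cycle $(1)\Rightarrow(2)\Rightarrow(3)\Rightarrow(4)\Rightarrow(1)$. Two tools will do most of the work: the canonical purification
\begin{equation*}
\ket{\psi}_{ABC}=\sum_i\sqrt{q_i}\,\ket{i}_A\otimes\ket{i_{BC}}
\end{equation*}
built from a spectral decomposition of $\rho_{BC}$, together with two entropic inequalities derived from SSA and collected in Appendix~\ref{app:QIbasic}: Araki-Lieb $S_{AB}-S_{ABC}\le S_C$ and weak monotonicity $S_A+S_B\le S_{AC}+S_{BC}$.

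For $(1)\Rightarrow(2)$, I would let $\rho_{ABC}$ be an arbitrary extension of $\rho_{BC}$. Rewriting $I(A:C|B)=S_{AB}+S_{BC}-S_B-S_{ABC}$ and using Araki-Lieb gives $I(A:C|B)\le S_{BC}+S_C-S_B$, which combined with SSA forces $I(A:C|B)=0$. Likewise, from $I(A:C)=S_A+S_C-S_{AC}$ and weak monotonicity, $I(A:C)\le S_{BC}+S_C-S_B=0$, while subadditivity gives the matching lower bound, so $I(A:C)=0$. For $(2)\Rightarrow(3)$, given a convex decomposition $\rho_{BC}=\sum_i q_i\rho^i_{BC}$ with $q_i>0$, I would attach an ancilla $A$ with orthonormal basis $\{\ket{i}_A\}$ and form $\rho_{ABC}=\sum_i q_i\,\ket{i}\bra{i}_A\otimes\rho^i_{BC}$. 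Hypothesis $(2)$ forces $\rho_{AC}=\rho_A\otimes\rho_C$, and matching this against the direct expression $\rho_{AC}=\sum_i q_i\,\ket{i}\bra{i}_A\otimes\Tr_B\rho^i_{BC}$ using orthogonality of $\{\ket{i}_A\}$ delivers $\Tr_B\rho^i_{BC}=\rho_C$ for every $i$.

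For $(3)\Rightarrow(4)$, the diagonal statement $\Tr_B\ket{i_{BC}}\bra{i_{BC}}=\rho_C$ is just $(3)$ applied to the spectral decomposition. To reach the off-diagonals, I would feed $(3)$ the alternative pure-state decompositions obtained by Hadamard-rotating each pair of eigenvectors, namely $\ket{\phi_\pm}\propto\sqrt{q_i}\ket{i_{BC}}\pm\sqrt{q_j}\ket{j_{BC}}$ together with the complex analogues $\ket{\phi'_\pm}\propto\sqrt{q_i}\ket{i_{BC}}\pm i\sqrt{q_j}\ket{j_{BC}}$. Writing $A_{ij}:=\Tr_B\ket{i_{BC}}\bra{j_{BC}}$, the real pair forces $A_{ij}+A_{ij}^\dagger=0$ and the complex pair forces $A_{ij}-A_{ij}^\dagger=0$, which together give $A_{ij}=0$. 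Finally for $(4)\Rightarrow(1)$, I would compute $\rho_{AC}$ for the canonical purification: the identity $\Tr_B\ket{i_{BC}}\bra{j_{BC}}=\delta_{ij}\rho_C$ collapses $\rho_{AC}=\sum_{i,j}\sqrt{q_iq_j}\,\ket{i}\bra{j}_A\otimes\Tr_B\ket{i_{BC}}\bra{j_{BC}}$ to $\rho_A\otimes\rho_C$, so $I(A:C)_\psi=0$. Combined with the pure-state identities $S_A=S_{BC}$ and $S_{AC}=S_B$, this gives exactly $S_{BC}+S_C-S_B=0$.

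The step I expect to be the main obstacle is $(3)\Rightarrow(4)$, because condition $(3)$ is formulated for probabilistic mixtures of density matrices and so does not obviously see operator off-diagonal information. The resolution is to leverage the non-uniqueness of convex decompositions of a fixed mixed state: the four Hadamard-like mixtures above (two real, two complex) recover the full complex matrix $\Tr_B\ket{i_{BC}}\bra{j_{BC}}$ from purely probabilistic constraints. All other implications reduce to short calculations once the canonical purification and the two SSA corollaries are in hand.
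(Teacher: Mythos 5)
Your proposal is correct and follows essentially the same route as the paper's proof: the same flagged-ancilla extension for $(2)\Rightarrow(3)$, the same purification for $(4)\Rightarrow(1)$, and for $(3)\Rightarrow(4)$ your Hadamard-rotated convex decompositions are exactly the paper's rotated-basis measurements $\vert\varphi_{ij}^{\theta}\rangle$ at $\theta=0,\pi/2$ viewed through the purification correspondence. The only cosmetic difference is in $(1)\Rightarrow(2)$, where you invoke Araki--Lieb and weak monotonicity directly instead of purifying and using monotonicity of (conditional) mutual information; both reduce to the same SSA corollaries listed in the paper's appendix, and your chain $(1)\Rightarrow(2)\Rightarrow(3)\Rightarrow(4)\Rightarrow(1)$ is logically complete.
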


\begin{proof}
	$(1)\Rightarrow(2)$. Let us purify $\rho_{ABC}$ and obtain $\vert\Psi_{A'ABC}\rangle$.  Condition $(1)$ implies $I(A'A:C)_{\vert\Psi\rangle \langle \Psi\vert}=0$ and  $I(A'A:C\vert B)_{\vert\Psi\rangle \langle \Psi\vert}=0$. Then, we use the facts $I(A'A:C)\ge I(A:C)$ and $I(A'A:C\vert B)\ge I(A:C\vert B)$.
	
	$(2)\Rightarrow(1)$.  Simply consider a pure $\rho_{ABC}$. 
	
	$(1),(2)\Rightarrow(3)$. Let $\rho_{ABC}=\sum_i q_i \vert i_A\rangle \langle i_A\vert \otimes \rho_{BC}^i$ with an orthonormal set of vectors $\{ \vert i_A\rangle \}$.  Since $I(A:C)_{\rho}=0$, a projective measurement in $\{ \vert i_A\rangle \}$  will not change the density matrix on $C$. It follows that $\rho_{C}=\Tr_{B}\, \rho^i_{BC},  \forall \, i$.

	$(3)\Rightarrow(4)$. Let us introduce an auxiliary system $A$ to purify $\rho_{BC}= \sum_i q_i \vert i_{BC}\rangle \langle i_{BC}\vert$ into
	$\vert \Psi_{ABC}\rangle=\sum_i\sqrt{q_i} \, \vert i_{A}\rangle \otimes \vert i_{BC}\rangle$. Here $\{ \vert i_{A}\rangle \}$ is an orthonormal basis of the Hilbert space $\mathcal{H}_A$. Therefore, one may obtain $\rho_{BC}$ from $\vert \Psi_{ABC}\rangle \langle \Psi_{ABC}\vert$ by taking a partial trace $\Tr_{A}$. For an arbitrary orthonormal basis $\{ \vert  \phi^i_A \rangle \}$ we could define $p_i \,\rho^i_{BC}\equiv \langle \phi^i_A\vert  \Psi_{ABC}\rangle \langle \Psi_{ABC}\vert \phi^i_A \rangle $. Recall that the condition Eq.~(\ref{eq:trac_B_constraint}) applies to any basis. For the ``diagonal" basis $\{ \vert i_A\rangle \}$, one derives $\Tr_B \,\vert i_{BC} \rangle \langle i_{BC} \vert =\rho_{C}$, $\forall\, i$. Then, consider an ``off-diagonal" basis which contains a basis vector $\vert \varphi_{ij}^{\theta}\rangle =\frac{1}{\sqrt{2}}(\vert i_A\rangle + e^{i\theta} \vert j_A\rangle )$ with $i\ne j$ and $\theta\in [0,2\pi]$. One derives that,  for $\forall \,i\ne j$ and $\forall \,\theta\in [0,2\pi]$,
	\begin{equation}
	e^{i\theta}\Tr_B \,\vert i_{BC} \rangle \langle j_{BC} \vert + e^{-i\theta}\Tr_B \,\vert j_{BC} \rangle \langle i_{BC} \vert=0. \nonumber
	\end{equation}
	Therefore, $\Tr_B \,\vert i_{BC} \rangle \langle j_{BC} \vert =0$ for $i\ne j$. Thus, Eq.~(\ref{eq: off_diagonal}) holds.
	
	$(4)\Rightarrow(1)$. Let us introduce an auxiliary system $A$ to purify $\rho_{BC}$ into
	$\vert \Psi_{ABC}\rangle=\sum_i\sqrt{q_i} \vert i_{A}\rangle \otimes \vert i_{BC}\rangle$. Here $\{ \vert i_{A}\rangle \}$ is an orthonormal basis of the Hilbert space $\mathcal{H}_A$. Then, it follows from Eq.~(\ref{eq: off_diagonal}) that $I(A:C)_{\vert\Psi\rangle \langle \Psi\vert}=0$ which  implies condition (1).
\end{proof}

\section{Information convex set on a closed manifold}
In this section, we show that the information convex set of a closed manifold is isomorphic to a (state space of a) finite-dimensional Hilbert space; see Theorem~\ref{Prop: Sigma(M)}.

\begin{theorem}\label{Prop: Sigma(M)}
	Let $M$ be a closed 2D manifold. Let $\sigma$ be a reference state on this manifold satisfying axiom \ref{as:A0'} and \ref{as:A1}. With respect to this reference state, 
	\begin{equation}
	\Sigma(M)=\mathcal{S}(\mathbb{V})
	\end{equation}
	for some finite dimensional Hilbert space $\cal\mathbb{V}\subseteq \calH$. Moreover, $\mathbb{V}$ is nonempty.
\end{theorem}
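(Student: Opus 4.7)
The plan is to adapt the strategy used for Theorem~\ref{Thm:} (the 2-hole disk): show that every extreme point of $\Sigma(M)$ is pure, and that the pure states in $\Sigma(M)$ are closed under normalized superposition. Together with compactness and convexity of $\Sigma(M)$ (Proposition~\ref{Prop.Info_convex_basic}) and the Minkowski-Caratheodory theorem, these two facts will force $\Sigma(M) = \mathcal{S}(\mathbb{V})$, where $\mathbb{V} \subseteq \mathcal{H}$ is defined as the linear span of the vectors $|\psi\rangle$ with $|\psi\rangle\langle\psi| \in \Sigma(M)$. Nonemptiness of $\mathbb{V}$ then follows from $\sigma \in \Sigma(M)$.

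For the first step---extreme points are pure---I would take the spectral decomposition $\rho = \sum_i p_i|\phi_i\rangle\langle\phi_i|$ (with $p_i > 0$) of an arbitrary $\rho \in \Sigma(M)$ and show each $|\phi_i\rangle\langle\phi_i|$ is again in $\Sigma(M)$. Given a $\mu$-disk $b$, surround it with a thin collar $B$ to form a larger disk $bB$; Proposition~\ref{Prop:axiom_large} provides the saturated Araki-Lieb condition $S_{bB} + S_b - S_B = 0$ for $\sigma_{bB}$, while Propositions~\ref{Prop.Info_convex_basic} and \ref{Prop: structure_1} force $\rho_{bB} = \sigma_{bB}$. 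Condition (3) of the isolation lemma (Lemma~\ref{Prop: the isolation technique}) then guarantees that every term in the decomposition $\sigma_{bB} = \sum_i p_i (\phi_i)_{bB}$ reduces to $\sigma_b$ on $b$, so $(\phi_i)_b = \sigma_b$ for all $i$. Since $b$ was arbitrary, each $|\phi_i\rangle\langle\phi_i| \in \Sigma(M)$, and extremality collapses the decomposition to a single term.

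For the second step---closure under superposition---given pure $|\phi_1\rangle, |\phi_2\rangle \in \Sigma(M)$, I need the normalized $|\phi\rangle = \alpha|\phi_1\rangle + \beta|\phi_2\rangle$ to lie in $\Sigma(M)$. Expanding $\Tr_{M\setminus b}|\phi\rangle\langle\phi|$ and using the diagonal pieces $(\phi_i)_b = \sigma_b$, this reduces to proving the cross-term identity
\[
\Tr_{M\setminus b}\bigl(|\phi_1\rangle\langle\phi_2|\bigr) = \langle\phi_2|\phi_1\rangle\,\sigma_b
\]
for every $\mu$-disk $b$. The strategy: enlarge $b$ to $bB$ as before, use Uhlmann's theorem (since $(\phi_1)_{bB} = (\phi_2)_{bB} = \sigma_{bB}$) to produce a unitary $V$ on $M\setminus bB$ with $|\phi_2\rangle = (I_{bB}\otimes V)|\phi_1\rangle$, Schmidt-decompose $|\phi_1\rangle$ across $bB\mid M\setminus bB$ so that the $bB$-side Schmidt vectors diagonalize $\sigma_{bB}$, and invoke condition (4) of Lemma~\ref{Prop: the isolation technique} to kill the off-diagonal partial traces $\Tr_B |\alpha\rangle\langle\beta|_{bB}$. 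The surviving diagonal sum collapses to $\sigma_b\sum_\alpha \lambda_\alpha \langle\alpha|V^\dagger|\alpha\rangle$, which a direct Schmidt computation identifies with $\langle\phi_2|\phi_1\rangle\,\sigma_b$.

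The two steps together yield $\Sigma(M) = \mathcal{S}(\mathbb{V})$: the second step makes $\mathbb{V}$ an honest subspace of $\mathcal{H}$ with every unit vector in $\mathbb{V}$ giving a pure state in $\Sigma(M)$, while the first step plus Minkowski-Caratheodory implies every $\rho \in \Sigma(M)$ is a convex combination of pure states with vectors in $\mathbb{V}$, hence has support in $\mathbb{V}$. Finite-dimensionality of $\mathbb{V}$ is inherited from $\mathcal{H}$. I expect the cross-term identity in the second step to be the main obstacle: reconciling the purification freedom supplied by Uhlmann with the Markov structure of $\sigma_{bB}$ encoded in condition (4) of the isolation lemma is the only step that goes beyond routine bookkeeping; the rest follows from applying the extended axioms on a slightly enlarged $\mu$-disk.
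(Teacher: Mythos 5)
Your proposal is correct, and the overall skeleton (eigenvectors of any element restrict to $\sigma_b$ on every $\mu$-disk via the saturated Araki--Lieb condition on a collared $\mu$-disk, then closure of the span) matches the paper. The first step is essentially identical to the paper's: both surround a $\mu$-disk $b$ by an annular collar $B$ inside a larger disk, invoke Proposition~\ref{Prop:axiom_large} to get $(S_{bB}+S_b-S_B)_{\sigma}=0$, and apply the implication $(1)\Rightarrow(3)$ of Lemma~\ref{Prop: the isolation technique} to conclude that every vector in the spectral decomposition reduces to $\sigma_b$. Where you genuinely diverge is the superposition-closure step. The paper avoids any cross-term computation by a change-of-basis trick: the uniform mixture $\frac{1}{N}\sum_i |i\rangle\langle i|$ over an orthonormal basis of the span lies in $\Sigma(M)$ and is invariant under any unitary $U$ acting on that span, so rerunning the eigenvector argument on the rotated decomposition $\frac{1}{N}\sum_i U|i\rangle\langle i|U^{\dagger}$ immediately shows $\Tr_{M\setminus b}(U|i\rangle\langle i|U^{\dagger})=\sigma_b$ for every such $U$. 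You instead prove the off-diagonal identity $\Tr_{M\setminus b}(|\phi_1\rangle\langle\phi_2|)=\langle\phi_2|\phi_1\rangle\,\sigma_b$ directly, combining Uhlmann's theorem (Eq.~(\ref{eq:U_A})) with condition $(4)$ of Lemma~\ref{Prop: the isolation technique}; I checked the Schmidt-basis computation and it closes correctly, since the Schmidt vectors of $|\phi_1\rangle$ on $bB$ diagonalize $\sigma_{bB}$ and condition $(4)$ kills exactly the off-diagonal partial traces. Your route is more computational but makes the mechanism explicit (and mirrors the technique the paper itself uses in the proof of Theorem~\ref{Thm:}); the paper's route is shorter and needs only condition $(3)$ of the isolation lemma. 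Both are valid proofs of the same statement.
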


\begin{proof}
Because $\Sigma(M)$ is a subset of the state space of a finite-dimensional Hilbert space, 
one can represent an element $\rho \in \Sigma(M)$ as 
\begin{equation*}
    \rho = \sum_i p_i |i\rangle\langle i|
\end{equation*}
for some $p_i>0$ and an orthonormal set of vectors $\{|i\rangle \}$. The state $\rho$ is consistent with $\sigma$ on any disk. This fact follows from Proposition~\ref{Prop:disk_in_Omega}.

We show below that the reduced density matrices of $|i\rangle\langle i|$ on the $\mu$-disks are equal to those obtained from the reference state $\sigma$. Without loss of generality, consider a $\mu$-disk, say $c$. Consider an annulus that surrounds $c$, which we denote as $b$. This annulus is chosen so that $bc$ is again a disk. By Proposition~\ref{Prop:axiom_large}, $(S_{bc} + S_c - S_b)_{\rho}=0$. By Lemma~\ref{Prop: the isolation technique} $(1)\Rightarrow(3)$, for $\rho_{bc}=\sigma_{bc} = \sum_i p_i \Tr_{M\setminus bc}(|i\rangle\langle i|)$, we find that
\begin{equation}
    \sigma_c = \Tr_b(\Tr_{M\setminus bc}(|i\rangle\langle i|)).
\end{equation}
Therefore, we conclude
\begin{equation}
    \Tr_{M\setminus c}(|i\rangle \langle i|) = \sigma_c \label{eq:bc_1}
\end{equation}
for all $\vert i\rangle$.

Now we show that any state in the span of $\{|i\rangle \}$ lies in $\Sigma(M)$. Let $|\{|i\rangle \}|=N$. Note that $\frac{1}{N} \sum_{i} |i\rangle \langle i|\in \Sigma(M)$. Moreover, this state is equal to a uniform mixture over $\{ U\vert i\rangle \}$ for any unitary $U$ acting on the span of $\{ |i\rangle \}$. 
Therefore, for any such $U$, we find
\begin{equation}
    \Tr_{M\setminus c}(U|i\rangle \langle i|U^{\dagger}) = \sigma_c
\end{equation}
for all $\vert i\rangle$. This follows from the same logic that leads to Eq.~(\ref{eq:bc_1}). The fact that $\Sigma(M)$ is nonempty follows from the existence of $\sigma \in \Sigma(M)$. This completes the proof.
\end{proof}

\section{Elementary step of the isomorphism theorem} \label{sec:facts_isomorphism_theorem}
Our goal here is to prove the elementary step of the isomorphism theorem; see Proposition~\ref{Prop: iso_ABCD}. The proof involves an alternative formulation of the information convex set, which we denote as $\hat{\Sigma}(\Omega)$. Under Axiom \ref{as:A0'} and \ref{as:A1}, $\Sigma(\Omega)$ becomes equivalent to $\hat{\Sigma}(\Omega).$ However, generally, we do not expect these two sets to be equivalent.

\subsection{An alternative formulation of the information convex set}\label{sec:info_convex_alternative}
The set $\hat{\Sigma}(\Omega)$, which we define in Definition~\ref{def: sigma_hat}, enjoys a number of properties which are not evident from the definition of $\Sigma(\Omega)$. We use these facts to prove Proposition~\ref{Prop: iso_ABCD}. The entire proof is admittedly lengthy and circuitous. A more succinct proof is left for the readers to work on.

The key difference between the definition of $\hat{\Sigma}(\Omega)$ and $\Sigma(\Omega)$ is that the latter involves an extended subsystem\footnote{In the main text, we referred to the extended subsystems as $\Omega'$.} whereas the former does not. 
\begin{definition} \label{def: sigma_hat}
	Let $\hat{\Sigma}(\Omega)$ be a set such that $\forall \rho \in \hat{\Sigma}(\Omega)$
	\begin{enumerate}
		\item $\rho\overset{c}{=} \sigma_{b}$  $\forall \, \sigma_{b}\in \mu$.
		\item $I(A:C)_{\rho}=0$ for the partition in Fig.~\ref{AC_hat}.
		\item $I(A:C\vert B)_{\rho}=0$ for the partitions in Fig.~\ref{ABC_hat}.
	\end{enumerate}
\end{definition}

\begin{figure}[h]
	\centering
\includegraphics[scale=1.1]{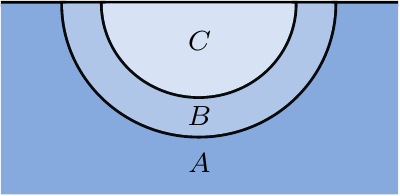}
	\caption{A partition of the subsystem $\Omega$ for defining $\hat{\Sigma}(\Omega)$; see the second condition of Definition~\ref{def: sigma_hat}. Let $\Omega = ABC$ where $BC$ is a subsystem contained in a $\mu$-disk. The horizontal line is the boundary of $\Omega$. Only part of $A$ is shown for illustration purposes.}
	\label{AC_hat}
\end{figure}

\begin{figure}[h]
	\centering
\includegraphics[scale=1.1]{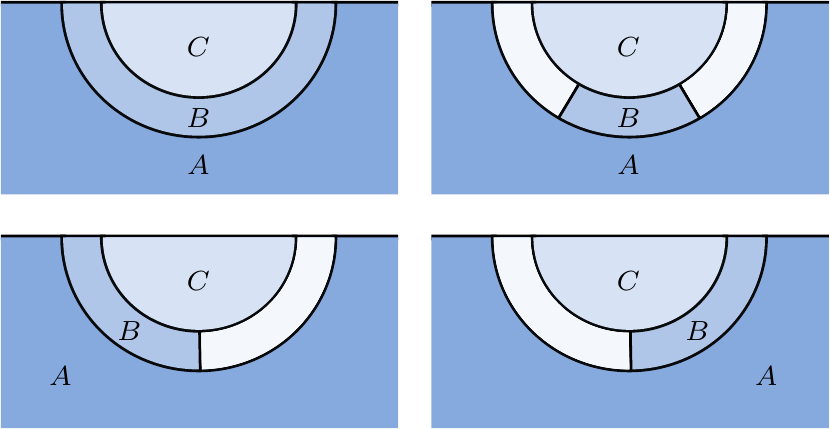}
	\caption{A partition of the subsystem $\Omega$ for defining $\hat{\Sigma}(\Omega)$; see the third condition of Definition~\ref{def: sigma_hat}. Let $\Omega \supseteq ABC$ where $BC$ is a  subsystem contained in a $\mu$-disk. The horizontal line is the boundary of $\Omega$. Only part of $A$ is shown for illustration purposes.}
	\label{ABC_hat}
\end{figure}

This definition does not make it clear why $\hat{\Sigma}(\Omega)$ is convex. The following proposition establishes this fact.
\begin{Proposition}
	$\hat{\Sigma}(\Omega)$ is convex. 
\end{Proposition}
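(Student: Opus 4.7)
The plan is to verify each of the three defining conditions of $\hat{\Sigma}(\Omega)$ separately for a convex combination $\rho = p\rho_1 + (1-p)\rho_2$ with $\rho_1, \rho_2 \in \hat{\Sigma}(\Omega)$ and $p \in [0,1]$. The first condition is essentially automatic: it says $\rho \overset{c}{=} \sigma_b$ on every $\mu$-disk $b$, which is a family of linear constraints on reduced density matrices, so taking a partial trace and then a convex combination commutes with the consistency constraint.

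The crucial observation that powers the rest of the argument is that condition $1$ rigidly fixes the marginals of any $\rho \in \hat{\Sigma}(\Omega)$ on every subsystem that is contained in some $\mu$-disk. In particular, in the partitions of Fig.~\ref{AC_hat} and Fig.~\ref{ABC_hat}, the subsystems $C$, $B$, and $BC$ all sit inside a $\mu$-disk, so $\rho_{1,B} = \rho_{2,B} = \sigma_B$, $\rho_{1,C} = \rho_{2,C} = \sigma_C$, and $\rho_{1,BC} = \rho_{2,BC} = \sigma_{BC}$. I will use this pinning to upgrade each entropic constraint from a nonlinear condition on $\rho_i$ individually to a linear structural condition that survives convex combination.

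For condition $2$, the vanishing $I(A{:}C)_{\rho_i} = 0$ together with $\rho_{i,C} = \sigma_C$ forces $\rho_{i,AC} = \rho_{i,A} \otimes \sigma_C$. Then
\begin{equation}
\rho_{AC} \;=\; p\,\rho_{1,A}\otimes\sigma_C + (1-p)\,\rho_{2,A}\otimes\sigma_C \;=\; \rho_A \otimes \sigma_C \;=\; \rho_A \otimes \rho_C,
\end{equation}
which is precisely $I(A{:}C)_\rho = 0$. For condition $3$, I invoke the Petz recovery criterion (Lemma~\ref{lemma_Petz}): since $\rho_{i,BC} = \sigma_{BC}$ and $I(A{:}C|B)_{\rho_i}=0$, we have $\rho_{i,ABC} = \calE^{\sigma}_{B\to BC}(\rho_{i,AB})$, where the recovery map $\calE^{\sigma}_{B\to BC}$ depends only on $\sigma_{BC}$ and is therefore common to both $\rho_1$ and $\rho_2$. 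The Petz map is linear, hence
\begin{equation}
\rho_{ABC} \;=\; p\,\calE^{\sigma}_{B\to BC}(\rho_{1,AB}) + (1-p)\,\calE^{\sigma}_{B\to BC}(\rho_{2,AB}) \;=\; \calE^{\sigma}_{B\to BC}(\rho_{AB}),
\end{equation}
and applying the ``if'' direction of Lemma~\ref{lemma_Petz} gives $I(A{:}C|B)_\rho = 0$.

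The only mild subtlety—and what I anticipate as the main point to be careful about—is ensuring that the ``pinned'' subsystems $B$, $C$, $BC$ are genuinely contained in a $\mu$-disk for every partition considered in conditions $2$ and $3$. This is built into Definition~\ref{def: sigma_hat} by construction, so no extra work beyond a remark is needed. Since the three defining conditions are each preserved, $\rho \in \hat{\Sigma}(\Omega)$ and convexity follows.
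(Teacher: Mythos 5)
Your proposal is correct. Conditions 1 and 2 are handled exactly as in the paper: the consistency constraint is linear, and for condition 2 the pinning $\rho_{i,C}=\sigma_C$ (since $C$ sits in a $\mu$-disk) makes the product structure over $AC$ survive the convex combination. For condition 3, however, you take a genuinely different route. The paper argues entropically: it writes $I(A{:}C|B) = (S_{BC}-S_B) + (S_{AB}-S_{ABC})$, notes that the first bracket is pinned to a common value because $BC$ lies in a $\mu$-disk, and applies concavity of the conditional entropy (Eq.~(\ref{eq: SSA_combination})) to the second bracket, so that the conditional mutual information of the mixture is bounded above by the mixture of the (vanishing) conditional mutual informations, with SSA supplying the matching lower bound. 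You instead use the structural characterization of Markov states via Lemma~\ref{lemma_Petz}: since $\rho_{1,BC}=\rho_{2,BC}=\sigma_{BC}$, both states are recovered by the \emph{same} Petz map $\calE^{\sigma}_{B\to BC}$, and linearity of that map transfers the recovery property to the convex combination; because the mixture also has $BC$-marginal $\sigma_{BC}$, the map $\calE^{\sigma}_{B\to BC}$ coincides with the mixture's own Petz map, so the ``if'' direction of the lemma applies. Both arguments are valid. Your version is slightly stronger in output (it exhibits an explicit common recovery map, in the spirit of the merging machinery used throughout the paper), while the paper's version is more elementary in its inputs, needing only SSA and concavity rather than the full Petz characterization.
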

\begin{proof}
Let $\rho_{\Omega}, \lambda_{\Omega}\in \hat{\Sigma}(\Omega)$. We wish to show that their convex combination also satisfies the three conditions in Definition~\ref{def: sigma_hat}. Without loss of generality, consider the convex combination  $p\rho_{\Omega} + (1-p)\lambda_{\Omega}$ where $p\in [0,1]$. 

The first condition in Definition~\ref{def: sigma_hat} is trivially true. For the second condition, note that  $\rho_{AC}= \rho_{A}\otimes \rho_{C}$ and $\lambda_{AC}= \lambda_{A}\otimes \lambda_{C}$. Moreover, because $C$ is in a $\mu$-disk, $\rho_C = \lambda_C = \sigma_C$. Therefore, any convex combination of $\rho$ and $\lambda$ is also factorized over $AC$. Lastly, for the third condition, let us consider the conditional mutual information of the convex combination. 
\begin{equation}
\begin{aligned}
	I(A:C\vert B)_{p\rho + (1-p)\lambda} &=(S_{BC}-S_B)_{p\rho + (1-p)\lambda} + (S_{AB}-S_{ABC})_{p\rho + (1-p)\lambda}\\
	&\le p(S_{BC}-S_B)_{\rho} + (1-p)(S_{BC}-S_B)_{\lambda} \\
	&\quad+ p(S_{AB}-S_{ABC})_{\rho} + (1-p)(S_{AB}-S_{ABC})_{\lambda}\\
	&= p\,I(A:C\vert B)_{\rho} +(1-p)I(A:C\vert B)_{\lambda}\\
	&=0.
\end{aligned}
\end{equation}
To derive this bound, we used the following two facts. First, $BC$ is in a $\mu$-disk. Therefore, $S_{BC} - S_B$ is the same for $\rho, \lambda$, as well as their convex combinations. Second, conditional entropy is concave; see Eq.(\ref{eq: SSA_combination}).

Therefore, $I(A:C\vert B)=0$ for the state $p\rho + (1-p)\lambda$, for any $p\in [0, 1]$. This completes the proof.
\end{proof}

\begin{figure}[h]
	\centering
	\includegraphics[scale=1.2]{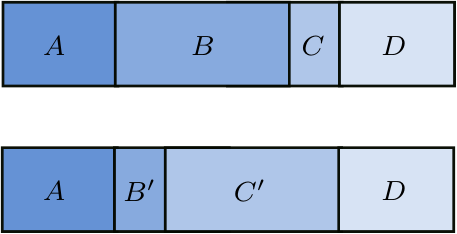}
	\caption{ A schematic depiction of subsystem $ABCD$. The partition $B'C'=BC$ is chosen such that no $\mu$-disk overlaps with both $AB'$ and $CD$. Note that, the subsystems $A,B,C,D$ are allowed to take a variety of topologies.}
	\label{fig: divide BC}
\end{figure}

In a variety of circumstances, elements in $\hat{\Sigma}(ABC)$ and $\hat{\Sigma}(BCD)$ can be merged, and the merging result is an element of $\hat{\Sigma}(ABCD)$. This is the content of the following proposition. 

\begin{Proposition}\label{Prop:mergin_hat}
	Consider two density matrices  $\rho_{ABC}\in \hat{\Sigma}(ABC)$ and  $\lambda_{BCD}\in \hat{\Sigma}(BCD)$. If the following conditions hold, $\rho_{ABC}$ and $\lambda_{BCD}$ can be merged. Moreover, the resulting density matrix is an element of $\hat{\Sigma}(ABCD)$.
	\begin{enumerate}
		\item There exists a partition $B'C'=BC$, such that no $\mu$-disk overlaps with both $AB'$ and $CD$; see Fig.~\ref{fig: divide BC}.
		\item $\rho\overset{c}{=}\lambda$.
		\item $I(A:C\vert B)_{\rho}= I(B:D\vert C)_{\lambda}=0$.
		\item $I(A:C'\vert B')_{\rho}= I(B':D\vert C')_{\lambda}=0$.
	\end{enumerate} 
\end{Proposition}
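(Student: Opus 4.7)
The plan is to first invoke the Merging Lemma (Lemma~\ref{Prop: merging technique}) to produce a candidate merged state $\tau_{ABCD}$, and then to verify that this $\tau$ satisfies the three conditions defining $\hat{\Sigma}(ABCD)$ in Definition~\ref{def: sigma_hat}.

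Concretely, I would apply Lemma~\ref{Prop: merging technique} with the $B|C$ partition, using conditions (2) and (3) of the proposition, to obtain the unique merged state $\tau_{ABCD} = \mathcal{E}^{\lambda}_{C\to CD}(\rho_{ABC})$ with $\tau_{ABC}=\rho$, $\tau_{BCD}=\lambda$, $I(A:CD|B)_\tau = 0$, and $I(AB:D|C)_\tau = 0$. Next I would promote $\tau$ so that it also carries the analogous Markov properties for the $B'|C'$ partition: by the chain rule, $I(A:C'D|B')_\tau = I(A:C'|B')_\tau + I(A:D|B'C')_\tau$; the first summand equals $I(A:C'|B')_\rho = 0$ by condition (4), while the second equals $I(A:D|BC)_\tau$ (since $B'C'=BC$), which vanishes from $0 = I(A:CD|B)_\tau = I(A:C|B)_\tau + I(A:D|BC)_\tau$ together with $I(A:C|B)_\tau = I(A:C|B)_\rho = 0$. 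The companion identity $I(AB':D|C')_\tau = 0$ follows symmetrically.

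With both Markov structures in hand, I verify the three conditions of Definition~\ref{def: sigma_hat}. Condition (1) is immediate from condition (1) of the proposition: every $\mu$-disk in $ABCD$ lies entirely in $AB\subseteq ABC$ or in $C'D\subseteq BCD$, so its $\tau$-marginal inherits from $\rho$ or $\lambda$ and hence equals $\sigma$. For conditions (2) and (3) I would split into two cases according to whether the boundary $\mu$-disk lies in $AB$ or in $C'D$. In the first case, a partition $YZ$ inside the $\mu$-disk satisfies $Z\cap C = \emptyset$; the chain-rule decomposition $I(ABCD\setminus YZ : Z|Y)_\tau = I(ABC\setminus YZ : Z|Y)_\tau + I(D:Z\mid ABC\setminus Z)_\tau$ reduces the first summand to a property $\rho$ already enjoys, while the second vanishes by a short computation that uses $I(AB:D|C)_\tau = 0$ and $Z\cap C=\emptyset$ to collapse both $I(D:ABC)_\tau$ and $I(D:ABC\setminus Z)_\tau$ to $I(D:C)_\tau$; condition (2) is handled identically without conditioning on $Y$. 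The second case is symmetric, using $\lambda$'s conditions, $Z\cap B'=\emptyset$, and the $B'|C'$ Markov properties of $\tau$ derived above.

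The main obstacle is this final case analysis, and in particular arranging the chain-rule decomposition so that no conditioning region is forced to straddle the interface between $ABC$ and $BCD$. This is exactly the role of condition (1) and of the auxiliary partition $B'|C'$: condition (1) guarantees that every boundary $\mu$-disk sits cleanly on one side of the interface, while the twin Markov properties from conditions (3)--(4) supply, on each side, the decomposition that the chain rule needs. Without the ability to switch partitions, a boundary $\mu$-disk lying near the interface would force the accounting through a conditioning region meeting both $\rho$'s and $\lambda$'s domains, and neither the $B|C$ nor the $B'|C'$ Markov property alone would close the argument.
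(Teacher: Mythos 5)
Your proof is correct and follows the paper's overall strategy: produce the merged state via Lemma~\ref{Prop: merging technique}, then verify the three conditions of Definition~\ref{def: sigma_hat}, with condition (1) of the proposition guaranteeing that every boundary $\mu$-disk sits cleanly on one side of the interface. You diverge from the paper in two technical steps, and both of your alternatives are legitimate. First, to equip $\tau$ with the $B'|C'$ Markov structure, the paper constructs the second merged state $\mathcal{E}^{\rho}_{B'\to AB'}(\lambda_{BCD})$ and identifies it with $\tau=\mathcal{E}^{\lambda}_{C\to CD}(\rho_{ABC})$ via the uniqueness statement of Lemma~\ref{lemma_growth}; you instead extract $I(A:C'D|B')_\tau = I(AB':D|C')_\tau = 0$ directly from the chain rule, conditions (3)--(4), and $I(A:D|BC)_\tau=0$. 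Second, for conditions (2) and (3) of Definition~\ref{def: sigma_hat}, the paper exploits the two channel representations of $\tau$: for each boundary $\mu$-disk it picks the representation whose channel does not touch that disk and invokes monotonicity of (conditional) mutual information under local CPTP maps, a one-line bound; your chain-rule decomposition plus the squeeze $I(D:C)_\tau\le I(D:ABC\setminus Z)_\tau\le I(D:ABC)_\tau=I(D:C)_\tau$ reaches the same conclusion by a purely entropic computation. The paper's route is slightly more compact; yours makes explicit which Markov identity closes each case and never needs the second Petz map. One cosmetic point: the dichotomy in your verification of condition (1) should be stated for the \emph{intersection} of a $\mu$-disk with $ABCD$ (contained in $AB$ or in $C'D$), since the disk itself need not lie inside $ABCD$; the same rephrasing applies to your case split for conditions (2) and (3).
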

\begin{proof}
It follows from the conditions of the proposition that $\rho_{ABC}$ and $\lambda_{BCD}$ can be merged. In fact, there are two different ways to merge these density matrices. Using the third condition, we get $\tau_{ABCD}\equiv \mathcal{E}^{\lambda}_{C\to CD}(\rho_{ABC}) $ and using the fourth condition, we get $\tau'_{ABCD}\equiv \mathcal{E}^{\rho }_{B'\to AB'}(\lambda_{BCD})$. 

Note that $\tau_{ABCD}=\tau'_{ABCD}$. This is because both of them satisfy $I(A:D\vert BC)$ and they have the same marginal on $ABC$ and $BCD$. Therefore, Lemma~\ref{lemma_growth} implies that $\tau_{ABCD}=\tau'_{ABCD}$.

Below, we show that $\tau_{ABCD}$ is an element of $\hat{\Sigma}(ABCD)$. In order to prove this claim, we need to show that $\tau_{ABCD}$ satisfies the condition 1, 2, and 3 in Definition~\ref{def: sigma_hat}. Condition 1 is easy to check. Because no $\mu$-disk overlaps with both $AB'$ and $CD$, the overlap between a $\mu$-disk and $ABCD$ is either contained in $ABC$ or $BCD$. 

For the second and the third condition, the key observation is that we have the freedom to choose the quantum channel:
	\begin{equation}
	\tau_{ABCD}\equiv \mathcal{E}^{\lambda}_{C\to CD}(\rho_{ABC}) =\mathcal{E}^{\rho }_{B'\to AB'}(\lambda_{BCD}).
	\end{equation}
For every $\mu$-disk near the boundary of subsystem $ABCD$, we could pick a suitable quantum channel, (either $\mathcal{E}^{\lambda}_{C\to CD}$ or $\mathcal{E}^{\rho}_{B'\to AB'}$), which has no overlap with the $\mu$-disk. 

For the second condition, we use the following fact. Suppose, without loss of generality, we are given a mutual information between two subsystems say $X$ and $Y$. The mutual information does not increase under a quantum channel acting only on either $X$ or $Y$. Indeed, the channels we described above are instances of such quantum channels. Therefore, the mutual information in the second condition is upper bounded by $0$. This subsequently implies that the second condition holds.

For the third condition, we use a similar fact. Now, suppose we are given a conditional mutual information $I(X:Y|Z)$. This also does not increase under a quantum channel acting only on either $X$ or $Y$. Therefore, with the same logic, the third condition holds as well.
\end{proof}

\begin{Proposition}\label{Prop: extension_0}
Consider $\Omega=ABC$ and $\Omega'=ABCD$ whose partitions are depicted in Fig.~\ref{fig:elementary step}, (note that $A$ and $D$ are assumed to be separated by at least $2r+1$).  Let $\calE^{\sigma}_{C\to CD}$ be the Petz map constructed from the reference state density matrix $\sigma_{BCD}$. Then, 
	\begin{eqnarray}
	\Tr_D\circ \calE^{\sigma}_{C\to CD}(\rho_{\Omega}) &=& \rho_{\Omega}, \quad \forall\, \rho_{\Omega}\in \hat{\Sigma}(\Omega),\label{eq:hat_1}\\
	\calE^{\sigma}_{C\to CD}(\rho_{\Omega}) &\in& \hat{\Sigma}(\Omega') ,\quad \forall\, \rho_{\Omega}\in \hat{\Sigma}(\Omega).\label{eq:hat_2}
	\end{eqnarray}
\end{Proposition}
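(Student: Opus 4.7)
I would prove Proposition~\ref{Prop: extension_0} by invoking the merging lemma (Lemma~\ref{Prop: merging technique}) on the pair $\rho_{ABC}\in\hat\Sigma(\Omega)$ and the reference marginal $\sigma_{BCD}$. If the preconditions hold, the merging lemma produces the unique state $\tau_{ABCD}\equiv \calE^\sigma_{C\to CD}(\rho_{ABC})$ with marginals $\tau_{ABC}=\rho_{ABC}$ and $\tau_{BCD}=\sigma_{BCD}$. Taking the partial trace over $D$ then immediately yields Eq.~\eqref{eq:hat_1}. Thus the work splits into two tasks: (i) verify the preconditions of the merging lemma, and (ii) show $\tau\in\hat\Sigma(\Omega')$ by checking the three conditions of Definition~\ref{def: sigma_hat} at the enlarged support.

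For (i), the three preconditions are (a) $\rho_{BC}=\sigma_{BC}$, (b) $I(A:C\vert B)_\rho=0$, and (c) $I(B:D\vert C)_\sigma=0$. Condition (c) follows from Proposition~\ref{Prop:axiom_large}: since $BCD$ is a disk with the topology of Fig.~\ref{A1}, the extended Axiom~\ref{as:A1} gives $S(\sigma_{BC})+S(\sigma_{CD})-S(\sigma_B)-S(\sigma_D)=0$, which combined with SSA forces $I(B:D\vert C)_\sigma=0$. To obtain (a), I would grow the region of agreement between $\rho$ and $\sigma$ outward from the $\mu$-disk containing $CD$ (where $\rho$ and $\sigma$ agree by condition 1 of Definition~\ref{def: sigma_hat}), using the conditional independences guaranteed by condition 3 of $\hat\Sigma(\Omega)$ and by Proposition~\ref{Prop:axiom_large} together with Lemma~\ref{lemma_growth}, in the same inductive manner as the proof of Proposition~\ref{Prop: structure_1}. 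Condition (b) is then derived by splitting $B$ into a thick outer slab $B_1$ and a thin inner layer $B_2$ with $B_2C$ fitting in a $\mu$-disk; condition 3 of $\hat\Sigma(\Omega)$ yields $I(AB_1:C\vert B_2)_\rho=0$, and the SSA chain rule gives $I(A:C\vert B)_\rho\le I(AB_1:C\vert B_2)_\rho=0$.

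For (ii), I would verify each defining condition of $\hat\Sigma(\Omega')$ for $\tau$. Consistency on $\mu$-disks: because $A$ and $D$ are separated by at least $2r+1$, every $\mu$-disk $b\subseteq\Omega'$ is entirely contained in $ABC$ or in $BCD$; in the first case $\tau_b=\rho_b=\sigma_b$ from $\rho\in\hat\Sigma(\Omega)$, and in the second case $\tau_b=\sigma_b$ from $\tau_{BCD}=\sigma_{BCD}$. The conditions on $I(A':C')_\tau$ and $I(A':C'\vert B')_\tau$ for partitions of $\Omega'$ near the new boundary are then case-analyzed: for partitions wholly in $ABC$ they are inherited from $\rho$; for partitions wholly in $BCD$ they follow from $\sigma$ via the extended axioms; for partitions that straddle the $\Omega$–$D$ interface, they follow by combining those with the global Markov properties $I(AB:D\vert C)_\tau=I(A:CD\vert B)_\tau=0$ supplied by the merging lemma's property (2), via the usual data-processing inequalities for conditional mutual information.

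The main obstacle is the growing argument that $\rho_{BC}=\sigma_{BC}$, since $B$ has thickness at least $2r+1$ and therefore cannot fit inside a single $\mu$-disk; the extension must proceed through a bounded number of layers of thickness $\le r$, and at each layer one must carefully pick a partition for which condition 3 of $\hat\Sigma(\Omega)$ supplies the vanishing conditional mutual information needed by Lemma~\ref{lemma_growth}. Once this growth argument is fully in place, the remaining verifications in (ii) reduce to the same sort of layered bookkeeping, with the merging-lemma output supplying any new conditional independences that cross the inserted subsystem $D$.
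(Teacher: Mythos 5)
Your overall route coincides with the paper's: Eq.~\eqref{eq:hat_1} via the merging lemma applied to $\rho_{ABC}$ and $\sigma_{BCD}$, and Eq.~\eqref{eq:hat_2} by checking the three conditions of Definition~\ref{def: sigma_hat} on $\Omega'$. The paper packages the latter as Proposition~\ref{Prop:mergin_hat}, exploiting that the merged state equals both $\calE^{\sigma}_{C\to CD}(\rho_{ABC})$ and $\calE^{\rho}_{B'\to AB'}(\sigma_{BCD})$ and choosing, for each boundary $\mu$-disk, the representation whose recovery channel does not touch it; your variant via the merged state's Markov properties $I(A:CD|B)_\tau=I(AB:D|C)_\tau=0$ and the chain rule for conditional mutual information is equally serviceable.

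There is, however, a genuine error in your verification of the precondition $I(B:D|C)_\sigma=0$. The identity $S(\sigma_{BC})+S(\sigma_{CD})-S(\sigma_B)-S(\sigma_D)=0$ does \emph{not} imply $I(B:D|C)_\sigma=0$ by SSA: the difference of the two quantities is $S_C+S_{BCD}-S_B-S_D$, which has no definite sign. Indeed, if $E$ purifies $\sigma_{BCD}$, then $S_{BC}+S_{CD}-S_B-S_D=I(C:E|B)$, a statement about correlations of $C$ with the purifying system, not about correlations between $B$ and $D$ across $C$; for instance, take $C$ trivial and $B,D$ classically correlated, so that the left-hand side vanishes while $I(B:D|C)=I(B:D)>0$. (Separately, the partition $B,C,D$ of Fig.~\ref{fig:elementary step} does not have the Fig.~\ref{A1} topology with $C$ as the core, so the extended axiom does not apply in the form you quote.) The correct derivation treats $D$ as the core: choose a region $D'$ \emph{outside} $\Omega'$ so that $C\cup D'$ forms an annulus surrounding $D$, with $CDD'$ contained in a disk to which Proposition~\ref{Prop:axiom_large} applies; then the SSA consequence $S_{CD}+S_{DD'}-S_C-S_{D'}\ge I(B:D|C)$ (valid because $B$ is disjoint from $CDD'$) together with the extended Axiom~\ref{as:A1} gives $I(B:D|C)_\sigma=0$. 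With this repair, your remaining steps --- the layer-by-layer growth giving $\rho_{BC}=\sigma_{BC}$, the bound $I(A:C|B)_\rho\le I(AB_1:C|B_2)_\rho=0$, and the case analysis for Definition~\ref{def: sigma_hat} on $\Omega'$ --- go through.
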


\begin{proof}
Eq.~(\ref{eq:hat_1}) follows directly from Lemma~\ref{Prop: merging technique}. To see why, let us observe that $\rho_{\Omega}$ and $\sigma_{BCD}$ can be merged using Lemma~\ref{Prop: merging technique}. With condition 1, 2, 3 in Definition~\ref{def: sigma_hat}, one could verify the two conditions required in Lemma~\ref{Prop: merging technique}.  First, $\rho_{\Omega}$ is consistent with the global state on any disk $\omega\subseteq \Omega$. In particular,  $\rho_{\Omega}\overset{c}{=}\sigma_{BCD}$. Second, $I(A:C\vert B)_{\rho}= I(B:D\vert C)_{\sigma}=0$.

Eq.~(\ref{eq:hat_2}) is essentially a corollary of Proposition~\ref{Prop:mergin_hat}. It is straightforward to construct the $B'C'$ required in Proposition~\ref{Prop:mergin_hat} and check all the conditions. The $2r+1$ separation between $A$ and $D$ is large enough for the construction. This completes the proof.
\end{proof}

\subsection{Equivalence of the definitions}
Now we can show that $\hat{\Sigma}(\Omega)$ is equivalent to $\Sigma(\Omega)$. This justifies our choice of calling $\hat{\Sigma}(\Omega)$ as the information convex set.
\begin{Proposition} \label{Prop: equivalent_info_convex}
	$	\Sigma(\Omega)= \hat{\Sigma}(\Omega),\,\forall\, \Omega$.
\end{Proposition}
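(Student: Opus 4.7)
The plan is to prove the two inclusions $\Sigma(\Omega) \subseteq \hat{\Sigma}(\Omega)$ and $\hat{\Sigma}(\Omega) \subseteq \Sigma(\Omega)$ separately, exploiting the observation that the thickening $\Omega'$ in Definition~\ref{def: info_convex} provides exactly the extra boundary room needed to convert the ``boundary-layer'' entropic conditions defining $\hat{\Sigma}$ into genuine applications of Axioms \ref{as:A0'} and \ref{as:A1} (extended to arbitrary disks via Proposition~\ref{Prop:axiom_large}).

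For the first inclusion, given $\rho_{\Omega}=\Tr_{\Omega'\backslash\Omega}\rho_{\Omega'}$ with $\rho_{\Omega'}\in\tilde\Sigma(\Omega')$, condition 1 of Definition~\ref{def: sigma_hat} is immediate because every $\mu$-disk in $\Omega$ is also in $\Omega'$. For conditions 2 and 3, the key observation is that the inductive enlargement argument of Proposition~\ref{Prop:axiom_large} uses only SSA plus the values of $S(\sigma_b)$ for $b$ a $\mu$-disk; since $\rho_{\Omega'}$ agrees with $\sigma$ on every $\mu$-disk in $\Omega'$, the very same argument applies verbatim to $\rho_{\Omega'}$. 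Hence $\rho_{\Omega'}$ satisfies Axioms~\ref{as:A0'} and \ref{as:A1} on every disk inside $\Omega'$. For a partition as in Fig.~\ref{AC_hat}, I would complete $BC$ to a disk $BCD$ using a sliver $D\subseteq\Omega'\backslash\Omega$ and invoke the extended Axiom~\ref{as:A0'} together with the SSA inequality $S_{BCD}+S_{CD}-S_B\ge I(A:CD)$; for a partition as in Fig.~\ref{ABC_hat}, I would complete $BCD$ with an outer sliver $D'\subseteq\Omega'\backslash\Omega$ and invoke the extended Axiom~\ref{as:A1} together with $S_{BC}+S_{CD'}-S_B-S_{D'}\ge I(A:C|B)$. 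Marginalizing down to $\Omega$ preserves the relevant mutual informations, verifying conditions 2 and 3.

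For the reverse inclusion, given $\rho_{\Omega}\in\hat{\Sigma}(\Omega)$, the task is to construct an extension $\rho_{\Omega'}\in\tilde\Sigma(\Omega')$ with $\Tr_{\Omega'\backslash\Omega}\rho_{\Omega'}=\rho_{\Omega}$. The plan is to iteratively grow $\rho_{\Omega}$ by appending one $\mu$-disk-sized collar region at a time using the reference-state Petz map. Concretely, I would fix a chain $\Omega=\Omega^{0}\subset\Omega^{1}\subset\cdots\subset\Omega^{N}=\Omega'$ in which each $\Omega^{k+1}=\Omega^{k}\sqcup D^{k}$ is an elementary extension of the type depicted in Fig.~\ref{fig:elementary step}, choosing the collars $D^{k}$ small enough that the $2r+1$ separation hypothesis of Proposition~\ref{Prop: extension_0} holds at every stage. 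Setting $\rho^{0}=\rho_{\Omega}$ and $\rho^{k+1}=\calE^{\sigma}_{C^{k}\to C^{k}D^{k}}(\rho^{k})$, Proposition~\ref{Prop: extension_0} (Eq.~\eqref{eq:hat_2}) guarantees $\rho^{k+1}\in\hat\Sigma(\Omega^{k+1})$, and Eq.~\eqref{eq:hat_1} guarantees $\Tr_{D^{k}}\rho^{k+1}=\rho^{k}$. After $N$ steps one obtains $\rho^{N}\in\hat\Sigma(\Omega')$ whose marginal on $\Omega$ equals $\rho_{\Omega}$. Since condition 1 of Definition~\ref{def: sigma_hat} is precisely the defining condition of $\tilde\Sigma(\Omega')$, we have $\hat\Sigma(\Omega')\subseteq\tilde\Sigma(\Omega')$, and therefore $\rho_{\Omega}\in\Sigma(\Omega)$.

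The main obstacle is geometric rather than analytic: one must be confident that the thickening is thick enough both to host the completion disks $BCD$ (resp.\ $BCDD'$) required in direction one and to support a chain of elementary extensions satisfying the $2r+1$ separation needed in direction two. Both reduce to ensuring that $\partial\Omega$ has thickness comparable to a few times the $\mu$-disk radius $r$, which is implicit in the Hamiltonian-independent definition of ``thickening'' used throughout the paper and is the reason the appendix phrases Proposition~\ref{Prop: extension_0} with that explicit separation bound.
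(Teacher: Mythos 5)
Your proof is correct and follows essentially the same route as the paper's: the inclusion $\hat{\Sigma}(\Omega)\subseteq\Sigma(\Omega)$ via iterated elementary extensions using Proposition~\ref{Prop: extension_0} is exactly the paper's argument, and the reverse inclusion, which the paper dismisses as ``easy to show from the assumptions,'' is the one you flesh out by completing boundary partitions with slivers in $\Omega'\backslash\Omega$ and applying the extended axioms (valid for $\rho_{\Omega'}$ since the enlargement argument of Proposition~\ref{Prop:axiom_large} only uses SSA and agreement with $\sigma$ on $\mu$-disks). The added detail is consistent with the paper's intent and fills a gap the authors left implicit.
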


\begin{proof}
	If $\Omega$ is a closed manifold, then it is obvious that $\Sigma(\Omega)= \hat{\Sigma}(\Omega)$. 
	If $\Omega$ has boundaries,  it is easy to show that $\Sigma(\Omega)\subseteq \hat{\Sigma}(\Omega)$ from the assumptions. 
	On the other hand, Proposition~\ref{Prop: extension_0} implies
	that any $\rho_{\Omega}\in \hat{\Sigma}(\Omega)$ can be written as $\rho_{\Omega}=\Tr_{\Omega_{\epsilon}\backslash \Omega } \,\rho_{\Omega_{\epsilon}}$ for some element $\rho_{\Omega_{\epsilon}} \in \hat{\Sigma}(\Omega_{\epsilon})$. This is because of
	Eq.~(\ref{eq:hat_2}) and that $\Omega$ and $\Omega_{\epsilon}$ are connected by a path which consists of a sequence of elementary extensions. 
	It follows that $\rho_{\Omega}\in \Sigma(\Omega)$ and therefore $\Sigma(\Omega) \supseteq \hat{\Sigma}(\Omega)$. Thus, $\Sigma(\Omega) = \hat{\Sigma}(\Omega)$, $\forall\,\Omega$.
	This completes the proof.		
\end{proof}

\subsection{Merging with $\Sigma(\Omega)$}

In a variety of circumstances, we can merge elements in multiple information convex sets into an element of yet another information convex set. This follows from Proposition~\ref{Prop:mergin_hat} and the established equivalence between $\Sigma(\Omega)$ and $\hat{\Sigma}(\Omega)$ (Proposition~\ref{Prop: equivalent_info_convex}).

\begin{Proposition}\label{Prop:merging_Sigma}
	Consider two density matrices  $\rho_{ABC}\in {\Sigma}(ABC)$ and  $\lambda_{BCD}\in {\Sigma}(BCD)$. If the following conditions hold, $\rho_{ABC}$ and $\lambda_{BCD}$ can be merged. Moreover, the resulting density matrix belongs to ${\Sigma(ABCD)}$.
	\begin{enumerate}
		\item There exists a partition $B'C'=BC$, such that no $\mu$-disk overlaps with both $AB'$ and $CD$; see Fig.~\ref{fig: divide BC}.
		\item $\rho\overset{c}{=}\lambda$.
		\item $I(A:C\vert B)_{\rho}= I(B:D\vert C)_{\lambda}=0$.
		\item $I(A:C'\vert B')_{\rho}= I(B':D\vert C')_{\lambda}=0$.
	\end{enumerate} 
\end{Proposition}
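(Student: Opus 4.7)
The plan is to deduce Proposition~\ref{Prop:merging_Sigma} as an essentially immediate corollary of the two results that precede it: the merging statement for $\hat{\Sigma}$ (Proposition~\ref{Prop:mergin_hat}) and the equivalence $\Sigma(\Omega)=\hat{\Sigma}(\Omega)$ (Proposition~\ref{Prop: equivalent_info_convex}). The strategy is to translate the hypotheses, invoke the previously proved merging result in the $\hat{\Sigma}$ formulation, and then translate back.

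First, I would note that the four hypotheses imposed on $\rho_{ABC}$ and $\lambda_{BCD}$ in Proposition~\ref{Prop:merging_Sigma} are literally the same as those in Proposition~\ref{Prop:mergin_hat}. Therefore no extra work is required at the level of checking the premises: once we interpret $\rho_{ABC}$ and $\lambda_{BCD}$ as elements of $\hat{\Sigma}(ABC)$ and $\hat{\Sigma}(BCD)$ respectively, the conditions 1--4 of Proposition~\ref{Prop:mergin_hat} are satisfied verbatim. The only step needed to make this interpretation legitimate is to invoke Proposition~\ref{Prop: equivalent_info_convex}, which guarantees $\Sigma(ABC)=\hat{\Sigma}(ABC)$ and $\Sigma(BCD)=\hat{\Sigma}(BCD)$.

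Next, I would apply Proposition~\ref{Prop:mergin_hat} to produce the merged state $\tau_{ABCD}=\calE^{\lambda}_{C\to CD}(\rho_{ABC})=\calE^{\rho}_{B'\to AB'}(\lambda_{BCD})$, which that proposition asserts lies in $\hat{\Sigma}(ABCD)$. A final appeal to Proposition~\ref{Prop: equivalent_info_convex}, now for the subsystem $ABCD$, converts this conclusion into the desired statement $\tau_{ABCD}\in \Sigma(ABCD)$, completing the proof.

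The main (and essentially only) obstacle in this argument was already handled in the previous two propositions: Proposition~\ref{Prop:mergin_hat} did the real analytic work of constructing the merged state and verifying that it obeys the three conditions in Definition~\ref{def: sigma_hat}, using the freedom to realize the merging via either Petz map in order to locally decouple each boundary $\mu$-disk from the region not carrying the channel; and Proposition~\ref{Prop: equivalent_info_convex} handled the nontrivial direction $\hat{\Sigma}\subseteq \Sigma$ via a sequence of elementary extensions. With both of these in hand, Proposition~\ref{Prop:merging_Sigma} is a one-line consequence; I do not anticipate any subtlety beyond carefully citing the two prior results.
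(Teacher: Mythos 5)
Your argument is exactly the paper's own proof: the paper dispatches Proposition~\ref{Prop:merging_Sigma} by citing Proposition~\ref{Prop:mergin_hat} together with the equivalence $\Sigma(\Omega)=\hat{\Sigma}(\Omega)$ from Proposition~\ref{Prop: equivalent_info_convex}, precisely as you do. Your version just spells out the translation steps more explicitly; there is no gap.
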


\begin{proof}
	The proof directly follows from Proposition~\ref{Prop:mergin_hat} and \ref{Prop: equivalent_info_convex}.
\end{proof}

\begin{remark}
	Importantly, Proposition~\ref{Prop:merging_Sigma} implies that the merged results in Fig.~\ref{Merging_change_patch_III}, \ref{Merging_a_b}, \ref{Merging_1_a}, \ref{Last_fusion_1}, \ref{Merging_3_to_4_III}, \ref{Merging_3_annulus_III}, \ref{Merging_change_1_2_III}  are elements of some information convex sets.
\end{remark}

A special case relevant to the proof of isomorphism theorem is the following corollary.  
\begin{corollary}\label{Prop:Merging_in_Sigma}
	The merging process in Fig.~\ref{fig:elementary step} generates a map from $\Sigma(\Omega)$ to $\Sigma(\Omega')$, i.e.
	\begin{equation}
	\calE^{\sigma}_{C\to CD}(\rho_{\Omega}) \in {\Sigma}(\Omega') ,\quad \forall\, \rho_{\Omega}\in {\Sigma}(\Omega),
	\end{equation}
	where $\calE^{\sigma}_{C\to CD}$ is the Petz map constructed from the reference state density matrix $\sigma_{BCD}$.
\end{corollary}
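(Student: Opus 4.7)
The plan is to invoke Proposition~\ref{Prop:merging_Sigma} with the identifications $(A,B,C,D)$ as in Fig.~\ref{fig:elementary step}, $\rho_{ABC}=\rho_{\Omega}\in\Sigma(\Omega)$, and $\lambda_{BCD}=\sigma_{BCD}$. Note that $\sigma_{BCD}\in\Sigma(BCD)$ tautologically (and in fact is its unique element by Proposition~\ref{Prop: structure_1}, since $BCD$ is a disk), so the corollary reduces to checking the four hypotheses of Proposition~\ref{Prop:merging_Sigma}.

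For hypothesis (1), I would choose the auxiliary splitting $BC = B'\sqcup C'$ by letting $C'$ consist of $C$ together with a shell of $B$ of width at most $r$ hugging $C$, and $B' = B\setminus C'$. The assumption that $A$ and $D$ are separated by at least $2r+1$ makes $B$ thick enough that the collar fits inside $B$ and the distance from $AB'$ to $CD$ still exceeds $r$; in particular no $\mu$-disk of radius $\leq r$ can meet both $AB'$ and $CD$. Hypothesis (2), the consistency $\rho_\Omega\overset{c}{=}\sigma_{BCD}$, is immediate: the overlap of their supports is $BC\subset BCD$, and since $BC$ is a disk-like subregion of $\Omega$, Proposition~\ref{Prop:disk_in_Omega} forces $\rho_\Omega$ to coincide with $\sigma$ on $BC$.

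For the $\sigma$-side identities $I(B:D|C)_\sigma=0$ and $I(B':D|C')_\sigma=0$ in hypotheses (3) and (4), both partitions $(B,C,D)$ and $(B',C',D)$ realize the A1 geometry — a disk split into a core $C$ (resp.\ $C'$) flanked by two outer pieces — inside the disk $BCD$, so Proposition~\ref{Prop:axiom_large} gives the vanishing directly. For the $\rho$-side identities $I(A:C|B)_\rho=0$ and $I(A:C'|B')_\rho=0$, I would exploit the equivalence $\Sigma(\Omega)=\hat{\Sigma}(\Omega)$ of Proposition~\ref{Prop: equivalent_info_convex}: condition (3) of Definition~\ref{def: sigma_hat} already provides vanishing conditional mutual information whenever the ``inner'' two pieces sit inside a $\mu$-disk. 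To upgrade to the actual $(B,C)$, I would carve a $\mu$-collar $\tilde{B}\subseteq B$ adjacent to $C$ with $\tilde{B}C$ contained in a $\mu$-disk, obtain $I\bigl(A(B\setminus\tilde{B}):C\mid \tilde{B}\bigr)_\rho=0$ from the hat-condition, and then apply the SSA-based monotonicity $I(A:C|B)_\rho\leq I\bigl(A(B\setminus\tilde{B}):C\mid \tilde{B}\bigr)_\rho$ to conclude. The same scheme, iterated in case $C'$ is larger than a single $\mu$-disk (carving successive $\mu$-collars inward), handles $I(A:C'|B')_\rho=0$.

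With all four hypotheses verified, Proposition~\ref{Prop:merging_Sigma} produces $\calE^\sigma_{C\to CD}(\rho_\Omega)\in\Sigma(\Omega')$, which is the claim. The only genuinely delicate step is the Markov-growth argument for the two $\rho$-side CMIs: one has to ensure that the collars $\tilde{B}$ really fit inside $B$ and $B'$ at every stage, and the iteration terminates once $C$ (resp.\ $C'$) is exhausted. The $2r+1$ buffer between $A$ and $D$ is exactly what makes this room available, and everything else in the proof is bookkeeping.
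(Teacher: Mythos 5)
Your overall route is the same as the paper's: the paper proves this corollary by citing Proposition~\ref{Prop: extension_0} together with the equivalence $\Sigma=\hat{\Sigma}$ (Proposition~\ref{Prop: equivalent_info_convex}), and Proposition~\ref{Prop: extension_0} is itself an application of the hat-version merging result to $\rho_\Omega$ and $\sigma_{BCD}$. Invoking Proposition~\ref{Prop:merging_Sigma} with $\lambda_{BCD}=\sigma_{BCD}$ is just the same machinery accessed through its already-$\Sigma$-ified sibling, and there is no circularity since Proposition~\ref{Prop:merging_Sigma} is established before this corollary. Your verification of hypothesis (2) via Proposition~\ref{Prop:disk_in_Omega} and of the $\rho$-side conditional mutual informations via the hat-definition plus SSA monotonicity is exactly right and matches what the paper does implicitly.

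However, two of your hypothesis checks do not go through as written. First, for hypothesis (1) you take a collar of width at most $r$ and argue that a separation exceeding $r$ prevents a $\mu$-disk of radius $\leq r$ from meeting both $AB'$ and $CD$. That is false: a ball of radius $r$ has diameter $2r$, so it can contain two points at graph distance up to $2r$; you need $d(AB',CD)\geq 2r+1$, hence a collar of width roughly $2r$. (This is repairable, and is in fact why the figure demands a $2r+1$ buffer rather than an $r+1$ buffer, but your stated construction does not achieve it.) Second, your justification of $I(B:D|C)_\sigma=0$ misidentifies the relevant entropic identity. If $(B,C,D)$ were literally the Axiom~\ref{as:A1} configuration with core $C$, the vanishing quantity would be $S_{BC}+S_{CD}-S_B-S_D$, which is \emph{not} $I(B:D|C)=S_{BC}+S_{CD}-S_C-S_{BCD}$; and for the actual layered geometry of Fig.~\ref{fig:elementary step} (where $C$ separates $B$ from $D$), the combination $S_{BC}+S_{CD}-S_B-S_D$ does not vanish at all. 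The correct derivation is the one the paper gives below Eq.~\eqref{eq:vanishCMI}: bound $I(B:D|C)\leq I(W:D|C)$ with $W$ the full complement of $CD$, and then kill the right-hand side using an \ref{as:A1} configuration in which $D$ (not $C$) plays the role of the core and $C$ is one of the flanking boundary pieces, via Proposition~\ref{Prop:axiom_large}. The conclusion you need is true, but not ``directly'' for the reason you cite.
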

\begin{proof}
	The proof follows from Proposition~\ref{Prop: extension_0} and \ref{Prop: equivalent_info_convex}.
\end{proof}

\subsection{Proof of Proposition~\ref{Prop: iso_ABCD}}
Now, we are in a position to prove Proposition~\ref{Prop: iso_ABCD}.
\isoabcd*

\begin{proof}
	We have shown that $\Tr_{D}$ is a linear map which maps elements of $\Sigma(\Omega')$ into elements of $\Sigma(\Omega)$ (Proposition~\ref{Prop.Info_convex_basic}). 
	Conversely, upon applying $\calE^\sigma_{C\to CD}$ to $\rho_{\Omega}$ we obtain a merged state $\tau^\rho_{ABCD}$ of $\rho_{\Omega}$ and $\sigma_{BCD}$. (Here $\sigma_{BCD}$ is the reduced density matrix of the reference state.) We can merge them using the merging lemma (Lemma~\ref{Prop: merging technique}) because (1) $\rho_{BC}=\sigma_{BC}$ follows from Proposition~\ref{Prop:disk_in_Omega} and (2) the requisite conditional independence conditions follow from \ref{as:A1}. 
	While the merging lemma guarantees the existence of $\tau^\rho_{\Omega'}$, it remains to show that $\tau^\rho_{\Omega'}$ is an element of $\Sigma(\Omega')$. This fact follows from Corollary~\ref{Prop:Merging_in_Sigma}.
	This step requires that $BC$ is large enough, or equivalently, $A$ and $D$ are separated by enough distance.

Now, it remains to prove Eqs.~\eqref{eq:iso_1} and \eqref{eq:iso_2}. Eq.~\eqref{eq:iso_1} holds because the merged state is consistent with the given marginals; see Eq.~\eqref{eq:marginal}. Eq.~(\ref{eq:iso_2}) follows from the fact that the state on both sides of the equation obey $I(AB:D\vert C)=0$ and that they have the same reduced density matrices over $ABC$ and $CD$; they are equal to $\rho_{ABC}$ and $\sigma_{CD}$. According to Lemma~\ref{lemma_growth}, the two global states must be identical. 
\end{proof}

\section{Extreme points}\label{sec:extreme_points}
In this section, we prove various properties of the extreme points of the information convex set. Throughout this section, we shall often consider a slight ``thickening" of a subsystem. Like the convention we used in the main text, a thickening of a subsystem $\Omega$ is an enlarged subsystem $\Omega'$ which is obtained by expanding the boundaries of $\Omega$. 

If the boundary of $\Omega$ is expanded by a thickness of $\delta$, we shall refer to that subsystem as $\Omega_{\delta}$. For the ensuing analysis, it will be convenient to consider a length scale $\epsilon$, which is comparable to a single lattice spacing for the convention used in the main text. 

Let us begin with the following lemma.
\begin{lemma}
	\label{C2}
	Suppose $\rho_{\Omega_{2\epsilon}}\in \Sigma(\Omega_{2\epsilon})$ can be written as 
	\begin{equation}
	\rho_{\Omega_{2\epsilon}}=\sum_i q_i\, \rho^i_{\Omega_{2\epsilon}}, \label{eq:basis}
	\end{equation}
	where $\{q_i\}$ is a probability distribution with $q_i>0,\, \forall i$ and $\{ \rho^i_{\Omega_{2\epsilon}} \}$ is a set of density matrices. Then,
	\begin{equation}
	\Tr_{\Omega_{2\epsilon}\backslash \Omega}  \, \rho^i_{\Omega_{2\epsilon}} \in \Sigma(\Omega). \label{eq: iso_2_epsilon}
	\end{equation}
\end{lemma}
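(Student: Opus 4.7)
The plan is to exploit the saturation of the Araki--Lieb inequality (Lemma~\ref{Prop: the isolation technique}) to propagate local data from the full state $\rho_{\Omega_{2\epsilon}}$ to each individual component $\rho^i_{\Omega_{2\epsilon}}$, and then to invoke the definition of $\Sigma(\Omega)$ with $\Omega_\epsilon$ playing the role of the thickening in Definition~\ref{def: info_convex}.

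Concretely, I would proceed as follows. Fix any $\mu$-disk $b\subseteq \Omega_\epsilon$ and surround it with an annular collar $B$ of width $\epsilon$ so that $bB\subseteq \Omega_{2\epsilon}$ is a disk with $b$ as its core. The extension of the axioms (Proposition~\ref{Prop:axiom_large}) applied to this disk gives Araki--Lieb saturation $(S_{bB}+S_b-S_B)_\sigma=0$, and Proposition~\ref{Prop:disk_in_Omega} forces $(\rho_{\Omega_{2\epsilon}})_{bB}=\sigma_{bB}$. Marginalising the convex decomposition (\ref{eq:basis}) onto $bB$ therefore gives
\[
\sigma_{bB}=\sum_i q_i\,(\rho^i_{\Omega_{2\epsilon}})_{bB},\qquad q_i>0.
\]
Applying the implication (1)$\Rightarrow$(3) of Lemma~\ref{Prop: the isolation technique} to this decomposition (with the annulus $B$ playing the role of the ``boundary'' and the core $b$ playing the role of the ``core'') yields $\Tr_B(\rho^i_{\Omega_{2\epsilon}})_{bB}=\sigma_b$ for every $i$.

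Since $b\subseteq \Omega_\epsilon$ was an arbitrary $\mu$-disk, the reduced states $\rho^i_{\Omega_\epsilon}:=\Tr_{\Omega_{2\epsilon}\setminus\Omega_\epsilon}\,\rho^i_{\Omega_{2\epsilon}}$ agree with $\sigma$ on every $\mu$-disk contained in $\Omega_\epsilon$, and hence $\rho^i_{\Omega_\epsilon}\in\tilde{\Sigma}(\Omega_\epsilon)$. Treating $\Omega_\epsilon$ as the thickening of $\Omega$ in Definition~\ref{def: info_convex}, the further partial trace
\[
\Tr_{\Omega_\epsilon\setminus\Omega}\,\rho^i_{\Omega_\epsilon}\;=\;\Tr_{\Omega_{2\epsilon}\setminus\Omega}\,\rho^i_{\Omega_{2\epsilon}}
\]
lies in $\Sigma(\Omega)$, which is exactly the assertion (\ref{eq: iso_2_epsilon}).

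The only geometric point I need to be careful about is verifying that a collar $B$ of width $\epsilon$ fits around every $\mu$-disk $b\subseteq\Omega_\epsilon$ while remaining inside $\Omega_{2\epsilon}$; this is precisely what the extra layer of thickening (hence the ``$2\epsilon$'' in the hypothesis) supplies, and it is the only place where the stronger assumption $\rho_{\Omega_{2\epsilon}}\in\Sigma(\Omega_{2\epsilon})$ (as opposed to $\rho_{\Omega_\epsilon}\in\Sigma(\Omega_\epsilon)$) is used. Everything else is routine bookkeeping within the framework developed in this appendix.
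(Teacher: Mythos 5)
Your argument is correct and follows essentially the same route as the paper's own proof: surround each $\mu$-disk $b\subseteq\Omega_\epsilon$ by a collar inside $\Omega_{2\epsilon}$, use Proposition~\ref{Prop:axiom_large} and Proposition~\ref{Prop:disk_in_Omega} to get Araki--Lieb saturation for the reference state on that disk, and then apply the implication $(1)\Rightarrow(3)$ of Lemma~\ref{Prop: the isolation technique} to conclude $\Tr_B(\rho^i)_{bB}=\sigma_b$ for every $i$. Your final paragraph merely makes explicit the ``it suffices to show'' step that the paper leaves implicit, so there is nothing to add.
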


\begin{proof}
It suffices to show that every $\rho^i_{\Omega_{2\epsilon}}$ reduces to $\sigma_{b}\in \mu$ on any $\mu$-disk $b\subseteq \Omega_{\epsilon}$. In order to show this fact, consider $b_{\epsilon} \subseteq \Omega_{2\epsilon}$. Let $C=b$ and choose $B$ such that $b_{\epsilon} =BC$. Then, the topology of $BC$ is identical to the one shown in Fig.~\ref{A0}. 

Because our axioms hold at a scale larger than the $\mu$-disks(Proposition~\ref{Prop:axiom_large}), and that $\rho_{BC}=\sigma_{BC}$(Proposition~\ref{Prop:disk_in_Omega}), we conclude that $(S_{BC}+S_{C}-S_{B})_{\rho}=0$.
Now, apply Lemma~\ref{Prop: the isolation technique} $(1)\Rightarrow(3)$. We conclude
	\begin{equation}
	\rho^i_b = \rho_b=\sigma_b     \nonumber
	\end{equation} 
for any $i$ and  any $\mu$-disk $b\in \Omega_{\epsilon}$. This completes the proof.
\end{proof}

The following lemma shows that any element in the information convex set has zero conditional mutual information for onion-like partitions; see Fig.~\ref{Growth_around_an_entanglement_cut}.
\begin{lemma}
	\label{Prop: QMS_around_an_entangle_cut}
Let	$\Omega=ABC$. Suppose  $B$ and $C$ are concentric annuli described in Fig.~\ref{Growth_around_an_entanglement_cut}. Then,
	\begin{equation}
	I(A:C\vert B)_{\rho}=0,\quad \forall \, \rho_{ABC}\in \Sigma(ABC). \label{eq: ex_entanglement_cut}
	\end{equation}
\end{lemma}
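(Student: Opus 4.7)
The plan is to build $\rho_{ABC}$ from $\rho_{AB}$ by successively attaching small disk-like arcs of $C$ via the elementary-step Petz recovery of Proposition~\ref{Prop: iso_ABCD}, and then to combine the resulting \emph{local} Markov conditions into the \emph{global} one $I(A:C|B)_\rho=0$ via the chain rule. The state $\rho_{AB}=\Tr_C\rho_{ABC}$ lies in $\Sigma(AB)$ by Proposition~\ref{Prop.Info_convex_basic}, which anchors the inductive construction.

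First I would decompose the concentric annulus $C$ into a finite chain of disk-like arcs $c_1,\ldots,c_n$ with $C=c_1\cup\cdots\cup c_n$, together with slabs $b_1,\ldots,b_n\subseteq B$ adjacent to the $c_i$ and ``outer'' slabs $b_i^{\mathrm{out}}\subseteq B$ placed between $b_i$ and the rest of $\Omega$. At step $i$ the patch $b_i^{\mathrm{out}}\,b_i\,c_i$ is chosen to instantiate the geometry of Fig.~\ref{fig:elementary step}, with $b_i c_i$ contained in a $\mu$-disk and $b_i^{\mathrm{out}}$ thick enough to separate $A$ from $c_i$ by at least $2r+1$. Since $B$ is a concentric annulus and its thickness is controllable (we may take $\Omega$'s thickening as large as needed), such a decomposition exists.

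Second, I would iteratively apply the Petz maps $\calE^{\sigma}_{b_i\to b_i c_i}$. By Proposition~\ref{Prop: iso_ABCD} and Corollary~\ref{Prop:Merging_in_Sigma}, each map sends $\Sigma(AB c_1\cdots c_{i-1})$ into $\Sigma(AB c_1\cdots c_i)$ and is inverse to $\Tr_{c_i}$, so
\begin{equation*}
\rho_{ABC}\;=\;\calE^{\sigma}_{b_n\to b_n c_n}\circ\cdots\circ \calE^{\sigma}_{b_1\to b_1 c_1}(\rho_{AB}).
\end{equation*}
By the Petz characterization (Lemma~\ref{lemma_Petz}), the state at every step satisfies the local Markov condition
\begin{equation*}
I\!\left(AB c_1\cdots c_{i-1}\setminus b_i\;:\;c_i\;\Big|\;b_i\right)_{\rho}=0.
\end{equation*}

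Third, I would convert each local condition into the form needed for the chain rule. From the direct-sum structure of quantum Markov states (Appendix~\ref{app:QIbasic}), the vanishing of $I(X:c_i|b_i)$ for $X\supseteq A\cup(B\setminus b_i)\cup c_1\cdots c_{i-1}$ implies $I(A:c_i\mid B c_1\cdots c_{i-1})_\rho=0$: conditioning on additional subsystems lying on the ``$A$-side'' of $b_i$ preserves conditional independence across $b_i$. The CMI chain rule then gives
\begin{equation*}
I(A:C\mid B)_{\rho}\;=\;\sum_{i=1}^{n} I\!\left(A:c_i\,\big|\,B c_1\cdots c_{i-1}\right)_{\rho}\;=\;0,
\end{equation*}
which is the claim.

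The main obstacle I expect is the geometric bookkeeping in the first step: partitioning $C$ into arcs and $B$ into the pair of slabs $b_i,b_i^{\mathrm{out}}$ so that each local patch exactly satisfies the separation hypothesis of Proposition~\ref{Prop: iso_ABCD}, while also ensuring that consecutive patches overlap consistently so that the iterated recovery really reconstructs $\rho_{ABC}$. This is standard once $B$ is sufficiently thick, but needs a careful treatment of the overlaps between adjacent slabs.
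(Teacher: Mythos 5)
Your proposal is correct and shares the skeleton of the paper's own proof: both grow $C$ out of $B$ one small disk-like piece at a time and then telescope the resulting local conditional-independence statements into $I(A:C|B)_{\rho}=0$ (the paper phrases the telescoping as conservation of $S_{ABC_i}-S_{BC_i}$ along the sequence $C_0=\emptyset\subseteq\cdots\subseteq C_M=C$, which is the same as your chain rule). The substantive difference is how the local step is justified. The paper bounds $I(A:\delta C_{i+1}|BC_i)$ directly: SSA gives $I(A:\delta C_{i+1}|BC_i)\le I(ABC_i\backslash D_i:\delta C_{i+1}|D_i)$ for a separating disk $D_i$ with $D_i\,\delta C_{i+1}$ in a $\mu$-disk, and the right-hand side is killed by Axiom \ref{as:A1} (extended to all scales by Proposition~\ref{Prop:axiom_large}). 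You instead route through the Petz reconstruction of Proposition~\ref{Prop: iso_ABCD} and Lemma~\ref{lemma_Petz} to get $I(\,\cdot\,:c_i|b_i)_\rho=0$, then apply the same SSA monotonicity to move the extra systems into the conditioner. This is valid and non-circular (Proposition~\ref{Prop: iso_ABCD} is proved independently of this lemma), but it invokes heavier machinery than the paper needs, since the Petz/merging step is itself underwritten by the same SSA-plus-\ref{as:A1} bound. One genuine imprecision to fix: you place $b_i,b_i^{\mathrm{out}}\subseteq B$, but once arcs $c_1,\ldots,c_{i-1}$ have been attached, those adjacent to $c_i$ lie within distance $2r+1$ of $c_i$ and must be absorbed into the buffer $b_i^{\mathrm{out}}b_i$ for the separation hypothesis of Proposition~\ref{Prop: iso_ABCD} (Fig.~\ref{fig:elementary step}) to hold; the slabs should therefore be taken inside $Bc_1\cdots c_{i-1}$ rather than inside $B$. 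This is exactly the bookkeeping you flag, and repairing it does not disturb the SSA step or the chain rule.
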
 

\begin{proof}
Let us consider a sequence of regions $C_m$ with $m=0,1,2,3,\cdots M$ with $C_0=\emptyset$, $C_M=C$, and $C_i\subseteq C_{i+1}$. Here $BC_{i+1}$ is an ``infinitesimal deformation" of $BC_i$ as in Fig.~\ref{Grow_a_disk_12_PDF}.

We want to show that
\begin{equation}
(S_{ABC_i}- S_{BC_i})_{\rho}  =   (S_{ABC_{i+1}}- S_{BC_{i+1}})_{\rho}, \label{eq: i_to_i+1}
\end{equation}
for $ i=0,\cdots, M-1$. Equivalently, we can show that
\begin{equation}
I(A:\delta C_{i+1}|BC_{i}) = 0, \label{eq:cmi_infinitesimal}
\end{equation}
where $\delta C_{i+1} = C_{i+1} \backslash C_{i}$. One can upper bound the left hand side of Eq.~(\ref{eq:cmi_infinitesimal}) by $0$. To see why, first use $I(AA':C\vert B)\ge I(A:C\vert A'B)$ so that
\begin{equation}
    I(A:\delta C_{i+1}|BC_i) \leq I(ABC_i\backslash D_i:\delta C_{i+1}\vert D_i), \label{eq:temp_upper}
\end{equation}
where $D_i \subseteq BC_i$ is a disk-like subsystem that separates $\delta C_{i+1}$ from $ABC_i\backslash D_i$. In particular, we choose $D_i$ such that $D_i \delta C_{i+1}$ is contained in a $\mu$-disk. Then, one can upper bound the right-hand side of Eq.~\eqref{eq:temp_upper} by $0$, by using Axiom~\ref{as:A1}.
Therefore,
	\begin{equation}
	(S_{ABC_0}- S_{BC_0})_{\rho}  =   (S_{ABC_{M}}- S_{BC_{M}})_{\rho},\nonumber
	\end{equation}
	which justifies Eq. (\ref{eq: ex_entanglement_cut}).
\end{proof}

\begin{figure}[h]
	\centering
\includegraphics[scale=1.15]{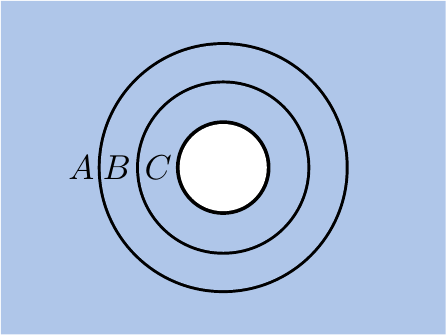}
	\caption{Here $\Omega=ABC$ is an arbitrary subsystem with a boundary. $B$ and $C$ are concentric annuli covering the boundary.}
	\label{Growth_around_an_entanglement_cut}
\end{figure}

\begin{lemma} \label{Prop: purification}
	Consider an extreme point $\sigma^{\langle e\rangle}_{\Omega_{2\epsilon}}\in \Sigma(\Omega_{2\epsilon})$ written as
	\begin{equation}
	\sigma^{\langle e\rangle }_{\Omega_{2\epsilon}}=\sum_i q_i\, \rho^i_{\Omega_{2\epsilon}} \label{eq:basis_extreme}
	\end{equation}
	where $\{q_i\}$ is a probability distribution with $q_i>0, \forall i$ and $\{ \rho^i_{\Omega_{2\epsilon}} \}$ is a set of density matrices. Then, 
	\begin{equation}
	\Tr_{\Omega_{2\epsilon}\backslash \Omega} \,\rho^i_{\Omega_{2\epsilon}} =\Tr_{\Omega_{2\epsilon}\backslash \Omega} \,\sigma^{\langle e\rangle}_{\Omega_{2\epsilon}} ,\quad \forall \, i \nonumber
	\end{equation}
	is the same extreme point of $\Sigma(\Omega)$. 
	
\end{lemma}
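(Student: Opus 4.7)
The plan is to combine two facts already at our disposal: (i) the preceding Lemma (labeled C2 in the appendix), which asserts that each $\Tr_{\Omega_{2\epsilon}\backslash\Omega}\rho^i_{\Omega_{2\epsilon}}$ is automatically an element of $\Sigma(\Omega)$, even though the $\rho^i_{\Omega_{2\epsilon}}$ themselves need not lie in $\Sigma(\Omega_{2\epsilon})$; and (ii) the isomorphism theorem (Theorem~\ref{thm: the isomorphism theorem}), which provides a linear bijection between $\Sigma(\Omega_{2\epsilon})$ and $\Sigma(\Omega)$.

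First, I would verify that $\Omega$ and $\Omega_{2\epsilon}$ are connected by a path in the sense of Definition~\ref{def:path}. Since $\Omega_{2\epsilon}$ is obtained from $\Omega$ by attaching a thin disk-like collar of thickness $2\epsilon$ along the boundary, it can be realized as a finite sequence of elementary extensions of the type appearing in Fig.~\ref{fig:elementary step}. By Proposition~\ref{Prop: iso_ABCD} applied repeatedly, or equivalently by Theorem~\ref{thm: the isomorphism theorem}, the map
\begin{equation}
\Phi \;\equiv\; \Tr_{\Omega_{2\epsilon}\backslash\Omega}\,:\,\Sigma(\Omega_{2\epsilon})\longrightarrow \Sigma(\Omega)
\end{equation}
is a linear bijection of convex sets. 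Linear bijections between convex sets carry extreme points to extreme points, so $\Phi(\sigma^{\langle e\rangle}_{\Omega_{2\epsilon}})$ is an extreme point of $\Sigma(\Omega)$.

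Next, I apply Lemma~\ref{C2} to the decomposition in Eq.~\eqref{eq:basis_extreme}. This yields $\Tr_{\Omega_{2\epsilon}\backslash\Omega}\rho^i_{\Omega_{2\epsilon}} \in \Sigma(\Omega)$ for every $i$. By the linearity of the partial trace, we then have
\begin{equation}
\Phi(\sigma^{\langle e\rangle}_{\Omega_{2\epsilon}}) \;=\; \sum_i q_i\,\Tr_{\Omega_{2\epsilon}\backslash\Omega}\rho^i_{\Omega_{2\epsilon}},
\end{equation}
which is a convex combination, with strictly positive weights $q_i>0$, of elements of $\Sigma(\Omega)$. Since the left-hand side is an extreme point of $\Sigma(\Omega)$, the defining property of extremality forces
\begin{equation}
\Tr_{\Omega_{2\epsilon}\backslash\Omega}\rho^i_{\Omega_{2\epsilon}} \;=\; \Phi(\sigma^{\langle e\rangle}_{\Omega_{2\epsilon}}) \;=\; \Tr_{\Omega_{2\epsilon}\backslash\Omega}\sigma^{\langle e\rangle}_{\Omega_{2\epsilon}}
\end{equation}
for every $i$, which is the claim.

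The main subtlety lies not in any technical calculation but in being careful about where each ingredient is used. Crucially, Lemma~\ref{C2} only guarantees membership of the traced-down states in $\Sigma(\Omega)$, not membership of the original $\rho^i_{\Omega_{2\epsilon}}$ in $\Sigma(\Omega_{2\epsilon})$; this is why one cannot apply extremality directly at the level of $\Omega_{2\epsilon}$, and why the isomorphism theorem (which moves extremality across the partial trace in a controlled way) is essential. One should also verify that the thickening procedure keeps the intermediate subsystems topologically equivalent to $\Omega$ throughout, which is automatic once $\epsilon$ is taken comparable to the lattice spacing so that each elementary step only attaches a region contained in a $\mu$-disk.
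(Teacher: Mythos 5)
Your proposal is correct and follows essentially the same route as the paper's own proof: both invoke Lemma~\ref{C2} to place the traced-down states in $\Sigma(\Omega)$, use the isomorphism theorem to conclude that $\Tr_{\Omega_{2\epsilon}\backslash\Omega}\,\sigma^{\langle e\rangle}_{\Omega_{2\epsilon}}$ is an extreme point of $\Sigma(\Omega)$, and then appeal to extremality of that image under the convex decomposition with strictly positive weights to force all the $\Tr_{\Omega_{2\epsilon}\backslash\Omega}\,\rho^i_{\Omega_{2\epsilon}}$ to coincide. Your remark that extremality cannot be applied directly at the level of $\Omega_{2\epsilon}$ (since the $\rho^i_{\Omega_{2\epsilon}}$ need not lie in $\Sigma(\Omega_{2\epsilon})$) correctly identifies the point the paper's argument is designed to circumvent.
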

\begin{proof}
It follows from Lemma~\ref{C2} that $\Tr_{\Omega_{2\epsilon}\backslash \Omega} \,\rho^i_{\Omega_{2\epsilon}} $ is an element of $\Sigma(\Omega)$ for all $i$. The nontrivial statement is that the reduced state on $\Omega$ is an extreme point and that the reduced state does not depend on $i$.

Suppose there is a dependences on $i$. Then $\Tr_{\Omega_{2\epsilon}\backslash \Omega} \,\sigma^{\langle e\rangle }_{\Omega_{2\epsilon}}$ must be a convex combination of these distinct elements, so this density matrix cannot be an extreme point. This contradicts with the isomorphism theorem (Theorem~\ref{thm: the isomorphism theorem}), which implies that $\Tr_{\Omega_{2\epsilon}\backslash \Omega} \,\sigma^{\langle e\rangle }_{\Omega_{2\epsilon}}$ is an extreme point of $\Sigma(\Omega)$.
(Recall that any linear bijective map between two convex sets must map extreme points to extreme points.) Therefore, the density matrix $\Tr_{\Omega_{2\epsilon}\backslash \Omega} \,\rho^i_{\Omega_{2\epsilon}} $ is independent of $i$ and it follows that it is an extreme point of $\Sigma(\Omega)$.
\end{proof}

\begin{lemma}\label{Lemma_similar_to_A0}
	Consider an extreme point $\sigma^{\langle e\rangle}_{\Omega_{2\epsilon}} \in \Sigma(\Omega_{2\epsilon})$ and let $B=\Omega_{2\epsilon}\backslash\Omega$, then
	\begin{equation}
	(S_{B\Omega } + S_{\Omega}- S_{B})_{\sigma^{\langle e\rangle}}=0. \label{eq:generalized_A0}
	\end{equation}
\end{lemma}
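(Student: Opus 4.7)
The plan is to reduce the Araki--Lieb saturation in Eq.~(\ref{eq:generalized_A0}) to the decomposition-level characterization of Lemma~\ref{Prop: the isolation technique}, which is already tailored for exactly this purpose, and then invoke Lemma~\ref{Prop: purification} to verify that characterization for extreme points.

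More concretely, recall that Lemma~\ref{Prop: the isolation technique}, applied to the bipartite state $\sigma^{\langle e\rangle}_{\Omega_{2\epsilon}}$ with $B = \Omega_{2\epsilon}\setminus\Omega$ playing the role of the ``outer'' subsystem and $\Omega$ playing the role of the ``core,'' tells us that the saturation
\[
(S_{B\Omega}+S_{\Omega}-S_{B})_{\sigma^{\langle e\rangle}}=0
\]
is equivalent to the following statement: for every convex decomposition
\[
\sigma^{\langle e\rangle}_{\Omega_{2\epsilon}} \;=\; \sum_{i} q_{i}\,\rho^{i}_{\Omega_{2\epsilon}},
\qquad q_{i}>0,
\]
one has $\Tr_{B}\,\rho^{i}_{\Omega_{2\epsilon}} = \Tr_{B}\,\sigma^{\langle e\rangle}_{\Omega_{2\epsilon}}$ for all $i$. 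Thus it suffices to establish this marginal-independence property for any decomposition of an extreme point of $\Sigma(\Omega_{2\epsilon})$.

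But this marginal-independence property is precisely the content of Lemma~\ref{Prop: purification}: for an extreme point $\sigma^{\langle e\rangle}_{\Omega_{2\epsilon}} \in \Sigma(\Omega_{2\epsilon})$ and any convex decomposition $\sigma^{\langle e\rangle}_{\Omega_{2\epsilon}} = \sum_{i} q_{i}\,\rho^{i}_{\Omega_{2\epsilon}}$ with $q_{i}>0$, each reduced density matrix $\Tr_{\Omega_{2\epsilon}\setminus\Omega}\,\rho^{i}_{\Omega_{2\epsilon}}$ lies in $\Sigma(\Omega)$ and coincides with the (unique, extreme) reduced state $\Tr_{\Omega_{2\epsilon}\setminus\Omega}\,\sigma^{\langle e\rangle}_{\Omega_{2\epsilon}}$. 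Combining this with the $(3)\!\Rightarrow\!(1)$ direction of Lemma~\ref{Prop: the isolation technique} then yields Eq.~(\ref{eq:generalized_A0}) immediately.

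There is essentially no hard step here, since both inputs are already proved. The only thing to be careful about is the bookkeeping: one must make sure that the ``outer/core'' roles $(B,C)$ in Lemma~\ref{Prop: the isolation technique} are matched with $(B,\Omega) = (\Omega_{2\epsilon}\setminus\Omega,\Omega)$ in the present lemma, and that the extreme-point hypothesis is genuinely used in precisely one place, namely in invoking Lemma~\ref{Prop: purification} to conclude that all the decomposition marginals on $\Omega$ collapse to a single element of $\Sigma(\Omega)$. Once this is in place, the identity $(S_{B\Omega}+S_{\Omega}-S_{B})_{\sigma^{\langle e\rangle}}=0$ follows directly.
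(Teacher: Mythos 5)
Your proposal is correct and follows exactly the paper's own (one-line) proof: Lemma~\ref{Prop: purification} supplies condition (3) of Lemma~\ref{Prop: the isolation technique} for the pair $(B,\Omega)$, and the $(3)\Rightarrow(1)$ direction then gives the saturated Araki--Lieb identity. Your write-up simply makes the bookkeeping of which subsystem plays which role explicit, which is fine.
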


\begin{proof}
	This result follows from Lemma~\ref{Prop: purification} and $(3)\Rightarrow (1)$ of Lemma~\ref{Prop: the isolation technique}.
\end{proof}

The significance of Lemma~\ref{Lemma_similar_to_A0} is that it applies to any subsystem. In particular, for 2-hole disks, we obtain the following corollary.
\begin{corollary}\label{coro:generalized_A0}
	Let $Y$ be a 2-hole disk divided according to Fig.~\ref{Y_BC}, i.e. $Y=BY'$. Let $\sigma^{\langle e\rangle}_{Y}$ be an extreme point of $\Sigma(Y)$, then
	\begin{equation}
	(S_{BY'} + S_{Y'} - S_B)_{\sigma^{\langle e\rangle}}=0. \label{eq: isolation_BCY}
	\end{equation}
\end{corollary}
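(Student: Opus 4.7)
The plan is to obtain this as an essentially immediate specialization of Lemma~\ref{Lemma_similar_to_A0}. In the notation of that lemma, set $\Omega := Y'$, so that $\Omega_{2\epsilon}$ plays the role of $Y$ and the boundary collar $\Omega_{2\epsilon}\setminus \Omega$ coincides with the union of boundary annuli $B = B_1 B_2 B_3$ of the present partition. Under this identification, the extreme point $\sigma^{\langle e\rangle}_Y\in \Sigma(Y)$ plays the role of $\sigma^{\langle e\rangle}_{\Omega_{2\epsilon}}\in \Sigma(\Omega_{2\epsilon})$, and Lemma~\ref{Lemma_similar_to_A0} then reads $(S_{B\Omega} + S_\Omega - S_B)_{\sigma^{\langle e\rangle}} = 0$, which is precisely the desired identity $(S_{BY'} + S_{Y'} - S_B)_{\sigma^{\langle e\rangle}} = 0$.

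The only point to verify is that the partition $Y = BY'$ drawn in Fig.~\ref{Y_BC} is compatible with the thickness assumption implicit in Lemma~\ref{Lemma_similar_to_A0}, namely that the boundary collar $B$ has radial thickness at least $2\epsilon$ so that $Y = (Y')_{2\epsilon}$. If the partition already satisfies this, there is nothing further to do. Otherwise, one first invokes the isomorphism theorem (Theorem~\ref{thm: the isomorphism theorem}) to deform $(Y,BY')$ into a topologically equivalent 2-hole disk $(\tilde Y, \tilde B \tilde Y')$ whose boundary annuli are thick enough, transporting $\sigma^{\langle e\rangle}_Y$ to an element $\tilde\sigma^{\langle e\rangle}_{\tilde Y} \in \Sigma(\tilde Y)$. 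Since the isomorphism is linear and bijective, it maps extreme points to extreme points, and by Corollary~\ref{coro:entropy_difference} it preserves entropy differences between elements of the information convex set. Hence the validity of Eq.~\eqref{eq: isolation_BCY} on $\tilde Y$ transports back to the original $Y$.

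The main (and only) nontrivial point is this thickness bookkeeping; the substantive content of the statement is entirely furnished by Lemma~\ref{Lemma_similar_to_A0}, whose proof in turn packaged together the ``onion'' conditional-independence result (Lemma~\ref{Prop: QMS_around_an_entangle_cut}), the structural Lemma~\ref{Prop: purification} characterizing how extreme points decompose on a thickening, and the equivalence $(3)\Rightarrow(1)$ of Lemma~\ref{Prop: the isolation technique}. Corollary~\ref{coro:generalized_A0} is simply the specialization of that general fact to the 2-hole-disk topology relevant for extracting the fusion data.
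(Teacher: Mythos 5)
Your proof is correct and is essentially the paper's own argument: the paper presents Corollary~\ref{coro:generalized_A0} as an immediate specialization of Lemma~\ref{Lemma_similar_to_A0} with $\Omega = Y'$, $\Omega_{2\epsilon} = Y$, and $B = Y\backslash Y'$, exactly as you do. Your additional thickness bookkeeping is a reasonable caveat (the paper implicitly assumes all collars are thick enough), though note that your fallback via Corollary~\ref{coro:entropy_difference} is slightly loose as stated, since that corollary preserves entropy differences between two elements of one information convex set rather than the three-term combination $S_{BY'}+S_{Y'}-S_B$ of a single state.
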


\begin{lemma} \label{Prop: product}
Let $\Omega=ABC$ with a choice of subsystems described in Fig.~\ref{Growth_around_an_entanglement_cut}. If $\sigma^{\langle e\rangle}_{\Omega}$ is an extreme point of $\Sigma(\Omega)$:
	\begin{equation}
	I(A:C)_{\sigma^{\langle e\rangle}}=0. \label{eq:vanishing_mutual_general}
	\end{equation}
\end{lemma}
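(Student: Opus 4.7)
My plan is to reduce the statement to an Araki--Lieb saturation on the bipartite reduced state over $AB$ and then read off the vanishing of $I(A:C)$ from the isolation technique of Lemma~\ref{Prop: the isolation technique}. First, I would use the isomorphism theorem (Theorem~\ref{thm: the isomorphism theorem}) to transport extremality from $\Sigma(\Omega)$ down to $\Sigma(AB)$: since the subsystem $AB$ is obtained from $\Omega=ABC$ by peeling off the outermost annular collar $C$ through a sequence of elementary deformations, $\Tr_C$ furnishes a linear bijection $\Sigma(ABC)\to\Sigma(AB)$ which preserves extreme points. Hence $\sigma^{\langle e\rangle}_{AB}:=\Tr_C\,\sigma^{\langle e\rangle}_{\Omega}$ is an extreme point of $\Sigma(AB)$. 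I next view $AB$ as a thickening of the smaller subsystem $A$ by the annular layer $B$ and apply Lemma~\ref{Lemma_similar_to_A0} to obtain the Araki--Lieb saturation
\begin{equation*}
    (S_{AB}+S_A-S_B)_{\sigma^{\langle e\rangle}_{AB}} = 0.
\end{equation*}

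This identity is precisely condition (1) of Lemma~\ref{Prop: the isolation technique}, with the lemma's $B$ and $C$ taken to be our $B$ and $A$, respectively. The implication $(1)\Rightarrow(2)$ then asserts that any state on an enlarged system $A'\cup B\cup A$ whose marginal on $BA$ equals $\sigma^{\langle e\rangle}_{AB}$ must satisfy $I(A':A)=0$. But $\sigma^{\langle e\rangle}_{\Omega}\in\Sigma(ABC)$ is exactly such an extension with $A'=C$, so $I(C:A)_{\sigma^{\langle e\rangle}}=I(A:C)_{\sigma^{\langle e\rangle}}=0$, as claimed.

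The only real subtlety, which I view as the main obstacle, lies in the application of Lemma~\ref{Lemma_similar_to_A0}: that lemma is phrased for a single $2\epsilon$-thickening, whereas $B$ may be a thicker annular collar. This is handled by iteration along a chain of intermediate $2\epsilon$-thickenings $A=A_0\subset A_1\subset\cdots\subset A_N=AB$, using at each stage that the isomorphism theorem preserves extremality. Equivalently, tracing through the proof of Lemma~\ref{Prop: purification} shows that it depends only on the isomorphism theorem and on the local consequence of Lemma~\ref{C2}, and so extends verbatim to thickenings of arbitrary width; the implication $(3)\Rightarrow(1)$ of Lemma~\ref{Prop: the isolation technique} then delivers the Araki--Lieb saturation in the required generality.
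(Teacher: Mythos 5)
Your proof is correct, and it takes a genuinely different route from the paper's. The paper first shows $I(A:C|B)_{\sigma^{\langle e\rangle}}=0$ (Lemma~\ref{Prop: QMS_around_an_entangle_cut}), invokes the structure theorem for quantum Markov states to write $\Tr_B\,\sigma^{\langle e\rangle}_{\Omega}=\sum_i p_i\,\rho^i_A\otimes\rho^i_C$, then uses Lemma~\ref{Prop: purification} to show all $\rho^i_A$ restrict to one and the same extreme point on a slightly shrunken $A'\subseteq A$, so that $A'$ and $C$ decouple, and finally re-grows $A'$ to $A$. You instead push extremality down to $\Sigma(AB)$ via $\Tr_C$ (legitimate: the peeling path gives a linear bijection by Theorem~\ref{thm: the isomorphism theorem}, and linear bijections of convex sets preserve extreme points), establish the Araki--Lieb saturation $(S_{AB}+S_A-S_B)_{\sigma^{\langle e\rangle}}=0$, and let Lemma~\ref{Prop: the isolation technique} $(1)\Rightarrow(2)$ decouple $A$ from \emph{any} extension of $\sigma^{\langle e\rangle}_{AB}$, in particular from $C$. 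This bypasses the Markov decomposition and the shrink-and-regrow step, and it proves the formally stronger statement that $A$ decouples from the entire purifying complement of $AB$; the paper's route, in exchange, produces the explicit separable form of $\sigma^{\langle e\rangle}_{AC}$, which is reused elsewhere. One caveat on your handling of the thick collar: the iteration as literally stated yields $(S_{A_{k-1}B_k}+S_{A_{k-1}}-S_{B_k})=0$ for each consecutive thin shell $B_k=A_k\setminus A_{k-1}$, not for the cumulative $B$; to close this you need one more line, e.g.\ apply the SSA monotonicity $S_{BC}+S_C-S_B\ge S_{BB'C}+S_C-S_{BB'}$ (listed in Appendix~\ref{app:QIbasic}) to the innermost shell and combine with Araki--Lieb. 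Your alternative closing is the cleaner one and is correct as stated: the proofs of Lemma~\ref{C2} and Lemma~\ref{Prop: purification} only require the shell to be at least $2\epsilon$ thick, so Lemma~\ref{Lemma_similar_to_A0} does hold verbatim for a collar of arbitrary width.
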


\begin{proof}
	Since $\sigma^{\langle e\rangle}_{\Omega}\in \Sigma(\Omega)$, from Lemma~\ref{Prop: QMS_around_an_entangle_cut}, 
	\begin{equation}
	I(A:C\vert B)_{\sigma^{\langle e\rangle}}=0.\label{eq: QMS_1}
	\end{equation}
	Then, it follows from the explicit structure of quantum Markov state Eq.~(\ref{eq: QMS_explicit}) and (\ref{eq: QMS_explicit_mutual}) that 
	\begin{equation}
	\Tr_B \,\sigma^{\langle e\rangle}_{\Omega} = \sum_i p_i\, \rho^i_{A}\otimes \rho^i_{C},
	\end{equation}
	where $\rho^i_{A}$ and $\rho^i_{C}$ are density matrices (which may or may not belong to $\Sigma(A)$ and  $\Sigma(C)$). $\{p_i\}$ is a probability distribution. We know from the isomorphism theorem~\ref{thm: the isomorphism theorem} that $\sigma^{\langle e\rangle}_{A}\equiv \sum_i p_i \,\rho^i_{A} $ is an extreme point of $\Sigma(A)$.
	
	Since  $A$ is thick enough, let $A = A'_{2\epsilon}$. Here  $A'$ has the same topology as $A$ but it is thinner. From Lemma~\ref{Prop: purification} we know that  $\Tr_{A\backslash A'} \,\rho^i_{A} = \sigma^{\langle e\rangle}_{A'}$, $\forall \,i$. Therefore,  $I(A':C)_{\sigma^{\langle e\rangle}}=0$. Since we could enlarge  $A'$ (as that in Fig.~\ref{Grow_a_disk_12_PDF}) until it recovers $A$ without changing the mutual information, we conclude that
	$I(A:C)_{\sigma^{\langle e\rangle}}=0.$
\end{proof}

As an immediate application of Lemma~\ref{Prop: product}, we can prove the following factorization property between subsystem $L$ and $R$ described in Fig.~\ref{Annulus_LMR}. This plays an important role in the proof of the orthogonality of the extreme points.

\begin{corollary}\label{coro:product_X}
	For the annulus $X=LMR$  in Fig.~\ref{Annulus_LMR}(a), for any extreme point $\sigma^a_{X}\in \Sigma(X)$,
	\begin{equation}
	\Tr_{M}\, \sigma^a_{X} = \sigma^a_{L}\otimes \sigma^a_{R}, \label{eq: product}
	\end{equation}
	where $\sigma^a_{L}$ and  $\sigma^a_{R}$ are the reduced density matrices of $\sigma^a_{X}$ on $L$ and $R$ respectively.
\end{corollary}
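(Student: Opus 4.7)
The plan is to reduce the factorization claim to a vanishing-mutual-information statement that falls within the scope of Lemma~\ref{Prop: product}. Writing $\Tr_M \sigma^a_X = \sigma^a_L \otimes \sigma^a_R$ is equivalent to $I(L:R)_{\sigma^a} = 0$: vanishing mutual information between two disjoint subsystems forces their joint reduced density matrix to factorize as a tensor product of marginals, and $\Tr_M \sigma^a_X$ is by construction the marginal of $\sigma^a_X$ on $LR$. So the whole task is to produce $I(L:R)_{\sigma^a}=0$.

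For this, I would match the partition $X=LMR$ to the configuration of Fig.~\ref{Growth_around_an_entanglement_cut} used in Lemma~\ref{Prop: product}. Since $L$, $M$, $R$ are three concentric thinner annuli partitioning the annulus $X$, I identify $A=L$, $B=M$, $C=R$: then $B$ and $C$ form two concentric annular strips covering one of the two boundaries of $X$, while $A$ occupies the rest of the annulus. This is exactly the topological setup demanded by Lemma~\ref{Prop: product}, and by the symmetry between the two boundaries of $X$, the choice of which of $L$ and $R$ plays the role of $C$ is immaterial.

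Because $\sigma^a_X$ is an extreme point of $\Sigma(X)$ by hypothesis, Lemma~\ref{Prop: product} applies directly to the state $\sigma^a_X$ and the partition $(A,B,C)=(L,M,R)$, yielding $I(L:R)_{\sigma^a}=0$. Combined with the elementary observation in the first paragraph, this gives $\Tr_M \sigma^a_X = \sigma^a_L \otimes \sigma^a_R$, as required. I do not anticipate a real obstacle: the corollary is essentially a topological repackaging of Lemma~\ref{Prop: product} in the annular setting, once one notices that the vanishing mutual information between $L$ and $R$ encodes exactly the factorization of the $LR$-marginal.
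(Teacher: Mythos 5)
Your proposal is correct and matches the paper's own (implicit) argument: Corollary~\ref{coro:product_X} is stated there precisely as an immediate application of Lemma~\ref{Prop: product} with the identification $(A,B,C)=(L,M,R)$, giving $I(L:R)_{\sigma^a}=0$ and hence factorization of the $LR$-marginal. No gaps.
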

\begin{corollary}\label{coro:entropy_B_B1B2B3}
	Consider the partition of a 2-hole disk $Y$ in Fig.~\ref{Y_BC}, i.e. $Y=Y'B$ and $B=B_1B_2B_3$.  Let $\sigma_Y^{\langle e\rangle}$ be an extreme point of $\Sigma(Y)$, then
	\begin{equation}
	(S_{B_1}+S_{B_2} +S_{B_3}-S_{B})_{\sigma^{\langle e\rangle}}=0. \label{eq:entropy_B_B1B2B3}
	\end{equation}
\end{corollary}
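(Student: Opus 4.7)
The plan is to apply Lemma~\ref{Prop: product} twice, using different choices of the concentric annular pair $(B,C)$ adjacent to two of the three boundaries of $Y$, and then combine the resulting vanishing mutual informations to conclude that $\rho_B$ factorises as $\rho_{B_1}\otimes\rho_{B_2}\otimes\rho_{B_3}$. Once this tensor-product structure is established, taking von Neumann entropies gives Eq.~(\ref{eq:entropy_B_B1B2B3}) immediately.

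Concretely, for a given $i\in\{1,2,3\}$, I would set $C:=B_i$ and choose $B$ to be a thin annulus inside $Y'$ concentric with $B_i$, so that $BC$ forms the collar configuration covering the $i$-th boundary required by Fig.~\ref{Growth_around_an_entanglement_cut}. Letting $A:=Y\setminus(BC)$, Lemma~\ref{Prop: product} applied to the extreme point $\sigma^{\langle e\rangle}_Y\in\Sigma(Y)$ yields $I(A:B_i)_{\sigma^{\langle e\rangle}}=0$. Since $A$ contains the other two boundary annuli $B_j,B_k$ (with $\{j,k\}=\{1,2,3\}\setminus\{i\}$), monotonicity of mutual information under partial trace gives
\[
I(B_j B_k : B_i)_{\sigma^{\langle e\rangle}}=0,
\]
which is equivalent to $\rho_{B_1 B_2 B_3}=\rho_{B_i}\otimes\rho_{B_j B_k}$.

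Carrying out this construction once with $i=1$ and once with $i=3$ gives $\rho_B=\rho_{B_1}\otimes\rho_{B_2 B_3}$ and $\rho_B=\rho_{B_1 B_2}\otimes\rho_{B_3}$ respectively. Tracing out $B_1$ in the second identity forces $\rho_{B_2 B_3}=\rho_{B_2}\otimes\rho_{B_3}$, and substituting this into the first produces $\rho_B=\rho_{B_1}\otimes\rho_{B_2}\otimes\rho_{B_3}$; taking von Neumann entropies then yields Eq.~(\ref{eq:entropy_B_B1B2B3}). The only point that demands a little care is verifying that inside $Y$ there is enough room next to each chosen boundary to fit the concentric auxiliary annulus $B$ required by Fig.~\ref{Growth_around_an_entanglement_cut}. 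Since by the isomorphism theorem (Theorem~\ref{thm: the isomorphism theorem}) one is free to replace $Y$ by any subsystem obtained from it by a path of elementary deformations while preserving all relevant entropic quantities, one may always thin each $B_i$ to make room for such a $B$, so this is not a genuine obstacle.
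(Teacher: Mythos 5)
Your argument is correct and is essentially the derivation the paper intends: Corollary~\ref{coro:entropy_B_B1B2B3} is stated as an immediate consequence of Lemma~\ref{Prop: product}, obtained exactly as you do by applying it at two of the three boundary components (carving the auxiliary middle annulus out of $Y'$), using monotonicity of mutual information, and combining the two factorizations into the tripartite product $\rho_B=\rho_{B_1}\otimes\rho_{B_2}\otimes\rho_{B_3}$. One small caveat: your fallback remark about ``thinning $B_i$ to make room'' should be dropped, since proving $I(B_i':B_jB_k)=0$ for a thinner $B_i'\subset B_i$ does not imply $I(B_i:B_jB_k)=0$ (monotonicity goes the wrong way); fortunately your primary construction, which takes the auxiliary annulus from $Y'$, does not need it.
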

Note that Eq.~\eqref{eq:entropy_B_B1B2B3} is equivalent to saying that $\sigma_{B_1B_2B_3}^{\langle e\rangle}$ is a tripartite product state.

\subsection{Orthogonality of extreme points}\label{Proof_of_Prop: structure_2}
Below, we present the proof of Theorem~\ref{Prop: structure_2}, which establishes the orthogonality of extreme points.

\simplex*

\begin{proof}
	Let us divide an annulus $X$  according to Fig.~\ref{Annulus_LMR}(a) and consider the path in Fig.~\ref{Annulus_LMR}(b) which defines an isomorphism $\Sigma(L)\cong \Sigma(X)\cong \Sigma(R)$. Let us consider an extreme point $\sigma^{a}_{X}$ and let its image in $\Sigma(L)$  and $\Sigma(R)$ be $\sigma^a_{L}$  and $\sigma^a_{R}$. It follows from the isomorphism theorem (Theorem~\ref{thm: the isomorphism theorem}) that (1) they are extreme points; (2) they are the reduced elements of $\sigma^a_{X}$.
	For a pair of extreme points $\sigma^a_X$ and $\sigma^b_{X}$,
	\begin{equation}
	F(\sigma^a_X,\sigma^b_X) = F(\sigma^a_L,\sigma^b_L)= F(\sigma^a_R,\sigma^b_R), \label{eq: fidelity_equal}
	\end{equation}
	since the isomorphism preserves fidelity.
	According to Corollary~\ref{coro:product_X},  reducing each extreme point ($\sigma^a_{X}$) to $LR$ gives a tensor product structure ($\sigma^a_L\otimes \sigma^a_R$). Thus,
	\begin{equation}
	\begin{aligned}
	F(\sigma^a_{X},\sigma^b_{X})&\le F(\sigma^a_L \otimes\sigma^a_{R} , \sigma^b_L\otimes \sigma^b_{R})\\
	&= F(\sigma^a_{L}, \sigma^b_{L})\cdot  F(\sigma^a_{R}, \sigma^b_{R}). \label{eq: fidelity_le}
	\end{aligned}
	\end{equation}
	The first line follows from the monotonicity of fidelity, namely, the fact that fidelity is nondecreasing when restricted to a smaller region ($LR\subseteq X$).
	Eqs.~(\ref{eq: fidelity_equal}) and (\ref{eq: fidelity_le}) imply that $F(\sigma^a_{X},\sigma^b_{X})$ is either 0 or 1. If $F(\sigma^a_{X},\sigma^b_{X})=1$ then $\sigma^a_X=\sigma^b_X$, so $a=b$. If $F(\sigma^a_{X},\sigma^b_{X})=0$ then   $\sigma^a_X\cdot \sigma^b_X=0$, i.e. $\sigma^a_X$ and $\sigma^b_X$ live on orthogonal subspaces. This justifies the direct sum structure. Since a finite dimensional Hilbert space could only accommodate a finite number of orthogonal subspaces, the extreme points form a finite set.
\end{proof}

\subsection{Implication of the orthogonality}
\label{sec:orthogonality}

Based on the orthogonality of the extreme points, we can prove several new facts about the elements of the information convex set. In the remainder of this section, we use both the isomorphism theorem (Theorem~\ref{thm: the isomorphism theorem}) and  the simplex theorem (Theorem~\ref{Prop: structure_2}). Let us begin with a succinct formula for the mutual information.
\begin{Proposition}\label{Lemma_mutual_annulus}
	Let $\rho_{X}= \sum_a p_a \sigma^a_{X}$ be an element of $\Sigma(X)$, written in terms of the orthogonal extreme points. Let $X=LMR$ be a subsystem described in Fig.~\ref{Annulus_LMR}(a). Then,
	\begin{equation}
	I(L:R)_{\rho} = -\sum_a p_a \ln p_a.\label{eq.mutual}
	\end{equation}  
\end{Proposition}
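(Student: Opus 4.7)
The plan is to exploit three structural facts already established: (i) the extreme points $\{\sigma_X^a\}$ of $\Sigma(X)$ are pairwise orthogonal by the Simplex Theorem~\ref{Prop: structure_2}, (ii) each extreme point factorizes on $LR$ by Corollary~\ref{coro:product_X}, namely $\Tr_M \sigma_X^a = \sigma_L^a\otimes \sigma_R^a$, and (iii) under the path $L\to X\to R$ of Fig.~\ref{Annulus_LMR}(b), the isomorphism theorem sends the extreme point $\sigma_X^a$ to $\sigma_L^a$ and $\sigma_R^a$, which are therefore extreme points of $\Sigma(L)$ and $\Sigma(R)$. Since isomorphisms preserve fidelity (Corollary~\ref{coro:distance_preservation}, Theorem~\ref{thm: the isomorphism theorem}), and extreme points in $\Sigma(X)$ are mutually orthogonal, the induced families $\{\sigma_L^a\}$ and $\{\sigma_R^a\}$ are mutually orthogonal as well.

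Given this, the proof is a direct entropy computation. First I would write
\begin{equation*}
\rho_{LR} \;=\; \Tr_M \rho_X \;=\; \sum_a p_a\, \sigma_L^a \otimes \sigma_R^a,\qquad \rho_L=\sum_a p_a \sigma_L^a,\qquad \rho_R=\sum_a p_a \sigma_R^a.
\end{equation*}
Because $\{\sigma_L^a\}$ are mutually orthogonal, and similarly for $\{\sigma_R^a\}$ and for the tensor products $\{\sigma_L^a\otimes\sigma_R^a\}$ on $LR$, each of these three convex combinations is a direct sum on orthogonal supports. Applying the standard entropy identity for such direct sums,
\begin{equation*}
S(\rho_L)=H(\{p_a\})+\sum_a p_a S(\sigma_L^a),\quad S(\rho_R)=H(\{p_a\})+\sum_a p_a S(\sigma_R^a),
\end{equation*}
\begin{equation*}
S(\rho_{LR})=H(\{p_a\})+\sum_a p_a\bigl[S(\sigma_L^a)+S(\sigma_R^a)\bigr],
\end{equation*}
where $H(\{p_a\})=-\sum_a p_a\ln p_a$. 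Plugging these into $I(L:R)_\rho=S(\rho_L)+S(\rho_R)-S(\rho_{LR})$ makes every $S(\sigma_L^a)$ and $S(\sigma_R^a)$ contribution cancel, leaving $I(L:R)_\rho=H(\{p_a\})=-\sum_a p_a\ln p_a$.

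The only step requiring genuine care is the mutual orthogonality of $\{\sigma_L^a\}$ (and $\{\sigma_R^a\}$), which is where all three main tools converge: the isomorphism theorem gives a fidelity-preserving bijection $\Sigma(X)\cong \Sigma(L)\cong\Sigma(R)$ along the path in Fig.~\ref{Annulus_LMR}(b); this bijection maps the orthogonal extreme points $\sigma_X^a$ to the (necessarily orthogonal) extreme points $\sigma_L^a$ of $\Sigma(L)$; and the restriction $\Tr_M\sigma_X^a$ coincides with this image precisely because of the product structure in Corollary~\ref{coro:product_X}. I would emphasize this identification explicitly, since without it one could not conclude that the formal sum $\sum_a p_a \sigma_L^a$ is a genuine orthogonal decomposition. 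Once that identification is in hand, the computation is a one-line cancellation and no further obstacle arises.
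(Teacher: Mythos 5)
Your proposal is correct and follows essentially the same route as the paper's proof: the orthogonality of extreme points (Theorem~\ref{Prop: structure_2}), the factorization $\Tr_M\sigma_X^a=\sigma_L^a\otimes\sigma_R^a$ (Corollary~\ref{coro:product_X}), and the isomorphism theorem identifying $\sigma_L^a,\sigma_R^a$ as extreme points, followed by the same direct-sum entropy computation. Your extra care in justifying the mutual orthogonality of $\{\sigma_L^a\}$ via fidelity preservation is a slightly more explicit version of what the paper leaves implicit (there it follows from the simplex theorem applied to the annulus $L$ itself), but the argument is the same.
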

A similar result has been obtained in \cite{2015arXiv150807006J} using Chern-Simons theory. We obtained the same result as a consequence of \ref{as:A0'} and \ref{as:A1}.
\begin{proof}
	From Theorem~\ref{thm: the isomorphism theorem} we know that the reduced elements of $\sigma^a_{X}$ on $L$ and $R$, which we call as $\sigma_{L}^a$ and $\sigma_R^a$, are extreme points of $\Sigma(L)$ and $\Sigma(R)$ respectively. Moreover, it follows from Corollary \ref{coro:product_X} that $\Tr_M \sigma_X^a =\sigma_L^a\otimes \sigma_R^a$. From the orthogonality relation (Eq.~(\ref{Eq:directsumconv})) of Theorem~\ref{Prop: structure_2}. We obtain
	\begin{equation}
	\begin{aligned}
	(S_L)_{\rho} &= H(p) + \sum_{a} p_a S(\sigma^a_L),\\
	(S_R)_{\rho} &=H(p) + \sum_{a} p_a S(\sigma^a_R),\\
	(S_{LR})_{\rho} &= H(p)+ \sum_{a} p_a (S(\sigma^a_L)+S(\sigma^a_R)),\nonumber
	\end{aligned}
	\end{equation} 
	where $H(p)\equiv -\sum_a p_a \ln p_a$ is the Shannon entropy of the probability distribution $\{p_a\}$. The claim follows straightforwardly from these equations.
\end{proof}

As a consequence, we can identify the extreme points of $\Sigma(X)$ as the elements with zero correlation between $L$ and $R$.
\begin{corollary}\label{Coro_1}
	An element  $\rho_{X}\in \Sigma(X)$ is an extreme point if  only if 
	$I(L:R)_{\rho} =0$ for the partition in Fig.~\ref{Annulus_LMR}(a).
\end{corollary}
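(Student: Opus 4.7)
The plan is to derive both directions as an immediate consequence of Proposition~\ref{Lemma_mutual_annulus} combined with the simplex structure of $\Sigma(X)$ established in Theorem~\ref{Prop: structure_2}.

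First I would invoke Theorem~\ref{Prop: structure_2} to write any $\rho_X \in \Sigma(X)$ as $\rho_X = \sum_a p_a\, \sigma^a_X$ with $\{p_a\}$ a probability distribution over the finite set of orthogonal extreme points. Because the extreme points are mutually orthogonal, this decomposition is unique, and $\rho_X$ is itself an extreme point if and only if exactly one of the $p_a$ equals $1$ (equivalently, the distribution $\{p_a\}$ has vanishing Shannon entropy $H(p) = -\sum_a p_a \ln p_a$).

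Next I would apply Proposition~\ref{Lemma_mutual_annulus}, which gives the closed-form identity
\begin{equation}
    I(L:R)_\rho \;=\; H(p)
\end{equation}
for the partition $X = LMR$ in Fig.~\ref{Annulus_LMR}(a). Since the Shannon entropy $H(p)$ is non-negative and vanishes precisely when $\{p_a\}$ is a delta distribution, the two conditions ``$\rho_X$ is extreme'' and ``$I(L:R)_\rho = 0$'' become equivalent.

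Putting these together yields the corollary in one line in each direction: if $\rho_X$ is extreme then $p_a = \delta_{a,a_0}$ for some $a_0$, so $H(p) = 0$ and $I(L:R)_\rho = 0$; conversely, if $I(L:R)_\rho = 0$ then $H(p) = 0$, forcing $\{p_a\}$ to be concentrated on a single sector, so $\rho_X = \sigma^{a_0}_X$ is an extreme point. There is no serious obstacle here — all the work has already been done in establishing the simplex theorem and the mutual-information formula — the corollary is essentially a bookkeeping statement translating a combinatorial property of the convex decomposition into an entropic one.
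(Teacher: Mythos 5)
Your proof is correct and follows exactly the route the paper intends: Corollary~\ref{Coro_1} is stated as an immediate consequence of Proposition~\ref{Lemma_mutual_annulus}, whose identity $I(L:R)_\rho = -\sum_a p_a \ln p_a$ reduces the extremality condition to the vanishing of the Shannon entropy of the unique orthogonal decomposition guaranteed by Theorem~\ref{Prop: structure_2}. Nothing is missing.
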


\begin{lemma} \label{Prop: product_entanglement_cut}
Let $\Omega = ABC$ with a choice of subsystems described in Fig.~\ref{Growth_around_an_entanglement_cut}. If $\sigma^{\langle e\rangle}_{\Omega}$ is an extreme point of $\Sigma(\Omega)$,	$\Tr_{\Omega \backslash C}\,\sigma^{\langle e\rangle }_{\Omega}$ is an extreme point of $\Sigma(C)$.
\end{lemma}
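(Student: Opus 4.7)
The plan is to reduce the claim to Corollary~\ref{Coro_1}, which characterizes an extreme point of $\Sigma(X)$ of an annulus $X$ by the vanishing of $I(L:R)$ across a tripartition $X = L \sqcup M \sqcup R$ into three concentric annuli. So the goal is to exhibit such a tripartition of $C$ and check that $I(L:R) = 0$ holds in the marginal $\sigma^{\langle e\rangle}_C = \Tr_{\Omega\backslash C}\,\sigma^{\langle e\rangle}_\Omega$, which is already known to be an element of $\Sigma(C)$ by Proposition~\ref{Prop.Info_convex_basic}.

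First I would subdivide the outer annulus $C$ into three concentric sub-annuli $L \sqcup M \sqcup R$, chosen so that $R$ is adjacent to the boundary of $\Omega$, $L$ is adjacent to $B$, and $M$ lies in between. If the originally prescribed $C$ is too thin to admit such a subdivision, we can first replace it by a slight thickening, since by the isomorphism theorem (Theorem~\ref{thm: the isomorphism theorem}) this bijection between information convex sets preserves extreme points. This construction induces a refined onion partition $\Omega = (A \cup B \cup L) \sqcup M \sqcup R$, in which $M$ and $R$ play the role of the two concentric annuli covering the boundary of $\Omega$, while everything else is interior. This refined partition satisfies the hypotheses of Lemma~\ref{Prop: product}.

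Next, apply Lemma~\ref{Prop: product} to the extreme point $\sigma^{\langle e\rangle}_\Omega$ with the refined partition, which yields
\begin{equation*}
I(A \cup B \cup L : R)_{\sigma^{\langle e\rangle}} = 0.
\end{equation*}
Monotonicity of mutual information under the partial trace over $A \cup B$ gives
\begin{equation*}
I(L:R)_{\sigma^{\langle e\rangle}_C} \;\leq\; I(A \cup B \cup L : R)_{\sigma^{\langle e\rangle}} = 0,
\end{equation*}
so $I(L:R)_{\sigma^{\langle e\rangle}_C} = 0$. Since $\sigma^{\langle e\rangle}_C \in \Sigma(C)$, Corollary~\ref{Coro_1} then concludes that $\sigma^{\langle e\rangle}_C$ is an extreme point of $\Sigma(C)$.

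The argument is essentially a ``nesting'' of Lemma~\ref{Prop: product} inside itself, followed by one application of data processing; there is no substantial obstacle. The only point to be careful about is the case when $C$ has several connected components (i.e.\ $\Omega$ has several boundary circles), in which case $\Sigma(C)$ is a tensor product of simplices; then the same argument is applied componentwise, using a separate tripartition $L_i \sqcup M_i \sqcup R_i$ in each annular component, and extremality in each factor implies extremality of the overall product.
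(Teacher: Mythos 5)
Your proof is correct and follows essentially the same route as the paper's: both reduce the claim to Corollary~\ref{Coro_1} by combining Lemma~\ref{Prop: product} with monotonicity of mutual information under partial trace. The only (cosmetic) difference is that you subdivide $C$ itself into three concentric annuli $L\sqcup M\sqcup R$, whereas the paper takes the auxiliary annulus to be $A'BC$ with $A'\subseteq A$ adjacent to $B$ and then restricts the resulting extreme point of $\Sigma(A'BC)$ to $C$.
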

\begin{proof}
Let us consider an annulus subset $A'\subseteq A$ next to $BC$. $I(A':C)_{\sigma^{\langle e\rangle}}=0$ follows from Lemma~\ref{Prop: product}. Then, because $A'BC$ is a partition of annulus similar to $LMR$ in Fig.~\ref{Annulus_LMR}(a),  $I(A':C)=0$ guarantees that $\Tr_{\Omega \backslash C}\,\sigma^{\langle e\rangle }_{\Omega}$ is an extreme point of $\Sigma(C)$. This step follows from Corollary \ref{Coro_1}. 
\end{proof}

The proof of Proposition~\ref{Prop. 1 is extreme} and Lemma~\ref{prop:topological_deformation} are also discussed below.

\propextreme*
\begin{proof}
	Without loss of generality, assume there is a disk $D$ surrounded by  $R$ (in Fig.~\ref{Annulus_LMR}(a)). Let us consider the disk $\omega=LMRD$. There is a constraint $S_{MRD}+ S_{RD}-S_{M}=0$ for the unique element of $\Sigma(\omega)$, because of the enlarged version of \ref{as:A0'} (Proposition~\ref{Prop:axiom_large}).
	This implies that $I(L:R)_{\sigma^1}=0$. Then it follows from Corollary \ref{Coro_1} that $\sigma^1_X$ is an extreme point of $\Sigma(X)$.  
\end{proof}

\deformation*
\begin{proof}
The key of the proof is the existence of an extension $X^t_{(i)}\to \tilde{X}^{t}_{(i)}$, where $B\subseteq\tilde{X}^{t}_{(i)}$ for each path label $i=1,2$ and for any time step $t$;  see Fig.~\ref{Topo_deformation}. For both $i=1, 2$, the sequence of configurations $\{ \tilde{X}^t_{(i)} \}$ forms a path.

Let us show that two isomorphisms on the ``extended paths", $\Phi_{\{ \tilde{X}^t_{(1)} \}}$ and $\Phi_{\{ \tilde{X}^t_{(2)} \}}$, are identical. (Each of them maps $\Sigma(\tilde{X}^0)$ to $\Sigma(\tilde{X}^1)$.) Because every subsystem in the paths contains annulus $C$, the reduced density matrix on $C$ is unchanged during the process, (for any $t$). Since an element in $\Sigma(\tilde{X}^0)$ (or $\Sigma(\tilde{X}^1)$) is uniquely determined from its reduced density matrix on $C$ independent of the chosen path, the two isomorphisms $\Phi_{\{ \tilde{X}^t_{(1)} \}}$ and $\Phi_{\{ \tilde{X}^t_{(2)} \}}$ must be identical. 

To complete the proof, note that $\Phi_{\{ {X}^t_{(1)} \}}$ and $\Phi_{\{ {X}^t_{(2)} \}}$ are determined from $\Phi_{\{ \tilde{X}^t_{(1)} \}}$ and $\Phi_{\{ \tilde{X}^t_{(2)} \}}$ by taking a partial trace. This completes the proof.
\end{proof}

\section{Fusion space}
\label{section:fusion_space_proofs}
Here, we prove that the convex set $\Sigma_{ab}^c(Y)$ is isomorphic to the state space of a finite-dimensional fusion space $\mathbb{V}_{ab}^c$. Recall that $a,b,$ and $c$ are superselection sectors and $Y$ is a $2$-hole disk. We begin with the following lemma, which characterizes the extreme points of $\Sigma_{ab}^c(Y)$.

\begin{lemma}\label{lemma_identical_entropy}
Let $Y$ be a 2-hole disk. Then, every extreme point of $\Sigma_{ab}^c(Y)$ has the same von Neumann entropy.
\end{lemma}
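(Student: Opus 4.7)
The plan is to combine two entropic identities that hold at extreme points with the entropy-difference preservation of the isomorphism theorem. Concretely, let $Y = B Y'$ with $B = B_1 B_2 B_3$ as in Fig.~\ref{Y_BC}, and let $\rho_Y, \rho'_Y$ be two arbitrary extreme points of $\Sigma_{ab}^c(Y)$. I would first observe that any extreme point of $\Sigma_{ab}^c(Y)$ is automatically an extreme point of $\Sigma(Y)$: this follows from Theorem~\ref{Thm:abc}, since the decomposition of $\Sigma(Y)$ into the $\Sigma_{ab}^c(Y)$ pieces is an orthogonal direct sum, so extreme points of a summand remain extreme in the whole.

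Next, I would invoke the two corollaries already established for extreme points of $\Sigma(Y)$. Corollary~\ref{coro:generalized_A0} gives the Araki--Lieb saturation $S(\rho_Y) + S(\rho_{Y'}) - S(\rho_B) = 0$, and Corollary~\ref{coro:entropy_B_B1B2B3} gives $S(\rho_B) = S(\rho_{B_1}) + S(\rho_{B_2}) + S(\rho_{B_3})$. Because every element of $\Sigma_{ab}^c(Y)$ has the prescribed marginals $\sigma^a_{B_1}, \sigma^b_{B_2}, \sigma^c_{B_3}$ on the three boundary annuli, the right-hand side depends only on $(a,b,c)$ and not on the choice of extreme point. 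Combining the two identities yields, for any extreme point,
\begin{equation}
S(\rho_Y) + S(\rho_{Y'}) = S(\sigma^a_{B_1}) + S(\sigma^b_{B_2}) + S(\sigma^c_{B_3}),
\end{equation}
so in particular $S(\rho_Y) - S(\rho'_Y) = -\bigl(S(\rho_{Y'}) - S(\rho'_{Y'})\bigr)$.

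The final step is to show that the same difference equals $+\bigl(S(\rho_{Y'}) - S(\rho'_{Y'})\bigr)$, forcing both to vanish. Since $Y$ and $Y'$ are two-hole disks related by peeling off the annular collar $B$, they are connected by a path built from elementary steps (each step removes one $\mu$-disk worth of boundary). The isomorphism theorem (Theorem~\ref{thm: the isomorphism theorem}) then gives an isomorphism $\Sigma(Y) \cong \Sigma(Y')$ implemented by $\Tr_B$, and the entropy-difference preservation (Corollary~\ref{coro:entropy_difference}) gives $S(\rho_Y) - S(\rho'_Y) = S(\rho_{Y'}) - S(\rho'_{Y'})$. Adding this to the relation from the previous paragraph yields $S(\rho_Y) = S(\rho'_Y)$, as desired.

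The only point requiring any care is the verification that the elementary-step hypotheses are met when deforming $Y$ down to $Y'$ across the collar $B$ (so that the isomorphism theorem applies with $\Tr_B$ as the induced map); this is essentially bookkeeping, choosing $B_1, B_2, B_3$ thick enough to accommodate the $2r+1$ separation required by Proposition~\ref{Prop: iso_ABCD} at each elementary step. Everything else is a direct algebraic consequence of the two entropy identities at extreme points together with the isomorphism theorem's preservation of entropy differences.
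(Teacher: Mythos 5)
Your proposal is correct and follows essentially the same route as the paper: both combine the Araki--Lieb saturation of Corollary~\ref{coro:generalized_A0} and the factorization of Corollary~\ref{coro:entropy_B_B1B2B3} to fix $S_Y + S_{Y'}$ in terms of the boundary-annulus entropies (which depend only on $(a,b,c)$), and then use the entropy-difference preservation of the isomorphism $\Sigma(Y)\cong\Sigma(Y')$ to conclude $2\delta=0$. Your explicit remark that extreme points of $\Sigma_{ab}^c(Y)$ are extreme points of $\Sigma(Y)$ (needed to invoke those corollaries) is a small but welcome addition that the paper leaves implicit.
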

\begin{proof}
Consider a partition of $Y$ into $Y=BY'$ as described in Fig.~\ref{Y_BC}. Let $\lambda_Y$ and $\rho_Y$ be two extreme points of $\Sigma_{ab}^c(Y)$ and $\delta\equiv S(\lambda_Y)- S(\rho_Y)$ be the entropy difference between the two states. Eq.~(\ref{eq: isolation_BCY}) and (\ref{eq:entropy_B_B1B2B3}) imply that $(S_Y + S_{Y'})_{\lambda}$ and $(S_Y + S_{Y'})_{\rho}$ are identical. Also, by the isomorphism theorem $(S_Y + S_{Y'})_{\lambda}-(S_Y + S_{Y'})_{\rho}= 2\delta$. Thus, $\delta=0$.
\end{proof}

We have seen that all the extreme points of $\Sigma_{ab}^c(Y)$ have the same entropy. It follows that a non-extreme point, which is a convex combination of multiple extreme points, must have higher entropy. This fact follows from the general property of von Neumann entropy, $S(\sum_i p_i \rho^i)\ge \sum_i p_i S(\rho^i)$, where $\{p_i\}$ is a probability distribution with $p_i>0$. The equality is achieved if and only if all $\rho^i$ are identical. Thus, we have the following corollary.

\begin{corollary}\label{corollary_optional}
    If the entropy of an element $\rho_Y\in \Sigma_{ab}^c(Y)$ is identical to that of an extreme point of $\Sigma_{ab}^c(Y)$, then $\rho_Y$ itself is an extreme point of $\Sigma_{ab}^c(Y)$. 
\end{corollary}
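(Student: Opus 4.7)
\medskip

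\noindent\textbf{Proof proposal.} The plan is to argue by contradiction, using Lemma~\ref{lemma_identical_entropy} together with the strict concavity of von Neumann entropy. First I would verify that $\Sigma_{ab}^c(Y)$ is a compact convex subset of a finite-dimensional state space: convexity is immediate from its definition as an intersection of affine constraints ($\Tr_{Y\backslash B_i}\rho_Y=\sigma^{\bullet}_{B_i}$ for $\bullet=a,b,c$) with the convex set $\Sigma(Y)$, and compactness follows because these partial-trace constraints are continuous and $\Sigma(Y)$ is compact by Proposition~\ref{Prop.Info_convex_basic}. Hence the Minkowski--Caratheodory theorem applies, and every element of $\Sigma_{ab}^c(Y)$ is a finite convex combination of extreme points of $\Sigma_{ab}^c(Y)$.

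Now suppose $\rho_Y\in\Sigma_{ab}^c(Y)$ satisfies $S(\rho_Y)=S_0$, where $S_0$ denotes the common entropy of all extreme points guaranteed by Lemma~\ref{lemma_identical_entropy}. Write
\begin{equation*}
\rho_Y=\sum_i p_i\,\rho_Y^{(i)},\qquad p_i>0,\quad \sum_i p_i=1,
\end{equation*}
with $\rho_Y^{(i)}$ extreme points of $\Sigma_{ab}^c(Y)$. By Lemma~\ref{lemma_identical_entropy}, $S(\rho_Y^{(i)})=S_0$ for every $i$. Concavity of the von Neumann entropy yields
\begin{equation*}
S(\rho_Y)\;\ge\;\sum_i p_i S(\rho_Y^{(i)})\;=\;S_0,
\end{equation*}
and the crucial input is that the inequality is \emph{strict} whenever the $\rho_Y^{(i)}$ (with $p_i>0$) are not all equal. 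This strict concavity is inherited from the strict concavity of the scalar function $-x\ln x$ and is a standard property of the von Neumann entropy. Since by hypothesis $S(\rho_Y)=S_0$, equality must hold, which forces all the $\rho_Y^{(i)}$ to be the same extreme point $\rho_Y^{\star}$. Therefore $\rho_Y=\rho_Y^{\star}$ is itself an extreme point of $\Sigma_{ab}^c(Y)$, completing the proof.

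\medskip

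\noindent The only nontrivial step is the appeal to \emph{strict} concavity of von Neumann entropy; I do not expect a genuine obstacle here because this is a classical result, but it is worth stating explicitly rather than just invoking ordinary concavity, since the weaker inequality alone would not rule out a non-trivial convex decomposition of $\rho_Y$.
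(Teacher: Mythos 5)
Your proof is correct and follows essentially the same route as the paper: decompose $\rho_Y$ into extreme points of $\Sigma_{ab}^c(Y)$, note they all share the common entropy $S_0$ by Lemma~\ref{lemma_identical_entropy}, and invoke concavity of the von Neumann entropy with the equality condition (equality iff all components with $p_i>0$ coincide) to conclude $\rho_Y$ is itself an extreme point. Your explicit appeal to compactness and the Minkowski--Caratheodory theorem to justify the finite extremal decomposition is a welcome extra level of care that the paper leaves implicit.
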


\twohole*

\begin{proof}
Recall that we have already partitioned $Y$ into $BY'$; see Fig.~\ref{Y_BC}. We shall consider two different partitions of $B$. In Fig.~\ref{Y_BC}, we have already considered a partition of $B$ into $B=B_1B_2B_3$, which is a (disjoint) union of three annuli ($B_1,B_2,$ and $B_3$). We shall also consider a different partition of $B=B_LB_MB_R$. Here $B_L$ is a (disjoint) union of three ``outermost" annuli and $B_R$ is a (disjoint) union of three ``innermost" annuli; see Fig.~\ref{5_subsystems}. 

In total, we are considering $9$ disjoint subsets of $B$, $Y',$ and $E$; see Table~\ref{tab:B_partition} for a detailed discussion on the partition of $B$. Here $E$ is an auxiliary Hilbert space used to purify a density matrix supported on $Y=BY'$.
\begin{table}[h]
    \centering
    \begin{tabular}{c|c|c|c}
    Partitions & $B_1$ & $B_2$ & $B_3$ \\
    \hline 
    $B_R$ (Inner) & $B_{1R}$ & $B_{2R}$ & $B_{3R}$ \\
    $B_M$ (Middle) & $B_{1M}$ & $B_{2M}$ & $B_{3M}$ \\
    $B_L$ (Outer) & $B_{1L}$ & $B_{2L}$ & $B_{3L}$
    \end{tabular}
    \caption{A partition of $B$ used in the proof of Theorem~\ref{Thm:}.}
    \label{tab:B_partition}
\end{table}

\begin{figure}[h]
	\centering
\includegraphics[scale=1.1]{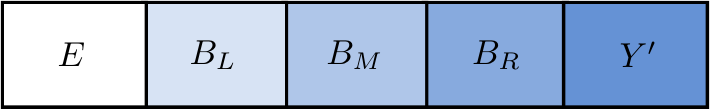}
\caption{A partition of $B$ used in the proof of Theorem~\ref{Thm:}. This figure does not represent the actual underlying geometry. Rather, it represents the relative distance between the ``inner" part of $Y$(i.e., $Y'$) and the annuli surrounding $Y'$(i.e., $B_{R}, B_M$, and $B_L$). Auxiliary system $E$ is introduced to purify the extreme points. Here $B_R$ is the ``innermost" part of $B$ that is directly in contact with $Y'$, $B_M$ is a disjoint union of annuli surrounding $B_R$, and $B_L$ surrounds $B_M$.}
	\label{5_subsystems}
\end{figure}

The statement is trivially true if $\Sigma_{ab}^c(Y)$ is empty. In this case $\dim \mathbb{V}_{ab}^c=0$. For a nonempty $\Sigma_{ab}^c(Y)$, we use $\{ \sigma^{\langle e\rangle_x}_Y$\} to denote the set of extreme points of $\Sigma_{ab}^c(Y)$ and use $\lambda_Y$ for a generic element of $\Sigma_{ab}^c(Y)$. For the extreme points, the alphabet $e$ signifies that the density matrix is an extreme point. They are labeled by $x,y,$ and $z$ in this proof.
	
Pick an extreme point, say $\sigma^{\langle e\rangle_x}_{Y}$. Let us purify $\sigma^{\langle e\rangle_x}_{Y}$ into $\vert\varphi^x_{EY}\rangle$ and let $\rho_{EB}$ be its reduced density matrix on $EB$. From Corollary \ref{coro:generalized_A0}, one can verify  $I(E:B_M B_R\vert B_L)_{\rho}=0$. Moreover, for $\forall\,\lambda_Y \in \Sigma^{c}_{ab}(Y)$ we have $I(B_L: Y'\vert B_M B_R)_{\lambda}=0$   and $\rho_{EB} \overset{c}{=}\lambda_Y$. From the merging lemma (Lemma~\ref{Prop: merging technique}), there is a quantum channel $\calE^{\rho}_{B_L\to EB_L}$ which defines a set of states 
\begin{equation}
	\calS_{EY}\equiv \{ \calE^{\rho}_{B_L\to EB_L}(\lambda_Y) \,\vert \, \lambda_Y\in \Sigma_{ab}^c(Y) \} \label{eq:S_EY}
\end{equation} 
obtained from merging $\rho_{EB}$ with  $\lambda_Y$. It follows that $\Sigma_{ab}^c(Y)\cong \calS_{EY}$. Below, we will determine the structure of $\calS_{EY}$.

Let us first show that the extreme points of $\calS_{EY}$ are pure states. For the particular extreme point we have already considered, i.e., $\sigma^{\langle e\rangle_x}_Y$, the merged state is obviously the pure state $\vert \varphi^x_{EY}\rangle$. Because all the extreme points in $\Sigma_{ab}^c(Y)$ have the same entropy (Lemma~\ref{lemma_identical_entropy}) and because the entropy difference is preserved under the map $\mathcal{E}_{B_L\to EB_L}^{\rho}$ (property (3) of Lemma~\ref{Prop: merging technique}), all the other extreme points are also mapped to pure states.  This means the quantum channel $\calE^{\rho}_{B_L\to EB_L}$ purifies all the extreme points of $\Sigma_{ab}^c(Y)$ simultaneously.

Now, we show $\calS_{EY}$ is the state space of a finite dimensional Hilbert space. The nontrivial statement is that any superposition of pure states in  $\calS_{EY}$ is again in $\calS_{EY}$. Once  this statement is verified, the finiteness of dimension follows straightforwardly from the fact that $\Sigma_{ab}^c(Y)$ is finite dimensional. To prove this claim, we consider a normalized state
	\begin{equation}
	\vert \varphi^z_{EY}\rangle \equiv  \sum_{i} z_i \vert \varphi^{y_i}_{EY}\rangle, \label{eq:varphi_z}
	\end{equation}
where $\{ \vert \varphi^{y_i}_{EY}\rangle \}$ is the set of purifications (in  $\calS_{EY}$) of a finite (sub)set of extreme points $\{ \sigma_Y^{\langle e\rangle_{y_i}}\in \Sigma_{ab}^c(Y) \}$. The complex numbers $z_i$ can be arbitrary as long as $\vert \varphi^z_{EY}\rangle$ is normalized. 	It is sufficient to show $\vert \varphi^{z}_{EY}\rangle\langle \varphi^{z}_{EY}\vert \in \calS_{EY}$. A proof is done by the following steps.
	\begin{enumerate}
		\item The reduced density matrix of $\vert \varphi^{z}_{EY}\rangle$ on $EB_LB_M$ is $\rho_{EB_LB_M}$. 
		This is because (1) $I(EB_LB_M:Y')=0$ on $\vert \varphi^{x}_{EY}\rangle$ and (2) $\vert \varphi^{z}_{EY}\rangle= O_{Y'}\vert \varphi^{x}_{EY}\rangle$ for some operator $O_{Y'}$ supported on $Y'$. The first equation follows from Corollary~\ref{coro:generalized_A0}.
		The second equation follows from the fact that $ \vert \varphi^{y_i}_{EY}\rangle $ and $ \vert \varphi^{x}_{EY}\rangle $ have the same reduced density matrix $\rho_{EB}$  $\forall\, y_i$. In fact, Eq.~(\ref{eq:U_A}) implies an explicit choice $O_{Y'}= \sum_i z_i U^i_{Y'}$, where $U^i_{Y'}$ are unitary operators.

		\item The reduced density matrix of $\vert \varphi^{z}_{EY}\rangle$ on $B_MB_RY'$, which we denote as $\sigma^{\langle e\rangle_z}_{B_MB_R Y'}$, is an extreme point of $\Sigma_{ab}^c(B_MB_R Y')$. 
		To see this, we first observe that $\vert \varphi^{z}_{EY}\rangle\langle \varphi^{z}_{EY}\vert \overset{c}{=}\sigma_{b}$ for any $\mu$-disk $b\subseteq (B_MB_RY')_{\epsilon}$. The logic to establish this fact is similar to that leads to the point made above: (1) $\vert \varphi^{x}_{EY}\rangle$ has vanishing correlation between $b$ and $EY\backslash b_{\epsilon}$ and (2) $\vert \varphi^{z}_{EY}\rangle= O_{EY\backslash b_{\epsilon}}\vert \varphi^{x}_{EY}\rangle$  for some $O_{EY\backslash b_{\epsilon}}$.
		Thus,  $\sigma^{\langle e\rangle_z}_{B_MB_R Y'}\in \Sigma(B_MB_R Y')$. 
		Its reduced density matrix $\rho_{B_M}$ determines the charge sectors $a,b,c$. The entropy of $\sigma^{\langle e\rangle_z}_{B_MB_R Y'}$ is identical to that of known extreme points of $\Sigma_{ab}^c(B_M B_R Y')$, e.g., $\sigma^{\langle e\rangle_x}_{B_MB_R Y'}$. Therefore, according to Corollary~\ref{corollary_optional}, $\sigma^{\langle e\rangle_z}_{B_MB_R Y'}$ is an extreme point of $\Sigma_{ab}^c(B_M B_R Y')$.

		\item The state $\vert \varphi^{z}_{EY}\rangle$ has vanishing conditional mutual information $I(B_L: B_R Y' \vert B_M)=0$. Therefore, its reduced density matrix on $Y$ is uniquely determined from its reduced density matrices $\rho_{B_L B_M}$ and $\sigma^{\langle e\rangle_z}_{B_MB_R Y'}$ (by Lemma~\ref{lemma_growth}). Therefore, $\Tr_E \vert \varphi^{z}_{EY}\rangle\langle  \varphi^{z}_{EY}\vert$  is the extreme point of $\Sigma_{ab}^c(Y)$ obtained from an extension of $\sigma^{\langle e\rangle_z}_{B_MB_R Y'}$. We denote this extreme point as  $\sigma^{\langle e\rangle_z}_{Y}$.
		
		\item From the discussion above one can see, for any $\vert \varphi_{EY}^z\rangle$ of the form (\ref{eq:varphi_z}), there exists an extreme point $\sigma^{\langle e\rangle_z}_{Y}$ of $\Sigma_{ab}^c(Y)$ such that
		\begin{equation}
		\vert \varphi^{z}_{EY}\rangle \langle \varphi^{z}_{EY}\vert  = \calE^{\rho}_{B_L\to EB_L}(\sigma^{\langle e\rangle_z}_Y).
		\end{equation} 
		 Thus,  $\vert \varphi^{z}_{EY}\rangle\langle \varphi^{z}_{EY}\vert\in \calS_{EY}$.
	\end{enumerate}
	We have proved that the set $\calS_{EY}$ in Eq.~(\ref{eq:S_EY}) is the state space of some finite dimensional Hilbert space.
	The Hilbert space depends on the purification, but its dimension cannot depend on this detail. The reason is that the state spaces of two finite-dimensional Hilbert spaces are isomorphic if and only if the dimension of the Hilbert spaces are the same and that $\calS_{EY}\cong \Sigma_{ab}^c(Y)$. 
	Therefore, we can assign an abstract finite dimensional Hilbert space $\mathbb{V}_{ab}^c$ with $\dim \mathbb{V}_{ab}^c =N_{ab}^c\in \mathbb{Z}_{\ge0}$, such that
	\begin{equation}
	\Sigma_{ab}^c(Y)\cong \calS_{EY}\cong \calS(\mathbb{V}_{ab}^c).
	\end{equation}
	Here $\calS(\mathbb{V}_{ab}^c)$ is the state space of $\mathbb{V}_{ab}^c$. This completes the proof.
\end{proof}

\lemmafabc*
\begin{proof}
	From Lemma~\ref{lemma_identical_entropy}, we know that all the extreme points of $\Sigma_{ab}^c(Y)$ have the same entropy. It follows that $f(a,b,c)\equiv S(\rho_Y)-S(\sigma^1_Y)$  depends only on the sector $(a,b,c)$. Let us determine $f(a,b,c)$ in terms of the universal contributions $f(\cdot)$ in Definition~\ref{def:f(a)}. Let $Y=BY'$ according to Fig.~\ref{Y_BC}.
	\begin{equation}
	\begin{aligned}
	2f(a,b,c)&= (S_Y + S_{Y'})_{\rho}-(S_Y + S_{Y'})_{\sigma^1}\\
	&=(S_B)_{\rho} - (S_B)_{\sigma^1}\\
	&=(S(\sigma^a_{B_1}) + S(\sigma^b_{B_2}) +S(\sigma^c_{B_3}))\\
	&\,\,\,\,\,\,\,\,- (S(\sigma^1_{B_1}) + S(\sigma^1_{B_2}) +S(\sigma^1_{B_3}))\\
	&=2(f(a)+f(b)+f(c)).
	\end{aligned}
	\end{equation}
	In the first line, we used the isomorphism theorem. Recall that the isomorphism preserves the entropy difference. In the second line, we applied Eq.~(\ref{eq: isolation_BCY}). In the third line, we applied Eq.~(\ref{eq:entropy_B_B1B2B3}). The fifth line follows from Definition~\ref{def:f(a)}.
	This completes the proof.
\end{proof}

\section{Topological entanglement entropy for the Kitaev-Preskill partition}\label{P_Sec. Proof of {Prop: KP_TEE}}

In this section, we provide a proof of Proposition~\ref{Prop: KP_TEE}.

\tee*
\begin{proof}
	In the following, all the von Neumann entropies are calculated for the reference state $\sigma$.  
	By deforming the subsystems using the idea in Fig.~\ref{Grow_a_disk_12_PDF}, one shows that the value $\gamma$ in Eq.~(\ref{eq: KP_TEE}) is invariant under small deformations of subsystem $A$, $B$, $C$. Since large deformations can be built up from small ones, $\gamma$ is a topological invariant. In the following, we calculate its value.
	
	\begin{figure}[h]
	\centering
\includegraphics[scale=1.1]{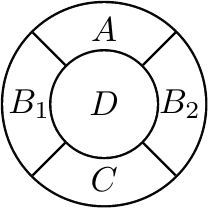}
		\caption{The partition of a disk in the proof, where $B=B_1B_2$ and  that $ABC$ is the Levin-Wen partition.}
		\label{TEE_appendix}
	\end{figure}

	First, let us consider subsystem choice $A,B_1,D$ in Fig.~\ref{TEE_appendix}, and find 
	\begin{equation}
	\gamma = S_{AD} + S_{DB_1} + S_{AB_1} - S_{A} - S_{D} - S_{B_1} -S_{ADB_1}\,. \label{eq: gamma_1}
	\end{equation}
	Second, we consider subsystem choice $AD, B_1, B_2 C$ in Fig.~\ref{TEE_appendix}, and find 
	\begin{equation}
	\begin{aligned}
	\gamma \,\,\,\,=& \,\,\,\,S_{BC} + S_{ADB_1} + S_{ADB_2C}\\
	&- S_{B_2C}  - S_{B_1} - S_{AD} -S_{ABCD}\,. \label{eq: gamma_2}
	\end{aligned}
	\end{equation}
	Both partitions are equivalent to the Kitaev-Preskill partition Fig.~\ref{TEE_LW_PK}(a). 
	Adding up Eq.~(\ref{eq: gamma_1}) and Eq.~(\ref{eq: gamma_2}), and using 
	\begin{eqnarray}
	S_{ABCD} + S_{D} &=&  S_{ABC}, \label{A0_a} \nonumber\\
	S_{DB_1} + S_{ADB_2C} &=& S_{B_1} + S_{AB_2C}, \label{A1_a} \nonumber\\
	S_{AB_2C} &=& S_{AB_2} + S_{B_2C} - S_{B_2}, \label{A1_b}\nonumber\\
	S_{B_1} + S_{B_2} &=& S_{B}, \label{A0_b} \nonumber \\
	S_{AB_1} + S_{AB_2} - S_{A} &=& S_{AB},  \label{A1_c} \nonumber
	\end{eqnarray}
	(these results follow from the quantum Markov chain structure of the global reference state, i.e. the fact that Eqs. (\ref{eq:tau_A0}) and (\ref{eq:tau_A1}) are satisfied on all length scales larger than a constant value), we find
	\begin{eqnarray}
	2\gamma &=& S_{AB} + S_{BC} - S_{B} - S_{ABC}\nonumber\\
	&=& 2 \ln \mathcal{D}. \nonumber
	\end{eqnarray}
	Therefore, $\gamma =\ln \mathcal{D}$.
\end{proof}

\section{Antiparticles and automorphism of annulus on a sphere}\label{Antiparticle_sphere}

In this appendix, we discuss a connection of antiparticle with the automorphisms of the information convex set of an annulus on a sphere. Intuitively, the connection comes from two facts.
First, the automorphism only depends on the topological class of the path that maps the annulus back to itself. This fact is established by Lemma~\ref{prop:topological_deformation}.	Second, on a sphere, the topological class of the paths is described by the braid group on a sphere  \footnote{https://homepages.warwick.ac.uk/~maseay/doc/braids.pdf}.  In general, we use $\mathcal{B}_n(V)$ to denote the $n$-string braid group of manifold $V$. Physically, this is related to the spacetime diagram of $n$ particles braiding on a manifold $V$. In our framework, it is related to the deformation of a subsystem ($V$ with $n$ holes) by a path and then goes back to itself. For our purpose, an annulus is a 2-hole sphere, and the relevant result is the $2$-string braid group on a sphere:
\begin{equation}
\mathcal{B}_2(S^2) = \mathbb{Z}_2.
\end{equation}

For an automorphism of $\Sigma(X)$ generated by a path $\{X^t\}$ with $X^0=X^1=X$, where $X$ is an annulus on a sphere, we could draw a spacetime diagram corresponding to the path. The spacetime diagram shows the braiding of two holes. The braiding belongs to one of the two classes in $\mathcal{B}_2(S^2)=\mathbb{Z}_2$, so does the path.

The  path in the trivial class could be smoothly deformed into the path $X^t=X$, $\forall\,t$. The corresponding automorphism of $\Sigma(X)$ preserves the superselection sectors, i.e., it maps each extreme point back to itself. 

On the other hand, a path in the nontrivial class  generates an automorphism of $\Sigma(X)$ which permutes the extreme points according to \begin{equation}
\Phi(\sigma^a_X) = \sigma^{\bar{a}}_{X},\quad \forall \,a\in \mathcal{C}.\nonumber
\end{equation} 
Intuitively, a nontrivial path switches the pair of holes. Furthermore, if one introduces an oriented loop to the annulus $X$, which deforms smoothly with $X$, then the loop will end up in the opposite orientation after $X$ is mapped back to itself according to the nontrivial path.

\section{String operators}\label{appendix:String}

\begin{Proposition} \label{Prop: U^a string}
	Given a pure reference state $\vert \psi\rangle$, two holes within a disk and $\forall\, a \in \calC$,  there exists a deformable unitary string operator  $U^{(a,\bar{a})}$ supported within the disk, such that the state
	\begin{equation}
	\vert \varphi^{(a,\bar{a})}\rangle \equiv U^{(a,\bar{a})} \vert\psi\rangle,
	\end{equation}
	has topological charges $a$ and $\bar{a}$ in the two holes.
\end{Proposition}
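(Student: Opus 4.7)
The plan is to produce a pure state $\vert\varphi^{(a,\bar a)}\rangle$ on $M$ whose reduced density matrix on a suitable two-hole disk is the unique charge-carrying state, and then to extract the string operator from Uhlmann's theorem. The first key input is that $N_{a\bar a}^1 = 1$, a special case of Proposition~\ref{Prop: a_bar}. Combined with Theorem~\ref{Thm:}, this implies that the convex set $\Sigma_{a\bar a}^1(Y)$ is a singleton; denote its unique element by $\rho_Y^{a\bar a 1}$. Here $Y$ is the two-hole disk obtained by removing the two small disks $d_1,d_2$ (the ``holes'') from the disk $D$ that supports the string. This $\rho_Y^{a\bar a 1}$ carries charges $a$ and $\bar a$ on the two inner boundaries and vacuum on the outer boundary; it is the marginal we want the excited state to have.

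Next, I would construct a specific pure state $\vert\varphi^{(a,\bar a)}\rangle$ on $M$ with two properties: (i) its reduced density matrix on $Y$ is $\rho_Y^{a\bar a 1}$, and (ii) its reduced density matrix on $D^c = M \setminus D$ coincides with that of $\vert\psi\rangle$. The existence of such a state is driven by merging. Concretely, merge $\rho_Y^{a\bar a 1}$ with the reference marginal $\sigma_{M\setminus \tilde Y}$, where $\tilde Y \subsetneq Y$ is obtained from $Y$ by removing an outer-boundary annulus $X$, along the overlap $X$ on which both states reduce to the vacuum-sector element $\sigma_X^1$. The vanishing conditional mutual informations needed to invoke Lemma~\ref{Prop: merging technique} follow from Lemma~\ref{Prop: QMS_around_an_entangle_cut} applied to $\rho_Y^{a\bar a 1}$ and from the extended axioms (Proposition~\ref{Prop:axiom_large}) applied to $\sigma$. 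To upgrade the merged state to a pure state, exploit the Schmidt decomposition of $\vert\psi\rangle$ between $D$ and $D^c$: any purification of $\sigma_{D^c}$ using $D$ as the purifying system is of the form $(U_D\otimes I_{D^c})\vert\psi\rangle$, so it suffices to exhibit a state $\rho_D$ on $D$ sharing the spectrum of $\sigma_D$ and having $\rho_Y^{a\bar a 1}$ as its marginal on $Y$, the discrepancy being absorbed into the Hilbert space of $d_1\cup d_2$ (which is sufficiently large under the coarse-graining convention of Section~\ref{Sec:summary}).

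Given $\vert\varphi^{(a,\bar a)}\rangle$, the required unitary is furnished by Uhlmann's theorem: since $\vert\psi\rangle$ and $\vert\varphi^{(a,\bar a)}\rangle$ are pure states on $M$ with identical $D^c$-marginals, there exists a unitary $U^{(a,\bar a)}$ supported on $D$ with $\vert\varphi^{(a,\bar a)}\rangle = U^{(a,\bar a)}\vert\psi\rangle$. Deformability then follows from the uniqueness of the target: for any other disk $D'$ containing the two holes, the two-hole disk $Y' = D'\setminus(d_1\cup d_2)$ again carries a unique element $\rho_{Y'}^{a\bar a 1} \in \Sigma_{a\bar a}^1(Y')$, and by the isomorphism theorem (Theorem~\ref{thm: the isomorphism theorem}) this element restricts to $\rho_Y^{a\bar a 1}$ on overlaps. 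The same $\vert\varphi^{(a,\bar a)}\rangle$ therefore satisfies the hypotheses of Uhlmann with $D'$ in place of $D$, producing a unitary $U_{D'}^{(a,\bar a)}$ on $D'$ with $U_{D'}^{(a,\bar a)}\vert\psi\rangle = \vert\varphi^{(a,\bar a)}\rangle$; hence the string is freely deformable between any two disks containing the charge endpoints.

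The main obstacle lies in Step 2, namely certifying purity of $\vert\varphi^{(a,\bar a)}\rangle$. Merging in isolation yields only a mixed state, so the Schmidt-based upgrade is essential, and its validity hinges on spectral compatibility between $\sigma_D$ (known from the area law) and the required marginal $\rho_Y^{a\bar a 1}$ (whose entropy is pinned down by Lemma~\ref{lemma_fabc}). A careful entropy accounting combining property~(3) of Lemma~\ref{Prop: merging technique} with the block-diagonal structure on annuli (Theorem~\ref{Prop: structure_2}) should close the gap, but this is the one step where the global pure structure of the model intrudes on what is otherwise a purely local analysis.
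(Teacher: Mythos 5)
Your overall strategy (merge the unique element of $\Sigma_{a\bar a}^1$ with the vacuum outside, then invoke Uhlmann) is the same as the paper's, but the step you yourself flag as the ``main obstacle'' is a genuine gap, and the fix you sketch would not close it. Knowing the entropy of $\rho_Y^{a\bar a 1}$ (Lemma~\ref{lemma_fabc}) tells you nothing about its spectrum, and exhibiting a state $\rho_D$ that simultaneously carries the full spectrum of $\sigma_D$ and the prescribed marginal $\rho_Y^{a\bar a 1}$ on $Y$ is not something the paper's machinery delivers; indeed, the spectral agreement is a \emph{consequence} of the existence of the unitary, so trying to verify it first is close to circular. The paper sidesteps spectral matching entirely using Lemma~\ref{Prop: purification}: the merged state is an \emph{extreme point} of $\Sigma(WY_{2\epsilon})$ (with $W$ the complement of the disk and $Y_{2\epsilon}$ a slight thickening of the $2$-hole disk), and for an extreme point of $\Sigma(\Omega_{2\epsilon})$ every vector appearing in any convex decomposition --- in particular every eigenvector --- restricts to the \emph{same} state on $\Omega$. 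One therefore takes an arbitrary eigenvector $\vert \varphi_{WY_{2\epsilon}}\rangle$ of the (generally mixed) merged state, tensors it with an arbitrary pure state on the remaining sites, and obtains a global pure state whose marginal on $WY$ is automatically that of the merged state; purity comes for free and no spectral bookkeeping is required. This is the missing idea in your proposal.

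A secondary issue concerns deformability. The proposition asserts a \emph{string} operator supported on a deformable strip joining the two holes, not merely a unitary on the full disk $D$ that can be traded for one on another disk $D'$. To localize the unitary on a strip $\omega'$ you must show that $\vert \varphi^{(a,\bar a)}\rangle$ agrees with $\vert\psi\rangle$ on the complement $W'$ of the strip, a region strictly larger than $D^c$. The paper obtains this because $W'$ is topologically a disk, the marginal of the merged extreme point on $W'$ lies in $\Sigma(W')$, and $\Sigma(W')$ is a singleton by Proposition~\ref{Prop: structure_1}. Your appeal to the isomorphism theorem ``on overlaps'' gestures in this direction but never establishes that the marginal of your state on $(D')^c$ is an element of the information convex set of that region, which is the step that actually forces agreement with the reference state there.
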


Here, deformable means the support of $U^{(a,\bar{a})}$ can be deformed smoothly while keeping its endpoints fixed.  It is easy to see, for $a=1$, the string can be chosen to be the identity operator while for $a\ne 1$, the string cannot break apart. While this result is not directly useful in the current work, we expect it to play a role in deriving more advanced fusion and braiding properties, e.g., the $S$-matrix.

\begin{figure}[h]
	\centering
	\includegraphics[scale=1.2]{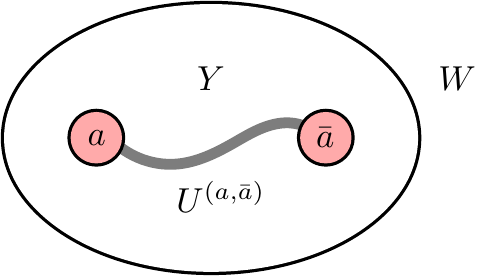}
	\caption{Disk $\omega$ is the union of 2-hole disk $Y$ and its two holes. $W$ is the complement of $\omega$. The topological charges $a$ and $\bar{a}$ within the two holes are created by unitary string operator $U^{(a,\bar{a})}$.
		The support of $U^{(a,\bar{a})}$ is the union of the deformable gray area and the two holes shown in red. }
	\label{string_operator}
\end{figure}

\begin{proof}
	Let $\omega$ be the disk required in the proposition and $W$ is its complement, see Fig.~\ref{string_operator}. One can verify that $\sigma_{W}\equiv \Tr_{\omega} \vert\psi\rangle \langle \psi\vert \in \Sigma(W)$ and that $\sigma_W$ is an extreme point. The 2-hole disk $Y\subseteq \omega$ is obtained by erasing the two holes from disk $\omega$.  From Proposition~\ref{Prop: a_bar}, $\Sigma_{a\bar{a}}^{1}(Y_{2\epsilon})$ contains a unique element which we denote as $\sigma^{a\bar{a}1}_{Y_{2\epsilon}}$, where $Y_{2\epsilon}$ is a thickening of $Y$. Here the subscript $2\epsilon$ means that $Y$ is expanded along its boundary by two lattice spacings; see the beginning of Appendix~\ref{sec:extreme_points} for a related discussion.
	
	The elements $\sigma_{W}$ and $\sigma^{a\bar{a}1}_{Y_{2\epsilon}}$, whose supports are overlapping around the boundary of $\omega_{\epsilon}$, can be merged and the resulting state is an extreme point of $\Sigma(WY_{2\epsilon})$, where $WY_{2\epsilon}$ is again the thickening of $WY$ by two unit lattice spacing.  Let 
	$\vert \varphi^{(a,\bar{a})}\rangle =  \vert \varphi^{(a,\bar{a})}_{WY_{2\epsilon}}\rangle \otimes \vert \varphi_{{V\backslash WY_{2\epsilon} } }\rangle, $
	where $\vert \varphi^{(a,\bar{a})}_{WY_{2\epsilon}}\rangle $ is an eigenvector (with nonzero eigenvalue) of the merged state and $\vert\varphi_{{V\backslash WY_{2\epsilon} } }\rangle$ is an arbitrary pure state.
	According to Lemma~\ref{Prop: purification}, the reduced density matrix of $\vert \varphi^{(a,\bar{a})}\rangle$ on $WY$ is identical to that of the merged state. Therefore, $\vert \varphi^{(a,\bar{a})}\rangle$ has topological charges $a$ and $\bar{a}$ within the two holes, and $\vert \varphi^{(a,\bar{a})}\rangle$ is identical to the reference state $\vert \psi\rangle$ on any subsystem $W'\subseteq WY$ which is connected to $W$ by a path, where the path is within $WY$. In particular, choose $W'$ to be the complement of disk $\omega'$, where $\omega'$ is the union of the gray string and the two holes in Fig.~\ref{string_operator}. Since $\vert \psi\rangle$ and $ \vert \varphi^{(a,\bar{a})}\rangle$ are identical on $W'$, by applying Eq.~(\ref{eq:U_A}), we have $\vert \varphi^{(a,\bar{a})}\rangle= U^{(a,\bar{a})} \vert\psi\rangle$ for a unitary operator $U^{(a,\bar{a})}$ supported on $\omega'$.
	Because we may deform $W'$ and $\omega'$, the support of $U^{(a,\bar{a})}$ can be deformed smoothly.  This completes the proof.
\end{proof}

\FloatBarrier

	\bibliography{ref}
\end{document}